\crefname{chapter}{Chapter}{Chapters}
\crefname{section}{Section}{Sections}
\crefname{subsection}{Subsection}{Subsections}
\crefname{subsubsection}{Subsubsection}{Subsubsections}
\crefname{algorithm}{Algorithm}{Algorithms}
\crefname{figure}{Figure}{Figures}
\crefname{table}{Table}{Tables}
\crefname{equation}{equation}{equations}
\DeclareMathOperator*{\argmin}{arg\,min}
\DeclarePairedDelimiter{\norm}{\lVert}{\rVert}
\newcounter{common}
\declaretheorem[name=Theorem,sibling=common]{theorem}
\declaretheorem[name=Lemma,sibling=theorem]{lemma}
\declaretheorem[name=Definition,sibling=theorem]{definition}
\declaretheorem[name=Problem,sibling=theorem]{problem}
\def\ll@problem{
  \protect\numberline{\theproblem}\thmt@shortoptarg
}
\let\c@algocf\relax 
\newaliascnt{algocf}{common} 
\definecolor{TUMBlue}{HTML}{0065BD}
\definecolor{TUMSecondaryBlue}{HTML}{005293}
\definecolor{TUMSecondaryBlue2}{HTML}{003359}
\definecolor{TUMBlack}{HTML}{000000}
\definecolor{TUMWhite}{HTML}{FFFFFF}
\definecolor{TUMDarkGray}{HTML}{333333}
\definecolor{TUMGray}{HTML}{808080}
\definecolor{TUMLightGray}{HTML}{CCCCC6}
\definecolor{TUMAccentGray}{HTML}{DAD7CB}
\definecolor{TUMAccentOrange}{HTML}{E37222}
\definecolor{TUMAccentGreen}{HTML}{A2AD00}
\definecolor{TUMAccentLightBlue}{HTML}{98C6EA}
\definecolor{TUMAccentBlue}{HTML}{64A0C8}
\pgfplotsset{compat=newest}
\pgfplotsset{
  cycle list={TUMBlue\\TUMAccentOrange\\TUMAccentGreen\\TUMSecondaryBlue2\\TUMDarkGray\\},
}
\bfseries\color{TUMBlue},
\def\b{\textcolor{blue}}
\long\def\algocf@caption@proc#1[#2]#3{%
  \ifthenelse{\boolean{algocf@nokwfunc}}{\relax}{%
    \SetKwFunction{\algocf@captname#3@}{\algocf@captname#3@}%
  }%
  \ifthenelse{\boolean{algocf@func}}{\def\@proc@func{algocffunc}}{\def\@proc@func{algocfproc}}%
  \@ifundefined{hyper@refstepcounter}{\relax}{
    \ifthenelse{\boolean{algocf@procnumbered}}{%
      \expandafter\def\csname theH\@proc@func\endcsname{\algocf@captname#3@}
    }{%
      \expandafter\def\csname theH\@proc@func\endcsname{\algocf@captname#3@}
    }%
    \hyper@refstepcounter{\@proc@func}%
  }%
  \ifthenelse{\boolean{algocf@procnumbered}}{\relax}{%
    \addtocounter{algocf}{-1}
    \gdef\@currentlabel{\algocf@captname#3@}
  }%
  \renewcommand{\addcontentsline}[3]{}
  \ifthenelse{\equal{\algocf@captparam#2@}{\arg@e}}{
    \algocf@latexcaption{#1}[\algocf@captname#2@]{#3}%
  }{
    \algocf@latexcaption{#1}[#2]{#3}%
  }%
}%
\let\original@algocf@latexcaption\algocf@latexcaption
\long\def\algocf@latexcaption#1[#2]{%
  \@ifundefined{NR@gettitle}{%
    \def\@currentlabelname{#2}%
  }{%
    \NR@gettitle{#2}%
  }%
  \original@algocf@latexcaption{#1}[{#2}]%
}
\newcommand*{\getUniversity}{Technical University of Munich}
\newcommand*{\getFaculty}{Department of Informatics}
\newcommand*{\getTitle}{Implementation of Algorithms for Right-Sizing Data Centers}
\newcommand*{\getTitleGer}{Implementierung von Algorithmen für die Größenanpassung von Rechenzentren}
\newcommand*{\getAuthor}{Jonas Hübotter}
\newcommand*{\getDoctype}{Bachelor's Thesis in Informatics}
\newcommand*{\getSupervisor}{Prof. Dr. Susanne Albers}
\newcommand*{\getAdvisor}{Jens Quedenfeld}
\newcommand*{\getSubmissionDate}{August 13, 2021}
\newcommand*{\getSubmissionLocation}{Munich}
\begin{document}

\pagenumbering{alph}
\begin{titlepage}
  \oddsidemargin=\evensidemargin\relax
  \textwidth=\dimexpr\paperwidth-2\evensidemargin-2in\relax
  \hsize=\textwidth\relax

  \centering
  \null
  \vspace{18mm}
  \IfFileExists{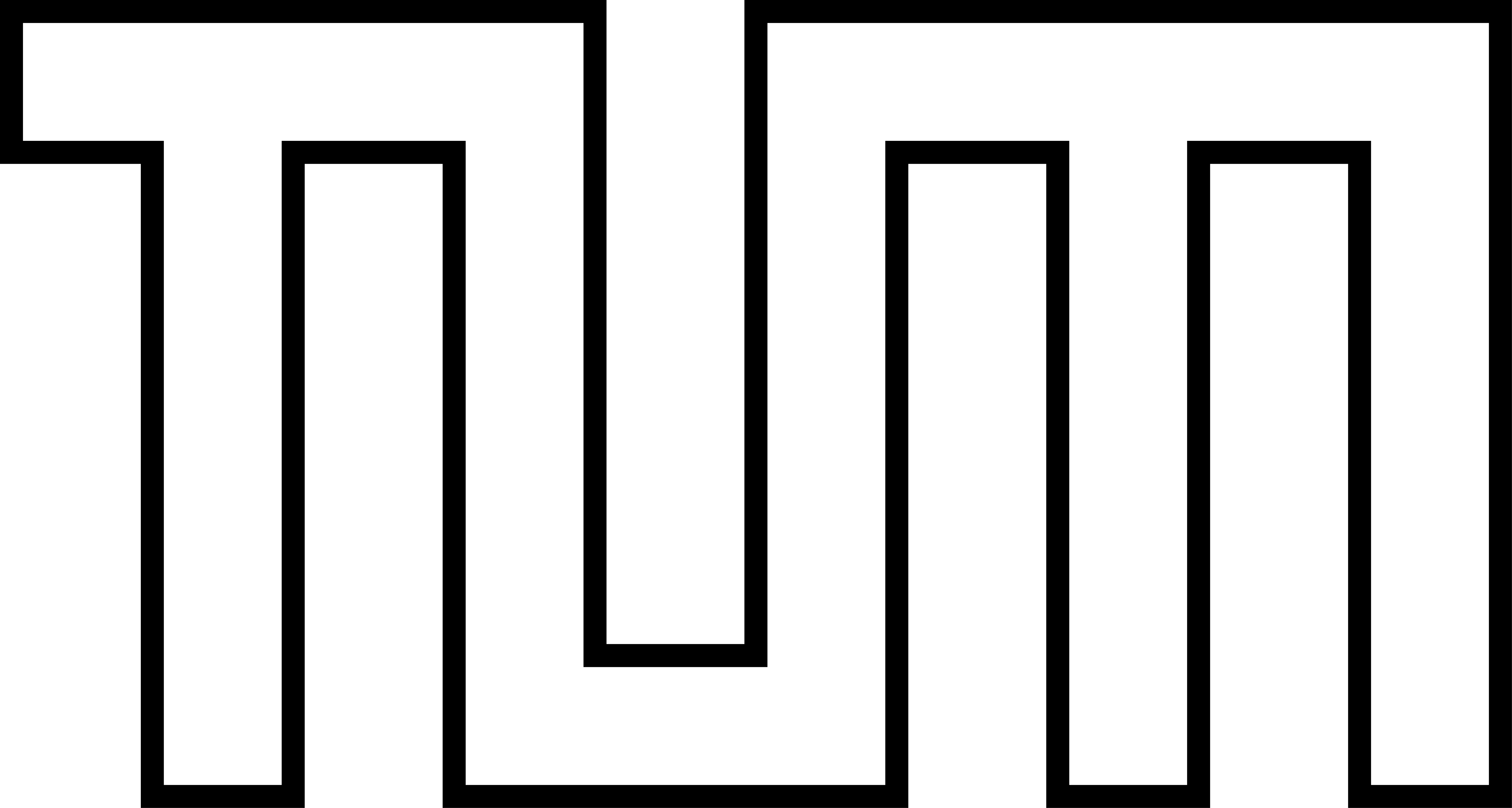}{%
    \includegraphics[height=20mm]{thesis/logos/tum.pdf}
  }{%
    \vspace*{20mm}
  }

  \vspace{10mm}
  {\huge\MakeUppercase{\getFaculty{}}}\\

  \vspace{5mm}
  {\large\MakeUppercase{\getUniversity{}}}\\

  \vspace{20mm}
  {\Large \getDoctype{}}

  \vspace{15mm}
  {\huge\bfseries \getTitle{}}

  \vspace{15mm}
  {\LARGE \getAuthor{}}

  \IfFileExists{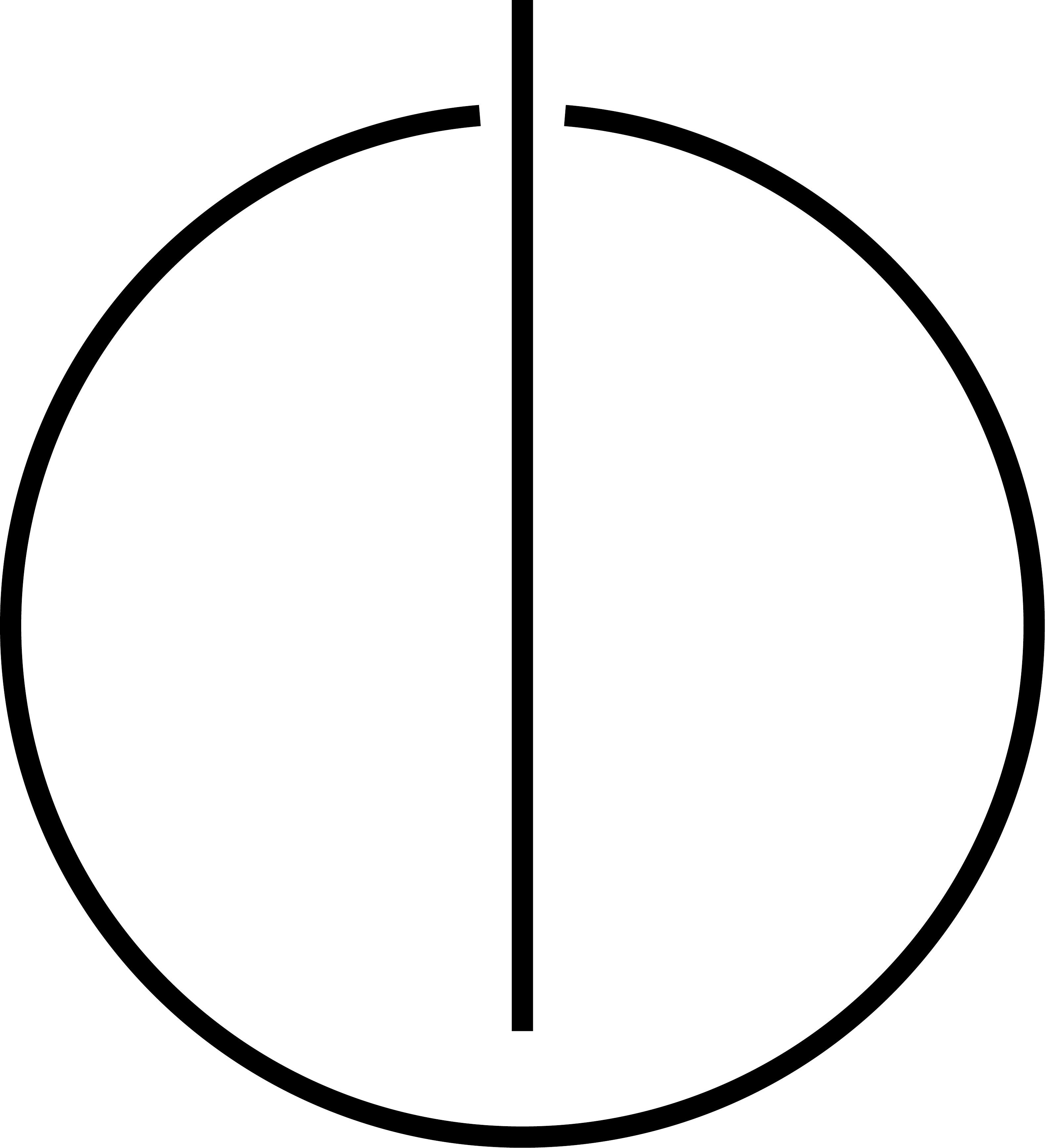}{%
    \vfill{}
    \includegraphics[height=20mm]{thesis/logos/faculty.pdf}
  }{}
\end{titlepage}

\frontmatter{}

\begin{titlepage}
  \centering

  \IfFileExists{thesis/logos/tum.pdf}{%
    \includegraphics[height=20mm]{thesis/logos/tum.pdf}
  }{%
    \vspace*{20mm}
  }

  \vspace{10mm}
  {\huge\MakeUppercase{\getFaculty{}}}\\

  \vspace{5mm}
  {\large\MakeUppercase{\getUniversity{}}}\\

  \vspace{20mm}
  {\Large \getDoctype{}}

  \vspace{15mm}
  {\huge\bfseries \getTitle{}}

  \vspace{10mm}
  {\huge\bfseries \foreignlanguage{ngerman}{\getTitleGer{}}}

  \vspace{15mm}
  \begin{tabular}{l l}
    Author:          & \getAuthor{} \\
    Supervisor:      & \getSupervisor{} \\
    Advisor:         & \getAdvisor{} \\
    Submission Date: & \getSubmissionDate{} \\
  \end{tabular}
  
  \vspace{6mm}
  \IfFileExists{thesis/logos/faculty.pdf}{%
    \vfill{}
    \includegraphics[height=20mm]{thesis/logos/faculty.pdf}
  }{}
\end{titlepage}
\cleardoublepage{}
\thispagestyle{empty}
\vspace*{0.8\textheight}
\noindent
I confirm that this bachelor's thesis is my own work and I have documented all sources and material used.

\vspace{15mm}
\noindent
\getSubmissionLocation{}, \getSubmissionDate{} \hspace{50mm} \getAuthor{}

\cleardoublepage{}

\addcontentsline{toc}{chapter}{Acknowledgments}
\thispagestyle{empty}

\vspace*{20mm}

\begin{center}
{\usekomafont{section} Acknowledgments}
\end{center}

\vspace{10mm}

I would like to thank my supervisor Prof. Dr. Susanne Albers, for allowing me to work on this exciting topic. I would like to thank my advisor Jens Quedenfeld for his patient work on various drafts, checking my proofs, and his help whenever I felt stuck. I would like to thank the maintainer of pyo3, David Hewitt, for his great help with debugging the Python bindings. Especially, I would like to thank my friends and my family for their support and stimulating discussions on this and other topics.

\cleardoublepage{}

\chapter{\abstractname}

The energy consumption of data centers assumes a significant fraction of the world's overall energy consumption. Most data centers are statically provisioned, leading to a very low average utilization of servers. In this work, we survey uni-dimensional and high-dimensional approaches for dynamically powering up and powering down servers to reduce the energy footprint of data centers while ensuring that incoming jobs are processed in time. We implement algorithms for smoothed online convex optimization and variations thereof where, in each round, the agent receives a convex cost function. The agent seeks to balance minimizing this cost and a movement cost associated with changing decisions in-between rounds. We implement the algorithms in their most general form, inviting future research on their performance in other application areas. We evaluate the algorithms for the application of right-sizing data centers using traces from Facebook, Microsoft, Alibaba, and Los Alamos National Lab. Our experiments show that the online algorithms perform close to the dynamic offline optimum in practice and promise a significant cost reduction compared to a static provisioning of servers. We discuss how features of the data center model and trace impact the performance. Finally, we investigate the practical use of predictions to achieve further cost reductions.
\microtypesetup{protrusion=false}
\tableofcontents{}
\microtypesetup{protrusion=true}

\mainmatter{}


\chapter{Introduction}\label{chapter:introduction}

\section{Motivation}

The energy use of data centers is estimated to account for between 1\% and 3\% of the global electricity supply and up to 1\% of global greenhouse gas emissions~\cite{Shehabi2016, Jones2018, Bashroush2020, Masanet2020}. It is also estimated that energy use will dramatically increase over the next decades unless there are significant advances in energy efficiency~\cite{Jones2018}. However, energy conservation in data centers is not just important for ecological reasons. As the energy cost of data centers represents a substantial fraction of its overall expenses, increasing a data center's energy efficiency can also significantly reduce costs~\cite{Barroso2007, Brill2007, Hamilton2008}.

In practice, the energy consumption of many data centers is largely decoupled from their load. Often, \emph{peak-provisioning}\index{peak-provisioning} is used in the design of data centers to prevent a lack of resources during periods of high loads~\cite{Whitney2014}. The Natural Resources Defense Council (NRDC) found that due to this over-provisioning and the absence of dynamic right-sizing, servers typically operate at utilization levels between 12\% and 18\%~\cite{Whitney2014}. Other studies still find an average utilization between 20\% and 40\%~\cite{Barroso2007, Armbrust2010}.

It was shown that servers reach their peak efficiency at full utilization~\cite{Barroso2007}. At the typical utilization levels, a server's energy efficiency is between 20\% and 30\%~\cite{Barroso2007}. Even when idling, servers consume half of their peak power~\cite{Barroso2007}. Therefore, an important research goal has been to couple the energy consumption of a data center with its workload. In other words, to achieve \emph{power-proportionality}\index{power-proportionality} of a data center. We investigate the approach to dynamically power-up and power-down servers as the load of a data center or a network of data centers changes. Solutions to this approach are also known as \emph{power-down mechanisms}\index{power-down mechanisms}~\cite{Jin2016}.

It is natural to model this problem as an optimization minimizing some cost metric reflecting a data center's energy consumption. Commonly, two variations of such optimization problems are considered. The \emph{online}\index{online algorithm} variant receives a sequence of requests and performs an immediate action in response to each request. In contrast, the \emph{offline}\index{offline algorithm} variant receives all requests upfront and only responds once~\cite{Karp1992}. It is easy to see that any solution to an online problem is at best as good as a solution to its offline variant.

\paragraph{Smoothed Online Convex Optimization} The models that we investigate are generalizations of online convex optimization. In \emph{online convex optimization}\index{online convex optimization}, an agent interacts with their environment in a sequence of rounds. In each round, the agent is presented with a convex cost function (in the following called \emph{hitting cost}\index{hitting cost}). Based on this cost function, the agent chooses where to move in a given decision space. The agent's goal is to minimize cost over a long time horizon~\cite{Hazan2019}. In the context of right-sizing data centers, it is apparent that we can use the hitting costs to model the energy consumption of active servers and possible delay incurred by having an inadequate number of active servers for some incoming load.

However, in some cases, it is cheaper to keep a server in an idle state for a short period as powering up and powering down servers increases energy consumption and incurs wear-and-tear costs~\cite{Lin2011}. We, therefore, generalize online convex optimization by introducing a known \emph{movement cost}\index{movement cost} that penalizes the agent for movement in the decision space. This generalization is commonly called \emph{smoothed online convex optimization}. However, the movement cost makes the problem more challenging as the optimal choice in each round depends on future cost functions~\cite{Chen2015}. The movement cost sets up a trade-off similar to the \emph{exploration-exploitation trade-off}\index{exploration-exploitation trade-off} where decisions need to be made in the face of uncertainty. However, unlike the exploration-exploitation trade-off, decisions do not lead to the discovery of new information. We also examine variants of smoothed online convex optimization that further restrict the class of allowed convex cost functions, the movement cost, or the decision space based on common modeling choices for right-sizing data centers.

\paragraph{Variants of Smoothed Online Convex Optimization} There are three main differentiators between the algorithms we consider in this work. First, we examine both offline and online algorithms. While only the latter algorithms are of immediate practical use, we use the former as a benchmark. Second, we consider both fractional and integral versions of this problem. In principle, only integral versions solve the right-sizing problem of data centers. However, typically the number of servers in a data center is large enough to warrant the use of fractional versions~\cite{Bansal2015}. Third, we differentiate between uni-dimensional and multi-dimensional versions of the problem. Initially, the right-sizing problem of data centers was proposed for homogeneous data centers~\cite{Lin2011, Bansal2015, Albers2018}. In a homogeneous data center, all servers share the same performance metrics, i.e., are of the same type. In more recent works, the problem was extended to heterogeneous data centers that may employ many different types of servers~\cite{Lin2012, Chen2018, Goel2019, Albers2021, Albers2021_2}.

\paragraph{Competitive Ratio and Regret} The two main performance metrics used to analyze algorithms for smoothed online convex optimization are the competitive ratio and regret. These different performance metrics arise from two separate communities in which online convex optimization problems often appear. In the context of online learning, algorithms aim to minimize regret (or often to achieve sublinear regret). \emph{Regret} is the difference between the loss achieved by the algorithm and the loss of the best fixed point in hindsight~\cite{Chen2018}. In contrast, in the context of online algorithms, the aim is to minimize the competitive ratio (or often to achieve a constant competitive ratio). The \emph{competitive ratio} is defined analogously to the approximation ratio as the ratio of the loss achieved by the online algorithm and the loss achieved by an optimal offline algorithm~\cite{Chen2018}. Note that in contrast to the definition of regret, the optimal solution may move in the decision space. We use both notions in our analysis and formally define the competitive ratio, regret, and alterations in \cref{section:theory:performance_metrics}. Due to the use of two distinct metrics, some recent research is focused on finding unified frameworks to construct algorithms that perform well with regard to both notions of performance~\cite{Chen2018, Goel2019}.

\paragraph{Convex Body Chasing} In the literature, smoothed online convex optimization is also known as \emph{convex function chasing}\index{convex function chasing}. A related problem to convex function chasing is the problem of \emph{convex body chasing}\index{convex body chasing}. In convex body chasing, requests consist of a convex set instead of a convex function. The agent is then required to choose points in a sequence of convex sets so as to minimize their movement~\cite{Antoniadis2016}. Convex body chasing can be reduced to convex function chasing by considering for any convex set $K$ the function $f^K$ which is 0 on $K$ and $\infty$ off of $K$~\cite{Sellke2019}. In contrast,  \citeauthor{Sellke2019}~\cite{Sellke2019} showed that convex function chasing in $\mathbb{R}^d$ can be reduced to convex body chasing in $\mathbb{R}^{d+1}$ by choosing the convex set as the epigraph $\{(x,y) \in \mathbb{R}^d \times \mathbb{R} \mid y \geq f_{\tau}(x)\}$ of the hitting cost $f_{\tau}$. To ensure that points on the lower boundary of the set are chosen (i.e., the line of $f_{\tau}$), requests are alternated with the hyperplane $\mathbb{R}^d \times \{0\}$~\cite{Sellke2019}. $\mathcal{O}(\sqrt{d})$ was shown to be a lower bound to the competitiveness of convex function chasing (and convex body chasing) in high dimensions \cite{Chen2018}. A large body of recent work seeks to find algorithms with near-optimal bounds~\cite{Bansal2017, Bubeck2018, Bubeck2018_2, Argue2019_2, Argue2019, Argue2020, Bubeck2020}. Just recently, \citeauthor{Argue2019}~\cite{Argue2019} found a $\mathcal{O}(d)$-competitive algorithm for convex body chasing.

\paragraph{Predictions} It was shown that to achieve a constant competitive ratio for smoothed online convex optimization, one must either restrict the class of allowed convex cost functions or movement costs~\cite{Chen2018}. Moreover, no algorithm can achieve sublinear regret and a constant competitive ratio even for linear cost functions~\cite{Andrew2015}. These results motivated another body of research that makes use of predictions to bypass this fundamental limitation. It is only natural to attempt using predictions for right-sizing data centers as, usually, there is an extensive collection of server traces to base predictions on. There has also been much recent work improving the accuracy of time-series predictions~\cite{Taylor2017, Benidis2020, Chen2020, Hosseini2021}.

Interestingly, the performance of such predictions in the context of right-sizing data centers has not been studied in much detail. Most of the previous work was focused on finding algorithms based on \emph{receding horizon control} methods that use perfect predictions in some finite \emph{prediction window}~\cite{Lin2012, Chen2015, Badiei2015, Chen2016, Li2018, Lin2019}. We investigate how these algorithms perform with actual predictions as compared to perfect predictions.

\section{Outline}

In this work, we investigate algorithms for smoothed convex optimization and specifications thereof. We then examine the performance of these algorithms in the application of right-sizing data centers based on real server traces. Over the past decade, many algorithms following a variety of different approaches were introduced. One of our main goals in this work is thus to determine which algorithms perform best in which scenarios. We also determine the overall cost- and energy-saving potential of dynamically right-sizing data centers.

We begin in \cref{chapter:application} by modeling the cost associated with operating data centers. In \cref{chapter:theory}, we first introduce commonly used performance metrics in algorithm design. We then formally introduce smoothed convex optimization and its variants that we consider in this work. We also show that smoothed convex optimization and a more restricted variant are NP-hard when the dimension is allowed to vary. In Chapters~\ref{chapter:offline_algorithms} and~\ref{chapter:online_algorithms}, we introduce offline and online algorithms, respectively. In each chapter, we begin with algorithms solving the uni-dimensional problem and then generalize some as well as introduce new approaches for the multi-dimensional setting. We end \cref{chapter:online_algorithms} by examining algorithms that use predictions to make more informed decisions. A central focus of our work is to simplify the use of the algorithms for smoothed optimization in a wide variety of applications. In \cref{chapter:implementation}, we discuss the architecture of our implementation and how the algorithms can be used with other models. In \cref{chapter:case_studies}, we evaluate the performance of the discussed algorithms in the application of right-sizing data centers. To test algorithms solving the homogeneous and the heterogeneous right-sizing problem, we use server traces from Facebook, Microsoft, the Los Alamos National Lab, and Alibaba originating from varying applications. In \cref{chapter:future_work}, we mention important questions for future research and in \cref{chapter:conclusion} we summarize our conclusions. \nameref{chapter:notation} gives an overview of the used notation.

\section{Related Work}\label{section:introduction:related_work}

In this section, we reference two related bodies of research. First, we discuss alternatives to using power-down mechanisms, which we use in this work to increase the energy efficiency of data centers. Second, we examine other areas of application that have been shown to benefit from using techniques from smoothed online convex optimization. The implementation of the algorithms discussed in this work are generic and publicly available~\cite{Huebotter2021}. Therefore, it is an interesting open question how these algorithms perform in practice in the applications referenced here.

\paragraph{Alternatives to Power-Down Mechanisms} As was mentioned previously, power-down mechanisms are just one approach to increasing the energy efficiency of data centers. Another widely studied approach is to lower the frequency/speed of devices to save energy. This approach is known as \emph{dynamic speed scaling}\index{dynamic speed scaling} or dynamic voltage and frequency scaling. Dynamic speed scaling determines the optimal processing speeds and job assignments to minimize energy usage and meet specified performance constraints (like job deadlines)~\cite{Albers2007, Albers2011, Jin2016}. There is also a large body of research considering hybrid approaches that scale device speeds as well as power-down devices~\cite{Jin2016}. The hybrid problem is also known as \emph{speed scaling with a sleep state}\index{speed scaling with a sleep state}~\cite{Albers2014}. Other work has focused on so-called \emph{machine activation}\index{machine activation} problems where a subset of servers is selected according to a fixed cost budget so as to minimize the makespan of a set of jobs~\cite{Khuller2009, Li2011}.

\paragraph{Other Applications of Smoothed Online Convex Optimization} Integral smoothed online convex optimization is subsumed by the class of problems known as metrical task systems. A \emph{metrical task system}\index{metrical task system} consists of a finite metric decision space, a movement cost, and a sequence of hitting costs~\cite{Bubeck2018_3}. Metrical task systems are more general than integral smoothed online convex optimization as the decision space is not required to be embedded in $\mathbb{Z}$ and the hitting costs do not need to be convex. A prominent instance of metrical task systems is the \emph{$k$-server problem}\index{$k$-server problem}~\cite{Bubeck2017}.

Any online convex optimization problem where, in reality, some cost is associated with taking action can be interpreted as a smoothed convex optimization. This is the case for many typical online convex optimization problems, as movement costs are often disregarded to simplify the algorithm design. Such problems are

\begin{itemize}
    \item \emph{video streaming} where the encoding quality varies based on available bandwidth, but frequent changes in encoding quality should be avoided~\cite{Lin2012},
    \item \emph{portfolio management} in which expert advice indicates that certain actions maximize profit but taking action incurs some cost~\cite{Calafiore2008, Das2014, Ballu2019},
    \item \emph{power generation} with dynamic demand as the cheapest generators tend to have high costs associated with toggling them on and off~\cite{Lin2012, Badiei2015},
    \item \emph{contextual sequence prediction} --- used, for example, in object tracking, natural language processing, and sequence alignment --- where the prediction of the next element of the sequence has to be contextualized (i.e., smoothed) based on its predecessors~\cite{Kim2015},
    \item \emph{vertical container scaling} where the per-container resource allocation should depend on load but scaling leads to a period of unavailability~\cite{Rossi2019},
    \item \emph{multi-timescale control} where the linear control constraints act as regularizers~\cite{Goel2017},
    \item \emph{smoothed online regression, ridge regression, logistic regression, online maximum likelihood estimation}, and \emph{linear quadratic regulator control} are direct instances of smoothed online convex optimization~\cite{Goel2018, Goel2019_2, Shi2020},
    \item \emph{thermal management} where the operating changes to achieve temperature constraints should be smooth rather than abrupt~\cite{Zanini2009},
    \item \emph{electric vehicle charging} where prices can be used to prevent load variations, but prices should not change too quickly~\cite{Kim2014}, and
    \item \emph{routing in networks} (e.g., automatically switched optical networks) where there is a cost for establishing a connection~\cite{Lin2012}.
\end{itemize}

\chapter{Application: Right-Sizing Data Centers}\label{chapter:application}

In this chapter, we seek to derive models for right-sizing data centers based on smoothed convex optimization. We then use these models to motivate the variations of smoothed convex optimization we discuss in \cref{chapter:theory}. We begin by discussing the general features of server infrastructures. Then, we examine the modeling of \emph{operating costs}\index{operating cost} (i.e., hitting costs) and \emph{switching costs}\index{switching cost} (i.e., movement costs) in detail.

Data centers are large-scale, complex systems. Therefore, any model we examine in this chapter is an approximation. Our goal is to find models that generalize well across many data centers. When examining a specific data center, the models we discuss can be extended to yield better approximations.

\section{Architectures}\label{section:application:architectures}

We begin by revisiting the characteristics by which the design of data centers varies.

\paragraph{Speed-Scalability} For our data center model, it is natural to assume that the servers are speed scalable. That is, their utilization can vary from idling at 0\% utilization to full load at 100\% utilization. Furthermore, we assume that server utilization scales linearly with its load as this is required to maintain a steady quality of service~\cite{Bansal2015}.

\paragraph{Energy} There has been much recent work on modeling the energy consumption of a data center as a function of the speed of individual servers. We discuss these approaches in detail in \cref{section:application:operating_cost:energy}. However, in many cases, the ecological and economic cost of energy does not remain constant but fluctuates over time. More importantly, at no point in time is the energy cost a linear function of energy consumption because many data centers have quotas for each energy source~\cite{Miller2021}. For example, there may only be a limited supply of renewable energy. Once the energy consumption of our data center exceeds this supply, it has to resort to different sources of energy with different costs. Recently, a trend has also been for data centers to produce their own renewable energy~\cite{Lin2012}. In this case, this renewable energy is significantly cheaper than any energy purchased after energy consumption exceeds energy production. If, in contrast, more energy is produced than is consumed, some energy may even be sold depending on the energy grid's state.

\paragraph{Homogeneity and Heterogeneity} When the data center right-sizing problem was first introduced, approaches were focused on homogeneous data centers. In a \emph{homogeneous}\index{homogeneous data center} data center, all servers are of the same type. We say that a server is of a different type than another server if their operating costs or switching costs differ significantly. In any such scenario where we have multiple server types, the data center is \emph{heterogeneous}\index{heterogeneous data center}. It is easy to see that each server type resembles one dimension in our smoothed optimization. While a homogeneous data center model is much simpler, most data centers are heterogeneous in practice, a trend that is projected to intensify~\cite{Jin2016}. Heterogeneity may arise naturally as defect servers of a homogeneous data center are replaced by newer servers. However, the primary advantage of heterogeneous architectures is that certain tasks can be delegated to specialized servers~\cite{Jin2016}. For example, CPUs and GPUs may be used within the same data center, but GPUs should only perform massive parallel computations as CPUs are faster when tasks cannot be parallelized easily~\cite{Shan2006}. The differing power-performance relationships of multiple server types are another benefit of heterogeneous architectures~\cite{Jin2016}.

\paragraph{Size} In principle, the data center right-sizing problem only admits integral solutions as, at any time, we can only run an integral number of servers of each type. However, if the size of our data center is large in each dimension, it is reasonable to use fractional solutions as an approximation. Typically, data centers fulfill this requirement. Lately, the surge in hyperscale facilities indicates a trend towards larger data centers~\cite{Jones2018}. For this reason, we discuss integral as well as fractional solutions in our analysis.

\paragraph{Reliability and Availability} Many services must satisfy specific requirements regarding reliability and availability, which are often key components of \emph{service level agreements}\index{service level agreement} (SLAs)~\cite{Lin2011}. Such requirements can be enforced as hard constraints using the decision space by requiring a minimum number of active servers per server type in our model. As our model only chooses the number of servers of some type, the algorithm can freely choose the active servers between all servers according to some guidelines to meet the requirements. However, choosing a decision space that is too tight may reduce the cost-saving potential. Therefore, an alternative approach is to use operating costs (discussed in \cref{section:application:operating_cost}) as softer constraints, for example, by enforcing a maximum utilization on servers of type $k$ that is less than some $\theta_k < 100\%$. Availability requirements for specific jobs can also be encoded into the revenue loss as a function of average job delay.

\paragraph{Networks} Most of our analysis is focused on the case of a single data center. However, the problem of deciding where to route incoming loads within a network of data centers so as to minimize the overall cost is acutely relevant~\cite{Miller2021}. For example, if data centers produce their own renewable energy, the cost of energy at each individual data center is likely to vary drastically over time as weather conditions shift~\cite{Lin2012}. Therefore, previous studies focusing on individual data centers found that wind and solar can only be used with large-scale storage due to their intermittency~\cite{Gmach2010, Gmach2010_2}. Nevertheless, \citeauthor{Lin2012}~\cite{Lin2012} showed that by running a network of data centers in separate locations, the adverse effects of renewable energy production could largely be avoided as the availability of solar and wind can be aggregated across locations. In \cref{section:application:dynamic_routing}, we show how our cost model can be extended to support geographical load balancing across a network of data centers.

\section{Dispatching}\label{section:application:dispatching}

The modeling of load is the core of our data center model. We say that a \emph{load profile}\index{load profile} of our system during time slot $t$ is a vector $\lambda_t \in \mathbb{N}_0^e$ where $e$ is the number of job types. In \cref{section:application:dispatching:multiple_load_types}, we give more concrete examples of varying job types, but generally, jobs are of different types when their cost model is different. For now, we assume that the processing time of all jobs on any server type takes exactly one time slot. In \cref{section:application:dynamic_duration}, we discuss how this approach can be extended to jobs with a dynamic duration (per server type).

We denote by $m_k \in \mathbb{N}$ the maximum number of available servers of type $k$ and by $d$ the number of server types. Our decision space is therefore given as $\mathcal{X} := \mathbb{R}_{\geq 0, \leq m_0} \times \dots \times \mathbb{R}_{\geq 0, \leq m_d}$. Further, we denote by $l_k^{\text{max}} \in \mathbb{N}$ the maximum number of jobs a server of type $k$ can process in a single time slot.

We call a load profile $\lambda_t$ \emph{feasible}\index{feasible load profile} if \begin{align}
    \sum_{i=1}^e \lambda_{t,i} \leq \sum_{k=1}^d l_k^{\text{max}} m_k =: \lambda^{\text{max}}.
\label{eq:feasible_load_profiles}
\end{align} In words, a load profile is feasible if we can process all incoming jobs by using all available servers. In our subsequent analysis, we assume that all load profiles are feasible. This is not a restriction as, in practice, if a load profile is not feasible, one would have to delay a large enough selection of jobs to an upcoming time slot until the feasibility of the current load profile is achieved.

\subsection{Optimal Load Balancing}\label{section:application:dispatching:optimal_load_balancing}

To begin with, we consider a data center with homogeneous loads, so by \cref{eq:feasible_load_profiles}, we have $\lambda_t \in [\lambda^{\text{max}}]_0 = \{0, 1, \dots, \lambda^{\text{max}}\}$. As we discussed in \cref{section:application:architectures}, energy consumption depends on server speed which in turn depends on the number of available servers $x_t \in \mathcal{X}$ (determined by the algorithm) and the load profile $\lambda_t$ (determined by an adversary). Therefore, it is natural to ask how we can optimally distribute the load across the available servers. Let $g_{t,k} : [0,l_k^{\text{max}}] \to \mathbb{R}_{\geq 0}$ be a convex increasing non-negative function representing the cost incurred by operating a server of type $k$ during time slot $t$ with a load of $l \in [0,l_k^{\text{max}}]$. We set $g_{t,k}(l) = \infty$ for $l > l_k^{\text{max}}$. The utilization (or speed) of a server of type $k$ with a load of $l$ is then given as $s_k(l) := l / l_k^{\text{max}} \in [0,1]$, assuming that the speed of a server is linearly proportional to its load.

We first consider the homogeneous setting. In the homogeneous case, we write $g_t := g_{t,1}$. We discuss concrete functions in \cref{section:application:operating_cost}, but for this section we assume that $g_t$ can be any non-negative convex function. Disregarding switching costs, we obtain the following optimization for the homogeneous setting: \begin{align*}
    \min_{x_t \in \mathcal{X}} \quad &\sum_{t=1}^T \sum_{i=1}^{x_t} g_t(l_{t,i}) \\
    \text{subject to}        \quad &\sum_{i=1}^{x_t} l_{t,i} = \lambda_t
\end{align*} where $l_{t,i} \in \mathbb{R}_{\geq 0}$ denotes the number of jobs processed by server $i$ during time slot $t$. \citeauthor{Lin2011}~\cite{Lin2011} showed that for fixed $x_t$ the remaining dispatching problem is convex. The optimal dispatching rule $l_{t,i}^*$ is $\lambda_t / x_t$ for all $i \in [x_t]$, implying that given $x_t$, it is optimal to balance load evenly across all servers. \citeauthor{Albers2021_2}~\cite{Albers2021_2} prove this fact using Jensen's inequality. We, therefore, define our overall operating cost as \begin{align}\label{eq:homogeneous_load_balancing}
    f_t(x) := x g_t\left(\frac{\lambda_t}{x}\right).
\end{align} It is easy to see that this definition of $f_t$ is jointly convex in $\lambda_t$ and $x$. Crucially, this is an approximation as we did not impose the restriction that job arrival rates must be integral, which is required in practice.

In the heterogeneous setting, it is easy to see that there is no single optimal dispatching rule. However, our analysis implies that the optimal dispatching strategy is to load balance within one server type, even in the heterogeneous setting~\cite{Albers2021_2}. We define the operating cost for servers of type $k$ during time slot $t$ as \begin{align}\label{eq:heterogeneous_load_balancing_unit}
    h_{t,k}(x,z) := \begin{cases} 
        x g_{t,k}\left(\frac{l_{t,k}}{x}\right) & x > 0 \\
        \infty                                  & x = 0 \land l_{t,k} > 0 \\
        0                                       & x = 0 \land l_{t,k} = 0
    \end{cases}
\end{align} where $l_{t,k} = \lambda_t z$, $x$ is the number of active servers of type $k$, and $z \in [0,1]$ is the fraction of the job volume $\lambda_t$ that is assigned to server type $k$~\cite{Albers2021_2}. As $g_{t,k}$ is convex and increasing it follows that $h_{t,k}$ too is jointly convex in $\lambda_t$ and $x$. Given the set of all possible job assignments to a collection of $d$ different server types $\mathcal{Z} := \{z \in [0,1]^d \mid \sum_{k=1}^d z_k = 1\}$, the overall operating cost can be defined as the convex optimization \begin{align}\label{eq:heterogeneous_load_balancing}
    f_t(x) := \min_{z \in \mathcal{Z}} \sum_{k=1}^d h_{t,k}(x_k,z_k).
\end{align} It is easy to see that \cref{eq:homogeneous_load_balancing} is equivalent to \cref{eq:heterogeneous_load_balancing} for $d = 1$.

\Cref{eq:heterogeneous_load_balancing} can be computed using the convex program \begin{alignat}{2}\label{eq:heterogeneous_load_balancing_constrained_convex_program}
    &\min_{z \in [0,1]^d} &\qquad&\sum_{k=1}^d h_{t,k}(x_k,z_k) \\
    &\textrm{subject to}  &      &\sum_{k=1}^d z_k = 1.
\end{alignat} A linear-equality-constrained problem can be solved via a change of variable~\cite{Singer2016}. \Cref{eq:heterogeneous_load_balancing_constrained_convex_program} using an equality constraint is equivalent to a $(d-1)$-dimensional convex program as the last dimension $d$ is completely determined by the dimensions $1$ through $d-1$: \begin{alignat}{2}\label{eq:heterogeneous_load_balancing_convex_program}
    &\min_{z \in [0,1]^{d-1}} &\qquad&\sum_{k=1}^{d-1} h_{t,k}(x_k,z_k) + h_{t,d}(x_d,z_d) \\
    &\textrm{subject to}  &      &\sum_{k=1}^{d-1} z_k \leq 1 \nonumber
\end{alignat} where $z_d = 1 - \sum_{k=1}^{d-1} z_k$. \Cref{eq:heterogeneous_load_balancing_convex_program} can easily be computed numerically using a standard solver for convex programs.

\subsection{Multiple Job Types}\label{section:application:dispatching:multiple_load_types}

The problem becomes harder when we consider heterogeneous loads. However, as we have seen in \cref{section:application:architectures}, multiple \emph{job types}\index{job type} are often required in practice, for example, to distinguish different processing speeds of CPUs and GPUs for different tasks. Instead of determining the optimal assignment of fractions of the total load to server types, we now need to determine the optimal assignment of fractions of individual job types to server types. To that end, we define the set of such assignments as \begin{align*}
    \mathcal{Z}_t := \left\{z_t \in [0,1]^{e^d} \mid \forall i \in [e].\ \lambda_t > 0 \implies \sum_{k=1}^d z_{t,k,i} = \frac{\lambda_{t,i}}{\lambda_t}\right\}
\end{align*} where $\lambda_t := \sum_{i=1}^e \lambda_{t,i}$. Here, $z_{t,k,i}$ is the fraction of jobs of type $i$ assigned to servers of type $k$ during time slot $t$. $\lambda_{t,i} / \lambda_t$ is the fraction of jobs of type $i$ at time $t$.

We continue to use optimal load balancing similarly to \cref{eq:heterogeneous_load_balancing_unit} to distribute load evenly across all servers of the same type. However, we introduce an additional cost that is paid per job of type $i$ that is processed on a server of type $k$. Our new operating cost for servers of type $k$ during time slot $t$ thus becomes \begin{align}\label{eq:multiple_load_types_load_balancing_unit}
    h_{t,k}(x,z) := \begin{cases} 
        x g_{t,k}\left(\frac{l_{t,k}}{x}\right) + \sum_{i=1}^e l_{t,k,i} q_{t,k,i}\left(\frac{l_{t,k}}{x}\right) & x > 0 \\
        \infty                                                                                                   & x = 0 \land l_{t,k} > 0 \\
        0                                                                                                        & x = 0 \land l_{t,k} = 0
    \end{cases}
\end{align} where $l_{t,k,i} = \lambda_{t,i} z_i$, $l_{t,k} = \sum_{i=1}^e l_{t,k,i}$, $x$ is the number of active servers of type $k$, and $z \in [0,1]^e$ are the fractions of the job volumes $\lambda_{t,i}$ that are assigned to server type $k$. Here, $q_{t,k,i}(l)$ is the convex increasing non-negative cost incurred by processing a job of type $i$ on a server of type $k$ during time slot $t$ when a total of $l$ jobs are processed on this server. We discuss this cost in greater detail in \cref{section:application:operating_cost}. $g_{t,k}(l)$ remains the convex increasing non-negative operating cost of a server of type $k$ during time slot $t$ under total load $l$. It is easy to see that this definition of $h$ remains jointly convex in $\lambda_t$ and $x$. We still set $g_{t,k}(l) = \infty$ if $l > l_k^{max}$.

We can now define the overall operating cost analogously to \cref{eq:heterogeneous_load_balancing} as the convex optimization \begin{align}\label{eq:multiple_load_types_load_balancing}
    f_t(x) := \min_{z \in \mathcal{Z}_t} \sum_{k=1}^d h_{t,k}(x_k,z_k).
\end{align}

Again, we observe that for $e = 1$ \cref{eq:multiple_load_types_load_balancing} is equivalent to \cref{eq:heterogeneous_load_balancing} by setting $q_{t,k,1} \equiv 0$. Henceforth, we restrict our analysis to the model from \cref{eq:multiple_load_types_load_balancing}.

We can use a similar approach to \cref{eq:heterogeneous_load_balancing_convex_program} to simplify \cref{eq:multiple_load_types_load_balancing} to a $(d-1)$-dimensional convex optimization: \begin{alignat}{2}\label{eq:multiple_load_types_load_balancing_convex_program}
    &\min_{z \in [0,1]^{d-1}} &\qquad&\sum_{k=1}^{d-1} h_{t,k}(x_k,z_k) + h_{t,d}(x_d,z_d) \\
    &\textrm{subject to}  &      &\sum_{k=1}^{d-1} z_{k,i} \leq \lambda_{t,i} / \lambda_t \qquad\forall i \in [e]\quad\text{if }\lambda_t > 0 \nonumber
\end{alignat} where $z_{d,i} = \lambda_{t,i} / \lambda_t - \sum_{k=1}^{d-1} z_{k,i}$ if $\lambda_t > 0$ and $z_{d,i} = 0$ otherwise. Again, \cref{eq:multiple_load_types_load_balancing_convex_program} can be computed using a standard solver for convex programs.

\section{Operating Cost}\label{section:application:operating_cost}

Our next goal is to model the operating cost of servers in a data center and the cost of powering up and powering down servers. Note that reducing the energy consumption of a data center also reduces cooling and power distribution costs~\cite{Lin2011, Clark2005}. It is, therefore, reasonable to focus on server-specific costs.

Given our analysis from \cref{section:application:dispatching}, we seek to determine the server-dependent cost $g_{t,k}(l)$ and the job-dependent cost $q_{t,k,i}(l)$ given $l$ jobs are processed on servers of type $k$. As introduced in \cref{chapter:introduction}, we interpret the server-dependent cost as \emph{energy cost}\index{energy cost} and the job-dependent costs as \emph{revenue loss}\index{revenue loss}.

\paragraph{Energy Cost} Energy cost is a function of energy consumption which in turn is a function of server utilization. We have seen in \cref{section:application:dispatching:optimal_load_balancing} that the server utilization of a server of type $k$ given a server load of $l$ is $l / l_k^{\text{max}}$ where we defined $l_k^{\text{max}}$ as the maximum number of jobs a server of type $k$ can process in a single time slot. Let $e_{t,k}(s)$ be the energy cost of operating a server of type $k$ during time slot $t$ with utilization $s$. Then, \begin{align*}
    g_{t,k}(l) := e_{t,k}\left(\frac{l}{l_k^{\text{max}}}\right).
\end{align*} Some authors only consider energy costs as they assume the largest fraction of operating costs~\cite{Bansal2015}.

\paragraph{Revenue Loss} Revenue loss measures the lost revenue based on our distribution of incoming job types to server types. We model revenue loss as a convex increasing non-negative function $r_{t,i}(d)$ that describes the domain-specific revenue loss of jobs of type $i$ during time slot $t$ given an average delay of $d$. We model the average delay $d$ of jobs processed on a server of type $k$ where the total load on the server is $l$ as the convex increasing non-negative function $d_{k}(l)$. We also introduce an additional delay $\delta_{t,k,i}$ which models the constant delay incurred by processing jobs of type $i$ on servers of type $k$ during time slot $t$. Hence, we obtain \begin{align*}
    q_{t,k,i}(l) := r_{t,i}(d_{k}(l) + \delta_{t,k,i}).
\end{align*}

Typically, revenue loss is scaled linearly with delay~\cite{Lin2011}. We thus set $r_{t,i}(d) = \gamma_i \cdot d$ for factors $\gamma_i \in \mathbb{R}_{\geq 0}$.

\paragraph{} In the subsequent sections, we consider models for energy cost and delay, respectively.

\subsection{Energy}\label{section:application:operating_cost:energy}

Our goal is to model the energy cost $e_{t,k}(s)$ of a server of type $k$ during time slot $t$ based on its utilization $s$. To this end, we consider two functions. First, let $\phi_k(s)$ denote the energy consumption of a server of type $k$ with utilization $s$. Second, let $\nu_{t,k}(p)$ be the energy cost of a server of type $k$ during time slot $t$ associated with an energy consumption of $p$. We then set $e_{t,k}(s) := \nu_{t,k}(\phi_k(s))$. If the utilization $s$ exceeds the maximum allowed utilization $\theta_k \in [0,1]$ of server $k$ as described in \cref{section:application:architectures}, i.e. $\theta_k < s$, we set $e_{t,k}(s) = \infty$.

\paragraph{Energy Cost} We begin by modeling the energy cost associated with a consumption of $p$ units of energy during time slot $t$. The simplest model assigns each unit of energy the average cost during time slot $t$, which we call $c_t$. We then obtain $\nu_{t,k}(p) := c_t p$ for all $k \in [d]$ which is convex, increasing, and non-negative. If we simply want to achieve power-proportionality of the data center, it suffices to set $c \equiv 1$. We seek a more complex model in many practical applications, which we introduce at the end of this subsection and describe formally in \cref{section:application:energy_quotas}.

\paragraph{Energy Consumption} There are a variety of models for energy consumption in a data center. We give an overview of the most common models and reference additional models. To rephrase our objective, we seek to model the energy consumption of a single server based on the utilization (also referred to as speed or frequency) $s$. The energy consumption $\phi_k(s)$ can be calculated as $\delta \Phi_k(s)$ where $\delta$ is the length of a time slot and $\Phi_k(s)$ is the power consumption of a server of type $k$ with utilization $s$~\cite{Dayarathna2016}. Models of power consumption can be categorized into linear and non-linear models~\cite{Ismail2020}. In this work, we mostly restrict our analysis to linear models of which the performance is highly dependent on the chosen parameters~\cite{Ismail2020}. \citeauthor{Ismail2020}~\cite{Ismail2020} also present more accurate non-linear and machine learning models.

An intuitive model of power consumption is discussed by \citeauthor{Dayarathna2016}~\cite{Dayarathna2016} and \citeauthor{Ismail2020}~\cite{Ismail2020} is \begin{align}\label{eq:energy_model:1}
    \Phi_k(s) = (\Phi_k^{\text{max}} - \Phi_k^{\text{min}})s + \Phi_k^{\text{min}}
\end{align} where $\Phi_k^{\text{max}}$ and $\Phi_k^{\text{min}}$ are the power a server of type $k$ consumes at full load and when idling, respectively. In linear power models we distinguish between the \emph{dynamic power}\index{dynamic power}, here the first term $(\Phi_k^{\text{max}} - \Phi_k^{\text{min}})s$, and the \emph{static power}\index{static power} (or leakage power), $\Phi_k^{\text{min}}$. Following the findings of \citeauthor{Barroso2007}~\cite{Barroso2007} and to simplify the above model, we can use that generally servers consume half of their peak power when idling~\cite{Ismail2020}: \begin{align}\label{eq:energy_model:2}
    \Phi_k(s) = \frac{1}{2} \Phi_k^{\text{max}} (1 + s).
\end{align}

The above models of power consumption are linear. Another frequently used model is non-linear and defined as \begin{align}\label{eq:energy_model:3}
    \Phi_k(s) = \frac{s^{\alpha_k}}{\beta_k} + \Phi_k^{\text{min}}
\end{align} where $\alpha_k > 1$ and $\beta_k > 0$ are constants~\cite{Dayarathna2016}. Here, $s^{\alpha_k}/\beta_k$ is the dynamic power and $\Phi_k^{\text{min}}$ is the static power. A variant of this model is used by \citeauthor{Bansal2015}~\cite{Bansal2015}.

We observe that all of the above models are convex, increasing, and non-negative.

\paragraph{Energy Quotas} We mentioned the prevalence of energy quotas in many practical applications in \cref{section:application:architectures}. Up until now, we have only considered a fixed energy price per unit of energy. When we consider energy quotas, this setup changes. For example, we may produce a changing amount of renewable energy at a data center which is much cheaper than regular energy~\cite{Lin2012}. It is easy to see that computing the energy cost per server is insufficient in such a scenario. Instead, we must adjust our model from \cref{eq:multiple_load_types_load_balancing} to simultaneously calculate the energy cost across all servers (of all server types). However, we can easily model more complex energy prices within our existing framework once this adjustment is made. For example, \citeauthor{Lin2012}~\cite{Lin2012} use a simplified model that does not take into account individual server utilization but assumes a quota of renewable energy that is assumed to be free of charge: \begin{align*}
    e_t(x) = c_{t}(x - p_t)^+
\end{align*} where $c_t$ is the average price per unit of energy during time slot $t$ and $p_t$ is the quota of free renewable energy during time slot $t$, considering only a single server type. We define $(\cdot)^+ := \max \{0, \cdot\}$. In this model, each active server consumes approximately one unit of energy. Note that here $e_t$ depends on the number of active servers $x$ allowing the consideration of quotas, whereas, in our original model, it solely depends on $l$.

This simple model could be extended by considering quotas for multiple sources of energy, considering the gain of selling unused renewable energy back to the grid, or by computing energy consumption based on the actual utilization of active servers. We present one such model in \cref{section:application:energy_quotas}.

\paragraph{Economic and Ecological Cost} Our energy quota model can be extended to model a variety of incentives where the incentives are provided by our choice of the average energy costs of energy source $i$ per unit of energy during time slot $t$ which are denoted by $c_{t,i}$. Contrary to initial intuition, the nature of these costs does not have to be purely economical. While it is reasonable to consider the cost of energy, we can extend our incentives by considering the emission of $CO_2$ equivalents per unit of energy of each energy source. Such a policy can guide towards a carbon-free makeup of energy sources across all data center locations. Such a model is of interest in many current data center networks~\cite{Hoelzle2020, Miller2021}.

\subsection{Delay}

We use queueing theory to model the queueing delay of jobs in the system. We are interested in the average delay $d_{k}(l)$ of jobs when they are processed by a server of type $k$ with a total of $l$ serviced jobs.

In our model, we consider a single service channel, our server of type $k$. We further assume that the queue's capacity is unlimited, as is the potential number of job arrivals. The latter assumption is an approximation as, in principle, we can expect a total of $l$ arrivals during time slot $t$. It is natural to assume that the arrival of jobs is Markovian, i.e., Poisson-distributed. So the interarrival times of jobs follow the exponential distribution. To remain as general as possible, the only restriction we impose on service times is that we assume they are independent. Hence, our assumptions naturally lead us to model delay based on an M/GI/1 queue.

A good model of the queueing discipline of a server is the \emph{Round Robin}\index{round robin scheduling} (RR) scheduling algorithm, where jobs are processed in turn, but when the processing of a job exceeds some time quantum, it is moved to the back of the circular queue. In many cases, however, the idealized \emph{Processor Sharing}\index{processor sharing queue} (PS) discipline is used as an approximation of RR~\cite{Lin2011, Lin2012}. In a PS queue, each job in the system is processed simultaneously at a rate inversely proportional to the current number of jobs. Therefore, the service rate is given as $C / n$ where $C$ is the server's capacity and $n$ is the current number of jobs. PS is an approximation of RR as, in general, the capacity of a server cannot be divided into real-valued parts~\cite{Virtamo2007}. However, note that this approximation echoes our simplification in \cref{section:application:dispatching:optimal_load_balancing}, leading to optimal load balancing on a per-server level.

A single server modeled by the PS discipline and operating according to Poisson-distributed arrivals has the valuable property that its queue length distribution is geometric irrespectively of the service time distribution~\cite{Aalto2007}. In other words, the average delay in a PS queue is insensitive to the service time distribution.

Let $X \sim \text{Po}(\lambda)$ be the number of arriving jobs per time unit with rate $\lambda$. The expected delay $E T$ of a PS queue is then given as \begin{align}\label{eq:avg_delay}
    E T = \frac{1/\mu}{1-\rho} = \frac{1}{\mu - \lambda}
\end{align} where $\mu = C / E X = C / \lambda$ is the service rate of the server and $\rho = \lambda / \mu$ is the parameter to the geometric queue length distribution~\cite{Virtamo2007}. The PS discipline can be considered utmost egalitarian as the average delay of any job in our system is directly proportional to the total number of jobs but does not depend on the type of the job~\cite{Virtamo2007}.

Using the model from \cref{eq:avg_delay}, the average delay of jobs processed by servers of type $k$ is given by \begin{align}\label{eq:delay}
    d_{k}(l) := \frac{1}{\mu_k - l}
\end{align} where $\mu_k$ is the service rate of a server of type $k$ and $l$ is the total number of jobs processed by the server. Assuming that a server can only process a single job during a time slot, i.e. $C = 1$, the service rate is given as $\mu_k = 1$, a model previously used by \citeauthor{Lin2011}~\cite{Lin2011, Lin2012}. In \cref{section:application:dynamic_duration}, we discuss how the delay can be obtained in a more general setting where the duration of jobs is heterogeneous.

Given average delay $d$, we use a natural model of the revenue loss similar to the proposal of \citeauthor{Lin2011}~\cite{Lin2011} which is given by $r_{t,i}(d) := (d - \delta_i)^+$ where $\delta_i$ denotes the minimal detectable delay of jobs of type $i$. For $\delta_i = 0$ for all $i \in [e]$, this model is the same as the model of~\cite{Lin2012}. Note that our definitions of $d_k$ and $r_{t,i}$ are convex, increasing, and non-negative.

\section{Switching Cost}\label{section:application:switching_cost}

The switching cost can be understood as the cost associated with transitioning a server from a sleep state to the active state and vice versa. This switching cost is independent of time but may depend on the type of server that is transitioned. Hence, we naturally arrive at the restriction which we later impose on simplified smoothed convex optimization (\cref{problem:simplified_smoothed_convex_optimization}) where we introduce dimension-dependent transition costs $\beta_k$ and define the movement cost as a generalization of the $\ell_1$-norm \begin{align*}
    \norm{x} = \sum_{k=1}^d \beta_k (x)^+
\end{align*} where we assume $X_0 = X_{T+1} = \mathbf{0}$. Note that in this model, we only pay the transition cost $\beta_k$ when a server is powered up. As all servers have to arrive in the sleep state eventually, we can fold the cost of powering down a server into $\beta_k$, i.e., $\beta_k$ represents the cost associated with powering up and powering down a server. Additionally, we assume that the operating cost associated with a sleeping server is $0$. This restriction is reasonable when we interpret the sleep state as a server that is fully powered down.

\paragraph{Model} \citeauthor{Lin2011}~\cite{Lin2011} identify four costs contributing to the transition cost: First, (1) the additional energy consumed by toggling a server on and off $\epsilon_k$, (2) the delay in migrating connections or data $\delta_k$, for example, when using virtual machines, (3) wear-and-tear costs of toggling a server $\tau_k$, and (4) the perceived risk $\rho_k$ associated with toggling a server of type $k$. We thus model the transition cost as \begin{align*}
    \beta_k := c_k(\epsilon_k + \delta_k \Phi_k^{\text{max}}) + \tau_k + \rho_k
\end{align*} where $c_k$ is the average cost of energy for servers of type $k$.

When only (1) and (2) are considered, $\beta_k$ is on the order of migrating network state~\cite{Chen2008}, storage state~\cite{Thereska2009}, or a large virtual machine~\cite{Clark2005} which roughly translates to the cost of operating a server of type $k$ for a few seconds to several minutes~\cite{Lin2011}. Including (3) increases $\beta_k$ to the order of operating a server of type $k$ for an hour~\cite{Bodik2008}. Our model of risk associated with toggling a server is the vaguest. \citeauthor{Lin2011}~\cite{Lin2011} suggest that if this risk is included, $\beta_k$ is on the order of operating a server of type $k$ for several hours.

We call $\xi_k := \beta_k / e_k(0)$ the \emph{normalized switching cost}\index{normalized switching cost} where $e_k(0) = c_k \delta \Phi_k^{\text{min}}$ is the average energy cost of an idling server of type $k$ in a single time slot. Hence, $\xi_k$ approximately measures the minimum duration a server must be asleep to outweigh the switching cost. Competitive algorithms typically wait until this cost is amortized before taking action. Therefore, the normalized switching cost can be used to review how a chosen switching cost relates to the remainder of the model.

\section{Dynamic Routing}\label{section:application:dynamic_routing}

For dynamic routing, also called \emph{geographical load balancing}\index{geographical load balancing}, we consider a network of $\iota$ data centers. We are interested in dispatching incoming jobs from $\zeta$ different geographically centered locations to the data centers and simultaneously right-sizing each data center (i.e., determining the number of active servers)~\cite{Lin2012}. Let $d$ be the number of server types and $e$ the number of job types. We observe that this problem can be translated to a pure data center right-sizing problem by considering $\iota \cdot d$ dimensions and a total of $\zeta \cdot e$ load types. A dimension $(j,k)$ thus encompasses a data center $j \in [\iota]$ and a server type $k \in [d]$. A \emph{load type}\index{load type} $(s,i)$ encompasses a source $s \in [\zeta]$ and a job type $i \in [e]$.

Costs can be modeled in the same way that was presented in \cref{section:application:operating_cost}. Here, $\delta_{t,(j,k),(s,i)}$ can be interpreted as the network delay incurred by routing a request from source $s$ to data center $j$ during time slot $t$~\cite{Lin2012}.

\section{Dynamic Duration}\label{section:application:dynamic_duration}

In most scenarios, job types do not only incur different costs (as covered in \cref{section:application:dispatching:multiple_load_types}) but also have varying duration. We denote by $\delta$ the length of a time slot and by $\eta_{k,i}$ the average processing time of a job of type $i$ on a server of type $k$ assuming the server operates at full utilization. We impose the natural assumption that for any job type $i \in [e]$ there exists a server type $k \in [d]$ such that $\eta_{k,i} \leq \delta$.

If this were not the case, we could not guarantee that the jobs of this type finish in time on some server before servers are assigned to a new set of jobs at the beginning of the next time slot. This model can be extended to support jobs that reach across multiple time slots by keeping track of when jobs will be finished and extending each newly arriving load profile with all unfinished jobs. It can then be enforced that those jobs must be processed on the same server type throughout their lifetime by setting the hitting cost appropriately.

\paragraph{Sub-Time Jobs} To translate the problem that considers dynamic job durations to the problem discussed in \cref{section:application:dispatching:multiple_load_types}, we interpret the length of time during which a server operates at full utilization to process its assigned jobs as the (fractional) number of \emph{sub-time jobs}\index{sub-time jobs}. During time slot $t$ for a server type $k$ with $x$ active servers, the number of sub-time jobs that are processed on a single server of type $k$ is given as $\sum_{i=1}^e l_{t,k,i} \eta_{k,i}$. Note that this is similar to our previous definition of $l_{t,k}$ but is scaled with the time jobs take to be processed. \Cref{fig:dynamic_job_duration} shows how an assignment of sub-time jobs to a server relates to an assignment of jobs.

\begin{figure}
    \centering
    \input{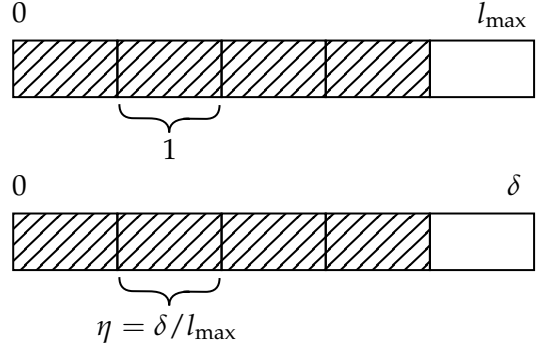}
    \caption{Relationship of jobs and sub-time jobs. Shown is an assignment of four jobs with equal runtime to a server with $l_{\text{max}} = 5$. For some fixed $\delta$, the jobs are translated to sub-time jobs by setting $\eta = \delta / l_{\text{max}}$. Jobs are visualized by the upper bar, their corresponding sub-time jobs are displayed by the lower bar. The figure shows that the number of sub-time jobs is proportional to the processing time of a job. As servers are assumed to be dynamically speed-scalable, the utilization of a server during some time slot is proportional to the number of sub-time jobs it is assigned to.}
    \label{fig:dynamic_job_duration}
\end{figure}

\paragraph{Utilization} The utilization of a server of type $k$ given an assignment of $l$ sub-time jobs is given as $l / \delta$. We set $g_{t,k}(l) = \infty$ if $l > \delta$. We also add the processing time to the perceived delay until a job is finished by increasing $\delta_{t,k,i}$ by $\eta_{k,i}$.

\paragraph{Delay} To model the average delay, we use the average job duration to determine the service rate. Formally, we set the service rate $\mu$ from \cref{eq:avg_delay} to $\mu = 1 / E Y$ where the expected duration of a job on a server of type $k$ during time slot $t$ is \begin{align*}
    E Y = \begin{cases}
        \sum_{i=1}^e l_{t,k,i} \eta_{k,i} / \sum_{i=1}^e l_{t,k,i} & \sum_{i=1}^e l_{t,k,i} > 0 \\
        0 & \text{otherwise}.
    \end{cases}
\end{align*} Note that we normalized the service rate to unit time. The arrival rate is therefore given as the number of jobs handled on a single server of type $k$ divided by the length of a time slot, i.e. $\lambda = \frac{1}{\delta} \sum_{i=1}^e l_{t,k,i} / X_{t,k}$. When servers of type $k$ receive no load, i.e., $\sum_{i=1}^e l_{t,k,i} = 0$, we set their associated delay to zero. For $\sum_{i=1}^e l_{t,k,i} > 0$ and using our original model from \cref{eq:delay}, we obtain the average delay across all jobs processed on a server of type $k$ during time slot $t$ as \begin{align*}
    d_{t,k} := \begin{cases}
        1 / \left(\frac{1}{E Y} - \lambda\right) & \frac{1}{E Y} > \lambda \\
        \infty & \text{otherwise}.
    \end{cases}
\end{align*} Here, we use our assumption that any job can be processed within a single time slot on servers of some type. Note that if $\eta_{k,i} \to \delta$ for some server type $k$ and job type $i$ the average delay will go to infinity as the build-up of long job queues becomes more likely.

\paragraph{Prohibitions} If we want to prohibit altogether that jobs of type $i$ are processed on servers of type $k$, it suffices to set $\delta_{t,k,i} = \infty$ as the revenue loss $r_{t,i}$ is assumed to be convex.

\section{Energy Quotas}\label{section:application:energy_quotas}

We motivated energy quotas in \cref{section:application:operating_cost:energy} and are now ready to present a model that allows for the required flexibility when modeling energy cost. To be as general as possible, we continue to assume a network of $\iota$ data centers with $d$ server types, resulting in a total of $\iota \cdot d$ dimensions. We also continue to examine $\zeta \cdot e$ load types where $\zeta$ is the number of geographically centered job sources and $e$ is the number of job types. We now extend our model from \cref{eq:multiple_load_types_load_balancing} to \begin{align*}
    f_t(x) := \min_{z \in \mathcal{Z}_t} \sum_{j=1}^{\iota} \left(e_{t,j}(x,z) + \sum_{k=1}^{d} h_{t,(j,k)}(x,z)\right)
\end{align*} where $e_{t,j}(x,z)$ is the total energy cost of data center $j$ during time slot $t$. $h_{t,(j,k)}(x,z)$ reduces to \begin{align*}
    \sum_{s=1}^{\zeta} \sum_{i=1}^e l_{t,(j,k),(s,i)} q_{t,(j,k),(s,i)}\left(\frac{l_{t,(j,k)}}{x_{(j,k)}}\right)
\end{align*} for $x > 0$ where $l_{t,(j,k),(s,i)} = \lambda_{t,(s,i)} z_{(s,i)}$ and $l_{t,(j,k)} = \sum_{s=1}^{\zeta} \sum_{i=1}^e l_{t,(j,k),(s,i)} \eta_{(j,k),(s,i)}$. The other cases remain as described in \cref{eq:multiple_load_types_load_balancing_unit}. Note that these definitions of $f_t$ and $h_{t,(j,k)}$ are simply adaptations of our previous definitions from \cref{section:application:dispatching} and, in particular, are convex, increasing, and non-negative.

Given $x \in \mathcal{X}$ and $z \in \mathcal{Z}$, in data center $j$, the average utilization of active servers of type $k$ during time slot $t$ is given as $l_{t,(j,k)} / x_{(j,k)}$. Hence, using optimal load balancing, we obtain the total energy consumption of data center $j$ with \begin{align*}
    \phi'_j(x,z) := \sum_{k=1}^{d} x_{(j,k)} \phi_{k}\left(\frac{l_{t,(j,k)}}{x_{(j,k)} \cdot \delta}\right)
\end{align*} for $x > 0$. The total energy cost of data center $j$ during time slot $t$ is therefore given as $e_{t,j}(x,z) := \nu_{t,j}(\phi'_j(x,z))$. We observe that, on the condition that $\nu_{t,j}$ and $\phi_k$ are convex increasing non-negative functions, the same holds for $e_{t,j}$. $\nu_{t,j}$ now receives the entire energy consumption at a location during time slot $t$ as input.

\paragraph{Maximum Quotas} We are thus able to model more complex energy costs. We consider by $p_{t,i,j} \in \mathbb{R}_{\geq 0} \cup \{\infty\}$ the energy from source $i \in [\xi]$ available at data center $j$ during time slot $t$. We denote by $c_{t,i}$ the average cost per unit of energy of energy source $i$ during time slot $t$ and assume without loss of generality that $c_{t,1} \leq c_{t,2} \leq \cdots$ holds for $t \in T$. A reasonable model of energy cost would then be to use the sources of energy in order of cost until the energy demand is satisfied. We define \begin{align*}
    \delta_{t,i,j} := (p - \sum_{i'=1}^{i-1} p_{t,i',j})^+
\end{align*} as the remaining energy requirement of data center $j$ during time slot $t$ after all energy sources up to source $i$ were used. Our model is then given by \begin{align*}
    \nu_{t,j}(p) := \sum_{i=1}^{\xi} c_{t,i} \min\{\delta_{t,i,j}, p_{t,i,j}\}
\end{align*} where we assume that the energy supply is \emph{sufficient}\index{sufficient energy supply}, i.e. $p \leq \max_{z \in \mathcal{Z}_t} \phi_j(m,z) \leq \sum_{i=1}^{\xi} p_{t,i,j}$ where we defined $m$ as the vector of upper bounds of each dimension of the decision space. It is easy to see that $\nu_{t,j}$ is continuous, increasing, and convex.

\paragraph{Making Profit} Let us now assume that some energy sources $i \in [\xi]$ are produced at data center $j$, and by selling unused supply, we make an average profit of $u_{t,i}$ per unit of energy during time slot $t$. We set $u_{t,i} = 0$ for all external energy sources. Again, we assume without loss of generality that $c_{t,1} + u_{t,1} \leq c_{t,2} + u_{t,2} \leq \cdots$ holds for $t \in T$. The energy cost of this extended model is then given by \begin{align*}
    \nu_{t,j}(p) &:= \sum_{i=1}^{\xi} c_{t,i} \min\{\delta_{t,i,j}, p_{t,i,j}\} - u_{t,i} (p_{t,i,j} - \delta_{t,i,j})^+ \\
                 &= \sum_{i=1}^{\xi} (c_{t,i} + u_{t,i}) \min\{\delta_{t,i,j}, p_{t,i,j}\} - \sum_{i=1}^{\xi} u_{t,i} p_{t,i,j}
\end{align*} where we continue to assume that the energy supply is sufficient. As the first sum is convex and increasing and as the subtracted sum is constant, we know that also $\nu_{t,j}$ must be convex and increasing. For the definition to be non-negative, we require $\sum_{i=1}^{\xi} u_{t,i} p_{t,i,j}$ to be less or equal to $\sum_{i=1}^{\xi} (c_{t,i} + u_{t,i}) \min\{\delta_{t,i,j}, p_{t,i,j}\}$. In words, we require that the possible profit does not exceed the total energy cost of a data center, whether this energy cost is related to energy sources with an associated cost or energy sources with an associated profit that are used rather than sold. Note that this requirement is natural in the context of data centers, as the main purpose behind on-site energy production is to sustain the data center and interaction with the grid merely happens to offset variances in power generated on-site.

\paragraph{Minimum Quotas} One could also imagine a scenario where we seek to impose minimum quotas, requiring that some energy sources make up at least a certain fraction of the total energy use of a data center. While such quotas are relevant, as research on green data centers shows (as discussed in \cref{section:application:operating_cost:energy}), they are inherently soft. In contrast, the maximum quotas are hard as they enforce a physical limitation of energy supply. Hence, in most cases, it is beneficial to model tendencies towards some energy sources using costs rather than strict quotas to largely decouple the energy cost model from energy availability. The cost model can then be adapted to result in the desired makeup of energy sources.


\paragraph{} We have seen an expressive framework for cost models that specifically model data centers but are general enough to support a network of heterogeneous data centers with heterogeneous loads and flexible energy costs. This reaches our main goal for this chapter.

\chapter{Theory}\label{chapter:theory}

In this chapter, we introduce the theoretical foundations for our subsequent work. We begin by formally introducing the metrics we use to assess the performance of the discussed algorithms. We then introduce the problem of smoothed convex optimization and related variants that the examined algorithms address.

\section{Performance Metrics}\label{section:theory:performance_metrics}

We say that an algorithm is \emph{optimal}\index{optimal algorithm} with respect to some performance metric if no algorithm can achieve a better score in the given metric given the same information. Crucially, optimality depends on the information given to the algorithm. We thus say that an offline algorithm of a minimization problem is optimal if its result always incurs the smallest possible cost while satisfying the given constraints. In contrast, an optimal online algorithm must not necessarily return the optimal offline solution. In fact, in many cases, online algorithms must necessarily perform worse than optimal offline algorithms due to the lack of provided information (in our case, the convex cost functions arrive over time). Naturally, these performance metrics also inform parts of our experimental analysis performed in \cref{chapter:case_studies}.

\subsection{Approximations}

We begin by considering the offline case. In most cases, we seek to find optimal solutions to the offline problem. However, for some problems where the computational complexity of optimal solutions is high for large instances, it is beneficial to consider efficient algorithms that achieve close to optimal performance, motivating the definition of approximation algorithms. Note that we limit our definitions of performance metrics to minimization problems.

\begin{definition}\index{approximation ratio}
\cite{Williamson2011} An $\alpha$-approximation algorithm for a minimization problem $c$ is a polynomial-time algorithm $ALG$ that for all instances of the problem produces a solution whose value is within a factor of $\alpha$ of the value of an optimal solution $OPT$, i.e., $c(ALG) \leq \alpha \cdot c(OPT)$.
\end{definition}

In other words, an $\alpha$-approximation guarantees that its results are at most a factor of $\alpha$ worse than the optimal solution. The integral smoothed convex optimization problem is an example where \citeauthor{Kappelmann2017}~\cite{Kappelmann2017} and \citeauthor{Albers2021_2}~\cite{Albers2021_2} recently made substantial progress on approximation algorithms.

\subsection{Competitiveness}

For online algorithms, it is natural to consider an adaptation of the idea of approximation algorithms. Here, we compare the result of an online algorithm with the optimal offline solution.

\begin{definition}\index{competitive ratio}
An $\alpha$-competitive online algorithm for a minimization problem $c$ is an algorithm $ALG$ that for all instances of the problem produces a solution whose value is within a factor of $\alpha$ of the value of an optimal offline solution $OPT$, i.e. $c(ALG) \leq \alpha \cdot c(OPT)$.
\end{definition}

We observe that this definition is analogous to our earlier definition of approximation algorithms in the offline case. However, in contrast to approximation algorithms, where the limiting factor was the algorithm's complexity, the competitiveness of online algorithms is fundamentally restricted by the information available to an online algorithm compared to its offline variants. In smoothed convex optimization, considerable work has focused on finding online algorithms with a constant competitive ratio in the number of dimensions $d$.

Crucially, the competitiveness of an online algorithm depends on the assumed adversary model. Commonly, three adversary models are used in the literature, which are described by \citeauthor{Borodin1990}~\cite{Borodin1990}. First, the \emph{oblivious adversary}\index{oblivious adversary} is the weakest adversary and only knows the algorithm's code but needs to construct the request sequence before any moves are made. Second, the \emph{adaptive online adversary}\index{adaptive online adversary} makes the next request based on the algorithm's previous answers but serves it immediately. Third, the \emph{adaptive offline adversary}\index{adaptive offline adversary} is the strongest adversary that serves the requests based on the algorithm's previous answers but, in the end, can choose the optimal request sequence among all possible request sequences. Note that as all adversaries know the algorithm's code, they are equivalent in the case of a deterministic algorithm. Also, note that randomization is not helpful when playing against an adaptive offline adversary. In the case of many smoothed online convex optimization problems, including right-sizing data centers, it is reasonable to assume an oblivious adversary as typically incoming requests arrive independently from previous server configurations in a data center.

\subsection{Regret}

Regret is another approach to measuring the performance of online algorithms.

\begin{definition}\index{regret (static)}
The (static) regret of an online algorithm $ALG$ for a minimization problem $c$ is $\rho(T)$ if for all instances of the problem the difference between the result of the algorithm and the static optimal offline solution $OPT_s$ does not exceed $\rho(T)$, i.e. $c(ALG) - c(OPT_s) \leq \rho(T)$.
\end{definition}

Commonly, the literature considers this definition of regret where the online algorithm is compared against a static offline solution, i.e., a solution where the agent is not allowed to move in the decision space. We say that an algorithm achieves \emph{no-regret}\index{no-regret} if $\rho$ is sublinear in the time horizon $T$. Observe that ideally, an algorithm achieves negative regret, in which case it performs better than the static optimum.

Ideally, online algorithms perform well with respect to the competitive ratio and regret. In other words, our online algorithms should both perform well compared against an agent that is moving in the decision space with perfect knowledge of the future (competitive ratio) and perform well against an agent that picks one optimal location in the decision space. In practice, for the example of dynamically right-sizing a data center, our algorithms are required to outperform a static number of servers to be viable alternatives. In contrast, to minimize energy waste and revenue loss, the strategies proposed by our algorithms must be as close as possible to the optimal dynamic strategies.

However, \citeauthor{Andrew2015}~\cite{Andrew2015} proved that no online algorithm for smoothed convex optimization can simultaneously achieve a constant competitive ratio and no-regret even when $d = 1$ and cost functions are linear. The competitive ratio can be arbitrarily poor for no-regret algorithms by oscillating the dynamic optimal solution between two points in the decision space. The no-regret algorithm will approach a static optimum, which can be arbitrarily worse than the dynamic optimum. In contrast, a constant-competitive algorithm generally sticks to a point in the decision space until it knows that the cost of movement is outweighed by the reduced cost of some other point that it then moves to. Hence, for constant-competitive algorithms, the regret can be arbitrarily large as the algorithm oscillates between the two points in the decision space, in each step having an arbitrary distance to the static optimum. An illustrative example with one dimension and a periodic sequence of two linear cost functions is depicted in \cref{fig:incompatibility_of_competitive_ratio_and_regret}.

\begin{figure}
    \centering
    \input{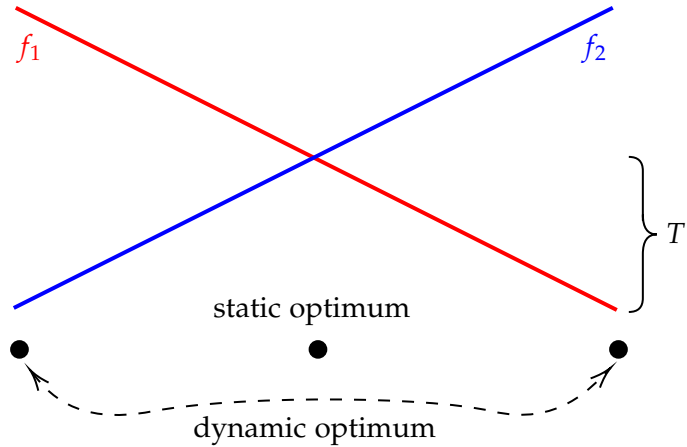}
    \caption{Incompatibility of competitive ratio and regret in one dimension. Consider an adversary playing two linear cost functions $f_1$ and $f_2$ with different minimizers. Then, the dynamic optimum oscillates between the two minimizers while the static optimum is given by the intersection of the two cost functions. Therefore, a no-regret algorithm may be arbitrarily far away from the dynamic offline optimum. In contrast, a competitive algorithm which sticks to either end of the decision space may exceed the static offline optimum by a delta that is not in $\mathcal{O}(T)$ \cite{Wierman2019}.}
    \label{fig:incompatibility_of_competitive_ratio_and_regret}
\end{figure}

There thus exist many variants of regret and the competitive ratio used to bridge between the two metrics. One approach considers an additive variant of the competitive ratio, which is called the \emph{competitive difference}.

\begin{definition}\index{competitive difference}
\cite{Chen2015} The competitive difference of an online algorithm $ALG$ for a minimization problem $c$ is $\rho(T)$ if for all instances of the problem, the difference between the result of the algorithm and the dynamic optimal offline solution $OPT$ does not exceed $\rho(T)$, i.e., $c(ALG) - c(OPT) \leq \rho(T)$.
\end{definition}

This definition is also known as \emph{dynamic regret}\index{dynamic regret}~\cite{Chen2018}. We next define a variant of regret that bridges between static and dynamic regret.

\begin{definition}\index{constrained dynamic regret}
\cite{Chen2018} The $L$-constrained dynamic regret of an online algorithm $ALG$ for a minimization problem $c$ is $\rho(T)$ if for all instances of the problem the difference between the result of the algorithm and the $L$-constrained optimal offline solution $OPT_L$ does not exceed $\rho(T)$, i.e. $c(ALG) - c(OPT_L) \leq \rho(T)$.

The $L$-constrained optimal offline solution minimizes $c$ subject to the additional constraint \begin{align*}
    \sum_{t=1}^T \norm{X_t - X_{t-1}} \leq L
\end{align*} for $X_t, X_{t-1} \in \mathcal{X}$.
\end{definition}

We now observe that given the optimal offline solution $OPT$ with schedule $\hat{X}_t$, the $L$-constrained dynamic regret is equivalent to dynamic regret for $L = \sum_{t=1}^T \norm{\hat{X}_t - \hat{X}_{t-1}}$. In contrast, given the static optimal offline solution $OPT_s$, the $L$-constrained dynamic regret is equivalent to static regret for $L = \norm{OPT_s - \mathbf{0}}$ which is the initial (and only) step of $OPT_s$~\cite{Chen2018}.

Another metric used to bridge the gap between the competitive ratio and regret is the \emph{$\alpha$-unfair competitive ratio} which penalizes movement in the decision space by an additional factor $\alpha$~\cite{Andrew2015}.

\begin{definition}\index{unfair competitive ratio}
\cite{Andrew2015} The $\alpha$-unfair competitive ratio of an online algorithm $ALG$ for a minimization problem $c$ is $\beta$ if for all instances of the problem the ratio of the result of the algorithm and the dynamic $\alpha$-unfair offline solution $OPT_{\alpha}$ does not exceed $\beta$, i.e. $c(ALG) \leq \beta \cdot c(OPT_{\alpha})$.

Here, the $\alpha$-unfair optimal offline solution $OPT_{\alpha}$ is defined as the minimizer of \begin{align*}
    \sum_{t=1}^T f_t(X_t) + \alpha \norm{X_t - X_{t-1}}.
\end{align*}
\end{definition}

Note that for $\alpha = 1$, the $\alpha$-unfair competitive ratio is equivalent to the competitive ratio. For large $\alpha$, the $\alpha$-unfair optimal offline solution $OPT_{\alpha}$ is similar to the $L$-constrained optimal offline solution $OPT_L$ in that the movement in the decision space is restricted.

\section{Problems}

Now that we have an overview of the commonly used performance metrics, we introduce the problems we consider in this work. We initially state the problems as offline problems, but as all problems follow the same structure, their corresponding online variant is obtained by deferring the convex cost functions. All other problem variables --- except for the time horizon $T$ --- such as the movement cost that penalizes movement in the decision space are known from the beginning.

\subsection{Smoothed Convex Optimization}

We begin by formally introducing the most general problem we consider and which we already motivated in \cref{chapter:introduction}.

\begin{problem}[Smoothed Convex Optimization (SCO)]\index{smoothed convex optimization}\label{problem:smoothed_convex_optimization}
Given a time horizon $T \in \mathbb{N}$, a convex decision space $\mathcal{X} \subset \mathbb{R}^d$, a norm $\norm{\cdot}$ on $\mathbb{R}^d$, and a sequence $F$ of non-negative convex functions $f_t$ for $t \in [T]$ with $f_t(x) = \infty$ for all $x \not\in \mathcal{X}$, find $X \in \mathcal{X}^T$ minimizing \begin{align*}
    c_{\text{SCO}}(X) = \sum_{t=1}^T f_t(X_t) + \norm{X_t - X_{t-1}}
\end{align*}
where $X_0 = \mathbf{0}$.
\end{problem}

In many practical applications of smoothed convex optimization, we seek to find integral solutions minimizing hitting and movement costs. This is especially true within the context of resource allocation, for example, for right-sizing data centers, where our resources are discrete. This observation motivates the definition of the following variant of SCO.

\begin{problem}[Integral Smoothed Convex Optimization (Int-SCO)]
We define integral smoothed convex optimization analogously to SCO with the added restriction that the points $x$ in $d$-dimensional space must be discrete, that is $\mathcal{X} \subset \mathbb{Z}^d$.
\end{problem}

In this work, we often refer to the convex cost functions of fractional problems as hitting costs, whereas we generally refer to them as operating costs in the context of integral problems.

In \cref{chapter:introduction}, we have seen that metrical task systems subsume Int-SCO. However, it was shown that, in general, the competitiveness of deterministic and randomized algorithms for metrical task systems must be proportional to the size of the decision space~\cite{Blum1992, Borodin1992}. Further, \citeauthor{Chen2018}~\cite{Chen2018} have shown that the competitiveness of any online algorithm for SCO is lower bounded by $\Omega(\sqrt{d})$. Therefore, many of the online algorithms for SCO that we examine in \cref{chapter:online_algorithms} further restrict hitting and movement costs.

Another similar problem is the ski rental problem. In the \emph{ski rental problem}\index{ski rental problem}, skis can be bought for a cost of $b$ units or rented for a cost of one unit per day. Each day of the ski season, the agent has to decide whether to rent the skis or end the sequence of decisions by buying the skis without knowing how long the ski season will last~\cite{Shah2021}. Consider the uni-dimensional decision space $\{0,b\}$, the $\ell_2$ norm as movement cost, and the sequence of hitting costs $f_t(0) = 1$ and $f_t(b) = 0$. The solution to this instance of SCO is a solution to the corresponding ski rental problem, yielding that the ski rental problem is a special case of SCO. \citeauthor{Karlin1990}~\cite{Karlin1990} showed that the best competitive ratio attainable by a randomized algorithm is $e/(e-1) \approx 1.58$, giving a lower bound for the competitive ratio of online algorithms for SCO.

\citeauthor{Goel2019}~\cite{Goel2019} proved that for $\alpha$-strongly convex hitting costs with respect to the $\ell_2$ norm and $\ell_2$-squared movement costs, the optimal competitiveness of any online algorithm is $\mathcal{O}(1/\sqrt{\alpha})$ as $\alpha \downarrow 0$. We discuss hitting costs and movement costs of this shape in greater detail in \cref{section:theory:beyond_convexity}. \citeauthor{Bansal2015}~\cite{Bansal2015} have shown that in the uni-dimensional setting, the optimal competitive ratio that a deterministic memoryless algorithm for SCO can attain is three.

\subsubsection{Complexity of the Offline Problem}

We now want to examine the complexity of Int-SCO in the offline case. That is, we know all arriving convex cost functions $f_t$ in advance. We prove Int-SCO NP-hard for varying $d$ by giving a polynomial-time reduction from the Knapsack problem. In \cref{section:theory:simplified_smoothed_convex_optimization}, we extend this proof of NP-hardness to the integral simplified smoothed convex optimization problem, further restricting the decision space and movement cost.

Given a set of items with an associated value and weight and an upper bound to the total weight, Knapsack is the problem of determining the number of copies of each item that maximizes the total value and conforms to the given upper bound on total weight. Formally we define Knapsack as follows.

\begin{problem}[Knapsack (KP)]\index{knapsack problem}
Given a number of items $n \in \mathbb{N}$, a value of each item $v \in \mathbb{N}^n$, a weight of each item $w \in \mathbb{N}^n$, and an upper bound to the total weight $W \in \mathbb{N}$, find $x \in \{0,1\}^n$ satisfying $\sum_{i = 1}^n w_i x_i \leq W$ and maximizing $\sum_{i=1}^n v_i x_i$.
\end{problem}

This variant of Knapsack is commonly called \emph{0-1 Knapsack} and restricts the number of copies of each item to zero or one. It is, however, easy to see that our proof can be generalized to a setting where we allow $x_i \in [m_i]_0$ for $m \in \mathbb{N}^n$. \citeauthor{Williamson2014}~\cite{Williamson2014} gives a proof for the NP-completeness of the Knapsack decision problem. It immediately follows that the Knapsack optimization problem is NP-hard.

Before reducing to Int-SCO, we reduce Knapsack to a related problem called Minimum Knapsack.

\begin{problem}[Minimum Knapsack (Min-KP)]\index{minimum knapsack problem}
Given a number of items $n \in \mathbb{N}$, a cost of each item $c \in \mathbb{N}^n$, a utility of each item $u \in \mathbb{N}^n$, and a lower bound to the total utility $U \in \mathbb{N}$, find $x \in \{0,1\}^n$ satisfying $\sum_{i = 1}^n u_i x_i \geq U$ and minimizing $\sum_{i=1}^n c_i x_i$.
\end{problem}

\begin{lemma}
Min-KP is NP-hard.
\end{lemma}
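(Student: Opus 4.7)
The plan is to reduce the NP-hard Knapsack problem (KP) to Min-KP via a polynomial-time transformation based on item complementation. Given a KP instance with $n$ items, values $v \in \mathbb{N}^n$, weights $w \in \mathbb{N}^n$, and weight bound $W$, I will construct a Min-KP instance with the same number of items, costs $c_i := v_i$, utilities $u_i := w_i$, and utility bound $U := (\sum_{i=1}^n w_i) - W$. If $\sum_{i=1}^n w_i \leq W$ the KP instance is trivial (take all items), so I can assume $U \geq 0$ without loss of generality, and in the degenerate case just solve directly in constant time. The construction is clearly computable in time polynomial in the input size.

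Next I would establish the correspondence via the bijection $x \mapsto \mathbf{1} - x$ on $\{0,1\}^n$. For any $x \in \{0,1\}^n$ and $y := \mathbf{1} - x$, I would observe that
\begin{align*}
    \sum_{i=1}^n u_i y_i \geq U \iff \sum_{i=1}^n w_i (1 - x_i) \geq \sum_{i=1}^n w_i - W \iff \sum_{i=1}^n w_i x_i \leq W,
\end{align*}
so $y$ is Min-KP-feasible exactly when $x$ is KP-feasible. Moreover, the Min-KP objective satisfies
\begin{align*}
    \sum_{i=1}^n c_i y_i = \sum_{i=1}^n v_i - \sum_{i=1}^n v_i x_i,
\end{align*}
so minimizing the Min-KP cost over feasible $y$ is equivalent to maximizing the KP value over feasible $x$. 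Hence an optimal Min-KP solution yields an optimal KP solution via the same complementation, and vice versa.

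Combining these observations, an oracle for Min-KP solves KP in polynomial time, which establishes that Min-KP is NP-hard. I do not anticipate a genuine obstacle here: the only subtlety is checking that the utility bound $U = \sum_i w_i - W$ is a nonnegative integer (handled by the trivial case above) and that the bijection preserves feasibility and inverts the objective, both of which are direct computations. This setup also primes the subsequent reduction from Min-KP into Int-SCO, since Min-KP's minimization formulation aligns naturally with the hitting-cost/movement-cost structure of smoothed convex optimization.
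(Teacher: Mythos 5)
Your proof is correct, but it takes a genuinely different route from the paper. The paper keeps the roles of the KP data intact (costs become the weights, utilities become the values) and then performs a \emph{Turing} reduction: it calls a Min-KP oracle repeatedly inside a binary search over the utility bound $U \in [0, n\,v_{\text{max}}]$, using monotonicity of the map $U \mapsto (\text{minimum weight achieving value} \geq U)$ to recover the optimal KP value in $\mathcal{O}(\log n + \log v_{\text{max}})$ oracle calls. You instead swap the roles ($c_i := v_i$, $u_i := w_i$, $U := \sum_i w_i - W$) and exploit the complementation bijection $x \mapsto \mathbf{1} - x$, which converts the weight upper bound into a utility lower bound and the value maximization into a cost minimization up to the additive constant $\sum_i v_i$. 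This yields a single-call, essentially Karp-style reduction that is cleaner and gives a tighter statement (one oracle query rather than logarithmically many); the only bookkeeping is the degenerate case $\sum_i w_i \leq W$, which you correctly dispose of, ensuring $U$ is a legitimate element of $\mathbb{N}$. The paper's version buys nothing extra for the downstream argument --- the subsequent reduction from Min-KP to Int-SCO only needs \emph{some} proof that Min-KP is NP-hard and is agnostic to which of $v, w$ plays the role of cost versus utility --- so your argument would serve equally well as a drop-in replacement.
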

\begin{proof}
We prove the lemma by giving a reduction from KP.

Let $\mathcal{I}_{\text{KP}} = (n, v, w, W)$ be an instance of KP. Let $\mathcal{I}_{\text{Min-KP}}(U) = (n, c, u, U)$ be an instance of Min-KP with $c = w$, and $u = v$. Hence, $\mathcal{I}_{\text{Min-KP}}(U)$ minimizes the total weight $\sum_{i=1}^n w_i x_i$ such that $\sum_{i=1}^n v_i x_i \geq U$.

By finding solutions to $\mathcal{I}_{\text{Min-KP}}(U)$ repeatedly for varying $U$, we determine the maximal $U$ such that $\sum_{i=1}^n w_i x_i \leq W$. We observe that $U$ is upper bounded by $n \cdot v_{\text{max}}$. If $U$ were greater than $n \cdot v_{\text{max}}$ we would have $\sum_{i=1}^n v_{\text{max}} x_i \geq \sum_{i=1}^n v_i x_i > n \cdot v_{\text{max}}$ which contradicts $x \in \{0,1\}^n$. Hence, we can use binary search to find $U$ in $\mathcal{O}(\log n + \log v_{\text{max}})$ iterations. The other direction works analogously.

We have seen a total, polynomial-time reduction from KP to Min-KP. Hence, Min-KP is NP-hard.
\end{proof}

Next, we prove our central reduction from Min-KP to Int-SCO. To motivate this reduction, we first prove that the following (convex) integer optimization is, in fact, equivalent to Min-KP.

\begin{lemma}
\label{lemma:integer_minimization}
Let $\mathcal{I}_{\text{Min-KP}} = (n, c, u, U)$ be an instance of Min-KP. $x$ is the solution to $\mathcal{I}_{\text{Min-KP}}$ if and only if $x$ minimizes \begin{align*}
    c_{\text{SCO}}'(x) = \sum_{i=1}^n c_i x_i + M\left(U - \sum_{i=1}^n u_i x_i\right)^+
\end{align*} subject to $x \in \{0,1\}^n$ for some $M > \frac{n c_{\text{max}}}{u_{\text{min}}}$.
\end{lemma}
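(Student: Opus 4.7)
The plan is to apply the standard big-$M$ penalty technique: choose $M$ large enough that any minimizer of $c_{\text{SCO}}'$ over $\{0,1\}^n$ is forced to satisfy the Min-KP constraint $\sum_i u_i x_i \geq U$, at which point the penalty term $M(U - \sum_i u_i x_i)^+$ vanishes and $c_{\text{SCO}}'(x) = \sum_i c_i x_i$ coincides with the Min-KP objective. The proof then decomposes into two ingredients: \textbf{(i)} objective agreement on the Min-KP feasible region $\mathcal{F} := \{x \in \{0,1\}^n : \sum_i u_i x_i \geq U\}$ (immediate, since the penalty is zero there), and \textbf{(ii)} a forcing lemma stating that every minimizer of $c_{\text{SCO}}'$ actually lies in $\mathcal{F}$. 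From (i) and (ii) both directions of the equivalence are immediate: a Min-KP optimizer lies in $\mathcal{F}$ and minimizes $\sum c_i x_i$ there, hence beats every $x \in \mathcal{F}$ on $c_{\text{SCO}}'$ and by (ii) every infeasible $x$ as well; conversely any $c_{\text{SCO}}'$-minimizer lies in $\mathcal{F}$ by (ii) and there matches the Min-KP objective.

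The forcing lemma I would prove by contradiction. Suppose some minimizer $\tilde{x}$ of $c_{\text{SCO}}'$ violates the Min-KP constraint, so the shortfall $s := U - \sum_i u_i \tilde{x}_i$ is a positive integer (integrality of $U$ and $u$). Because Min-KP is assumed feasible, $\sum_i u_i \geq U$, so the zero-coordinates $S_0 := \{j : \tilde{x}_j = 0\}$ collectively cover the shortfall: there is a subset $J \subseteq S_0$ with $\sum_{j \in J} u_j \geq s$. Flipping those coordinates to 1 yields $x' \in \mathcal{F}$ with
\begin{align*}
    c_{\text{SCO}}'(x') - c_{\text{SCO}}'(\tilde{x}) \;=\; \sum_{j \in J} c_j - M s \;\leq\; n c_{\max} - M s.
\end{align*}
An averaging refinement (pick $j^* \in S_0$ minimizing $c_j/u_j$, noting $c_{j^*}/u_{j^*} \leq c_{\max}/u_{\min}$, and either flip $j^*$ alone if $u_{j^*} \geq s$ or iterate this greedy flip otherwise) sharpens the penalty savings to $M u_{\min}$ per unit of cost contribution, yielding a strictly negative right-hand side whenever $M > n c_{\max}/u_{\min}$. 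This contradicts the minimality of $\tilde{x}$.

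The main technical obstacle is precisely this last sharpening. A naive argument exploiting only $s \geq 1$ produces the weaker bound $M > n c_{\max}$; to land on the stated $M > n c_{\max}/u_{\min}$ one must be careful about the case split between $s \geq u_{\min}$ (where the bulk flip of $J$ already extracts $M u_{\min}$ of penalty savings per flipped item in the worst case) and $s < u_{\min}$ (where flipping any single $j \in S_0$ already makes $\tilde{x}$ feasible, since $u_j \geq u_{\min} > s$, and the cost-to-utility ratio bound $c_{j^*}/u_{j^*} \leq c_{\max}/u_{\min}$ does the rest). Once this calibration is done, the remaining logic is purely bookkeeping.
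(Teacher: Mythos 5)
Your skeleton---objective agreement on the feasible region $\mathcal{F}$ plus a forcing lemma proved by flipping zero coordinates of an allegedly infeasible minimizer---is exactly the paper's approach, and the ``naive'' half of your forcing argument is correct: restoring feasibility saves at least $M s \geq M$ in penalty (integrality of $U$ and $u$ gives $s \geq 1$) at a linear-cost price of at most $n c_{\text{max}}$, so any $M > n c_{\text{max}}$ forces every minimizer of $c_{\text{SCO}}'$ into $\mathcal{F}$. The genuine gap is the ``averaging refinement'' you invoke to reach the stated threshold $M > n c_{\text{max}}/u_{\text{min}}$. In your case $s < u_{\text{min}}$, flipping a single $j^* \in S_0$ changes the objective by $c_{j^*} - M s$: the penalty saved is $M s$, not $M u_{j^*}$, so the ratio bound $c_{j^*}/u_{j^*} \leq c_{\text{max}}/u_{\text{min}}$ buys you nothing. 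You would need $M s > c_{j^*}$, and with $s$ as small as $1$ and $c_{j^*}$ as large as $c_{\text{max}}$ this requires $M > c_{\text{max}}$, which does not follow from $M > n c_{\text{max}}/u_{\text{min}}$ when $u_{\text{min}} > n$. No calibration can close this, because the statement is false at that threshold: take $n = 1$, $c = (1)$, $u = (2)$, $U = 1$, and $M = 0.9 > n c_{\text{max}}/u_{\text{min}} = 1/2$. The unique Min-KP solution is $x = (1)$ with cost $1$, yet $c_{\text{SCO}}'((0)) = 0.9 < 1 = c_{\text{SCO}}'((1))$, so the penalized objective is minimized at the infeasible point.

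For what it is worth, the paper's own proof commits the corresponding error: it asserts that a minimizer with positive shortfall that is not identically $1$ satisfies $(U - \sum_{i} u_i x_i)^+ \geq u_{\text{min}}$, which is false in general (e.g. $u = (5,5)$, $U = 9$, $x = (1,0)$ has shortfall $4 < 5$, and the $n=1$ example above gives shortfall $1 < 2$ at an actual minimizer); that assertion is precisely what places $u_{\text{min}}$ in the denominator. The constant your naive step delivers, $M > n c_{\text{max}}$, is the one that is actually provable, and it is all the downstream NP-hardness reduction needs, since the reduction may choose $M$ as large as it likes. So the right move is not to sharpen your argument but to strengthen the hypothesis to $M > n c_{\text{max}}$ (or to $M > \max\{c_{\text{max}},\, n c_{\text{max}}/u_{\text{min}}\}$, which your two-case flip analysis does establish) and keep the rest of your proof unchanged.
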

\begin{proof}
Suppose $x$ minimizes $c_{SCO}'(x)$. Now suppose $(U - \sum_{i=1}^n u_i x_i)^+ > 0$. Then $\sum_{i=1}^n u_i < U$ follows immediately. It is easy to see that if $x \equiv 1$, $\mathcal{I}$ has no solution because the lower bound on the utility $U$ is not met. Henceforth, we assume $x$ can be further increased. Then, $(U - \sum_{i=1}^n u_i x_i)^+ \geq u_{\text{min}}$. Therefore, $c_{SCO}'(x) > \sum_{i=1}^n c_i x_i + c_{\text{max}}$. We observe that $x$ is not optimal as $c_{SCO}'(x)$ could be minimized further by increasing $x$ such that $(U - \sum_{i=1}^n u_i x_i)^+ = 0$ since $\sum_{i=1}^n c_i x_i \leq n c_{\text{max}}$ holds for all $x$.

By leading our previous assumption to a contradiction, we conclude $(U - \sum_{i=1}^n u_i x_i)^+ = 0$ and therefore $U \leq \sum_{i=1}^n u_i x_i$. Further, $c_{SCO}'(x)$ minimizes $\sum_{i=1}^n c_i x_i$ for all remaining candidates for $x$. Hence, $x$ is the solution of $\mathcal{I}_{\text{Min-KP}}$.

On the other hand, suppose that $x$ is the solution to $\mathcal{I}_{\text{Min-KP}}$. Then $(U - \sum_{i=1}^n u_i x_i)^+ = 0$ and $\sum_{i=1}^n c_i x_i$ is minimized. Hence, $x$ minimizes $c_{SCO}'(x)$.
\end{proof}

For our construction we need that $c_{SCO}'$ is convex.

\begin{lemma}
\label{lemma:integer_minimization_convexity}
$c_{SCO}'$ is convex on $\{0,1\}^n$.
\end{lemma}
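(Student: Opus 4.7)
The plan is to verify convexity by expressing $c_{\text{SCO}}'$ as a sum of convex functions. The function is defined on the discrete set $\{0,1\}^n$, but for the purpose of embedding the problem into Int-SCO it is more natural (and indeed needed) to show that the natural extension of $c_{\text{SCO}}'$ to $\mathbb{R}^n$ is convex; convexity on $\{0,1\}^n$ then follows by restriction to the convex hull $[0,1]^n \supseteq \{0,1\}^n$.

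First, I would split $c_{\text{SCO}}'$ into its two additive components. The term $\sum_{i=1}^n c_i x_i$ is linear in $x$, hence convex on $\mathbb{R}^n$. The term $M\bigl(U - \sum_{i=1}^n u_i x_i\bigr)^+$ can be rewritten, using the definition $(\cdot)^+ = \max\{0,\cdot\}$ stated in the paper, as
\begin{align*}
M \cdot \max\Bigl\{\,0,\; U - \sum_{i=1}^n u_i x_i\,\Bigr\}.
\end{align*}
The expression inside the maximum is a pointwise maximum of two affine functions of $x$ (the constant $0$ and the affine map $x \mapsto U - \sum_{i=1}^n u_i x_i$). Since any pointwise maximum of affine functions is convex, and since multiplication by the positive scalar $M > 0$ preserves convexity, this second term is convex on $\mathbb{R}^n$.

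Finally, I would invoke the standard fact that the sum of convex functions is convex to conclude that $c_{\text{SCO}}'$ is convex on $\mathbb{R}^n$, and therefore convex on $\{0,1\}^n$ in the sense required for the subsequent reduction. I do not expect any real obstacle here: the only subtlety is the minor point that ``convex on $\{0,1\}^n$'' has to be read as convexity of the extension to a convex superset, which is immediate from the construction above.
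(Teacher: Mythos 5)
Your proof is correct, but it takes a different route from the paper's. You decompose $c_{\text{SCO}}'$ into a linear term plus $M$ times a pointwise maximum of two affine functions, and invoke the standard closure properties of convexity (maxima of affine functions, positive scaling, sums); this immediately gives convexity of the natural extension to all of $\mathbb{R}^n$. The paper instead argues via midpoint convexity: it notes that $c_{\text{SCO}}'$ is continuous, reduces the claim to the inequality $2\left(U - U\left(\frac{x+y}{2}\right)\right)^+ \leq (U - U(x))^+ + (U - U(y))^+$ where $U(x) = \sum_{i=1}^n u_i x_i$, and verifies this by a three-case analysis on whether $U(x)$ and $U(y)$ exceed $U$. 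Your argument is shorter and avoids both the case analysis and the (correct but slightly delicate) step of upgrading midpoint convexity to convexity via continuity; the paper's argument is more self-contained in that it only uses the definition of $(\cdot)^+$ and elementary inequalities. Both proofs implicitly interpret ``convex on $\{0,1\}^n$'' as convexity of the extension to $\mathbb{R}^n$ restricted to that set --- you make this reading explicit, which is a point in your favor, since the paper's subsequent reduction to Int-SCO needs $f_1 = c_{\text{SCO}}'$ to be a convex function in the sense of \cref{problem:smoothed_convex_optimization}.
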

\begin{proof}
It is easy to see that $c_{SCO}'$ is continuous. Therefore, to show the convexity of $c_{SCO}'$ it suffices to prove midpoint-convexity, i.e. $c_{SCO}'\left(\frac{x+y}{2}\right) \leq \frac{c_{SCO}'(x)+c_{SCO}'(y)}{2}$ for all $x, y \in \mathbb{R}^n$.

To simplify the notation let $C(x) = \sum_{i=1}^n c_i x_i$ and let $U(x) = \sum_{i=1}^n u_i x_i$. To further simplify the notation we define $\frac{x+y}{2}$ to be applied component-wise to elements $i \in [n]$ of $x$ and $y$. We then obtain \small{
\begin{align*}
         &c_{SCO}'\left(\frac{x+y}{2}\right) \leq \frac{c_{SCO}'(x)+c_{SCO}'(y)}{2} \\
    \iff &C\left(\frac{x+y}{2}\right) + M\left(U - U\left(\frac{x+y}{2}\right)\right)^+ \leq \frac{C(x) + M(U - U(x))^+ + C(y) + M(U - U(y))^+}{2} \\
    \iff &C(x) + C(y) + 2M\left(U - U\left(\frac{x+y}{2}\right)\right)^+ \leq C(x) + M(U - U(x))^+ + C(y) + M(U - U(y))^+ \\
    \iff &2\left(U - U\left(\frac{x+y}{2}\right)\right)^+ \leq (U - U(x))^+ + (U - U(y))^+.
\end{align*}
}\normalsize

We immediately get the convexity of $U(\cdot)$ by the following equivalence. \begin{align*}
    U\left(\frac{x+y}{2}\right) &= \sum_{i=1}^n u_i \frac{x_i + y_i}{2} \\
                                &= \frac{\sum_{i=1}^n u_i x_i + \sum_{i=1}^n u_i y_i}{2} \\
                                &= \frac{U(x) + U(y)}{2}.
\end{align*}

Now, we consider three cases separately.

\begin{enumerate}
    \item If $U(x) > U$ and $U(y) > U$, then $U\left(\frac{x+y}{2}\right) > U$. Hence \begin{align*}
        2\left(U - U\left(\frac{x+y}{2}\right)\right)^+ = 0 = (U - U(x))^+ + (U - U(y))^+.
    \end{align*}
    \item If $U(x) \leq U$ and $U(y) \leq U$, then $U\left(\frac{x+y}{2}\right) \leq U$. Hence \begin{align*}
        2\left(U - U\left(\frac{x+y}{2}\right)\right)^+ &= 2U - 2U\left(\frac{x+y}{2}\right) \\
                                                        &= 2U - U(x) - U(y) \\
                                                        &= (U - U(x))^+ + (U - U(y))^+.
    \end{align*}
    \item For the only remaining case we assume w.l.o.g. that $U(x) \leq U$ and $U(y) > U$. If $U - U(x) < U(y) - U$, then $U\left(\frac{x+y}{2}\right) > U$ and we follow the first case. If, on the other hand, $U - U(x) \geq U(y) - U$, then $U\left(\frac{x+y}{2}\right) \leq U$ and we follow the second case.\qedhere
\end{enumerate}
\end{proof}

We now have everything in place to prove our main result of this section.

\begin{theorem}
Int-SCO is NP-hard.
\end{theorem}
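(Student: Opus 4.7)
The plan is to reduce Min-KP to Int-SCO in polynomial time; together with the NP-hardness of Min-KP already established, this yields the result. The heavy lifting has already been done by \cref{lemma:integer_minimization} and \cref{lemma:integer_minimization_convexity}, which recast Min-KP as the minimization of the convex function $c_{\text{SCO}}'$ over $\{0,1\}^n$. My goal is therefore simply to realize $c_{\text{SCO}}'$ as the objective of a one-round Int-SCO instance.

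Concretely, given $\mathcal{I}_{\text{Min-KP}} = (n, c, u, U)$, I would first assume without loss of generality that $c_i \geq 1$ for every $i$ (any item of cost $0$ can be preselected and $U$ decremented by its utility, which only reduces the instance). I then set $T := 1$ and $d := n$, take as decision space $\mathcal{X} := [0,1]^n$ so that the integrality restriction of Int-SCO yields the feasible set $\{0,1\}^n$, and fix the weighted $\ell_1$ norm
\begin{align*}
    \norm{x} := \sum_{i=1}^n c_i |x_i|,
\end{align*}
which is a valid norm on $\mathbb{R}^n$ because each $c_i > 0$. Finally, with $M > n c_{\max}/u_{\min}$ as in \cref{lemma:integer_minimization}, I let
\begin{align*}
    f_1(x) := \begin{cases} M\bigl(U - \sum_{i=1}^n u_i x_i\bigr)^+ & x \in [0,1]^n, \\ \infty & \text{otherwise,} \end{cases}
\end{align*}
which is non-negative and, as a positive multiple of the positive part of an affine map, convex on its convex domain. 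Since $X_0 = \mathbf{0}$, the Int-SCO objective at any feasible $X_1 \in \{0,1\}^n$ becomes
\begin{align*}
    c_{\text{SCO}}(X_1) = f_1(X_1) + \norm{X_1 - X_0} = \sum_{i=1}^n c_i X_{1,i} + M\bigl(U - \textstyle\sum_{i=1}^n u_i X_{1,i}\bigr)^+ = c_{\text{SCO}}'(X_1),
\end{align*}
so by \cref{lemma:integer_minimization} an optimal Int-SCO schedule is exactly an optimal Min-KP solution. The reduction is polynomial-time because $T$, $d$, the weights of the norm, and the constant $M$ (which may be taken as $nc_{\max}+1$) all have description length polynomial in the input.

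The main subtlety, rather than any deep obstacle, is that Int-SCO forces an unavoidable movement cost $\norm{X_1 - \mathbf{0}}$ from the mandatory starting configuration, so the additive linear cost $\sum_i c_i x_i$ of Min-KP has to live somewhere; absorbing it into the norm via weighting is the cleanest option and is precisely what motivates the preprocessing step ensuring $c_i \geq 1$, so that the weighted $\ell_1$ remains a proper norm. Everything else — convexity and non-negativity of $f_1$, polynomial size of $M$, equivalence of the two minimization problems via \cref{lemma:integer_minimization} — is routine verification.
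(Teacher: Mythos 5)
Your reduction is correct, but it routes the Min-KP objective through the instance differently than the paper does. The paper sets the movement cost to the identically-zero function and packs the \emph{entire} function $c_{\text{SCO}}'(x) = \sum_{i=1}^n c_i x_i + M\bigl(U - \sum_{i=1}^n u_i x_i\bigr)^+$ into the single hitting cost $f_1$, which is why it needs the separate midpoint-convexity argument of \cref{lemma:integer_minimization_convexity}. You instead split the objective: the linear part becomes a $c$-weighted $\ell_1$ movement cost paid on the mandatory step from $X_0 = \mathbf{0}$, and only the penalty term remains in $f_1$, whose convexity is immediate as a positive multiple of $(\cdot)^+$ composed with an affine map. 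Your route buys two things: it makes \cref{lemma:integer_minimization_convexity} unnecessary for this theorem, and it supplies a genuine norm, whereas $\norm{\cdot} \equiv 0$ is not positive definite and so does not literally satisfy the requirement of \cref{problem:smoothed_convex_optimization} that $\norm{\cdot}$ be a norm on $\mathbb{R}^d$ --- your construction quietly repairs that. The price is the (harmless) preprocessing ensuring $c_i > 0$; note also that since $c \in \mathbb{N}^n$ in the paper's definition of Min-KP this is arguably automatic. Two cosmetic points: the decision space of Int-SCO is required to satisfy $\mathcal{X} \subset \mathbb{Z}^d$, so you should take $\mathcal{X} = \{0,1\}^n$ directly rather than $[0,1]^n$; and your idea of moving the linear cost into a positively weighted movement term is essentially the same trick the paper needs later for \cref{theorem:int_ssco_np_hardness}, where zero switching costs are no longer available.
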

\begin{proof}
We now give our reduction from Min-KP to Int-SCO.

Let $\mathcal{I}_{\text{Min-KP}} = (n, c, u, U)$ be an instance of Min-KP and set $d = n$. We define $\mathcal{I}_{\text{Int-SCO}} = (T, \mathcal{X}, \norm{\cdot}, f)$ as an instance of Int-SCO with $T = 1$, $\mathcal{X} = \{0,1\}^n$, $\norm{\cdot} = 0$, and $f_1(x) = c_{\text{SCO}}'(x)$. It is easy to see that $f_1$ is non-negative. By \cref{lemma:integer_minimization_convexity}, $\mathcal{I}_{\text{Int-SCO}}$ is a valid instance of Int-SCO.

The correctness of our construction follows from \cref{lemma:integer_minimization}. \begin{align*}
         &X \text{ is a solution to } \mathcal{I}_{\text{Int-SCO}} \\
    \iff &X \text{ minimizes } \sum_{t=1}^T f_t(X_t) + \norm{X_t - X_{t-1}} \text{ such that } X_t \in \mathcal{X}. \\
    \iff &X \text{ minimizes } c_{\text{SCO}}'(X_1) \text{ such that } X_1 \in \{0,1\}^n. \\
    \iff &X_1 \text{ is a solution to } \mathcal{I}_{\text{Min-KP}}.
\end{align*}

Our construction is total and polynomial in the size of $\mathcal{I}_{\text{Min-KP}}$. Hence, Int-SCO is NP-hard.
\end{proof}

We observe that the above reduction can be extended to Knapsack with arbitrary bounds $m_i$ by setting $\mathcal{X}$ of $\mathcal{I}_{\text{Int-SCO}}$ to $[m_1]_0 \times \dots \times [m_n]_0$.

\subsection{Simplified Smoothed Convex Optimization}\label{section:theory:simplified_smoothed_convex_optimization}

In many applications, for example, for right-sizing data centers where we are interested in determining the optimal number of servers to run at a particular time, it suffices to restrict $\mathcal{X}$ to $[m_0]_0 \times \dots \times [m_d]_0$ for some upper bound in each dimension $m \in \mathbb{N}^d$ and the switching cost $\norm{\cdot}$ to a Manhattan norm which is scaled in each dimension independently from time. To that end, we first define a restricted variant of (fractional) SCO, which we term \emph{simplified smoothed convex optimization}.

\begin{problem}[Simplified Smoothed Convex Optimization (SSCO)]\index{simplified smoothed convex optimization}\label{problem:simplified_smoothed_convex_optimization}
Given a time horizon $T \in \mathbb{N}$, upper bounds $m \in \mathbb{N}^d$, switching costs $\beta \in \mathbb{R}_{>0}^d$, and a sequence $F$ of non-negative convex functions $f_t$ for $t \in [T]$, find $X \in (\mathbb{R}_{\geq 0, \leq m_0} \times \dots \times \mathbb{R}_{\geq 0, \leq m_d})^T$ minimizing \begin{align}\label{eq:simplified_smoothed_convex_optimization}
    c_{\text{SSCO}}(X) = \sum_{t=1}^T f_t(X_t) + \sum_{k=1}^d \beta_k (X_{t,k} - X_{t-1,k})^+
\end{align}
where $X_0 = \mathbf{0}$.
\end{problem}

We observe that $c_{\text{SSCO}}$ pays the switching cost whenever $x$ increases. Decreasing $x$ does not increase the paid switching cost. This observation motivates the following lemma that shows that we could equivalently pay the switching cost for decreasing $x$.

\begin{lemma}
\label{lemma:inverse_switching_cost}
For all $T \in \mathbb{N}$ and $ X_t \in \mathbb{R}$ where $X_0 = X_{T+1} = \mathbf{0}$, the following equivalence holds:
\begin{align*}
    \sum_{t=1}^T (X_t - X_{t-1})^+ = \sum_{t=1}^T (X_t - X_{t+1})^+.
\end{align*}
\end{lemma}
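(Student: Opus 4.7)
The plan is to reduce both sides to a common form by exploiting the elementary identity $(a)^+ = a + (-a)^+$. This identity rewrites an upward-step positive-part sum as a telescoping linear sum plus a downward-step positive-part sum, and the latter can be reindexed to recover the right-hand side up to boundary terms.

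Concretely, I would first write
\begin{equation*}
\sum_{t=1}^T (X_t - X_{t-1})^+ = \sum_{t=1}^T (X_t - X_{t-1}) + \sum_{t=1}^T (X_{t-1} - X_t)^+,
\end{equation*}
telescope the linear sum to $X_T - X_0 = X_T$, and reindex the remaining sum by $s = t-1$ to get $\sum_{s=0}^{T-1}(X_s - X_{s+1})^+$. Separating the $s = 0$ term from this sum and the $t = T$ term from the target sum $\sum_{t=1}^T (X_t - X_{t+1})^+$, the inner parts already match, so the identity reduces to the single boundary check
\begin{equation*}
X_T + (X_0 - X_1)^+ = (X_T - X_{T+1})^+,
\end{equation*}
which, using $X_0 = X_{T+1} = 0$, becomes $X_T + (-X_1)^+ = (X_T)^+$.

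The main obstacle is precisely this boundary step: although the statement quantifies over $X_t \in \mathbb{R}$, the equation above genuinely requires $X_1 \geq 0$ and $X_T \geq 0$ (otherwise it already fails for $T = 2$ with $X_1 = -5$, $X_2 = 0$, where the left side is $5$ and the right side is $0$). Fortunately, the lemma is applied in the setting of \cref{problem:simplified_smoothed_convex_optimization}, where the decision space lies in $\mathbb{R}_{\geq 0}^d$, so $X_1, X_T \geq 0$ and the boundary check collapses cleanly to $X_T = X_T$. I would therefore explicitly invoke this non-negativity (or state it as an implicit hypothesis inherited from the SSCO setup) before concluding; everything else in the argument is a routine combination of the positive-part identity, telescoping, and a shift of index.
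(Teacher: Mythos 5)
Your proof is correct, but it is a genuinely different (and more careful) argument than the one in the paper. The paper's proof is a one-sentence verbal observation: the left side sums all increases of the walk starting from $X_0=\mathbf{0}$, the right side sums all decreases of the walk ending at $X_{T+1}=\mathbf{0}$, and since the walk starts and ends at the same point these must coincide. Your route instead makes the bookkeeping explicit via $(a)^+ = a + (-a)^+$, telescoping, and an index shift, which isolates the two boundary terms that the verbal argument silently discards. What your approach buys is precisely the discovery that those boundary terms do not cancel for arbitrary reals: your reduction to $X_T + (-X_1)^+ = (X_T)^+$ is right, and your counterexample ($T=2$, $X_1=-5$, $X_2=0$, giving $5 \neq 0$) is valid. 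The lemma as literally stated, quantified over all $X_t \in \mathbb{R}$, is false; it requires $(-X_1)^+ = (-X_T)^+$, which holds in particular when $X_1, X_T \geq 0$. Since the lemma is only ever invoked for schedules in the SSCO decision space $\mathbb{R}_{\geq 0,\leq m_0} \times \dots \times \mathbb{R}_{\geq 0,\leq m_d}$ (componentwise, via \cref{lemma:switching_cost_l1_norm_vs_pos_movement}), the missing hypothesis is always satisfied where it matters, so your plan of stating the non-negativity explicitly before concluding is exactly the right fix. In short: your proof is sound, and it additionally exposes an over-general quantifier in the paper's statement that the paper's own proof glosses over.
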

\begin{proof}
The left side of the equation sums all increases in $x$ from $t \in \{0, \dots, T\}$ starting from $X_0 = \mathbf{0}$. The right side of the equation sums all decreases in $x$ from $t \in \{1, \dots, T+1\}$ ending with $X_{T+1} = \mathbf{0}$. As the schedule $X$ begins and ends with the configuration $\mathbf{0}$, the two sums are equivalent.
\end{proof}

To complete the proof that any instance of SSCO is an instance of SCO, we have to show that our switching cost is indeed a valid norm. Given an instance $\mathcal{I}_{\text{SSCO}} = (T, m, \beta, F)$ with $F = (f_1, \dots, f_T)$ we define the corresponding instance of SCO as $\mathcal{I}_{\text{SCO}} = (T, \mathcal{X}, \norm{\cdot}, \widetilde{F})$ where $\mathcal{X} = \mathbb{R}_{\geq 0, \leq m_0} \times \dots \times \mathbb{R}_{\geq 0, \leq m_d}$, $\widetilde{F}$ is slightly modified version of $F$ which is formally defined in the following, and $\norm{x} = \sum_{k=1}^d \frac{\beta_k}{2} |x_k|$ as the dimension-dependently scaled Manhattan norm of $x$. It is easy to see that $\norm{\cdot}$ is indeed a valid norm. The next lemma proves that $X \in \mathcal{X}^T$ is a solution to $\mathcal{I}_{\text{SCO}}$ if and only if it is a solution to $\mathcal{I}_{\text{SSCO}}$.

\begin{lemma}\label{lemma:switching_cost_l1_norm_vs_pos_movement}
For any $T \in \mathbb{N}, \beta \in \mathbb{R}_{>0}^d$, and $X_t \in \mathbb{R}^d$ with $X_0 = X_{T+1} = \mathbf{0}$, the following equivalence holds:
\begin{align}\label{eq:switching_cost_l1_norm_vs_pos_movement}
    \sum_{t=1}^{T+1} \norm{X_t - X_{t-1}} = \sum_{t=1}^T \sum_{k=1}^d \beta_k (X_{t,k} - X_{t-1,k})^+.
\end{align}
\end{lemma}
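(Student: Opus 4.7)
The plan is to reduce the claim to a per-coordinate identity and then invoke the telescoping logic that already underlies \cref{lemma:inverse_switching_cost}. First I would substitute $\norm{x} = \sum_{k=1}^d \frac{\beta_k}{2}|x_k|$ into the LHS and interchange the two sums. The claim then decouples across coordinates, and it suffices to prove, for each fixed $k \in [d]$,
$$\tfrac{1}{2}\sum_{t=1}^{T+1} |X_{t,k} - X_{t-1,k}| = \sum_{t=1}^{T} (X_{t,k} - X_{t-1,k})^+.$$

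For the core identity I would split $|a| = (a)^+ + (-a)^+$ coordinate-wise and exploit that the signed increments along coordinate $k$ telescope: because $X_{0,k} = X_{T+1,k} = 0$, we have $\sum_{t=1}^{T+1}(X_{t,k} - X_{t-1,k}) = 0$, so
$$\sum_{t=1}^{T+1}(X_{t,k} - X_{t-1,k})^+ = \sum_{t=1}^{T+1}(X_{t-1,k} - X_{t,k})^+.$$
This is essentially \cref{lemma:inverse_switching_cost} extended to include the terminal descent. Combining with the split of the absolute value yields $\sum_{t=1}^{T+1}|X_{t,k} - X_{t-1,k}| = 2\sum_{t=1}^{T+1}(X_{t,k} - X_{t-1,k})^+$, and the factor of two is exactly absorbed by the $\tfrac{1}{2}$ baked into the norm.

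The last step is aligning the upper limits: the expanded LHS runs up to $t = T+1$ while the target RHS stops at $T$. The surplus term at $t = T+1$ equals $(X_{T+1,k} - X_{T,k})^+ = (-X_{T,k})^+$, which vanishes since $X_T$ lies in the non-negative SSCO decision space. The only real subtlety---and the main obstacle---is this boundary bookkeeping; the $\tfrac{1}{2}$ prefactor in the norm is precisely what is calibrated to ensure that each ascent from $X_0 = \mathbf{0}$ is matched with exactly one descent back to $X_{T+1} = \mathbf{0}$ without double-counting.
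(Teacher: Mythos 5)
Your proof is correct and follows essentially the same route as the paper's: both decompose each $|X_{t,k}-X_{t-1,k}|$ into its ascent and descent components and use the boundary conditions $X_0 = X_{T+1} = \mathbf{0}$ (via the telescoping identity underlying \cref{lemma:inverse_switching_cost}) to match total ascents with total descents, the $\tfrac{1}{2}$ in the norm absorbing the resulting double count. Your explicit treatment of the surplus term $(-X_{T,k})^+$ at $t=T+1$ is in fact slightly more careful than the paper's argument, which silently relies on the non-negativity of the SSCO decision space even though the lemma is stated for $X_t \in \mathbb{R}^d$.
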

\begin{proof}
By \cref{lemma:inverse_switching_cost}, the above equivalence holds iff \begin{align*}
    \sum_{t=1}^{T+1} \sum_{k=1}^d \beta_k |X_{t,k} - X_{t-1,k}| = \sum_{t=1}^T \sum_{k=1}^d \beta_k ((X_{t,k} - X_{t-1,k})^+ + (X_{t,k} - X_{t+1,k})^+).
\end{align*}
It is easy to see that this always holds as $(X_{t,k} - X_{t-1,k})^+$ (increasing value) and $(X_{t,k} - X_{t+1,k})^+$ (decreasing value) are the two components of $|X_{t,k} - X_{t-1,k}|$.
\end{proof}

Note that the last summand of the left side of \cref{eq:switching_cost_l1_norm_vs_pos_movement} is $\norm{X_{T+1} - X_T} = \norm{X_T}$ which is not considered in the cost function of SCO. To correct for this under-approximation of the switching cost and to ensure that the cost of a schedule $X$ is equivalent between $\mathcal{I}_{\text{SSCO}}$ and $\mathcal{I}_{\text{SCO}}$ we slightly modify the hitting cost $f_T$ at time $T$ to \begin{align*}
    \widetilde{f}_T(x) := f_T(x) + \norm{x} = f_T(x) + \sum_{k=1}^d \frac{\beta_k}{2} |x|.
\end{align*} The remaining hitting costs remain the same, i.e. $\widetilde{f_t} := f_t$ for all $t \in [T-1]$. We set $\widetilde{F} = (\widetilde{f}_1, \dots \widetilde{f}_T)$. We also observe that the $\norm{\cdot}$ is convex, increasing, and non-negative, implying that this slight modification maintains the invariant that the hitting costs likewise are convex, increasing, and non-negative. This slight modification of the final hitting cost is only relevant in the offline setting where the time horizon is known.

With the same motivation we used for the restriction of SCO to Int-SCO, we now restrict SSCO to an integral variant.

\begin{problem}[Integral Simplified Smoothed Convex Optimization (Int-SSCO)]
We define integral simplified smoothed convex optimization analogously to SSCO with the added restriction that the points $x$ in $d$-dimensional space must be discrete, that is $x \in [m_0]_0 \times \dots \times [m_d]_0$.
\end{problem}

\citeauthor{Albers2018}~\cite{Albers2018} have shown for Int-SSCO in the uni-dimensional setting that the optimal competitive ratio is 3 for deterministic algorithms and 2 for randomized algorithms. As Int-SSCO is subsumed by Int-SCO, these bounds also hold for Int-SCO.

\subsubsection{Complexity of the Offline Problem}

We next extend our proof of NP-hardness of Int-SCO for varying $d$ to Int-SSCO. We cannot reuse our original proof as the switching cost of SSCO is required to be positive.

\begin{theorem}
\label{theorem:int_ssco_np_hardness}
Int-SSCO is NP-hard.
\end{theorem}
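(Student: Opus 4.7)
The plan is to adapt the reduction from Min-KP already used for Int-SCO. The only genuine obstacle is that SSCO requires strictly positive switching costs $\beta_k > 0$, so we cannot simply set the movement cost to zero as was done in the Int-SCO proof. The workaround is to make the time horizon as short as possible ($T=1$) and absorb the mandatory switching cost into the hitting cost by adding a compensating linear term.

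Concretely, given an instance $\mathcal{I}_{\text{Min-KP}} = (n,c,u,U)$ of Min-KP, I would construct $\mathcal{I}_{\text{Int-SSCO}}$ with $d = n$, $T = 1$, upper bounds $m_k = 1$ for every $k \in [n]$ (so that the decision space equals $\{0,1\}^n$), arbitrary positive switching costs, say $\beta_k = 1$ for every $k$, and a single hitting cost
\begin{align*}
    f_1(x) := c_{\text{SCO}}'(x) + \sum_{k=1}^n (1 - x_k),
\end{align*}
where $c_{\text{SCO}}'$ is the convex function from Lemma~\ref{lemma:integer_minimization} with $M > n c_{\text{max}} / u_{\text{min}}$.

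The verification then proceeds in three short steps. First, $f_1$ is convex on the decision box: $c_{\text{SCO}}'$ is convex by Lemma~\ref{lemma:integer_minimization_convexity}, and the additional term is affine. Second, $f_1$ is non-negative on $[0,1]^n$ because both summands are non-negative there. Third, since $X_0 = \mathbf{0}$ and every component $X_{1,k} \geq 0$, the switching cost reduces to $\sum_{k=1}^n \beta_k (X_{1,k} - 0)^+ = \sum_{k=1}^n X_{1,k}$, so that
\begin{align*}
    c_{\text{SSCO}}(X) = f_1(X_1) + \sum_{k=1}^n X_{1,k} = c_{\text{SCO}}'(X_1) + n.
\end{align*}
The additive constant $n$ is immaterial for minimization, so $X_1 \in \{0,1\}^n$ minimizes $c_{\text{SSCO}}$ iff it minimizes $c_{\text{SCO}}'$, which by Lemma~\ref{lemma:integer_minimization} holds iff $X_1$ solves $\mathcal{I}_{\text{Min-KP}}$.

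Since the construction is total and computable in polynomial time in the size of $\mathcal{I}_{\text{Min-KP}}$, and Min-KP is NP-hard, Int-SSCO is NP-hard. The main subtlety, as noted, lies in choosing the compensating linear term with exactly the right slope so that every forced payment of the strictly positive switching cost is cancelled; picking $T=1$ keeps this bookkeeping trivial, since otherwise additional switching contributions from intermediate and terminal decisions would also need to be offset (and one would need to worry about the reverse direction penalties discussed in Lemma~\ref{lemma:switching_cost_l1_norm_vs_pos_movement}).
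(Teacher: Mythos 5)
Your construction is identical to the paper's: the same instance $(T=1,\ m\equiv 1,\ \beta\equiv 1)$ with hitting cost $f_1(x)=c_{\text{SCO}}'(x)+n-\sum_k x_k$, and the same cancellation of the forced switching cost against the compensating affine term. The argument is correct and matches the paper's proof step for step (your observation that the added term is affine even slightly streamlines the paper's separate midpoint-convexity lemma).
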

\begin{proof}
Again, we use a reduction from Min-KP.

Let $\mathcal{I}_{\text{Min-KP}} = (n, c, u, U)$ be an instance of Min-KP and set $d = n$. We define $\mathcal{I}_{\text{Int-SSCO}} = (T, m, \beta, f)$ as an instance of Int-SSCO with $T = 1$, $m \equiv 1$, $\beta \equiv 1$, and $f_1(x) = c_{\text{SCO}}'(x) + n - \sum_{i=1}^n x_i$.

It is easy to see that $f_1$ is non-negative. In \cref{lemma:ssco_reduction_convexity}, we prove that $f_1$ is convex. Assuming the convexity of $f_1$, $\mathcal{I}_{\text{Int-SSCO}}$ is a valid instance of Int-SSCO.

We now prove the correctness of our construction. Again, we use \cref{lemma:integer_minimization}. \begin{align*}
         &X \text{ is a solution to } \mathcal{I}_{\text{Int-SSCO}} \\
    \iff &X \text{ minimizes } \sum_{t=1}^T f_t(X_t) + \sum_{k=1}^d \beta_k (X_{t,k} - X_{t-1,k})^+ \text{ such that } X_t \in [m_0]_0 \times \dots \times [m_d]_0. \\
    \iff &X \text{ minimizes } f_1(X_1) + \sum_{i=1}^n X_{1,i} \text{ such that } X_1 \in \{0,1\}^n. \\
    \iff &X \text{ minimizes } c_{\text{SCO}}'(X_1) + n + \sum_{i=1}^n X_{1,i} - X_{1,i} \text{ such that } X_1 \in \{0,1\}^n. \\
    \iff &X \text{ minimizes } c_{\text{SCO}}'(X_1) \text{ such that } X_1 \in \{0,1\}^n. \\
    \iff &X_1 \text{ is a solution to } \mathcal{I}_{\text{Min-KP}}.
\end{align*}

Our construction is still total and polynomial in the size of $\mathcal{I}_{\text{Min-KP}}$. Hence, Int-SSCO is NP-hard.
\end{proof}

\begin{lemma}
\label{lemma:ssco_reduction_convexity}
$f_1$ from \cref{theorem:int_ssco_np_hardness} is convex on $\{0,1\}^n$.
\end{lemma}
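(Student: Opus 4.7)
The plan is to reuse Lemma~\ref{lemma:integer_minimization_convexity} as a black box and observe that $f_1$ differs from $c_{\text{SCO}}'$ only by an affine term. Recall $f_1(x) = c_{\text{SCO}}'(x) + n - \sum_{i=1}^n x_i$. By Lemma~\ref{lemma:integer_minimization_convexity}, $c_{\text{SCO}}'$ is convex, and the map $x \mapsto n - \sum_{i=1}^n x_i$ is affine hence convex. Since the sum of two convex functions is convex, $f_1$ is convex.

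To make this fully rigorous in the same style as Lemma~\ref{lemma:integer_minimization_convexity}, I would first note that $f_1$ is continuous (as $c_{\text{SCO}}'$ is continuous and the affine correction is continuous), and then verify midpoint convexity. Midpoint convexity of the affine term is immediate by linearity, i.e.\ $n - \sum_i \frac{x_i+y_i}{2} = \frac{1}{2}\bigl((n-\sum_i x_i) + (n-\sum_i y_i)\bigr)$, and midpoint convexity of $c_{\text{SCO}}'$ was already established inside Lemma~\ref{lemma:integer_minimization_convexity}. Adding the two midpoint inequalities yields midpoint convexity of $f_1$, which together with continuity gives convexity on the relevant domain.

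There is no genuine obstacle here; the only thing to be careful about is the domain. The statement is phrased on $\{0,1\}^n$, but convexity really needs a convex ambient set to make sense, so the intended meaning (matching the treatment in Lemma~\ref{lemma:integer_minimization_convexity}) is that $f_1$ is convex as a function on $[0,1]^n \subseteq \mathbb{R}^n$, and the restriction to the integer lattice $\{0,1\}^n$ inherits this property in the sense required by the definition of Int-SSCO. I would state this observation once and then the proof reduces to the two-line argument above.
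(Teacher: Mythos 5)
Your proposal is correct and follows essentially the same route as the paper: decompose $f_1 = c_{\text{SCO}}' + h$ with $h(x) = n - \sum_{i=1}^n x_i$, invoke \cref{lemma:integer_minimization_convexity} for the first summand, and observe that the affine term is (midpoint-)convex, so the sum is convex. Your added remark about the domain is a reasonable clarification but does not change the argument.
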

\begin{proof}
To show convexity of $f_1$, it suffices to show that $h(x) = n - \sum_{i=1}^n x_i$ is convex as the convexity of $c_{SCO}'$ was already established in \cref{lemma:integer_minimization_convexity} and $f_1(x) = c_{SCO}'(x) + h(x)$. Further, it is enough to prove $h$ midpoint-convex as $h$ is continuous. We observe that \begin{align*}
         &&h\left(\frac{x + y}{2}\right) &\leq \frac{h(x) + h(y)}{2} \\
    \iff &&n - \sum_{i=1}^n \frac{x_i + y_i}{2} &\leq n - \frac{\sum_{i=1}^n x_i + \sum_{i=1}^n y_i}{2}
\end{align*} holds for any $x, y \in \{0,1\}^n$, proving the lemma.
\end{proof}

\subsection{Smoothed Balanced Load Optimization}

We now turn to a variant of Int-SSCO introduced by \citeauthor{Albers2021_2}~\cite{Albers2021_2} that further restricts the structure of the convex cost functions. This restriction is motivated by the usual cost model of heterogeneous data centers with homogeneous loads we examined in detail in \cref{section:application:dispatching:optimal_load_balancing} where the incoming load (or set of jobs) is distributed equally among all active servers.

Given a sequence of convex increasing non-negative costs $g_{t,k}(l)$ of each instance in dimension $k$ given its load $l$ during time slot $t$, the overall cost for dimension $k$ during time slot $t$ is given as \begin{align*}
    h_{t,k}(x,z) := \begin{cases} 
        x g_{t,k}\left(\frac{l_{t,k}}{x}\right) & x > 0 \\
        \infty                                  & x = 0 \land l_{t,k} > 0 \\
        0                                       & x = 0 \land l_{t,k} = 0
    \end{cases}
\end{align*} where $l_{t,k} = \lambda_t z$ for some sequence of load profiles $\lambda_t \in \mathbb{N}_0$. Here, $x$ is the position in the decision space in dimension $k$, and $z \in [0,1]$ is the fraction of the load $\lambda_t$ that is assigned to dimension $k$~\cite{Albers2021_2}. Given the set of all possible assignments to $d$ dimensions $\mathcal{Z} := \{z \in [0,1]^d \mid \sum_{k=1}^d z_k = 1\}$, the overall hitting cost is defined as the convex optimization \begin{align}
\label{eq:sblo_hitting_cost}
    f_t(x) := \min_{z \in \mathcal{Z}} \sum_{k=1}^d h_{t,k}(x_k,z_k).
\end{align} Intuitively, the load profiles $\lambda_t$ are balanced across all dimensions so as to minimize cost. We also observe that the formulation of $f_t$ from \cref{eq:sblo_hitting_cost} is equivalent to our formulation from \cref{eq:heterogeneous_load_balancing}.

\begin{problem}[Smoothed Balanced Load Optimization (SBLO)]\index{smoothed balanced load optimization}\label{problem:sblo}
Given a time horizon $T \in \mathbb{N}$, upper bounds $m \in \mathbb{N}^d$, switching costs $\beta \in \mathbb{R}_{>0}^d$, a sequence $\Lambda$ of load profiles $\lambda_t \in \mathbb{N}_0$, and a sequence $G$ of convex increasing non-negative functions $g_{t,k}$ for $t \in [T], k \in [d]$, find $X \in ([m_0]_0 \times \dots \times [m_d]_0)^T$ minimizing \begin{align*}
    c_{\text{SBLO}}(X) = \sum_{t=1}^T f_t(X_t) + \sum_{k=1}^d \beta_k (X_{t,k} - X_{t-1,k})^+
\end{align*}
where $X_0 = \mathbf{0}$ and $f_t$ is given by \cref{eq:sblo_hitting_cost}.
\end{problem}

In the online variant of SBLO (and in its variants), load profiles and convex cost functions arrive over time. It is easy to see that any instance of SBLO is in fact an instance of Int-SSCO.

\subsection{Smoothed Load Optimization}

Lastly, we consider an even simpler problem proposed by \citeauthor{Albers2021}~\cite{Albers2021} where we assume that each instance can only handle a single job during each time slot. Instead of using convex functions to model cost, we assume that cost increases linearly and independently from time with the number of active servers. In addition, we impose the constraint that the number of servers must still be enough to handle the incoming load. Without this restriction, the optimal strategy would always be not to run any servers at all.

\begin{problem}[Smoothed Load Optimization (SLO)]\index{smoothed load optimization}\label{problem:slo}
Given a time horizon $T \in \mathbb{N}$, upper bounds $m \in \mathbb{N}^d$, switching costs $\beta \in \mathbb{R}_{>0}^d$, a sequence $\Lambda$ of load profiles $\lambda_t \in \mathbb{N}_0$, and the non-negative operating costs $c \in \mathbb{R}_{\geq 0}^d$, find $X \in ([m_0]_0 \times \dots \times [m_d]_0)^T$ minimizing \begin{align*}
    c_{\text{SLO}}(X) = \sum_{t=1}^T \sum_{k=1}^d c_k X_{t,k} + \beta_k (X_{t,k} - X_{t-1,k})^+
\end{align*}
where $X_0 = \mathbf{0}$ such that for all $t \in [T]$ \begin{align*}
    \sum_{k=1}^d X_{t,k} \geq \lambda_t.
\end{align*}
\end{problem}

In contrast to our definition of SBLO, SLO balances the load implicitly among all active servers, which is possible because we assume that each active server can only handle a single job during one time slot. Again, it is easy to see that SLO is an instance of SBLO by setting $g_{t,k}(l) := c_k$ for $l \leq 1$ and $g_{t,k}(l) := \infty$ otherwise.

\citeauthor{Albers2021}~\cite{Albers2021} show that an online algorithm for SLO cannot attain a competitive ratio smaller than $2d$.

\section{Beyond Convexity}\label{section:theory:beyond_convexity}

We have seen that the optimal competitiveness of online algorithms for smoothed convex optimization is fundamentally limited to be dimension-dependent as long as arbitrary convex hitting costs and arbitrary norms as movement costs are allowed. In the literature, many promising approaches are based on restricting the class of hitting costs (and movement costs) to achieve a dimension-independent competitive ratio. This section focuses mainly on hitting costs, introducing these restrictions, and investigating how they relate to our data center model.

\subsection{Continuity and Differentiability}\label{section:theory:beyond_convexity:continuity_and_differentiability}

A very first natural restriction is to assume that hitting costs are continuous. In fact, theorem 10.1 of~\cite{Rockafellar1970} proves that given a convex function $f : \mathcal{X} \to \mathbb{R}$, $f$ is continuous on the interior of its domain, $\mathcal{X}^{\circ}$. Throughout this work, we will thus assume the hitting costs to be continuous on the interior of the decision space $\mathcal{X}$ without an explicit mention.

As continuity does not represent a limitation, we investigate the continuous differentiability of the hitting costs. We call a function \emph{smooth}\index{smooth function} if it is infinitely-many times continuously differentiable. However, to obtain a smooth hitting cost in the application of right-sizing data centers, one would have to drastically reduce the complexity of the model we discussed in \cref{chapter:application}. For example, \citeauthor{Bansal2015}~\cite{Bansal2015} focus entirely on the energy cost, which is a good approximation in practice as energy represents the largest fraction of the overall cost. Still, in general, the assumption of smoothness is too strong.

\subsection{Stronger Assumptions}

Some of the algorithms, which we discuss, require a more restricted class of convex cost functions. We thus introduce some terminology that is commonly used in convex optimization to describe properties of well-behaved functions.

\paragraph{Lipschitz Continuity} We begin with the fundamental notion of Lipschitz continuity.

\begin{definition}\index{Lipschitz continuity}
\cite{Gupta2020} A function $f : K \to \mathbb{R}$ is called $L$-Lipschitz over a convex set $K \subseteq \mathbb{R}^d$ with respect to the norm $\norm{\cdot}$ if \begin{align*}
    \norm{f(x) - f(y)} \leq L \norm{x - y}
\end{align*} holds for all $x, y \in K$.
\end{definition}

Intuitively, the absolute slope of an $L$-Lipschitz function cannot be greater than $L$. Alternatively, in other words, the function values cannot change arbitrarily fast.

\paragraph{Lipschitz Smoothness} In the context of convex optimization, there is a notion of smoothness that is distinct from the smoothness that we discussed in \cref{section:theory:beyond_convexity:continuity_and_differentiability}. For descent methods, it is beneficial if the difference in gradients of two points shrinks with the distance between the points. Formally, we define smoothness as follows.

\begin{definition}\index{Lipschitz smoothness}
A function $f : K \to \mathbb{R}$ is called $\beta$-Lipschitz smooth over a convex set $K \subseteq \mathbb{R}^d$ with respect to the norm $\norm{\cdot}$ if its gradient $\nabla f$ is $\beta$-Lipschitz over $K$. Thus, $f$ is $\beta$-Lipschitz smooth if \begin{align*}
    \norm{\nabla f(x) - \nabla f(y)} \leq \beta \norm{x - y}
\end{align*} holds for all $x, y \in K$. This is equivalent to saying that \begin{align*}
    f(y) \leq f(x) + \langle\nabla f(x), y-x\rangle + \frac{\beta}{2}\norm{x-y}^2
\end{align*} holds for all $x, y \in K$~\cite{Gupta2020}.
\end{definition}

Intuitively, this ensures that at any point $x \in K$, a quadratic can be fit above the curve of $f$. This ensures that a descent method does not ``overshoot" when approaching the minimum because the gradient decreases as the minimum is approached.

\paragraph{Strong Convexity} In contrast, descent methods converge faster if gradients are large, very far away from the optimal solution. The notion of strong convexity describes this property.

\begin{definition}\index{strong convexity}
\cite{Gupta2020} A function $f : K \to \mathbb{R}$ is called $\alpha$-strongly convex over a convex set $K \subseteq \mathbb{R}^d$ with respect to the norm $\norm{\cdot}$ if $g(x) = f(x) - \frac{\alpha}{2}\norm{x-y}^2$ is convex over $K$. Equivalently, $f$ is $\alpha$-strongly convex if \begin{align*}
    f(y) \geq f(x) + \langle\nabla f(x), y-x\rangle + \frac{\alpha}{2}\norm{x-y}^2
\end{align*} holds for all $x, y \in K$.
\end{definition}

Intuitively, at any point $x \in K$, a quadratic can be fit under the curve of $f$. In other words, $f$ grows at least quadratically as one moves away from the minimizer. This contrasts our definition of $\beta$-Lipschitz smoothness. When a function is $\alpha$-strongly convex and $\beta$-Lipschitz smooth, descent methods converge quickly as the gradient is large when far away and small when close to the optimal solution.

Note that constant and even linear functions are not strongly convex. Recall that the energy consumption models we discussed in \cref{eq:energy_model:1} and \cref{eq:energy_model:2} were linear in the utilization, implying that the overall operating cost of a server is not strongly convex. Even the non-linear energy consumption model from \cref{eq:energy_model:3} is not strongly convex as its first-order derivative is zero for $s = 0$.

\paragraph{Local Polyhedrality} Still, strong convexity represents a significant restriction. A similar but not quite as strong is the property of local polyhedrality.

\begin{definition}\index{local polyhedrality}
\cite{Goel2018} A function $f : K \to \mathbb{R}$ with minimizer $\hat{x}$ is called $\alpha$-locally polyhedral over a convex set $K \subseteq \mathbb{R}^d$ with respect to the norm $\norm{\cdot}$ if there exists some $\epsilon > 0$ such that \begin{align*}
    f(x) - f(\hat{x}) \geq \alpha \norm{x - \hat{x}}
\end{align*} holds for all $x \in K$ with $\norm{x - \hat{x}} \leq \epsilon$.
\end{definition}

Local polyhedrality indicates that at any point $x \in K$, a linear function with slope $\alpha$ can be fit below the curve of $f$~\cite{Goel2018}. In other words, $f$ grows at least linearly as one moves away from the minimizer. Similar to strong convexity, constant functions are not locally polyhedral. Nevertheless, local polyhedrality encompasses many functions, among others the cost functions we described in \cref{chapter:application} modeling the cost of a data center~\cite{Goel2018}.

\chapter{Offline Algorithms}\label{chapter:offline_algorithms}

To later measure the performance of online algorithms, we must first describe efficient offline algorithms that can be used to find optimal (or nearly optimal) solutions. In this chapter, we, therefore, describe algorithms for fractional and integral smoothed convex optimization. We begin with a general investigation of (fractional) convex optimization. Then, we turn to specific algorithms described in the literature for the uni-dimensional and multi-dimensional settings.

In our general argument, we want to use that the cost function $c_{\text{SCO}}$ is convex on $\mathcal{X}^T$ implying that algorithms for convex optimization can be used to solve the fractional offline case optimally.

\begin{lemma}
$c_{\text{SCO}}(X) = \sum_{t=1}^T f_t(X_t) + \norm{X_t - X_{t-1}}$ is convex on $\mathcal{X}^T$.
\end{lemma}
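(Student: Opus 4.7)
The plan is to decompose $c_{\text{SCO}}$ into a sum of convex summands and invoke the fact that the sum of convex functions (on a common convex domain) is convex. Concretely, for each $t \in [T]$ I would write $c_{\text{SCO}}(X) = \sum_{t=1}^T \bigl(\phi_t(X) + \psi_t(X)\bigr)$ where $\phi_t(X) := f_t(X_t)$ and $\psi_t(X) := \norm{X_t - X_{t-1}}$, regarded as functions of the full schedule $X \in \mathcal{X}^T$, with the convention $X_0 = \mathbf{0}$.

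For $\phi_t$, the coordinate projection $\pi_t : \mathcal{X}^T \to \mathcal{X}$, $X \mapsto X_t$, is linear, and by the assumption of Problem~\ref{problem:smoothed_convex_optimization} the function $f_t$ is convex on $\mathcal{X}$. Since the composition of a convex function with a linear map is convex, $\phi_t = f_t \circ \pi_t$ is convex on $\mathcal{X}^T$. For $\psi_t$, the map $A_t : \mathcal{X}^T \to \mathbb{R}^d$, $X \mapsto X_t - X_{t-1}$ (with $X_0$ the fixed vector $\mathbf{0}$), is affine, and every norm on $\mathbb{R}^d$ is convex by the triangle inequality together with positive homogeneity. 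Hence $\psi_t = \norm{\cdot} \circ A_t$ is also convex on $\mathcal{X}^T$. Summing these $2T$ convex functions yields that $c_{\text{SCO}}$ is convex on $\mathcal{X}^T$, which is itself convex as a $T$-fold product of the convex set $\mathcal{X}$.

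There is no real obstacle here; the statement is a direct application of two elementary closure properties of convex functions (composition with affine maps and nonnegative summation). The only mild points worth spelling out are that $X_0 = \mathbf{0}$ is a constant and therefore does not spoil the affinity of $A_1$, and that convexity is checked on the ambient product domain $\mathcal{X}^T$, which inherits convexity coordinate-wise from $\mathcal{X}$.
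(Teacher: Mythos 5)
Your proof is correct and follows essentially the same route as the paper: decompose the objective into the hitting-cost terms and the norm terms, observe each is convex, and sum. The one refinement you add — explicitly composing $f_t$ and $\norm{\cdot}$ with the coordinate projection and the affine map $X \mapsto X_t - X_{t-1}$ so that each summand is genuinely a convex function of the full schedule on $\mathcal{X}^T$ — is a detail the paper leaves implicit, and it is worth having.
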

\begin{proof}
By definition $f_t$ is convex on $\mathcal{X}$. Every norm $\norm{\cdot}$ on $\mathcal{X}$ is convex by the triangle inequality as shown by the following argument: \begin{align*}
    \forall x, y \in \mathcal{X}.\ \forall \lambda \in [0,1].\ \norm{\lambda x + (1 - \lambda) y} \leq \norm{\lambda x} + \norm{(1 - \lambda) y} = \lambda \norm{x} + (1 - \lambda) \norm{y}.
\end{align*} In total we get that the sum of convex functions is also convex.
\end{proof}

\section{Convex Optimization}\label{section:offline_algorithms:convex_optimization}

As the offline variant of SCO simply is the problem of minimizing $c_{\text{SCO}}$, it is easy to see that we can use a convex optimization solver to obtain the optimal schedule $\hat{X}$. Throughout our discussion of algorithms, we denote schedules by $X$ and optimal solutions by $\hat{\cdot}$. Further, as we do not impose any constraints beyond the bounds of the decision space $\mathcal{X} \subset \mathbb{R}^d$ it suffices to find the local optimum of $c_{\text{SCO}}$ with respect to $\mathcal{X}$. By the convexity of $c_{\text{SCO}}$ we know that any local optimum will also be a global optimum~\cite{Bubeck2015}. We can choose the algorithm for finding the local optimum based on our knowledge of the properties of $c_{\text{SCO}}$. If, for example, $c_{\text{SCO}}$ is differentiable on $\mathcal{X}$ we could use gradient descent. In general, even if we cannot be sure of the differentiability of $c_{\text{SCO}}$ we can always find a subgradient in the interior of $\mathcal{X}$~\cite{Bubeck2015}. We denote by $\partial c_{\text{SCO}}(x)$ the set of subgradients of $c_{SCO}$ at the point $x \in \mathcal{X}$.

We are interested in finding good approximations of $\hat{x}$. To that end, we call a solution $\hat{x}$ \emph{$\epsilon$-optimal}\index{$\epsilon$-optimal solution} if $\norm{g(\hat{x})}^2 \leq \epsilon$ where $g \in \partial c_{\text{SCO}}(\hat{x})$. Using the projected subgradient method we are able to find an $\epsilon$-optimal solution $\hat{x}$ in $\mathcal{O}(1 / \epsilon^2)$ iterations~\cite{Boyd2003}. Note that this convergence rate is dimension-independent.

In our implementation, we use two different algorithms for derivative-free local optimization. We use the Subplex method, which extends the Nelder-Mead method if there are no equality or inequality constraints beyond the bounds on the decision space~\cite{Rowan1990}. If we impose such constraints, we use the COBYLA (Constrained Optimization BY Linear Approximations) algorithm, which iteratively constructs linear approximations of the objective functions and approximates constraints using a simplex in $d+1$ dimensions~\cite{Powell1994, Powell1998}. We use implementations from the NLopt library~\cite{Johnson}.

To be able to solve the optimization problem efficiently, it is crucial to quickly find a point $x \in \mathcal{X}$ of which the associated cost is finite. Until such a point is found, the algorithms sample values from the decision space without direction. Our primary use of convex optimization solvers is to determine a minimizer of the hitting cost. Furthermore, in the application of dynamically right-sizing data centers, hitting costs are always finite for the upper bound of the decision space. We, therefore, use this upper bound as the first guess for finding the minimizer even if this guess may be farther away from the minimizer than the lower bound. A convex optimization that does not terminate thus indicates that the problem is infeasible, for example, because the provided load profile is infeasible.

Note that we cannot draw this conclusion if we solve for a different objective, such as \cref{eq:heterogeneous_load_balancing_convex_program} or \cref{eq:multiple_load_types_load_balancing_convex_program}. In these cases, there is no guarantee that the solver will find a feasible point unless it searches over the entire decision space. In the given example, however, choosing the uniform distribution across all active server types performs well in practice. This heuristic ensures that no server type without at least one active server is assigned load, which would incur an infinite cost by definition.

We have seen a general method of solving high-dimensional (fractional) SCO, though the convergence rate could be increased using an accelerated gradient method if the hitting costs are strongly convex or conditional gradient descent if the hitting costs are smooth~\cite{Bubeck2015}. Denoting the complexity of computing the hitting cost $f_t$ by $\mathcal{O}(C)$ and the convergence rate of a convex optimization with $\alpha$ dimensions by $\mathcal{O}(O_{\epsilon}^{\alpha})$ we obtain a total complexity of $\mathcal{O}(T C O_{\epsilon}^{T d})$. We observe, however, that our method does not extend to the integral case, which is of particular importance for the application of right-sizing data centers. We, therefore, devote much of the remaining chapter to the discussion of algorithms for integral variants of SCO.

\section{Uni-Dimensional}

We now limit our attention to the uni-dimensional setting, i.e. $\mathcal{X} \subset \mathbb{R}$.

\subsection{Backward-Recurrent Capacity Provisioning}\label{section:offline_algorithms:ud:capacity_provisioning}

Before discussing algorithms for Int-SSCO, we discuss a simple backward-recurrent algorithm for SSCO proposed by \citeauthor{Lin2011}~\cite{Lin2011} based on capacity provisioning. They extend this idea to formulate an online algorithm which we discuss in \cref{section:online_algorithms:ud:lazy_capacity_provisioning}.

\citeauthor{Lin2011}~\cite{Lin2011} observed that the optimal offline solution can be characterized by two bounds corresponding to charging the switching cost for powering-up and powering-down servers, respectively. Let $\tau \in [T]$ be a time slot. Then the optimal offline solution $\hat{X}_{\tau}$ during time slot $\tau$ is lower bounded by $X_{\tau,\tau}^L$ where $X_{\tau}^L$ is the smallest vector minimizing \begin{align}\label{eq:ud:brcp:lower}
    c_{\tau}^L(X) = \sum_{t=1}^{\tau} f_t(X_t) + \beta (X_t - X_{t-1})^+.
\end{align} Conversely, $\hat{X}_{\tau}$ is upper bounded by $X_{\tau,\tau}^U$ where $X_{\tau}^U$ is the largest vector minimizing \begin{align}\label{eq:ud:brcp:upper}
    c_{\tau}^U(X) = \sum_{t=1}^{\tau} f_t(X_t) + \beta (X_{t-1} - X_t)^+.
\end{align} Overall we have $X_{\tau,\tau}^L \leq \hat{X}_{\tau} \leq X_{\tau,\tau}^U$~\cite{Lin2011}. Note that the switching cost is paid for powering up a server for the lower bound, while for the upper bound, the switching cost is paid for powering down a server. Further, it is easy to see that the bounds for time slot $\tau$ do not depend on any time slots $t > \tau$.

An optimal offline algorithm can be described as determining the optimal schedule moving backward in time. We begin by setting $\hat{X}_{T+1} = 0$. For each previous time slot $\tau$, we set $\hat{X}_{\tau} = \hat{X}_{\tau + 1}$ unless this violates the bounds, in which case we make the smallest possible change: \begin{align*}
    \hat{X}_{\tau} = \begin{cases}
        0 & \tau > T \\
        (\hat{X}_{\tau+1})_{X_{\tau,\tau}^L}^{X_{\tau,\tau}^U} & \tau \leq T
    \end{cases}
\end{align*} where $(\hat{X}_{\tau+1})_{X_{\tau,\tau}^L}^{X_{\tau,\tau}^U}$ is the projection of $\hat{X}_{\tau+1}$ onto $[X_{\tau,\tau}^L, X_{\tau,\tau}^U]$~\cite{Lin2011}. The resulting algorithm is described in \cref{alg:brcp}. We can use algorithms for convex optimization to compute the upper and lower bounds. Hence, we obtain a complexity of $\mathcal{O}(T^2 C O_{\epsilon}^T)$ for $\epsilon$-optimal upper and lower bounds. An example of Backward-Recurrent Capacity Provisioning is given in \cref{fig:backward_recurrent_capacity_provisioing_vs_lazy_capacity_provisioning}.

\begin{algorithm}
    \caption{Backward-Recurrent Capacity Provisioning~\cite{Lin2011}}\label{alg:brcp}
    \SetKwInOut{Input}{Input}
    \Input{$\mathcal{I}_{\text{SSCO}} = (T \in \mathbb{N}, m \in \mathbb{N}, \beta \in \mathbb{R}_{>0}, (f_1, \dots, f_T) \in (\mathbb{R}_{\geq 0} \to \mathbb{R}_{\geq 0})^T)$}
    $\hat{X}_{T+1} \gets 0$\;
    \For{$\tau \gets T$ \KwTo $1$}{
        find $X_{\tau,\tau}^L$ using the optimization described by \cref{eq:ud:brcp:lower}\;
        find $X_{\tau,\tau}^U$ using the optimization described by \cref{eq:ud:brcp:upper}\;
        $\hat{X}_{\tau} \gets (\hat{X}_{\tau+1})_{X_{\tau,\tau}^L}^{X_{\tau,\tau}^U}$\;
    }
    \Return $\hat{X}$\;
\end{algorithm}

\subsection{Graph-Based Optimal Integral Algorithm}\label{section:offline_algorithms:ud:graph_based}

We now turn to the uni-dimensional integral case of simplified smoothed convex optimization, i.e., our decision space is given as $\mathcal{X} := [m]_0$ where $m \in \mathbb{N}$ is the maximum number of servers a data center can employ at the same time. As our decision space is discrete and finite, it is natural to model our problem as a weighted directed graph which is comprised of vertices $v_{t,j}$ for each $t \in [T]$ and $j \in [m]_0$ describing the state that during time slot $t$ exactly $j$ servers are active. For the initial and final state $X_0 = X_{T+1} = 0$ we have two additional vertices $v_{0,0}$ and $v_{T+1,0}$. For each vertice associated with time slot $t \in [T]_0$ we add an edge to all vertices associated with the subsequent time slot $t + 1$. With each edge from $v_{t,i}$ to $v_{t+1,j}$ we associate the switching cost incurred by the represented action, i.e. $\beta (j - i)^+$, and the hitting cost incurred by the state represented by the vertice $v_{t+1,j}$, i.e. $f_{t+1}(j)$. The structure of this graph is presented in \cref{fig:underlying_graph_of_uni_dimensional_integral_offline_algorithm}.

\begin{figure}
    \centering
    \resizebox{\textwidth}{!}{\input{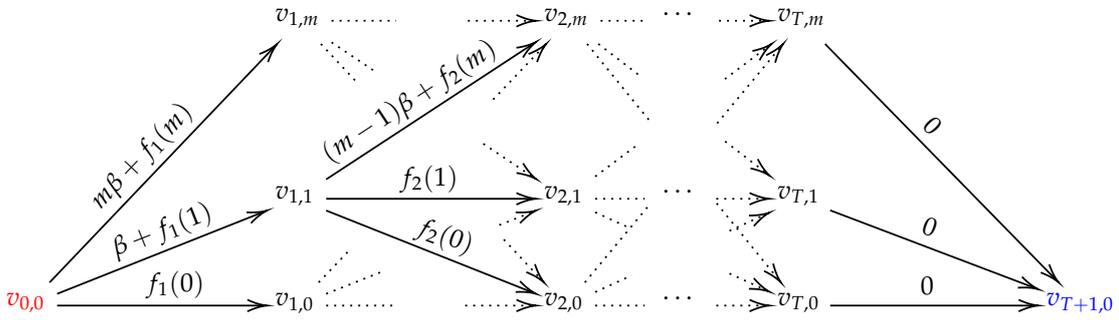}}
    \caption{Underlying graph of uni-dimensional integral offline algorithm. The algorithm finds a shortest path from $v_{0,0}$ (red) to $v_{T+1,0}$ (blue).}
    \label{fig:underlying_graph_of_uni_dimensional_integral_offline_algorithm}
\end{figure}

\citeauthor{Albers2018}~\cite{Albers2018} show that an optimal schedule is given by the shortest path from $v_{0,0}$ to $v_{T+1,0}$ within our constructed graph. We observe that our graph follows a particular structure. Each vertice representing an action during time slot $t$ can only be reached by a path that includes precisely one vertice representing time slots $0$ through $t - 1$ and, crucially, whether this path is optimal up to time $t$ does not depend on any time slot past $t$. Bellman's principle of optimality states that any part of an optimal path must itself be optimal~\cite{Bellman1954}. Based on this principle and using our previous observation, we can use dynamic programming to sequentially determine the optimal paths to the vertices of time slot $t$.

Moreover, we observe a second property of the graph of interest. Namely, for each time slot $t \in [T]$ (in the following called columns), the graph has precisely $m$ vertices (in the following called rows), which can only be reached through edges from vertices representing time slot $t - 1$ and all outgoing edges lead to vertices representing time slot $t + 1$. We are, therefore, able to iteratively solve the problem of finding a shortest path for subgraphs of our original graph using binary search. During each iteration, we only consider a constant number of rows.

To simplify the selection of rows, the algorithm assumes $m$ to be a power of two. Instances $\mathcal{I} = (T, m, \beta, F)$ where $m$ is not a power of two can be transformed into an instance $\mathcal{I}' = (T, m', \beta, F')$ with $m' = 2^{\lceil \log_2 m \rceil}$, $F' = (f'_1, \dots f'_T)$, and \begin{align*}
    f'_t(x) = \begin{cases}
        f_t(x) & x \leq m \\
        x (f_t(m) + \epsilon) & \text{otherwise} \\
    \end{cases}
\end{align*} for $\epsilon > 0$. The algorithm using $\log_2 m - 1$ iterations is described in \cref{alg:ud:optimal_graph_search}. During each iteration the algorithm finds the shortest path in a subgraph comprised of only five rows in $\mathcal{O}(T C)$ time. Overall, we thus find an optimal schedule in $\mathcal{O}(T C \log_2 m)$ time. \citeauthor{Albers2018}~\cite{Albers2018} show that in the final iteration, the algorithm obtains an optimal schedule for the original problem instance.

\begin{algorithm}
    \caption{Uni-Dimensional Optimal Graph Search~\cite{Albers2018}}\label{alg:ud:optimal_graph_search}
    \KwIn{$\mathcal{I}_{\text{Int-SSCO}} = (T \in \mathbb{N}, m \in \mathbb{N}, \beta \in \mathbb{R}_{>0}, (f_1, \dots, f_T) \in ([m]_0 \to \mathbb{R}_{\geq 0})^T)$ with $m$ a power of two}
    \eIf{$m > 2$}{$K \gets \log_2 m - 2$ }{$K \gets 0$ }
    $V^K \gets \{v_{0,0}, v_{T+1,0}\} \cup \{v_{t,\xi m / 4} \mid t \in [T], \xi \in [4]_0\}$\;
    $\hat{X}^K \gets \text{\ref{proc:ud:optimal_graph_search:shortest_path}}(\mathcal{I}_{Int-SSCO}, V^K)$\;
    \For{$k \gets K - 1$ \KwTo $0$}{
        $V_t^k \gets \{\hat{X}_t^{k+1} + \xi 2^k \mid \xi \in \{-2, -1, 0, 1, 2\}\} \cap [m]_0$\;
        $V^k \gets \{v_{0,0}, v_{T+1,0}\} \cup \{v_{t,j} \mid t \in [T], j \in V_t^k\}$\;
        $\hat{X}^k \gets \text{\ref{proc:ud:optimal_graph_search:shortest_path}}(\mathcal{I}_{Int-SSCO}, V^k)$\;
    }
    \Return $\hat{X}^0$\;
\end{algorithm}

\begin{function}
	\caption{ShortestPath($\mathcal{I}, V$)}\label{proc:ud:optimal_graph_search:shortest_path}
	\For{$t \gets 1$ \KwTo $T$}{
        \ForEach{$v_{t,j} \in V$}{
            $\hat{c}^{v_{t,j}} \gets \infty$\;
            \ForEach{$v_{t-1,i} \in V$}{
                $c^{v_{t,j}} \gets \hat{c}^{v_{t-1,i}} + f_t(j) + \beta (j - i)^+$\;
                \If{$c^{v_{t,j}} < \hat{c}^{v_{t,j}}$}{
                    $\hat{c}^{v_{t,j}} \gets c^{v_{t,j}}$\;
                    $\hat{X}^{v_{t,j}} \gets \hat{X}^{v_{t-1,i}}$\;
                }
            }
            $\hat{X}_t^{v_{t,j}} \gets j$\;
        }
    }
    $\hat{v} \gets \argmin_{v_{T,j} \in V} \hat{c}^{v_{T,j}}$\;
	\Return $\hat{X}^{\hat{v}}$\;
\end{function}

\section{Multi-Dimensional}

\subsection{Graph-Based Optimal Integral Algorithm}

We now lift the restriction on $d$ and also consider multi-dimensional instances of Int-SSCO. Again, an intuitive approach is to model the offline problem using a graph. Previously, with $d = 1$, the vertices where arranged in a two-dimensional grid (time being the first dimension). We now arrange the vertices in a $(d+1)$-dimensional grid. We call $x = (x_1, \dots, x_d) \in \mathcal{M}$ a \emph{configuration}\index{configuration} (also called a state in the literature) where $M_k := [m_k]_0$ and $\mathcal{M} := M_1 \times \dots \times M_d = \mathcal{X}$ is the set of all configurations. For each configuration $x$ and each time slot $t$ we introduce two vertices. $v_{t,x}^{\uparrow}$ represents the configuration in the beginning of time slot $t$ while $v_{t,x}^{\downarrow}$ represents the configuration at the end of time slot $t$. Thus, the first dimension has $2 T$ \emph{layers} where each layer only consists of powering-up or powering-down vertices.

In our graph we have edges $e_{t,x,k}^{\uparrow}$ representing the powering-up of a server of type $k$ in the beginning of time slot $t$, edges $e_{t,x}^{\text{op}}$ representing operating configuration $x$ during time slot $t$, edges $e_{t,x,k}^{\downarrow}$ representing the powering-down of a server of type $k$ at the end of time slot $t$, and edges $e_{t,x}^{\rightarrow}$ transitioning to the next time slot. For each $k \in [d]$ and $x = (x_1, \dots, x_d) \in [m_1]_0 \times \dots \times [m_k - 1]_0 \times \dots \times [m_d]_0$ let $x' = (x_1, \dots, x_k + 1, \dots, x_d)$. We add an edge $e_{t,x,k}^{\uparrow}$ between $v_{t,x}^{\uparrow}$ and $v_{t,x'}^{\uparrow}$ with weight $\beta_k$ and another edge $e_{t,x,k}^{\downarrow}$ between $v_{t,x'}^{\downarrow}$ and $v_{t,x}^{\downarrow}$ with weight $0$. For each time slot $t \in [T]$ and $x \in \mathcal{M}$, we add the edge $e_{t,x}^{\text{op}}$ from $v_{t,x}^{\uparrow}$ to $v_{t,x}^{\downarrow}$ with weight $f_t(x)$. Lastly, for each $t \in [T-1]$ and $x \in \mathcal{M}$, we add the edge $e_{t,x}^{\rightarrow}$ from $v_{t,x}^{\downarrow}$ to $v_{t+1,x}^{\uparrow}$ with weight $0$.  To simplify the algorithm we add an additional vertice $v_{T+1,\mathbf{0}}^{\uparrow}$ which can be reached through the edge $e_{t,\mathbf{0}}^{\rightarrow}$ from $v_{t,\mathbf{0}}^{\downarrow}$. The structure of this graph is presented in \cref{fig:underlying_graph_of_the_multi_dimensional_integral_offline_algorithm}.

\begin{figure}
    \centering
    \resizebox{\textwidth}{!}{\input{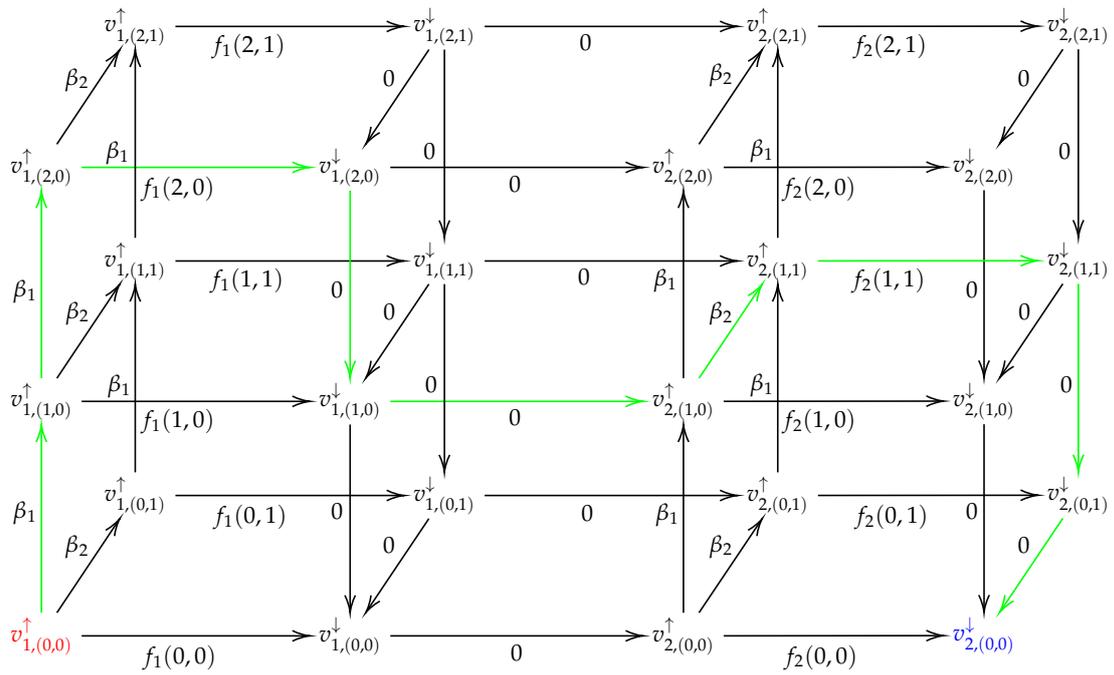}}
    \caption{Underlying graph of the multi-dimensional integral offline algorithm. In this example, $T = 2$, $d = 2$, $m_1 = 2$, and $m_2 = 1$. The algorithm finds a shortest path from $v_{1,\mathbf{0}}^{\uparrow}$ (red) to $v_{2,\mathbf{0}}^{\downarrow}$ (blue). A shortest path corresponding to the optimal schedule $X_1 = (2,0)$ and $X_2 = (1,1)$ is shown in green \cite{Albers2021_2}.}
    \label{fig:underlying_graph_of_the_multi_dimensional_integral_offline_algorithm}
\end{figure}

Any path from $v_{0,\mathbf{0}}^{\uparrow}$ to $v_{T,\mathbf{0}}^{\downarrow}$ must traverse exactly one edge $e_{t,x}^{\text{op}}$ for each time slot $t \in [T]$. The \emph{induced schedule} $X^P$ by a path $P$ assigns each time slot $t$ the configuration $x$ of the traversed edge $e_{t,x}^{\text{op}}$. \citeauthor{Albers2021_2}~\cite{Albers2021_2} show that a shortest path from $v_{0,\mathbf{0}}^{\uparrow}$ to $v_{T,\mathbf{0}}^{\downarrow}$ induces an optimal schedule. They also show that the cost of any induced schedule $X^P$ is given by the cost of the path $P$ when the sub-path between $v_{t,X_t^P}^{\downarrow}$ and $v_{t+1,X_{t+1}^P}^{\uparrow}$ is the shortest sub-path for all $t \in [T-1]$.

Again, we are able to use dynamic programming to obtain a shortest path from $v_{0,\mathbf{0}}^{\uparrow}$ to $v_{T,\mathbf{0}}^{\downarrow}$. Using Bellman's principle of optimality, we conclude that any such shortest path $P$ must consist of shortest sub-paths between the vertices of two subsequent layers. The algorithm works as follows: For each layer of each time slot $t$, we sequentially update the shortest path to each vertice of that layer. In layers consisting of powering-up vertices, we begin with the vertice $v_{t,\mathbf{0}}^{\uparrow}$ and then sequentially increase the values of each dimension beginning with dimension $1$. It is easy to see that by updating the vertices in this order, the predecessors of any newly reached vertice have already been updated. Conversely, in layers consisting of powering-down vertices, we begin with the vertice $v_{t,(m_1,\dots,m_d)}^{\downarrow}$ and iterate the dimensions from dimension $d$ through dimension $1$. The resulting algorithm is described in \cref{alg:md:optimal_graph_search}. We denote by $\hat{X}^v$ the optimal schedule up to vertice $v$ and by $\hat{c}^v$ the cost of the optimal schedule up to vertice $v$.

\begin{algorithm}
    \caption{Multi-Dimensional Optimal Graph Search~\cite{Albers2021_2}}\label{alg:md:optimal_graph_search}
    \KwIn{$\mathcal{I}_{\text{Int-SSCO}} = (d \in \mathbb{N}, T \in \mathbb{N}, m \in \mathbb{N}^d, \beta \in \mathbb{R}_{>0}^d, (f_1, \dots, f_T) \in (\mathcal{M} \to \mathbb{R}_{\geq 0})^T)$}
    \For{$t \gets 1$ \KwTo $T$}{
        $(\hat{X}, \hat{c}) \gets \text{\ref{proc:md:optimal_graph_search:handle_first_layer}}(\mathcal{I}_{Int-SSCO}, \mathcal{M}, \hat{X}, \hat{c}, t, 1, \{\mathbf{0}\})$\;
        $(\hat{X}, \hat{c}) \gets \text{\ref{proc:md:optimal_graph_search:handle_second_layer}}(\mathcal{I}_{Int-SSCO}, \mathcal{M}, \hat{X}, \hat{c}, t, d, \{(m_1,\dots,m_d)\})$\;
    }
    \Return $\hat{X}^{v_{T,\mathbf{0}}^{\downarrow}}$\;
\end{algorithm}

Here, the functions \ref{proc:md:optimal_graph_search:handle_first_layer} and \ref{proc:md:optimal_graph_search:handle_second_layer} update the vertices of the respective layer during time slot $t$. $k$ denotes the dimension that is expanded in the current iteration and $\mathcal{B}$ is the set of configurations from previous iterations the current expansion is based upon. $E \subset V \times \mathbb{R}$ is the set of all predecessors of the vertice $v_{t,x}^{\xi}$ along with the cost of the respective edge.

\begin{function}
	\caption{HandleFirstLayer($\mathcal{I}, \mathcal{M}, \hat{X}, \hat{c}, t, k, \mathcal{B}$)}\label{proc:md:optimal_graph_search:handle_first_layer}
	\lIf{$k > d$}{
	    \Return $(\hat{X}, \hat{c})$
	}
	$\mathcal{B}' \gets \mathcal{B}$\;
	\ForEach{$y \in \mathcal{B}$}{
	    \ForEach{$j \in M_k$}{
	        $x \gets y_{k \gets j}$\;
	        $E \gets \{(v_{t,x_{l \gets P_k(x_l)}}^{\uparrow}, \beta_l (x_l - P_k(x_l))) \mid l \in [k]_0, x_l > 0\}$\;
	        \If{$t > 1$}{$E \gets \{(v_{t-1,x}^{\downarrow}, 0)\} \cup E$\;}
	        $(\hat{X}, \hat{c}) \gets \text{\ref{proc:md:optimal_graph_search:update_paths}}(\hat{X}, \hat{c}, v_{t,x}^{\uparrow}, E)$\;
	        $\mathcal{B}' \gets \mathcal{B}' \cup \{x\}$\;
        }
    }
    \Return $\text{\ref{proc:md:optimal_graph_search:handle_first_layer}}(\mathcal{I}_{Int-SSCO}, \mathcal{M}, \hat{X}, \hat{c}, t, k+1, \mathcal{B}')$\;
\end{function}

\begin{function}
	\caption{HandleSecondLayer($\mathcal{I}, \mathcal{M}, \hat{X}, \hat{c}, t, k, \mathcal{B}$)}\label{proc:md:optimal_graph_search:handle_second_layer}
	\lIf{$k < 1$}{
	    \Return $(\hat{X}, \hat{c})$
	}
	$\mathcal{B}' \gets \mathcal{B}$\;
	\ForEach{$y \in \mathcal{B}$}{
	    \ForEach{$j \in M_k$}{
	        $x \gets y_{k \gets j}$\;
	        $E \gets \{(v_{t,x}^{\uparrow}, f_t(x))\} \cup \{(v_{t,x_{l \gets N_k(x_l)}}^{\downarrow}, 0) \mid l \in [k]_0, x_l < m_l\}$\;
	        $(\hat{X}, \hat{c}) \gets \text{\ref{proc:md:optimal_graph_search:update_paths}}(\hat{X}, \hat{c}, v_{t,x}^{\downarrow}, E)$\;
	        $\mathcal{B}' \gets \mathcal{B}' \cup \{x\}$\;
        }
    }
    \Return $\text{\ref{proc:md:optimal_graph_search:handle_second_layer}}(\mathcal{I}_{Int-SSCO}, \mathcal{M}, \hat{X}, \hat{c}, t, k-1, \mathcal{B}')$\;
\end{function}

$x_{k \gets j}$ denotes the update of configuration $x$ in dimension $k$ to the value $j$. We assume $M_k$ to be in ascending order for the first layer and in descending order for the second layer. $P_k(j)$ and $N_k(j)$ denote the previous and next values to $j$ in $M_k$, respectively. We keep the definitions abstract to allow for the generalization of this algorithm to an approximation algorithm. In the case of the optimal algorithm $P_k(j) = j-1$ and $N_k(j) = j+1$ for all $k \in [d]$. It is easy to verify that during the last iteration $\mathcal{B} = \mathcal{M}$.

\begin{function}
	\caption{UpdatePaths($\hat{X}, \hat{c}, v_{t,x}^{\xi}, E$)}\label{proc:md:optimal_graph_search:update_paths}
	$\hat{c}^{v_{t,x}^{\xi}} \gets \infty$\;
	\ForEach{$(v_{\tau,y}^{\kappa}, c^{v_{\tau,y}^{\kappa}}) \in E$}{
	    $c^{v_{t,x}^{\xi}} \gets \hat{c}^{v_{\tau,y}^{\kappa}} + c^{v_{\tau,y}^{\kappa}}$\;
	    \If{$c^{v_{t,x}^{\xi}} < \hat{c}^{v_{t,x}^{\xi}}$}{
	        $\hat{c}^{v_{t,x}^{\xi}} \gets c^{v_{t,x}^{\xi}}$\;
	        $\hat{X}^{v_{t,x}^{\xi}} \gets \hat{X}^{v_{\tau,y}^{\kappa}}$\;
	        \lIf{$\kappa = {\uparrow} \land \xi = {\downarrow}$}{$\hat{X}_t^{v_{t,x}^{\xi}} \gets x$}
	    }
	}
    \Return $(\hat{X}, \hat{c})$\;
\end{function}

\ref{proc:md:optimal_graph_search:update_paths} determines the shortest path to $v$ through its predecessors $E$ and updates the optimal schedule $\hat{X}^v$ (if necessary) and optimal cost $\hat{c}^v$. \Cref{fig:underlying_graph_of_the_multi_dimensional_integral_offline_algorithm} shows an example for a shortest path in the underlying graph.

\ref{proc:md:optimal_graph_search:update_paths} runs in $\mathcal{O}(|E|)$ time. Due to the structure of the graph, we can follow $|E| \in \mathcal{O}(d)$ and thus \ref{proc:md:optimal_graph_search:handle_first_layer} runs in $\mathcal{O}(|\mathcal{M}| d)$ time while \ref{proc:md:optimal_graph_search:handle_second_layer} runs in $\mathcal{O}(|\mathcal{M}| C d)$ time. Therefore, the overall time complexity of \cref{alg:md:optimal_graph_search} is in $\mathcal{O}(T |\mathcal{M}| C d)$ where $\mathcal{O}(T |\mathcal{M}|)$ is the size of the underlying graph. Note, that this running time is not polynomial in the size of the problem instance which is given by $\mathcal{O}(T + \sum_{k=1}^d \log_2 m_k)$ even when $d$ is assumed a constant as $|\mathcal{M}| \in \mathcal{O}(\prod_{k=1}^d m_k)$. For this reason, \citeauthor{Albers2021_2}~\cite{Albers2021_2} developed an approximation algorithm which we discuss next.

\subsection{Graph-Based Integral Polynomial-Time Approximation Scheme}

\citeauthor{Albers2021_2}~\cite{Albers2021_2} extend the graph-based algorithm from the previous section to a polynomial-time approximation scheme for SBLO. Their algorithm is a generalization of an approximation algorithm that was previously proposed by \citeauthor{Kappelmann2017}~\cite{Kappelmann2017} for the uni-dimensional setting. The idea is to restrict the possible values of $x_{t,k}$. In addition to requiring $x_{t,k} \in [m_k]_0$, we also require $x_{t,k}$ to be a power of some $\gamma > 1$. The possible numbers of active servers of type $k$ are now given as \begin{align*}
    M_k^{\gamma} := &\{0, m_k\} \cup \{\lfloor\gamma^i\rfloor \in [m_k]_0 \mid i \in \mathbb{N}\} \cup \{\lceil\gamma^i\rceil \in [m_k]_0 \mid i \in \mathbb{N}\} \\
                    &\{0, \lfloor\gamma^1\rfloor, \lceil\gamma^1\rceil, \lfloor\gamma^2\rfloor, \lceil\gamma^2\rceil, \dots, m_k\}.
\end{align*} It is easy to see that the ratio between two consecutive values in the ordered set $M_k^{\gamma}$ is not larger than $\gamma$. Also, $|M_k^{\gamma}| \in \mathcal{O}(\log_{\gamma} m_k)$. We define $\mathcal{M}^{\gamma} := M_1^{\gamma} \times \dots \times M_d^{\gamma}$ which results in $|\mathcal{M}^{\gamma}| \in \mathcal{O}(\prod_{k=1}^d \log_{\gamma} m_k)$. \Cref{fig:a_schedule_found_by_the_multi_dimensional_integral_offline_approximation_algorithm} shows an example for a schedule found by the approximation algorithm compared to an optimal schedule.

\begin{figure}
    \centering
    \input{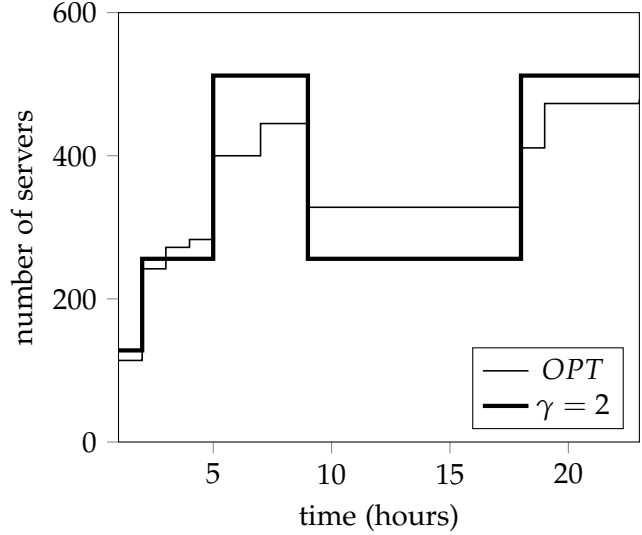}
    \caption{A comparison of the optimal schedule and a schedule found by the approximation algorithm with $\gamma = 2$. The example uses the Facebook 2009-0 trace under our second model. Note that the approximation algorithm is limited to server configurations that are powers of two.}
    \label{fig:a_schedule_found_by_the_multi_dimensional_integral_offline_approximation_algorithm}
\end{figure}

Let $G^{\gamma}$ be the described graph parametrized by $\gamma > 1$. \citeauthor{Albers2021_2}~\cite{Albers2021_2} show that, assuming we are given an instance of SBLO, given a shortest path $P^{\gamma}$ in $G^{\gamma}$, its induced schedule $X^{P^{\gamma}}$ is a $(2\gamma + 1)$-approximation of the optimal schedule $\hat{x}$. Further, the total number of vertices in $G^{\gamma}$ is \begin{align*}
    \mathcal{O}(T |\mathcal{M}|^{\gamma}) = \mathcal{O}(T \prod_{k=1}^d \log_{1+\epsilon} m_k)
\end{align*} as $|M_k^{\gamma}| \in \mathcal{O}(\log_{\gamma} m_k) = \mathcal{O}(\log_{1 + \epsilon} m_k)$. Using the same graph search algorithm that was described in the previous section we are able to find $X^{P^{\gamma}}$ in $\mathcal{O}(T C d \prod_{k=1}^d \log_{1 + \epsilon} m_k)$ time. Hence, setting $\gamma = 1 + \epsilon / 2$ for some $\epsilon > 0$ yields a $(1+\epsilon)$-approximation. The algorithm is described in \cref{alg:md:approximate_graph_search}.

\begin{algorithm}
    \caption{Multi-Dimensional Approximate Graph Search~\cite{Albers2021_2}}\label{alg:md:approximate_graph_search}
    \KwIn{$\mathcal{I}_{\text{Int-SSCO}} = (d \in \mathbb{N}, T \in \mathbb{N}, m \in \mathbb{N}^d, \beta \in \mathbb{R}_{>0}^d, (f_1, \dots, f_T) \in (\mathcal{M} \to \mathbb{R}_{\geq 0})^T)$}
    \For{$t \gets 1$ \KwTo $T$}{
        $(\hat{X}, \hat{c}) \gets \text{\ref{proc:md:optimal_graph_search:handle_first_layer}}(\mathcal{I}_{Int-SSCO}, \mathcal{M}^{\gamma}, \hat{X}, \hat{c}, t, 1, \{\mathbf{0}\})$\;
        $(\hat{X}, \hat{c}) \gets \text{\ref{proc:md:optimal_graph_search:handle_second_layer}}(\mathcal{I}_{Int-SSCO}, \mathcal{M}^{\gamma}, \hat{X}, \hat{c}, t, d, \{(m_1,\dots,m_d)\})$\;
    }
    \Return $\hat{X}^{v_{T,\mathbf{0}}^{\downarrow}}$\;
\end{algorithm}

One important question regarding the approximation algorithm remains. Namely, how to compute the powers of gamma $\gamma^i$ such that $\lfloor\gamma^i\rfloor \in [m_k]$ or $\lceil\gamma^i\rceil \in [m_k]$ for some $k \in [d]$. The simplest solution is to iteratively increase $i \in \mathbb{N}$ until $\gamma^i$ is greater than $\max_{k \in [d]} m_k$. We then keep track of all $\lfloor\gamma^i\rfloor$ and $\lceil\gamma^i\rceil$ that were generated in this way. We observe that this simple approach takes $\mathcal{O}(\log_{\gamma} m_k)$, not altering the runtime of the algorithm. We have thus discussed a polynomial-time approximation scheme for multi-dimensional instances of Int-SSCO.

\begin{figure}
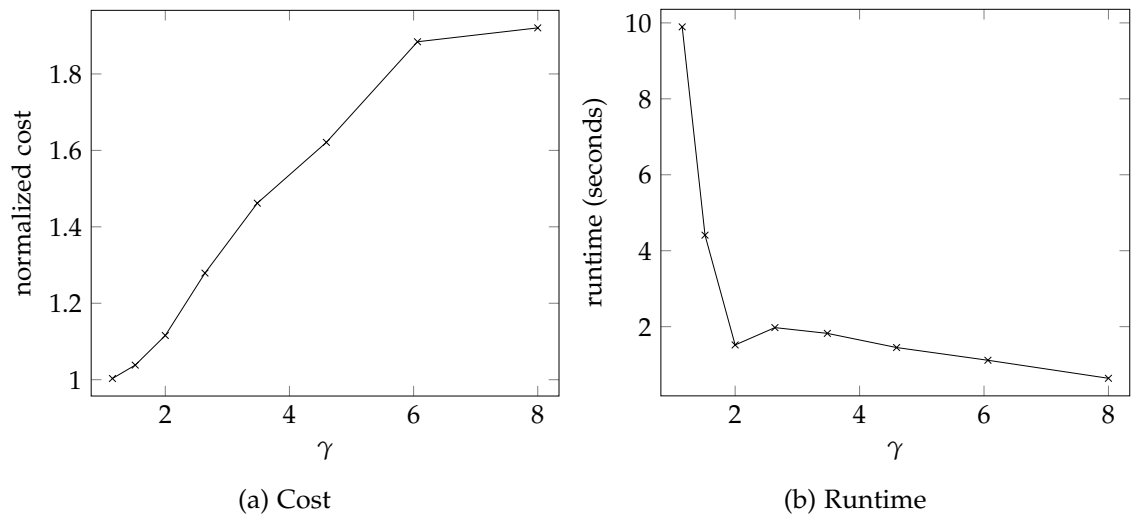

    \begin{subfigure}[b]{.5\linewidth}
    \resizebox{\textwidth}{!}{\input{thesis/figures/approx_graph_search_cost}}
    \caption{Cost}\label{fig:approx_graph_search:cost}
    \end{subfigure}
    \begin{subfigure}[b]{.5\linewidth}
    \resizebox{\textwidth}{!}{\input{thesis/figures/approx_graph_search_runtime}}
    \caption{Runtime}\label{fig:approx_graph_search:runtime}
    \end{subfigure}
    \caption{Normalized cost and runtime of the approximate graph search algorithm for the Facebook 2009-1 trace.}
\end{figure}

In practice, we observe that cost grows linearly with $\gamma$. The normalized cost when used with the Facebook 2009-1 trace is shown in \cref{fig:approx_graph_search:cost}. In contrast, the most significant decrease in runtime occurs for $\gamma \in (1, 2]$ as shown in \cref{fig:approx_graph_search:runtime}.


\chapter{Online Algorithms}\label{chapter:online_algorithms}

In this chapter, we discuss the online algorithms we implemented in our work. Similar to the previous chapter on offline algorithms, we begin our discussion in \cref{section:online_algorithms:ud} with algorithms for the uni-dimensional setting. As was discussed in \cref{chapter:theory}, these algorithms give strong guarantees yielding a constant competitive ratio. Next in \cref{section:online_algorithms:md}, we extend our discussion to the multi-dimensional setting. Here, the guarantees are not as strong. We thus begin in \cref{section:online_algorithms:md:lazy_budgeting} by considering lazy budgeting algorithms for smoothed convex optimization problems with particular cost functions. As mentioned previously in \cref{chapter:theory}, while there are algorithms with sublinear regret (gradient descent), there cannot be any algorithm achieving a dimension-independent constant competitive ratio unless the class of allowed cost functions is restricted~\cite{Chen2018}. In \cref{section:online_algorithms:md:descent_methods}, we thus discuss gradient methods that perform well with regard to either the competitive ratio or regret with a restricted class of cost functions. Still, sublinear regret and a constant competitive ratio cannot be achieved simultaneously, even for linear cost functions~\cite{Andrew2015}. We, therefore, end this chapter in \cref{section:online_algorithms:md:predictions} with a discussion of algorithms that use predictions to circumvent this fundamental limitation. \Cref{appendix:taxonomy} includes an overview of all discussed online algorithms.

Throughout this chapter, we denote by $\tau \in [T]$ the current time slot. In contrast to offline algorithms that know the hitting costs $f_t$ for all $t \in [T]$, an online algorithm only knows the hitting costs $f_t$ up to $\tau$, i.e. $t \in [\tau]$.

\section{Uni-Dimensional}\label{section:online_algorithms:ud}

\subsection{Lazy Capacity Provisioning}\label{section:online_algorithms:ud:lazy_capacity_provisioning}

\subsubsection{Fractional Algorithm}

We begin by returning to the notion of capacity provisioning that we introduced in \cref{section:offline_algorithms:ud:capacity_provisioning}, yielding a backward-recurrent algorithm finding an optimal schedule for SSCO. This algorithm computed bounds $X_{\tau}^L$ and $X_{\tau}^U$ on the optimal solution, which only depend on the schedule up to time slot $\tau$. However, the optimal offline algorithm stayed within these bounds moving backward in time which is impossible for an online algorithm. \citeauthor{Lin2011}~\cite{Lin2011} present a similar algorithm moving forward in time called \emph{lazy capacity provisioning}. We compute the schedule $X_{\tau}$ during time slot $\tau$ by setting $X_{\tau} = X_{\tau-1}$ unless this violates the bounds in which case we make the smallest possible change: \begin{align*}
    X_{\tau} = \begin{cases} 
        0 & \tau \leq 0 \\
        (X_{\tau-1})_{X_{\tau,\tau}^L}^{X_{\tau,\tau}^U} & \tau \geq 1
    \end{cases}
\end{align*} where $(X_{\tau-1})_{X_{\tau,\tau}^L}^{X_{\tau,\tau}^U}$ is the projection of $X_{\tau-1}$ onto $[X_{\tau,\tau}^L, X_{\tau,\tau}^U]$~\cite{Lin2011}. \Cref{fig:backward_recurrent_capacity_provisioing_vs_lazy_capacity_provisioning} shows how this update rule differs from \nameref{alg:brcp}. The resulting algorithm is described in \cref{alg:ud:lcp}. Similar to the offline algorithm, we can use algorithms for convex optimization to compute the upper and lower bounds. Hence, we obtain a complexity of $\mathcal{O}(\tau C O_{\epsilon}^{\tau})$ for $\epsilon$-optimal upper and lower bounds. This complexity is worrying as it depends on $\tau$, which may grow very large. However, \citeauthor{Lin2011}~\cite{Lin2011} prove the following lemma, which implies that it suffices to compute the lower and upper bounds using only the history since the last time slot where both bounds were either decreased or increased.

\begin{figure}
    \centering
    \input{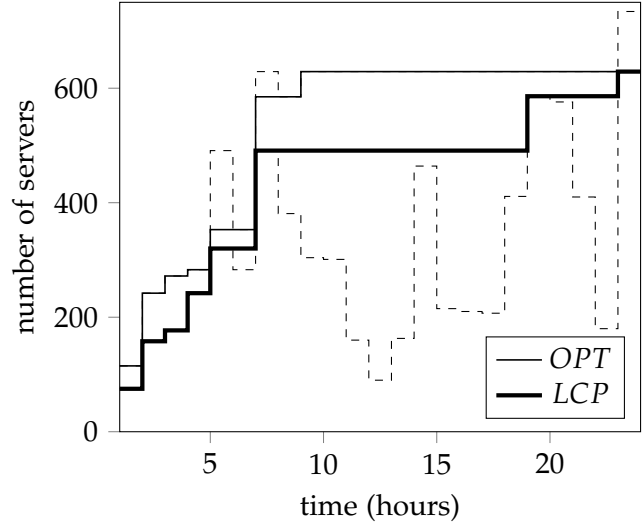}
    \caption{Backward-Recurrent Capacity Provisioning vs. Lazy Capacity Provisioning. LCP stays within the bounds ``lazily'' moving forwards in time. The optimal solution stays within the bounds moving backwards in time.}
    \label{fig:backward_recurrent_capacity_provisioing_vs_lazy_capacity_provisioning}
\end{figure}

\begin{lemma}
\cite{Lin2011} If there exists an index $t \in [1, \tau-1]$ such that $X_{\tau,t+1}^U < X_{\tau,t}^U$ or $X_{\tau,t+1}^L > X_{\tau,t}^L$, then $(\hat{X}_{\tau,1},\dots,\hat{X}_{\tau,t}) := (X_{\tau,1}^L,\dots,X_{\tau,t}^L) = (X_{\tau,1}^U,\dots,X_{\tau,t}^U)$, and no matter what the future arrival is, solving the optimization in $[1,\tau']$ for $\tau' > \tau$ is equivalent to solving two optimizations: one over $[1,t]$ with initial condition $X_0$ and final condition $\hat{X}_{\tau,t}$ and the second over $[t+1,\tau']$ with initial condition $\hat{X}_{\tau,t}$.
\end{lemma}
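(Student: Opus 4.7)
The plan is to prove the lemma in two pieces: a coincidence claim that $X_{\tau,j}^{L} = X_{\tau,j}^{U}$ for all $j \le t$ and that both equal the optimum $\hat{X}_{\tau,j}$, and a decomposition claim that pinning down the schedule at time $t$ splits the full optimization over $[1,\tau']$ into two independent subproblems.

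For the coincidence claim, the strategy is to exploit first-order (subdifferential) optimality of the convex programs defining $X_{\tau}^{L}$ and $X_{\tau}^{U}$. Both $c_{\tau}^{L}$ and $c_{\tau}^{U}$ are convex as sums of convex $f_{s}$ and one-sided switching terms, and both under-count the genuine SSCO switching cost: by \cref{lemma:inverse_switching_cost}, $c_{\tau}^{L}$ charges only power-ups while $c_{\tau}^{U}$ charges only power-downs. I would first establish, by a standard exchange argument, the componentwise sandwich $X_{\tau,s}^{L} \le \hat{X}_{\tau,s} \le X_{\tau,s}^{U}$, swapping crossings of $X_{\tau}^{L}$ (resp.\ $X_{\tau}^{U}$) with $\hat{X}_{\tau}$ to contradict minimality whenever either inequality is violated. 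Then, assuming $X_{\tau,t+1}^{U} < X_{\tau,t}^{U}$, the one-sided switching term of $c_{\tau}^{U}$ is active at $t+1$, so a subgradient calculation forces a saturation condition on $f_{t}$ at $X_{\tau,t}^{U}$: at that point, no feasible downward perturbation on $[1,t]$ (keeping $X_{t+1} = X_{\tau,t+1}^{U}$) can strictly reduce $c_{\tau}^{U}$. Combined with the maximality of $X_{\tau}^{U}$ and the minimality of $X_{\tau}^{L}$ among the respective minimizer sets, the admissible optima on $[1,t]$ collapse to a single point, yielding $X_{\tau,s}^{L} = X_{\tau,s}^{U}$ for all $s \le t$. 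A symmetric argument handles $X_{\tau,t+1}^{L} > X_{\tau,t}^{L}$.

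For the decomposition claim, the additive structure of $c_{\text{SSCO}}$ across time slots and the fact that $\beta(X_{s}-X_{s-1})^{+}$ couples only consecutive slots imply that, once $X_{t}$ is forced to equal $\hat{X}_{\tau,t}$, the objective on $[1,\tau']$ separates into a sum over $[1,t]$ with boundary conditions $X_{0}=\mathbf{0}$ and $X_{t} = \hat{X}_{\tau,t}$, plus a sum over $[t+1,\tau']$ with initial condition $X_{t} = \hat{X}_{\tau,t}$; the cross-boundary switching term $\beta(X_{t+1} - \hat{X}_{\tau,t})^{+}$ lives only in the second piece. The main obstacle is the coincidence claim, specifically making the active-constraint argument for the non-smooth $(\cdot)^{+}$ terms precise: the smallest-vs.\ largest-minimizer convention in the definitions of $X_{\tau}^{L}$ and $X_{\tau}^{U}$ is essential to break ties, and one must carefully propagate the saturation condition backwards from $t+1$ to earlier slots via convexity of the hitting costs. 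Once the coincidence is in hand, the decomposition follows immediately from additivity.
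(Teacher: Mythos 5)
The paper does not actually prove this lemma --- it is imported verbatim from \cite{Lin2011} --- so there is no in-paper proof to compare against; I can only assess your proposal on its own terms. Your two-part skeleton (a coincidence claim plus a separability claim) is the right shape, but both halves have genuine gaps. For the coincidence claim, the step ``the admissible optima on $[1,t]$ collapse to a single point'' is precisely the statement to be proven, and your sketch supplies no mechanism that connects the two \emph{different} functionals: the saturation/subgradient condition you extract from the active one-sided switching term of $c_{\tau}^{U}$ at $t+1$ constrains only $X_{\tau}^{U}$, and says nothing about why the smallest minimizer of $c_{\tau}^{L}$ should share the same prefix. The bridge you are missing is the telescoping identity (valid since $X_0 = 0$) $c_{\tau}^{L}(X) - c_{\tau}^{U}(X) = \beta X_{\tau}$, which shows the two objectives differ only through the terminal value and hence have identical minimizers once the endpoint is conditioned on; even with that in hand, one still has to use the hypothesis $X_{\tau,t+1}^{U} < X_{\tau,t}^{U}$ (or $X_{\tau,t+1}^{L} > X_{\tau,t}^{L}$) to force the two endpoint-optimal prefixes together, which is where the real case analysis of \cite{Lin2011} lives. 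Note also that \cref{lemma:inverse_switching_cost} does not apply here as you invoke it, since the truncated schedules need not return to $\mathbf{0}$, and the componentwise sandwich $X_{\tau,s}^{L} \leq \hat{X}_{\tau,s} \leq X_{\tau,s}^{U}$ for all $s$ is itself unproven --- the paper only asserts it at $s = \tau$.

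The second gap is in the decomposition claim. Additivity of $c_{\text{SSCO}}$ only shows that the objective on $[1,\tau']$ separates \emph{once $X_t$ is pinned} to $\hat{X}_{\tau,t}$; the actual content of the lemma is that for \emph{every} choice of future costs $f_{\tau+1},\dots,f_{\tau'}$ the unconstrained optimum on $[1,\tau']$ does pass through $\hat{X}_{\tau,t}$ at time $t$. That requires an argument that future arrivals cannot move the optimal prefix --- typically a monotonicity property of the bounds in the horizon (for fixed $t$, $X_{\tau',t}^{L}$ is nondecreasing and $X_{\tau',t}^{U}$ is nonincreasing in $\tau'$, so once they have met at index $t$ they remain met and continue to sandwich the optimum there). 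Your proposal does not address this, so as written the ``decomposition follows immediately from additivity'' conclusion does not establish the lemma.
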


While not changing the worst-case complexity, this significantly improves the practical complexity in the application of right-sizing data centers as diurnal load patterns typically ensure that less than a day needs to be considered~\cite{Lin2011}. We denote by $X_{\tau}^{L,(t,x_0)}$ and $X_{\tau}^{U,(t,x_0)}$ the bounds resulting from optimizations beginning at time slot $t$ with initial condition $x_0$. \citeauthor{Lin2011}~\cite{Lin2011} showed that lazy capacity provisioning is $3$-competitive and also proved that this result is tight.

\begin{algorithm}
    \caption{Lazy Capacity Provisioning~\cite{Lin2011}}\label{alg:ud:lcp}
    \SetKwInOut{Input}{Input}
    \Input{$\mathcal{I}_{\text{SSCO}} = (\tau \in \mathbb{N}, m \in \mathbb{N}, \beta \in \mathbb{R}_{>0}, (f_1, \dots, f_{\tau}) \in (\mathbb{R}_{\geq 0} \to \mathbb{R}_{\geq 0})^{\tau})$}
    $t_0 \gets 0$\;
    $x_0 \gets 0$\;
    \For{$t \gets \tau-1$ \KwTo $2$}{
        \If{$X_{t,t}^U < X_{t,t-1}^U \lor X_{t,t}^L > X_{t,t-1}^L$}{
            $t_0 \gets t$\;
            $x_0 \gets X_{t,t-1}^U$\;
            \KwBreak
        }
    }
    find $X_{\tau,\tau}^{L,(t_0,x_0)}$ using the optimization described by \cref{eq:ud:brcp:lower}\;
    find $X_{\tau,\tau}^{U,(t_0,x_0)}$ using the optimization described by \cref{eq:ud:brcp:upper}\;
    \Return $(X_{\tau-1})_{X_{\tau,\tau}^{L,(t_0,x_0)}}^{X_{\tau,\tau}^{U,(t_0,x_0)}}$\;
\end{algorithm}

\subsubsection{Integral Algorithm}

\citeauthor{Albers2018}~\cite{Albers2018} applied lazy capacity provisioning to the integral variant Int-SSCO using their graph-based offline algorithm discussed in \cref{section:offline_algorithms:ud:graph_based} to compute the integral lower and upper bounds. It is apparent that this immediately yields a deterministic online algorithm for Int-SSCO. \citeauthor{Albers2018}~\cite{Albers2018} showed that similar to lazy capacity provisioning, their algorithm is $3$-competitive. Due to the changed method of determining the bounds, its runtime is $\mathcal{O}(\tau^2 C \log_2 m)$. Note that it is impossible to cache the intermediate results of the dynamic program (see \cref{alg:ud:optimal_graph_search}) as the binary search over possible configurations considers different vertices depending on the obtained schedule, which changes over time. Thus, for large $\tau$, it may be beneficial to use caching instead of binary search resulting in a worst-case runtime of $\mathcal{O}(\tau C m)$. By using the same method of shortening the used history that was proposed by \citeauthor{Lin2011}~\cite{Lin2011}, we can reduce this time complexity drastically in practice (for large $\tau$). Thus, the adopted algorithm is still described by \cref{alg:ud:lcp}. We simply need to slightly modify the graph-based algorithm computing optimal offline solutions to allow for initial conditions other than $0$.

\subsection{Memoryless Algorithm}\label{section:online_algorithms:ud:memoryless}

\citeauthor{Bansal2015}~\cite{Bansal2015} showed that for SSCO, a competitive ratio of $3$ can also be attained by a memoryless algorithm. In a memoryless online algorithm for smoothed convex optimization, the configuration $X_{\tau}$ only depends on the preceding configuration $X_{\tau-1}$ and the current hitting cost $f_{\tau}$. This generally allows for a more space and time-efficient algorithm, which is important when choosing a small time slot length $\delta$ to be more responsive to changes in load.

The algorithm proposed by \citeauthor{Bansal2015}~\cite{Bansal2015} works as follows. Let $\hat{x}$ be the minimizer of $f_{\tau}(x)$, i.e. $\hat{x} = \argmin_{x \in \mathcal{X}} f_{\tau}(x)$. The algorithm moves into the direction of the minimizer until it either reaches $\hat{x}$, or it reaches a configuration $x$ where its switching cost equals twice the hitting cost of $x$. \Cref{fig:memoryless_algorithm} gives an example of a step of the algorithm. We observe that this is equivalent to the following convex optimization: \begin{align}\label{eq:ud:memoryless}\begin{aligned}
    &\min_{x \in \mathcal{X}} &&f_{\tau}(x) \\
    &\text{subject to}        &&\beta |x - X_{\tau-1}| \leq \frac{f_{\tau}(x)}{2}.
\end{aligned}\end{align} Originally, \citeauthor{Bansal2015}~\cite{Bansal2015} proposed this algorithm for a restricted variant of uni-dimensional SSCO where the decision space is unbounded, i.e., $\mathcal{X} = \mathbb{R}$, and the switching costs are given by the $\ell_1$ norm. In particular, they choose $\beta = 1$. First, it is easy to see that we can adapt the algorithm for a bounded decision space by bounding the feasible region of the optimization problem in \cref{eq:ud:memoryless}. Second, we observe that $\beta$ can simply be interpreted as the weight that we associate with smoothing (i.e., minimizing movement) instead of minimizing hitting costs. This is shown by the following equation, which is obtained by dividing the cost of \cref{eq:simplified_smoothed_convex_optimization} by $\beta$: \begin{align}\label{eq:simplified_smoothed_convex_optimization:without_beta}
    \sum_{t=1}^T \frac{1}{\beta} f_t(X_t) + \sum_{k=1}^d (X_{t,k} - X_{t-1,k})^+.
\end{align} The cost associated with this equation is the cost of \cref{eq:simplified_smoothed_convex_optimization} linearly scaled by $1 / \beta$. Especially, this argument shows the following lemma.

\begin{lemma}\label{lemma:switching_cost_absolute_vs_positive_movement}
A schedule is optimal with respect to \cref{eq:simplified_smoothed_convex_optimization:without_beta} if and only if it is optimal with respect to \cref{eq:simplified_smoothed_convex_optimization}.
\end{lemma}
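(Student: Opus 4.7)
The plan is to reduce the claim directly to \cref{lemma:switching_cost_l1_norm_vs_pos_movement}, which already established the pointwise identity between the weighted Manhattan-norm movement cost $\sum_{t=1}^{T+1}\norm{X_t - X_{t-1}}$ (with $\norm{x} = \sum_k \tfrac{\beta_k}{2}|x_k|$) and the positive-part movement cost $\sum_{t=1}^T\sum_k \beta_k (X_{t,k} - X_{t-1,k})^+$ under the boundary convention $X_0 = X_{T+1} = \mathbf{0}$. Two objective functions that agree pointwise on the feasible region share exactly the same set of minimizers, so optimality under one formulation is equivalent to optimality under the other.

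First, I would write out both cost functions explicitly as functions of $X \in \mathcal{X}^T$ and observe that they share identical hitting-cost terms $\sum_t f_t(X_t)$ (after the boundary modification $\widetilde{f}_T(x) = f_T(x) + \norm{x}$ introduced in \cref{section:theory:simplified_smoothed_convex_optimization}, which is what absorbs the stray $t = T+1$ term on the absolute-value side). Any difference between the two objectives therefore lives entirely in the movement terms, so the problem reduces to an equivalence between the absolute-value movement cost and the positive-part movement cost.

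Next, I would apply \cref{lemma:switching_cost_l1_norm_vs_pos_movement} termwise. On the domain of schedules extended by $X_{T+1} = \mathbf{0}$, the lemma gives that the two movement sums are equal as real numbers for every such $X$. Adding the common hitting-cost terms preserves equality, yielding $c_{\text{abs}}(X) = c_{\text{SSCO}}(X)$ for every feasible $X$. Since $\argmin$ depends only on the pointwise values of an objective, $\argmin_X c_{\text{abs}}(X) = \argmin_X c_{\text{SSCO}}(X)$, which is exactly the claim.

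The main obstacle is the boundary convention: the positive-part objective $c_{\text{SSCO}}$ as written in \cref{problem:simplified_smoothed_convex_optimization} only pins down $X_0 = \mathbf{0}$ and never penalizes decreases, while \cref{lemma:switching_cost_l1_norm_vs_pos_movement} requires both $X_0$ and $X_{T+1}$ fixed to $\mathbf{0}$. I would handle this by arguing, as in the paragraph following \cref{lemma:switching_cost_l1_norm_vs_pos_movement}, that extending any schedule by $X_{T+1} = \mathbf{0}$ leaves $c_{\text{SSCO}}(X)$ unchanged (the positive-part operator ignores the terminal decrease), while on the absolute-value side the missing terminal term is exactly the quantity $\norm{X_T}$ absorbed into $\widetilde{f}_T$. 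Once this bookkeeping is done, the pointwise equality is unconditional and the equivalence of minimizers is immediate.
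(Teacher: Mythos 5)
There is a genuine gap here: you have proven a different statement from the one the lemma asserts. \Cref{eq:simplified_smoothed_convex_optimization:without_beta} is \emph{not} the Manhattan-norm (absolute-value) formulation of the movement cost --- it is $\sum_{t=1}^T \frac{1}{\beta} f_t(X_t) + \sum_{k=1}^d (X_{t,k} - X_{t-1,k})^+$, i.e.\ still a positive-part movement cost, but with the hitting costs rescaled by $1/\beta$ and the movement weights set to $1$. The lemma therefore compares two objectives that \emph{both} use the $(\cdot)^+$ switching cost and that differ only by an overall multiplicative factor: in the uni-dimensional setting in which this lemma appears (the memoryless algorithm of Bansal et al.), \cref{eq:simplified_smoothed_convex_optimization:without_beta} is exactly $\frac{1}{\beta}\, c_{\text{SSCO}}(X)$. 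The intended proof is a one-liner: multiplying an objective by the positive constant $1/\beta$ does not change its set of minimizers, so the two formulations share the same optimal schedules. This is precisely the argument the paper gives in the sentence preceding the lemma, and it is what licenses the later step of folding the weight $\beta$ into the hitting cost (e.g.\ in the probabilistic algorithm).

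Your proposal instead establishes the equivalence between the weighted Manhattan-norm formulation (with the terminal correction $\widetilde{f}_T(x) = f_T(x) + \norm{x}$ and the convention $X_0 = X_{T+1} = \mathbf{0}$) and $c_{\text{SSCO}}$ --- that is the content of \cref{lemma:switching_cost_l1_norm_vs_pos_movement} and the surrounding discussion in \cref{section:theory:simplified_smoothed_convex_optimization}, not of this lemma. The paper does invoke \cref{lemma:switching_cost_l1_norm_vs_pos_movement} shortly after this lemma, but only for the separate remark that absolute and positive movement costs differ by a factor of $1/2$. Note also that your central principle (``objectives that agree pointwise have the same minimizers'') would not even apply to the pair of objectives the lemma actually names, since they are not pointwise equal; they differ by the factor $1/\beta$, and what is needed is invariance of $\argmin$ under positive scaling rather than under pointwise equality.
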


Therefore, without loss of optimality, we can incorporate the weight of the switching cost $\beta$ into the hitting costs. Further, note that in their model, \citeauthor{Bansal2015}~\cite{Bansal2015} consider the absolute movement, i.e. $|X_{t,k} - X_{t-1,k}|$, rather than only positive movements, i.e. $(X_{t,k} - X_{t-1,k})^+$. However, with \cref{lemma:switching_cost_l1_norm_vs_pos_movement} in \cref{chapter:theory}, we have shown that these switching costs only differ by a constant factor (namely $1/2$).

The resulting algorithm is simply given by determining $\hat{x}$ based on the convex optimization in \cref{eq:ud:memoryless}, see \cref{alg:ud:memoryless}. Thus, the time (and space) complexity of this memoryless algorithm is $\mathcal{O}(C O_{\epsilon}^1)$ for finding $\epsilon$-optimal solutions.

\begin{figure}
    \centering
    \input{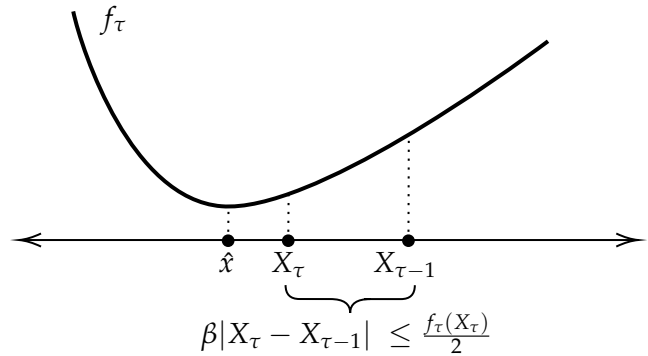}
    \caption{The memoryless algorithm moves towards the minimizer of the hitting cost, balancing hitting and movement costs.}
    \label{fig:memoryless_algorithm}
\end{figure}

\begin{algorithm}
    \caption{Memoryless algorithm~\cite{Bansal2015}}\label{alg:ud:memoryless}
    \SetKwInOut{Input}{Input}
    \Input{$\mathcal{I}_{\text{SSCO}} = (\tau \in \mathbb{N}, m \in \mathbb{N}, \beta \in \mathbb{R}_{>0}, (f_1, \dots, f_{\tau}) \in (\mathbb{R}_{\geq 0} \to \mathbb{R}_{\geq 0})^{\tau})$}
    \Return $\hat{x}$ such that that $\hat{x}$ is the result of the optimization in \cref{eq:ud:memoryless}\;
\end{algorithm}

\subsection{Probabilistic Algorithm}\label{section:online_algorithms:ud:probabilistic}

Next, we discuss a $2$-competitive algorithm developed by \citeauthor{Bansal2015}~\cite{Bansal2015}, which works by maintaining a probability distribution over configurations. Using this probability distribution, they describe a randomized algorithm which they subsequently translate into a deterministic algorithm. We discuss how to gather a randomized and then a deterministic algorithm from a probability distribution. Then, we describe how \citeauthor{Bansal2015}~\cite{Bansal2015} determine the probability distribution and how it can be computed in practice.

\subsubsection{From Probability Distribution to Deterministic Algorithm}

Let's suppose we have given a probability distribution $p$ over configurations $x \in \mathcal{X}$. A randomized algorithm is then described by initially picking a number $\gamma \in [0,1]$ uniformly at random and then maintaining the invariant that at time $\tau$ the chosen configuration $x_{\tau}$ has the property that the probability mass to the left of $x_{\tau}$ with respect to $p$ is exactly $\gamma$~\cite{Bansal2015}. Crucially, this approach only works in the fractional setting. Also, note that $\gamma$ is chosen only once prior to running the algorithm. This describes how we obtain a randomized algorithm from a probability distribution over configurations.

Next, \citeauthor{Bansal2015}~\cite{Bansal2015} show the following theorem, which describes how we can obtain a deterministic algorithm from a randomized algorithm. \begin{theorem}
   ~\cite{Bansal2015} For the problem of (fractional) online convex optimization, if there exists a $\rho$-competitive randomized algorithm $\mathcal{R}$ then there exists a $\rho$-competitive deterministic algorithm $\mathcal{D}$.
\end{theorem}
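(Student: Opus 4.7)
The plan is to derandomize $\mathcal{R}$ by defining $\mathcal{D}$ to play, in each round $\tau$, the expectation $X_{\tau}^{\mathcal{D}} := \mathbb{E}[X_{\tau}^{\mathcal{R}}]$ of the (random) configuration that $\mathcal{R}$ would play, where the expectation is over all internal randomness of $\mathcal{R}$ (for the construction from a probability distribution $p$ described above, this is the integral over $\gamma \in [0,1]$). Since $\mathcal{D}$ is online, it has access to $f_1, \dots, f_{\tau}$, and hence to the full distribution of $X_{\tau}^{\mathcal{R}}$, so this expectation is computable deterministically.

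First, I would verify feasibility: because $\mathcal{X}$ is convex and $X_{\tau}^{\mathcal{R}} \in \mathcal{X}$ almost surely, any convex combination (and hence the expectation) of such configurations lies in $\mathcal{X}$. Next, I would bound the hitting cost at each round using Jensen's inequality, which applies because each $f_{\tau}$ is convex:
\begin{align*}
    f_{\tau}(X_{\tau}^{\mathcal{D}}) = f_{\tau}\bigl(\mathbb{E}[X_{\tau}^{\mathcal{R}}]\bigr) \leq \mathbb{E}\bigl[f_{\tau}(X_{\tau}^{\mathcal{R}})\bigr].
\end{align*}
For the movement cost, I would use linearity of expectation together with convexity of the norm (a special case of Jensen's inequality, essentially the triangle inequality applied to the integral):
\begin{align*}
    \norm{X_{\tau}^{\mathcal{D}} - X_{\tau-1}^{\mathcal{D}}} = \norm{\mathbb{E}[X_{\tau}^{\mathcal{R}} - X_{\tau-1}^{\mathcal{R}}]} \leq \mathbb{E}\bigl[\norm{X_{\tau}^{\mathcal{R}} - X_{\tau-1}^{\mathcal{R}}}\bigr].
\end{align*}

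Summing both inequalities over $\tau \in [T]$ and applying linearity of expectation yields $c_{\text{SCO}}(X^{\mathcal{D}}) \leq \mathbb{E}[c_{\text{SCO}}(X^{\mathcal{R}})]$. Since $\mathcal{R}$ is $\rho$-competitive, $\mathbb{E}[c_{\text{SCO}}(X^{\mathcal{R}})] \leq \rho \cdot c_{\text{SCO}}(\hat{X})$ on every instance, where $\hat{X}$ is the optimal offline schedule (whose cost does not depend on the randomness). Chaining gives $c_{\text{SCO}}(X^{\mathcal{D}}) \leq \rho \cdot c_{\text{SCO}}(\hat{X})$, establishing the claim.

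I expect the only subtle point to be the implicit reliance on the fractional/convex nature of the problem: the argument breaks down as soon as $\mathcal{X}$ is discrete, since $\mathbb{E}[X_{\tau}^{\mathcal{R}}]$ may fail to be feasible. Consequently, it is worth emphasising that the hypothesis ``(fractional) online convex optimization'' is essential, and the resulting $\mathcal{D}$ is in general only fractional even if the support of $\mathcal{R}$ is integral. A minor technical remark I would add is that for the construction from the probability distribution $p$ given above, one can realise $\mathcal{D}$ explicitly by setting $X_{\tau}^{\mathcal{D}} := \int_0^1 X_{\tau}^{\mathcal{R}}(\gamma)\, d\gamma$, which is always well-defined because $X_{\tau}^{\mathcal{R}}(\gamma)$ is monotone in $\gamma$ and bounded.
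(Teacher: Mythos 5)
Your proof is correct and follows essentially the same route as the paper: define $\mathcal{D}$ to play $\mathbb{E}[X_{\tau}^{\mathcal{R}}]$ and apply Jensen's inequality to both the (convex) hitting cost and the (convex) movement cost, then sum over $\tau$. Your additional remarks on feasibility of the expectation within the convex decision space and on why the argument fails for discrete $\mathcal{X}$ are sound and go slightly beyond what the paper writes out, but the core argument is identical.
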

\begin{proof}
\citeauthor{Bansal2015}~\cite{Bansal2015} prove this theorem using Jensen's inequality. In the setting of a probability space, \emph{Jensen's inequality}\index{Jensen's inequality} claims that given a convex function $\varphi$ and a random variable $X$ we have \begin{align}
    E(\varphi(X)) \geq \varphi(E X)
\end{align} provided both expectations exist, i.e. $E |X|$ and $E |\varphi(X)| < \infty$~\cite{Durrett2010}.

Let $X_{\tau}$ be a random variable denoting the configuration of the randomized algorithm $\mathcal{R}$ at time $\tau$. Then, the deterministic algorithm $\mathcal{D}$ of \citeauthor{Bansal2015}~\cite{Bansal2015} sets their configuration to $x_{\tau} = E X_{\tau}$. The cost of $\mathcal{D}$ is thus given by $f_{\tau}(x_{\tau}) + (x_{\tau} - x_{\tau-1})^+$ and the cost of $\mathcal{R}$ is given by $E(f_{\tau}(X_{\tau})) + E((X_{\tau} - X_{\tau-1})^+)$. We observe that both $f_{\tau}$ and $(\cdot)^+$ are convex functions, implying that the cost of $\mathcal{R}$ is at least $f_{\tau}(E X_{\tau}) + (E X_{\tau} - E X_{\tau-1})^+$ which equals the cost of $\mathcal{D}$. Summing over all $t$ completes the proof.
\end{proof}

Hence, we have seen that a deterministic algorithm can be obtained from a randomized algorithm by, in each time slot, choosing the expected configuration of the randomized algorithm.

\subsubsection{Assumptions}

In the description of their algorithm, \citeauthor{Bansal2015}~\cite{Bansal2015} consider a restricted variant of uni-dimensional SSCO. Similar to their memoryless algorithm, which we discussed in \cref{section:online_algorithms:ud:memoryless}, they consider an unbounded decision space, i.e., $\mathcal{X} = \mathbb{R}$, and the $\ell_1$ norm as switching costs. Further, for their description of this probabilistic algorithm, they assume that the minimizer $\hat{x}$ of $f_{\tau}$ is unique and bounded and that the hitting costs $f_{\tau}$ are continuous and smooth, i.e., are infinitely many times continuously differentiable. In particular, they assume the first-order and second-order derivatives of $f_{\tau}$ are well-defined and continuous. In \cref{section:theory:beyond_convexity}, we discussed the assumption of differentiability and how it relates to our data center model.

Our implementation generalizes their algorithm to instances of SSCO with a bounded decision space $\mathcal{X}$, variable switching costs $\beta$, and piecewise linear functions. The second assumption, namely that the minimizer of the hitting cost is bounded, is natural in a data center setting as revenue loss increases for small configurations, whereas energy costs increase for large configurations.

In summary, the final algorithm is $2$-competitive for arbitrary instances of uni-dimensional SSCO with the restriction that hitting costs must either be piecewise linear or smooth.

\subsubsection{The Probability Distribution}

For any time $\tau$, the algorithm maintains a probability distribution $p_{\tau}$ over configurations $x \in \mathcal{X}$. So $\int_a^b p_{\tau}(x) \,dx$ represents the probability that $X_{\tau} \in [a,b]$ for any two $a, b \in \mathcal{X}$. At each time step $\tau$ we first find the minimizer of $f_{\tau}$, $\hat{x} = \argmin_{x \in \mathcal{X}} f_{\tau}(x)$. Then, we find a point $x_r \geq \hat{x}$ such that \begin{align}\label{eq:ud:probabilistic:right}
    \frac{1}{2} \int_{\hat{x}}^{x_r} \diff[2]{f_{\tau}}{y}(y) \,dy = \beta \int_{x_r}^{\infty} p_{\tau-1}(y) \,dy
\end{align} and a point $x_l \leq \hat{x}$ such that \begin{align}\label{eq:ud:probabilistic:left}
    \frac{1}{2} \int_{x_l}^{\hat{x}} \diff[2]{f_{\tau}}{y}(y) \,dy = \beta \int_{-\infty}^{x_l} p_{\tau-1}(y) \,dy.
\end{align} Note that we use \cref{lemma:switching_cost_absolute_vs_positive_movement} to linearly scale the hitting cost $f_{\tau}$ by $1 / \beta$ to allow for $\beta \neq 1$. We then simply moved the constant factor outside of the derivative and integral. The probability distribution is updated as follows: \begin{align}\label{eq:ud:probabilistic:update}
    p_{\tau}(x) = \begin{cases}
        p_{\tau-1}(x) + \frac{1}{2 \beta} \diff[2]{f_{\tau}}{x}(x) & x \in [x_l,x_r] \\
        0 & \text{otherwise}
    \end{cases}
\end{align} where $p_0$ is a discrete distribution concentrating all probability mass in the point $0$. Note that \citeauthor{Bansal2015}~\cite{Bansal2015} do not assume any particular initial distribution, yet in our original problem statement we assumed $X_0 = \mathbf{0}$. The continuous extension of this distribution can be approximated as $p_0 \sim \text{Unif}(0, \epsilon)$ for a suitably small $\epsilon > 0$. In our implementation we choose $\epsilon = 10^{-5}$.

\subsubsection{The Algorithm}

To begin with, recall that the algorithm developed by \citeauthor{Bansal2015}~\cite{Bansal2015} operates on an unbounded decision space. To translate the algorithm to a setting with a bounded decision space, it is easy to see that we need to ensure that the underlying probability distribution does not assign positive probability to $x \not\in \mathcal{X}$. This can be achieved by introducing the additional restrictions $0 \leq x_l$ and $x_r \leq m$ which requires the assumption $\hat{x} \in [0,m]$. This is not a restriction as we simply define $\hat{x}$ as the minimizer of the hitting cost $f_{\tau}$ with respect to the decision space $\mathcal{X}$.

We use a convex optimization (as described in \cref{section:offline_algorithms:convex_optimization}) to find the minimizer of the hitting cost $\hat{x}$. To determine $x_l$ and $x_r$, we use Brent's method with suitably defined functions and intervals to find a root. We will define these intervals and functions and describe how they can be computed in the following. Before beginning their description, note that we can only use a bracketed root finding method as we assumed that our decision space is bounded. If the decision space was not bounded, $x_l$ could be determined using a search for a local optimum minimizing $x$ starting from $\hat{x}$ under the equality constraint given in \cref{eq:ud:probabilistic:left}. This works because the equality constraint reduces the dimensionality of the optimization to $0$, resulting in a single feasible point. The analogous approach can be used to determine $x_r$.

We begin by describing how $x_r$ can be determined by a local search for a root. First, note that $x_r \in [\hat{x},m]$. Next, we restate \cref{eq:ud:probabilistic:right} as a function of $x_r$: \begin{align*}
    && \frac{1}{2} \int_{\hat{x}}^{x_r} \diff[2]{f_{\tau}}{y}(y) \,dy =&\ \beta \int_{x_r}^{\infty} p_{\tau-1}(y) \,dy \\
    \iff&& \left(\diff{f_{\tau}}{x}(x_r) - \diff{f_{\tau}}{x}(\hat{x})\right) =&\ 2 \beta \int_{x_r}^{\infty} p_{\tau-1}(y) \,dy \\
    \iff&& g(x_r) := \diff{f_{\tau}}{x}(x_r) - 2 \beta \int_{x_r}^{\infty} p_{\tau-1}(y) \,dy =&\ 0.
\end{align*} Here, we used the fundamental theorem of calculus and that the first order derivative of $f_{\tau}$ at $\hat{x}$ is $0$ as $\hat{x}$ is the minimizer of $f_{\tau}$. We observe that $g(\hat{x}) \leq 0$ and $g(m) \geq 0$. As $g$ is continuous, we know that we can be sure to find a root $x_r$ of $g$ on the interval $[\hat{x},m]$.

We take the analogous approach to determine $x_l \in [0,\hat{x}]$. Using \cref{eq:ud:probabilistic:left}, we obtain: \begin{align*}
    && \frac{1}{2} \int_{x_l}^{\hat{x}} \diff[2]{f_{\tau}}{y}(y) \,dy =&\ \beta \int_{-\infty}^{x_l} p_{\tau-1}(y) \,dy \\
    \iff&& \left(\diff{f_{\tau}}{x}(\hat{x}) - \diff{f_{\tau}}{x}(x_l)\right) =&\ 2 \beta \int_{-\infty}^{x_l} p_{\tau-1}(y) \,dy \\
    \iff&& h(x_l) := 2 \beta \int_{-\infty}^{x_l} p_{\tau-1}(y) \,dy - \diff{f_{\tau}}{x}(x_l) =&\ 0.
\end{align*} Again, we observe that $h(0) \leq 0$, $h(\hat{x}) \geq 0$, and $h$ is continuous, implying that we can be sure to find a root $x_l$ of $h$ on the interval $[0,\hat{x}]$. \Cref{fig:probabilistic_algorithm} illustrates the choice of $x_l$ and $x_r$.

\begin{figure}
    \centering
    \input{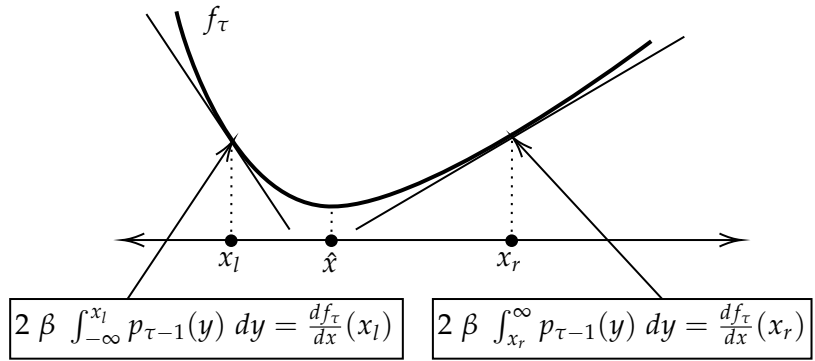}
    \caption{Visualization of the choice of $x_l$ and $x_r$ of the probabilistic algorithm.}
    \label{fig:probabilistic_algorithm}
\end{figure}

\paragraph{Root Finding} The previous arguments show that a bracketed root finding method can be used to find $x_l$ and $x_r$. We use \emph{Brent's method}\index{Brent's method} for root finding. Brent's method combines the bisection method with higher-order methods to guarantee convergence to the root, yet at a higher rate than if only bisection were used~\cite{Press2007}. \citeauthor{Press2007}~\cite{Press2007} ``recommend it as the method of choice for general one-dimensional
root finding where a function’s values only (and not its derivative or functional form)
are available''. We denote the convergence rate of approximating the root with tolerance $\epsilon$ by $\mathcal{O}(R_{\epsilon})$.

\paragraph{Numerical Differentiation} We use the \emph{five-point stencil}\index{five-point stencil} \begin{align*}
    \diff{f_{\tau}}{x}(x) \approx \frac{-f_{\tau}(x - 2h) + 8 f_{\tau}(x + h) - 8 f_{\tau}(x - h) + f_{\tau}(x - 2h)}{12h}
\end{align*} to find a finite difference approximation of order $\mathcal{O}(h)$ of the first order derivative of $f_{\tau}$ at configurations $x \in \mathcal{X}$~\cite{Sauer2011}. To match the accuracy of our convex optimizations we set $h := \epsilon / 10$. To approximate the second order derivative of $f_{\tau}$ at a configuration $x \in \mathcal{X}$ we use \begin{align*}
    \diff[2]{f_{\tau}}{x}(x) \approx \frac{-f_{\tau}(x + 2h) + 16 f_{\tau}(x+h) - 30 f_{\tau}(x) + 16 f_{\tau}(x-h) - f_{\tau}(x - 2h)}{12 h^2}
\end{align*} which yields an approximation of order $\mathcal{O}(h^4)$~\cite{Sauer2011}. Thus, we set $h := (\epsilon / 10)^{-1/4}$. We are thus able to compute these approximations in $\mathcal{O}(C)$ time.

Next, we describe how the constraints from \cref{eq:ud:probabilistic:right} and \cref{eq:ud:probabilistic:left} can be computed numerically.

\paragraph{Numerical Integration} In our implementation, we need to compute both finite and semi-infinite integrals over the probability distribution $p$.

We use the \emph{Tanh-sinh quadrature}\index{Tanh-sinh quadrature} (also known as the double exponential method) to compute finite integrals. \citeauthor{Bailey2005}~\cite{Bailey2005} describe the convergence and error of this method in more detail. They conclude that ``overall, the tanh-sinh scheme appears to be the best
for integrands of the type most often encountered in experimental math research'' and highlight that it has ``excellent accuracy and runtime performance''~\cite{Bailey2005}.

We use the \emph{Gauss-Laguerre quadrature}\index{Gauss-Laguerre quadrature} for semi-infinite integrals, which approximates values of integrals of the kind \begin{align}\label{eq:gauss_laguerre}
    \int_0^{\infty} e^{-x} f(x) \,dx
\end{align}~\cite{Weisstein}. It is easy to see that we can eliminate the weights by multiplying the integrand $g$ with $e^x$. Let $\text{GL}(g)$ denote the approximation of \cref{eq:gauss_laguerre} obtained by the Gauss-Laguerre quadrature. We are then able to compute any right-open integral over the interval $[{a,\infty})$ with integrand $p$ by setting $g(x) := p(a+x)$ and any left-open integral over the interval $({-\infty,b}]$ with integrand $p$ by setting $g(x) := p(b-x)$.

Crucially, for numeric stability in both integration schemes, we need that the integrands are continuous. It is easy to see that, in general, this is not the case for our probability distribution $p$. We describe in the following paragraph how integrals can be suitably discretized to allow for stable numeric results. Moreover, as the integration schemes are not universal, probability distributions exist for which the used quadratures cannot find the integral. In such a case, one would have to resort to another integration scheme. We denote the convergence rate of approximating the integral with tolerance $\epsilon$ by $\mathcal{O}(I_{\epsilon})$.

\paragraph{Piecewise Linear Hitting Costs} We have already discharged the assumptions that $\beta = 1$ and that the decision space is unbounded. It remains to extend this algorithm to allow for piecewise linear hitting costs.

In our description of the adaption to piecewise linear hitting costs $f_{\tau}$, we refer to the non-continuous or non-smooth points of $f_{\tau}$ as \emph{breakpoints}\index{breakpoint}. For a piecewise linear function, we can discretize the integral into a summation, replace the first-order derivative at a breakpoint by the difference of consecutive points, and replace the second-order derivative at a point by the difference in slopes of consecutive points~\cite{Bansal2015}. Let $B_{f_{\tau}}$ denote the set of breakpoints of $f_{\tau}$. Further, we denote by $x_{\tau,l}$ and $x_{\tau,r}$ the values of $x_l$ and $x_r$ at time $\tau$, respectively. It is easy to see that the set of breakpoints of $p_{\tau}$ is then given by \begin{align*}
    B_{p_{\tau}} := \{0, m\} \cup \left(B_{f_1} \cup \{x_{1,l}, x_{1,r}\}\right) \cup \dots \cup \left(B_{f_{\tau}} \cup \{x_{\tau,l}, x_{\tau,r}\}\right).
\end{align*}

Let $B_{p_{\tau}}^I := B_{p_{\tau}} \cap I$. The integral of $p_{\tau}$ over $I \subseteq \mathbb{R} \cup \{-\infty, \infty\}$ can then be computed using the quadrature methods described previously by integrating piecewise: \begin{align*}
    \int_I p_{\tau}(y) \,dy = \int_{\min I}^{\min B_{p_{\tau}}^I} p_{\tau}(y) \,dy + \int_{\max B_{p_{\tau}}^I}^{\max I} p_{\tau}(y) \,dy + \sum_{(i, j) \in \text{sort}(B_{p_{\tau}}^I)} \int_i^j p_{\tau}(y) \,dy
\end{align*} where $\text{sort}(A)$ denotes the pairs of consecutive elements of some set $A \subset \mathbb{R}$ in ascending order. We can continue to use finite difference approximations for the first-order and second-order derivatives.

\paragraph{} Note that the computation of $p_{\tau}$ requires $\mathcal{O}(\tau)$ many approximations of the second-order derivative of $f_{t}$. Therefore, any evaluation of $p_{\tau}$ requires $\mathcal{O}(\tau C)$ time and we are thus able to compute the $\epsilon$-optimal integral in $\mathcal{O}(\tau C I_{\epsilon} |B_{p_{\tau}}|) = \mathcal{O}(\tau^2 C I_{\epsilon} |B_{f_0}|)$ time, assuming $B_{f_{\tau}}$ add a constant number of new breakpoints in each time step and with $B_{f_0}$ denoting the number of breakpoints that is shared between multiple $f_{\tau}$. Overall, the described convex optimizations can be solved $\epsilon$-optimally in $\mathcal{O}(\tau^2 C I_{\epsilon} |B_{f_0}| R_{\epsilon} O_{\epsilon}^1)$ time. This is also the time complexity of the algorithm. This shows that, similarly to lazy capacity provisioning, the computational complexity grows polynomially with time. However, unlike lazy capacity provisioning, we cannot regularly reset the history, rendering this algorithm computationally inefficient for short time slot lengths. The algorithm is described in \cref{alg:ud:probabilistic}.

\begin{algorithm}
    \caption{Probabilistic algorithm~\cite{Bansal2015}}\label{alg:ud:probabilistic}
    \SetKwInOut{Input}{Input}
    \Input{$\mathcal{I}_{\text{SSCO}} = (\tau \in \mathbb{N}, m \in \mathbb{N}, \beta \in \mathbb{R}_{>0}, (f_1, \dots, f_{\tau}) \in (\mathbb{R}_{\geq 0} \to \mathbb{R}_{\geq 0})^{\tau})$}
    $\hat{x} \gets \argmin_{x \in \mathcal{X}} f_{\tau}(x)$\;
    find $x_r$ using the optimization described by \cref{eq:ud:probabilistic:right} subject to $x \in \mathcal{X}$\;
    find $x_l$ using the optimization described by \cref{eq:ud:probabilistic:left} subject to $x \in \mathcal{X}$\;
    set $p_{\tau}$ based on the update rule in \cref{eq:ud:probabilistic:update}\;
    \Return $\int_{x_l}^{x_r} y \cdot p_{\tau}(y) \,dy$\;
\end{algorithm}

Updating the probability distribution $p$ can be done in constant time as this does not require any function evaluations. As discussed in the beginning of this subsection, given a uniformly picked $\gamma \in [0,1]$, the randomized algorithm chooses $X_{\tau}$ (randomly) such that $\int_{-\infty}^{X_{\tau}} p_{\tau}(y) \,dy = \gamma$. In other words, $P_{\tau}(X_{\tau}) \sim \text{Unif}(0,1)$ where $P_{\tau}$ is the cumulative distribution function of $p_{\tau}$. By the universality of the uniform, $X_{\tau}$ is $P_{\tau}$-distributed, i.e. $X_{\tau} \sim P_{\tau}$. Hence, $E X_{\tau} = \int_{x_l}^{x_r} y \cdot p_{\tau}(y) \,dy$ computes the configuration for time slot $\tau$ as $p_{\tau}(y) = 0$ for $y \not\in [x_l, x_r]$. We then return $E X_{\tau}$. Similarly to the previously discussed integrals, this integral can be computed $\epsilon$-optimally in $\mathcal{O}(\tau^2 C I_{\epsilon} |B_{f_0}|)$ time, not affecting the asymptotic time complexity of the algorithm.

\subsection{Randomly Biased Greedy Algorithm}\label{section:online_algorithms:ud:rbg}

We have seen many constant-competitive online algorithms for uni-dimensional smoothed convex optimization. However, we have not yet paid much attention to minimizing regret. This is also partially because sublinear regret can be achieved easily using online gradient descent. As this approach generalizes to the multi-dimensional setting, we discuss this approach in \cref{section:online_algorithms:md:descent_methods}, where we look more generally at descent methods.

Still, one important question regarding the uni-dimensional setting remains, namely, whether there is an algorithmic framework that achieves both a constant-competitive ratio and sublinear regret. While it is impossible to achieve both simultaneously, it is possible to develop an algorithmic framework that balances both performance metrics.

In their paper, where the incompatibility between the competitive ratio and regret was first introduced, \citeauthor{Andrew2015}~\cite{Andrew2015} already proposed an algorithmic framework for SCO balancing the two notions in the uni-dimensional setting. Their approach is to scale the norm used to penalize movement in the decision space with a parameter $\theta \geq 1$. If $\theta = 1$, i.e., the algorithm solves the original problem, their algorithm is $2$-competitive and has linear regret. In contrast, for $\theta > 1$, movement in the decision space is penalized more. This allows reducing the regret to an arbitrary amount (which still depends linearly on $T$) while maintaining a constant competitive ratio~\cite{Andrew2015}.

Note that while regret is understood as introduced in \cref{section:theory:performance_metrics}, the competitive ratio is understood with respect to the modified problem with lookahead $1$. In general, with \emph{lookahead}\index{lookahead} $i$, the environment plays actions $i$ steps before the agent follows suit. In other words, a step at time $i$ is evaluated using the cost function from time $t-i$, and the initial step $X_i$ is $\mathbf{0}$. For lookahead $i$, we consider the modified overall cost \begin{align*}
    \sum_{t=1}^T f_t(X_{t+i}) + \norm{X_{t+i} - X_{t+i-1}}.
\end{align*} Note that in this definition we assume that $f_t$ is known after $X_t$ is played, whereas in our original definition of SOCO we assumed that $f_t$ is known before $X_t$ is played. Hence, for $i=1$, this modified problem is equivalent to SOCO and similar to metrical task systems where, in both cases, the environment plays first. In contrast, with online convex optimization, the agent plays first, resulting in a lookahead of $i=0$~\cite{Andrew2015}. Note that online convex optimization is the more restricted setting as the agent has less knowledge than in metrical task systems, implying that given an algorithm for lookahead $i$, the corresponding algorithm for lookahead $0$ is as competitive. Given an algorithm with lookahead $i$, the corresponding algorithm with lookahead $0$ is obtained simply by shifting determined configurations $i$ time slots into the past.

\citeauthor{Bansal2015}~\cite{Bansal2015} mention in their paper that the claims on this algorithm were withdrawn, however, \citeauthor{Andrew2015}~\cite{Andrew2015} clarified their proof, showing it is correct as stated in the original paper~\cite{Wierman}.

The algorithm of \citeauthor{Andrew2015}~\cite{Andrew2015} is initialized with a random parameter $r$ which is uniformly sampled from $({-1,1})$, i.e. $r \sim \text{Unif}(-1, 1)$. They define the work function \begin{align}\label{eq:randomly_biased_greedy:work_function}
    w_{\tau}(x) = \min_{y \in \mathcal{X}} w_{\tau-1}(y) + f_{\tau}(y) + \theta \norm{x - y}
\end{align} where $w_0(x) = \theta \norm{x}$. During each time slot $\tau$ the algorithm moves to the configuration $x$ minimizing $w_{\tau-1}(x) + r \theta \norm{x}$. The resulting algorithm is described in \cref{alg:ud:rbg}.

\begin{algorithm}
    \caption{Randomly Biased Greedy~\cite{Andrew2015}}\label{alg:ud:rbg}
    \SetKwInOut{Input}{Input}
    \Input{$\mathcal{I}_{\text{SCO}} = (\tau \in \mathbb{N}, \mathcal{X} \subset \mathbb{R}, \norm{\cdot}, (f_1, \dots, f_{\tau}) \in (\mathcal{X} \to \mathbb{R}_{\geq 0})^{\tau}), \theta \geq 1, r \sim \text{Unif}(-1,1)$}
    $x \gets \argmin_{x \in \mathcal{X}} w_{\tau-1}(x) + r \theta \norm{x}$\;
    set $w_{\tau}$ as described in \cref{eq:randomly_biased_greedy:work_function}\;
    \Return $x$\;
\end{algorithm}

Observe that as expected, the algorithm returns $X_1 = \mathbf{0}$ for the initial step.

\citeauthor{Andrew2015}~\cite{Andrew2015} show that given a $\theta \geq 1$ their algorithm attains the $\alpha$-unfair competitive ratio $(1+\theta) / \min \{\theta, \alpha\}$ and regret $\mathcal{O}(\max \{T / \theta, \theta\})$. Hence, for $\alpha = 1$ and $\theta = 1$ the algorithm is $2$-competitive. For any $\alpha > 0$, the optimal $\alpha$-unfair competitive ratio is $1 + 1 / \alpha$ and obtained by setting $\theta = \alpha$. In contrast, $\theta = 1 / \epsilon$ for some $\epsilon > 0$ yields the minimal regret $\mathcal{O}(\epsilon T)$~\cite{Andrew2015}. In general, when $T$ is known in advance, \citeauthor{Andrew2015}~\cite{Andrew2015} show that for $\theta \in \mathcal{O}(\sqrt{T})$, their algorithm achieves a $\mathcal{O}(\sqrt{T})$ $\alpha$-unfair competitive ratio and $\mathcal{O}(\sqrt{T})$-regret.

It is easy to see that the work function itself is convex as it can be interpreted as the inf-projection of the convex function $f(x,y) = w_{\tau-1}(y) + f_{\tau}(y) + \theta \norm{x - y}$. This fact is shown in proposition 2.22 of~\cite{Burke2015}. The evaluation of $w_{\tau}$ requires $\mathcal{O}(\tau)$ recursive evaluations of the work function each of which uses a convex optimization and evaluates the hitting costs. Thus, an $\epsilon$-optimal evaluation of $w_{\tau}$ can be obtained in $\mathcal{O}(C (O_{\epsilon}^1)^{\tau})$ time. Hence, the overall time complexity of the algorithm is in $\mathcal{O}(C (O_{\epsilon}^1)^{\tau+1})$. We improve the practical runtime by memoizing the work function.

\subsection{Randomized Integral Relaxation}

\citeauthor{Albers2018}~\cite{Albers2018} use the probabilistic algorithm described in \cref{section:online_algorithms:ud:probabilistic} in their randomized algorithm for Int-SSCO achieving the optimal competitive ratio $2$ against an oblivious adversary. Although they used the probabilistic algorithm in their paper, their proof generalizes to any $2$-competitive fractional algorithm, so, in particular, the randomly biased greedy algorithm can be used as well. Roughly, the algorithm works by solving the relaxed problem using the probabilistic algorithm of \citeauthor{Bansal2015}~\cite{Bansal2015} and then randomly rounding the resulting fractional schedule.

Let $\bar{\mathcal{I}} = (T, m, \beta, \bar{F})$ with $\bar{F} = (\bar{f}_1, \dots, \bar{f}_T)$ be the fractional relaxation of the instance $\mathcal{I} = (T, m, \beta, F)$ of Int-SSCO and let $\bar{\mathcal{X}} = [0,m]$ denote the decision space of $\bar{\mathcal{I}}$. \citeauthor{Albers2018}~\cite{Albers2018} define the relaxed operating costs $\bar{f}_{\tau} : \bar{\mathcal{X}} \to \mathbb{R}_{\geq 0}$ as the linear interpolation of the integral operating costs $f_{\tau}$: \begin{align*}
    \bar{f}_{\tau} := \begin{cases}
        f_{\tau}(x) & x \in [m]_0 \\
        (\lceil x \rceil - x) f_{\tau}(\lfloor x \rfloor) + (x - \lfloor x \rfloor) f_{\tau}(\lceil x \rceil) & \text{otherwise}.
    \end{cases}
\end{align*} Note that $\bar{f_{\tau}}$ are continuous and piecewise linear with the set of breakpoints $[m]_0$. Hence, we are able to use \cref{alg:ud:probabilistic} to obtain the configuration $\bar{X}_{\tau} \in \bar{\mathcal{X}}$ at time $\tau$ for the relaxed problem instance  $\bar{\mathcal{I}}$. Further, let $\text{frac}(x) = x - \lfloor x \rfloor$ be the fractional part of $x$ and let $\bar{X}'_{\tau-1} = (\bar{X}_{\tau-1})_{\lfloor\bar{X}_{\tau}\rfloor}^{\lceil\bar{X}_{\tau}\rceil}$ be the projection of the preceding relaxed configuration onto the discrete interval of the current relaxed configuration.

The randomized algorithm distinguishes between time slots where the configuration is increased and time slots where the configuration is decreased. In the first case, i.e. $\bar{X}_{\tau-1} \leq \bar{X}_{\tau}$, if $X_{\tau-1} = \lceil\bar{X}_{\tau}\rceil$  the configuration remains unchanged. Otherwise, $X_{\tau}$ is set to $\lceil\bar{X}_{\tau}\rceil$ with probability \begin{align*}
    p_{\tau}^{\uparrow} := \frac{\bar{X}_{\tau} - \bar{X}'_{\tau-1}}{1 - \text{frac}(\bar{X}'_{\tau-1})}
\end{align*} and to $\lfloor\bar{X}_{\tau}\rfloor$ with probability $1 - p_{\tau}^{\uparrow}$. Conversely, if $\bar{X}_{\tau-1} > \bar{X}_{\tau}$, the configuration remains unchanged if $X_{\tau-1} = \lfloor\bar{X}_{\tau}\rfloor$, and otherwise with probability \begin{align*}
    p_{\tau}^{\downarrow} := \frac{\bar{X}'_{\tau-1} - \bar{X}_{\tau}}{\text{frac}(\bar{X}'_{\tau-1})}
\end{align*} the configuration is set to $\lfloor\bar{X}_{\tau}\rfloor$ and with probability $p_{\tau}^{\downarrow}$ the configuration is set to $\lceil\bar{X}_{\tau}\rceil$. The resulting algorithm is shown in \cref{alg:ud:randomized}.

\begin{algorithm}
    \caption{Randomized integral relaxation~\cite{Albers2018}}\label{alg:ud:randomized}
    \SetKwInOut{Input}{Input}
    \Input{$\mathcal{I}_{\text{Int-SSCO}} = (\tau \in \mathbb{N}, m \in \mathbb{N}, \beta \in \mathbb{R}_{>0}, (f_1, \dots, f_{\tau}) \in (\mathbb{N}_0 \to \mathbb{R}_{\geq 0})^{\tau})$}
    $\bar{X}_{\tau} \gets \text{\cref{alg:ud:probabilistic}}(\bar{\mathcal{I}}_{\text{Int-SSCO}})$\;
    \eIf{$\bar{X}_{\tau-1} \leq \bar{X}_{\tau}$}{
        \eIf{$X_{\tau-1} = \lceil\bar{X}_{\tau}\rceil$}{
            \Return $\lceil\bar{X}_{\tau}\rceil$\;
        }{
            $\gamma \sim \text{Unif}(0,1)$\;
            \eIf{$\gamma \leq p_{\tau}^{\uparrow}$}{
                \Return $\lceil\bar{X}_{\tau}\rceil$\;
            }{
                \Return $\lfloor\bar{X}_{\tau}\rfloor$\;
            }
        }
    }{
        \eIf{$X_{\tau-1} = \lfloor\bar{X}_{\tau}\rfloor$}{
            \Return $\lfloor\bar{X}_{\tau}\rfloor$\;
        }{
            $\gamma \sim \text{Unif}(0,1)$\;
            \eIf{$\gamma \leq p_{\tau}^{\downarrow}$}{
                \Return $\lfloor\bar{X}_{\tau}\rfloor$\;
            }{
                \Return $\lceil\bar{X}_{\tau}\rceil$\;
            }
        }
    }
\end{algorithm}

We use the universality of the uniform to simulate Bernoulli-distributed random variables with parameters $p_{\tau}^{\uparrow}$ and $p_{\tau}^{\downarrow}$, respectively. Any pseudo-random number generator can be used to produce the uniformly distributed $\gamma$. It is easy to see that the time complexity is given by the time complexity of \cref{alg:ud:probabilistic}, i.e., $\mathcal{O}(\tau^2 m C I_{\epsilon} R_{\epsilon} O_{\epsilon}^1)$ with $|B_{f_0}| \in \mathcal{O}(m)$, or the complexity of \cref{alg:ud:rbg}, i.e., $\mathcal{O}(C (O_{\epsilon}^1)^{\tau+1})$ depending on which algorithm is used for the relaxed problem.

\section{Multi-Dimensional}\label{section:online_algorithms:md}

\subsection{Lazy Budgeting}\label{section:online_algorithms:md:lazy_budgeting}

To begin with our discussion of the multi-dimensional setting, we examine two algorithms developed by \citeauthor{Albers2021}~\cite{Albers2021} for a restricted class of convex cost functions. Their first algorithm is $2d$-competitive for SLO, i.e., load and time-independent costs, and their second algorithm is $(2d+1)$-competitive for SBLO. Note that we only defined SBLO and SLO for the integral case. As no online algorithm for SLO can attain a competitive ratio smaller than $2d$, their first algorithm is optimal, and their second algorithm is nearly optimal~\cite{Albers2021, Albers2021_2}.

The idea behind both algorithms is to calculate optimal schedules up to the current time slot. Depending on this schedule, the algorithm decides if a server is powered up. To perform the smoothing, the algorithm remembers how long a server was idling and powers this server down if the idle duration surpasses a threshold. \citeauthor{Albers2021}~\cite{Albers2021} do not name their algorithms, yet, within this work, we refer to them as \emph{lazy budgeting methods}\index{lazy budgeting}.

\subsubsection{Lazy Budgeting for Smoothed Load Optimization}

Let $\mathcal{I} = (d, \tau, m, \beta, \Lambda, c)$ be an instance of SLO. \citeauthor{Albers2021}~\cite{Albers2021} focus on a setting without inefficient server types. A server type $k$ is called \emph{inefficient} if there exists another server type $k'$ where both the operating cost and the switching cost is lower, i.e. $c_k \geq c_{k'}$ and $\beta_k \geq \beta_{k'}$. In practice, this is not a restriction as a server of an inefficient server type is only ever powered up if all more efficient servers are already active as there is no trade-off between operating and switching costs. Furthermore, typically servers with a lower operating cost also have a higher switching cost. In addition, \citeauthor{Albers2021} assume that there are no duplicated server types, i.e., server types with equal operating and switching costs.

We assume that the server types are sorted in descending order by their operating costs, i.e. $c_1 > \dots > c_d$. As we excluded inefficient server types, switching costs are in ascending order, $\beta_1 < \dots < \beta_d$.

The algorithm of \citeauthor{Albers2021}~\cite{Albers2021} separates a problem instance into $m := \sum_{k=1}^d m_k$ lanes. Recall that with SLO, we assume that each active server can handle a single job during each time slot. The algorithm uses that at time slot $t$ there is a job in line $j$ if and only if $j \leq \lambda_{t}$. Thus, all servers represented by lines $j > \lambda_{t}$ are either inactive or idling.

Let $y_{t,j}$ denote the server type that handles the $j$-th lane during time slot $t$ given some underlying schedule $X$. We say $y_{t,j} = 0$ if there is no active server in lane $j$ during time slot $t$, which is only the case if $j > \lambda_{t}$ as we can assume that $\lambda_{t} \leq m$ holds for all time slots $t \in [T]$. \citeauthor{Albers2021}~\cite{Albers2021} give the following formal definition: \begin{align*}
    y_{t,j} := \begin{cases}
        \max \{k \in [d] \mid \sum_{k' = k}^d X_{t,k'} \geq j\} & k \in \left[\sum_{k=1}^d X_{t,k}\right] \\
        0 & \text{otherwise}.
    \end{cases}
\end{align*} By this definition, we prefer to use servers of the server type with the lowest operating cost (and largest switching cost). In other words, the server types handling each lane $y_{t,1}, \dots, y_{t,m}$ are sorted in descending order, i.e. $y_{t,j} \geq y_{t,j'}$ for $j < j'$. We denote by $\hat{y}_{t,j}^{\tau}$ the server type in lane $j$ during time slot $t$ induced by some optimal schedule $\hat{X}^{\tau}$ up to time $\tau$ and by $\widetilde{y}_{t,j}$ the server type in lane $j$ during time slot $t$ as assigned by the algorithm.

The algorithm begins by finding an optimal schedule $\hat{X}^{\tau}$ up to time slot $\tau$. This schedule is chosen such that the server type in a lane of $\hat{X}^{\tau}$ is never reduced compared to the previously used optimal schedule $\hat{X}^{\tau-1}$, i.e. $\hat{y}_{t,j}^{\tau} \geq \hat{y}_{t,j}^{\tau-1}$ for all time slots $t \in [\tau]$ and lanes $j \in [m]$. Moreover, we assume that $\hat{X}^{\tau}$ is a schedule that powers up servers as late as possible and powers down servers as early as possible. This is necessary in case $c_k = 0$ for some server type $k \in [d]$. We observe that these properties are fulfilled by all optimal schedules that \cref{alg:md:optimal_graph_search} obtains. We cache the results of the optimal graph-based algorithm such that in every iteration of the online algorithm, only one dynamic update needs to be performed. The asymptotic time complexity of this dynamic update is thus given as $\mathcal{O}(|\mathcal{M}| C d)$ where $|\mathcal{M}| \in \mathcal{O}(\prod_{k=1}^d m_k)$.

Now, the algorithm ensures that no server type is used for lane $j \in [m]$ that is smaller than the server type used by $\hat{X}^{\tau}$, i.e. $\widetilde{y}_{\tau,j} \geq \hat{y}_{\tau,j}^{\tau}$. If $\widetilde{y}_{\tau-1,j} < \hat{y}_{\tau,j}^{\tau}$, a server of type $\widetilde{y}_{\tau-1,j}$ is powered down and a server of type $\hat{y}_{\tau,j}^{\tau}$ is powered up. A server of type $k$ that is not replaced by a greater server type remains active for $\bar{t}_k := \lfloor \beta_k / c_k \rfloor$ time slots~\cite{Albers2021}. If $\hat{X}^{\tau}$ uses a smaller server type $k' \leq k$ in the meantime, then the server of type $k$ will run for at least $\bar{t}_{k'}$ further time slots.

The algorithm computes $\widetilde{y}_{\tau,j}$ directly. The corresponding number of active servers of type $k$ of the underlying schedule $\widetilde{X}$ can be obtained by $\widetilde{X}_{\tau,j} = |\{j \in [m] \mid \widetilde{y}_{\tau,j} = k\}|$. The resulting algorithm is shown in \cref{alg:md:lazy_budgeting:det_slo}. Here, $h_j$ denotes the time until the server handling line $j \in [m]$ is powered down.

\begin{algorithm}
    \caption{Lazy Budgeting for SLO~\cite{Albers2021}}\label{alg:md:lazy_budgeting:det_slo}
    \KwIn{$\mathcal{I}_{\text{SLO}} = (d \in \mathbb{N}, \tau \in \mathbb{N}, m \in \mathbb{N}^d, \beta \in \mathbb{R}_{>0}^d, \Lambda \in \mathbb{N}_0^{\tau}, c \in \mathbb{R}_{\geq 0}^d)$}
    Update the previously found optimal schedule $\hat{X}^{\tau-1}$ to $\hat{X}^{\tau}$ such that $\hat{y}_{t,j}^{\tau} \geq \hat{y}_{t,j}^{\tau-1}$ for all  $j \in [m]$\;
    \For{$j \gets 1$ \KwTo $m$}{
        \eIf{$\widetilde{y}_{\tau-1,j} < \hat{y}_{\tau,j}^{\tau}$ \KwOr $t \geq h_j$}{
            $\widetilde{y}_{\tau,j} \gets \hat{y}_{\tau,j}^{\tau}$\;
            $h_j \gets \tau + \bar{t}_{\hat{y}_{\tau,j}^{\tau}}$\;
        }{
            $\widetilde{y}_{\tau,j} \gets \widetilde{y}_{\tau-1,j}$\;
            $h_j \gets \max \{h_j, \tau + \bar{t}_{\hat{y}_{\tau,j}^{\tau}}\}$ where $\bar{t}_0 = 0$\;
        }
    }
    \ForEach{$k \in [d]$}{
        $X_{\tau,k} \gets |\{j \in [m] \mid \widetilde{y}_{\tau,j} = k\}|$\;
    }
    \Return $X_{\tau}$\;
\end{algorithm}

\begin{function}
	\caption{BuildLanes($x, d, m$)}\label{proc:md:lazy_budgeting:build_lanes}
	$y \gets \mathbf{0}$\;
	\For{$j \gets 1$ \KwTo $m$}{
	    \If{$j \leq \sum_{k=1}^d x_k$}{
	        \For{$k \gets 1$ \KwTo $d$}{
        	    \If{$\sum_{k'=k}^d x_{k'} \geq j$}{
        	        $y_j \gets k$\;
        	    }
        	}
	    }
	}
    \Return $y$\;
\end{function}

Schedules can be converted to lanes in $\mathcal{O}(m d^2)$ time as described by \ref{proc:md:lazy_budgeting:build_lanes} and lanes can be converted back to schedules in $\mathcal{O}(m)$ time by iterating over all lanes. Hence, the overall asymptotic time complexity of the algorithm is described by adding the time required to find the optimal schedule, i.e., $\mathcal{O}(m d^2 + C d \prod_{k=1}^d m_k)$.

\subsubsection{Randomized Lazy Budgeting for Smoothed Load Optimization}

\citeauthor{Albers2021}~\cite{Albers2021} describe how the competitive ratio of the previously described \cref{alg:md:lazy_budgeting:det_slo} can be improved to $\frac{e}{e-1}d \approx 1.582d$ against an oblivious adversary by randomizing the running time of a server. Before execution, the randomized algorithm chooses $\gamma \in [0,1]$ according to the probability density function \begin{align*}
    f_{\gamma}(x) = \begin{cases}
        e^x / (e-1) & x \in [0,1] \\
        0 & \text{otherwise}.
    \end{cases}
\end{align*} Then, the running time of a server of type $k \in [d]$, $\bar{t}_k$, is set to $\lfloor \gamma \cdot \beta_k / c_k \rfloor$.

To sample $\gamma$ we first seek to find the cumulative distribution function $F_{\gamma}$. For $x \in [0,1]$ we have \begin{align*}
    F_{\gamma}(x) &= \int_0^x f_{\gamma}(t) \,dt \\
                  &= \frac{1}{e-1} \int_0^x e^t \,dt \\
                  &= \frac{1}{e-1} (e^x - 1).
\end{align*} By the universality of the uniform, realizations of $F_{\gamma}$ can be simulated given $U \sim \text{Unif}(0,1)$ as $F_{\gamma}^{-1}(U) \sim F_{\gamma}$. It is now easy to see that $F_{\gamma}^{-1}(x) = \ln (x (e - 1) + 1)$.

This completes the description of the implementation of the randomized variant of lazy budgeting for SLO. The asymptotic time complexity is given by the time complexity of \cref{alg:md:lazy_budgeting:det_slo}.

\subsubsection{Lazy Budgeting for Smoothed Balanced-Load Optimization}

In a subsequent paper, \citeauthor{Albers2021_2}~\cite{Albers2021_2} modified their lazy budgeting method to SBLO, allowing for more complex cost functions. The method remains similar: First, an optimal schedule is found which ends at the current time slot. Then, the algorithm ensures for each server type that the number of active servers is at least as large as the number of active servers in the optimal schedule. Lastly, to make the algorithm competitive, if the accumulated idle operating cost of a server of type $k$, $g_{t,k}(0)$ exceeds the switching cost $\beta_k$, a server of type $k$ is powered down.

First, \citeauthor{Albers2021_2}~\cite{Albers2021_2} developed a $(2d+1)$-competitive online algorithm for a setting where the operating costs are time-independent. This allows to determine the runtime of a server in advance as both $g_{t,k}(0)$ and $\beta_k$ are known. Then, they extend their algorithm to a setting that allows for time-dependent operating costs where their algorithm attains a competitive ratio of $2d+1+\epsilon$ for any $\epsilon > 0$.

\paragraph{Time-Independent Operating Costs}

Again, we denote by $\hat{X}^{\tau}$ the optimal schedule with information up to time slot $\tau$ and by $X$ the schedule obtained by the algorithm. If the optimal schedule has more active servers of some type $k \in [d]$ than the schedule obtained by the algorithm, i.e. $\hat{X}_{\tau,k}^{\tau} > X_{\tau-1,k}$, $(\hat{X}_{\tau,k}^{\tau} - X_{\tau-1,k})^+$ servers of type $k$ are powered up. After being active for $\bar{t}_k := \lfloor \beta_k / g_k(0) \rfloor$ time slots, a server of type $k$ is powered down again. The resulting algorithm is shown in \cref{alg:md:lazy_budgeting:sblo_a}. Here, $h_{t,k}$ denotes the number of servers of type $k$ that were powered up at time $t$. We assume that $h_{t,k}$ is initialized with $0$ for all $t \in \mathbb{Z}$ and $k \in [d]$.

\begin{algorithm}
    \caption{Lazy Budgeting for SBLO (for time-independent operating costs)~\cite{Albers2021_2}}\label{alg:md:lazy_budgeting:sblo_a}
    \KwIn{$\mathcal{I}_{\text{SBLO}} = (d \in \mathbb{N}, \tau \in \mathbb{N}, m \in \mathbb{N}^d, \beta \in \mathbb{R}_{>0}^d, \Lambda \in \mathbb{N}_0, G \in (\mathbb{R}_{\geq 0} \to \mathbb{R}_{\geq 0}^d)^d)$}
    Update $\hat{X}^{\tau-1}$ to $\hat{X}^{\tau}$\;
    \For{$k \gets 1$ \KwTo $d$}{
        $X_{\tau,k} \gets X_{\tau-1,k} - h_{\tau - \bar{t}_k, k}$\;
        \If{$X_{\tau,k} < \hat{X}_{\tau,k}^{\tau}$}{
            $X_{\tau,k} \gets \hat{X}_{\tau,k}^{\tau}$\;
            $h_{t,k} \gets \hat{X}_{\tau,k}^{\tau} - X_{\tau,k}$\;
        }
    }
    \Return $X_{\tau}$\;
\end{algorithm}

Similar to our implementation of Lazy Budgeting for SLO (\cref{alg:md:lazy_budgeting:det_slo}), we cache intermediate results of the algorithm computing the optimal schedule up to time slot $\tau$ (\cref{alg:md:optimal_graph_search}). It is therefore enough to perform a single dynamic update to obtain $\hat{X}^{\tau}$ which is possible in $\mathcal{O}(|\mathcal{M}| C d)$ time where $|\mathcal{M}| \in \mathcal{O}(\prod_{k=1}^d m_k)$. It is easy to see that this is also the time complexity of \cref{alg:md:lazy_budgeting:sblo_a}.

\paragraph{Time-Dependent Operating Costs}

Next, \citeauthor{Albers2021_2}~\cite{Albers2021_2} extend \cref{alg:md:lazy_budgeting:sblo_a} to an algorithm which supports time-dependent operating costs and achieves a competitive ratio of $2d + 1 + c(\mathcal{I})$ where $c(\mathcal{I}) := \sum_{k=1}^d \max_{t \in [T]} \frac{g_{t,k}(0)}{\beta_k}$. Now, the idle operating cost $g_{t,k}(0)$ is not constant over time anymore. Since at time $\tau$ we only know the operating costs up to time $\tau$, we cannot pre-determine the number of time slots that a server of type $k$ is active until powered down, yet at the time slot when a server is powered down we know all relevant cost functions. \citeauthor{Albers2021_2}~\cite{Albers2021_2} formally define the maximal number of time slots such that the sum of the idle operating costs beginning from the next time slot, $\tau+1$, is smaller than or equal to the switching cost as \begin{align*}
    \bar{t}_{t,k} := \max \left\{\bar{t} \in [T - t] \mid \sum_{t' = t+1}^{t+\bar{t}} g_{t,k}(0) \leq \beta_k\right\}
\end{align*} In contrast to \cref{alg:md:lazy_budgeting:sblo_a} where servers of type $k$ that were powered up at time $\tau$ were active for $\bar{t}_{\tau,k}$ time slots, they are now active for $\bar{t}_{\tau,k} + 1$ time slots.

Let $W_{\tau,k}$ denote the set of all time slots $t$ such that servers of type $k$ that were powered up at time $t$ are powered down during time slot $\tau$, i.e., $t + \bar{t}_{t,k} + 1 = \tau$. Again, we denote by $h_{t,k}$ the number of servers of type $k$ that were powered up at time $t$. Using the same powering-up policy as \cref{alg:md:lazy_budgeting:sblo_a}, the updated algorithm is described in \cref{alg:md:lazy_budgeting:sblo_b}.

\begin{algorithm}
    \caption{Lazy Budgeting for SBLO (for time-dependent operating costs)~\cite{Albers2021_2}}\label{alg:md:lazy_budgeting:sblo_b}
    \KwIn{$\mathcal{I}_{\text{SBLO}} = (d \in \mathbb{N}, \tau \in \mathbb{N}, m \in \mathbb{N}^d, \beta \in \mathbb{R}_{>0}^d, \Lambda \in \mathbb{N}_0^{\tau}, G \in (\mathbb{R}_{\geq 0} \to \mathbb{R}_{\geq 0}^d)^{d^{\tau}})$}
    Calculate $\hat{X}^{\tau}$\;
    \For{$k \gets 1$ \KwTo $d$}{
        $W_{\tau,k} \gets \left\{t \in [\tau-1] \mid \sum_{t'=t+1}^{\tau-1} g_{t',k}(0) \leq \beta_k < \sum_{t'=t+1}^{\tau} g_{t',k}(0)\right\}$\;
        $X_{\tau,k} \gets X_{\tau-1,k} - \sum_{t \in W_{\tau,k}} h_{t, k}$\;
        \If{$X_{\tau,k} < \hat{X}_{\tau,k}^{\tau}$}{
            $X_{\tau,k} \gets \hat{X}_{\tau,k}^{\tau}$\;
            $h_{t,k} \gets \hat{X}_{\tau,k}^{\tau} - X_{\tau,k}$\;
        }
    }
    \Return $X_{\tau}$\;
\end{algorithm}

$W_{\tau,k}$ can be computed in $\mathcal{O}(\tau^2 C)$ time. Thus, the overall time complexity of the algorithm is $\mathcal{O}(\tau^2 |\mathcal{M}| C d)$.

\paragraph{Reducing the Competitive Ratio}

We now describe how \citeauthor{Albers2021_2}~\cite{Albers2021_2} improve the competitive ratio of their algorithm to $2d + 1 + \epsilon$ for any $\epsilon > 0$. The idea is to consider a modified problem instance $\widetilde{\mathcal{I}} = (d, \widetilde{\tau}, m, \beta, \widetilde{\Lambda}, \widetilde{G})$ which divides each time slot $t$ of the original problem instance $\mathcal{I}$ into $\widetilde{n}_t$ sub time slots, allowing for up to $\widetilde{n}_t$ intermediate state changes. Our goal is to choose $\widetilde{\mathcal{I}}$ such that $c(\widetilde{\mathcal{I}})$ becomes arbitrarily small.

In our description, we refer to time slots of $\mathcal{I}$ by $t$ and to sub time slots of $\widetilde{\mathcal{I}}$ by $u$. The total number of sub time slots is given by $\widetilde{\tau} := \sum_{t=1}^{\tau} \widetilde{n}_t$. We denote by $U(t) = [u+1 : u+\widetilde{n}_t] = \{u+1, u+2, \dots, u+\widetilde{n}_t\} \subseteq [\widetilde{\tau}]$ with $u = \sum_{t'=1}^{t-1} \widetilde{n}_t$ the set of sub time slots corresponding to time slot $t$. In contrast, let $U^{-1}(u)$ be the time slot $t \in [\tau]$ such that $u \in U(t)$.

The operating cost $g_{t,k}(l)$ of a server of type $k$ during time slot $t$ under load $l$ is divided equally among all sub time slots $\widetilde{n}_t$, i.e. $\widetilde{g}_{u,k}(l) := g_{U^{-1}(u),k}(l) / \widetilde{n}_{U^{-1}(u)}$. The job volume does not change, so $\widetilde{\lambda}_u := \lambda_{U^{-1}(u)}$.

\citeauthor{Albers2021_2}~\cite{Albers2021_2} show that for \begin{align*}
    \widetilde{n}_t = \left\lceil\frac{d}{\epsilon} \cdot \max_{k \in [d]} \frac{g_{t,k}(0)}{\beta_k}\right\rceil
\end{align*} the cost of the resulting schedule is at most $2d + 1 + \epsilon$ times larger than the cost of an optimal solution. For $\epsilon \to 0$, the competitive ratio converges to $2d + 1$.

Let $\widetilde{X}$ be the schedule obtained by \cref{alg:md:lazy_budgeting:sblo_b} for $\widetilde{\mathcal{I}}$. Then, the schedule $X$ for the original problem instance $\mathcal{I}$ can be obtained by setting $X_{\tau} = \widetilde{X}_{\mu(\tau)}$ where $\mu(\tau) = \argmin_{u \in U(\tau)} \widetilde{f}_{u}(\widetilde{X}_u)$ is the configuration that minimizes the operating cost during $U(t)$~\cite{Albers2021_2}.

The resulting algorithm is shown in \cref{alg:md:lazy_budgeting:sblo_c}. First, the modified problem instance $\widetilde{\mathcal{I}}$ is created, and the next $\widetilde{n}_{\tau}$ time steps are simulated using \cref{alg:md:lazy_budgeting:sblo_b}. Then, the resulting schedule $X$ is constructed from $\widetilde{X}$.

\begin{algorithm}
    \caption{Lazy Budgeting for SBLO~\cite{Albers2021_2}}\label{alg:md:lazy_budgeting:sblo_c}
    \KwIn{$\mathcal{I}_{\text{SBLO}} = (d \in \mathbb{N}, \tau \in \mathbb{N}, m \in \mathbb{N}^d, \beta \in \mathbb{R}_{>0}^d, \Lambda \in \mathbb{N}_0^{\tau}, G \in (\mathbb{R}_{\geq 0} \to \mathbb{R}_{\geq 0}^d)^{d^{\tau}})$}
    $\widetilde{n}_{\tau} \gets \left\lceil d / \epsilon \cdot \max_{k \in [d]} g_{\tau,k}(0) / \beta_k\right\rceil$\;
    Extend the modified problem instance $\widetilde{\mathcal{I}}$ by $\widetilde{n}_{\tau}$ additional time slots\;
    Update $\widetilde{X}$ by executing the next $\widetilde{n}_{\tau}$ time slots in \cref{alg:md:lazy_budgeting:sblo_b}\;
    $X_{\tau} \gets \widetilde{X}_{\mu(\tau)}$ with $\mu(\tau) = \argmin_{u \in U(\tau)} \widetilde{f}_u(\widetilde{X}_u)$\;
    \Return $X_{\tau}$\;
\end{algorithm}

The number of sub time slots $\widetilde{n}_t$ can be determined in $\mathcal{O}(C d)$ time. Creating the modified problem instance takes $\mathcal{O}(\widetilde{n}_{\tau})$ time. Simulating \cref{alg:md:lazy_budgeting:sblo_b} for $\widetilde{n}_{\tau}$ sub time slots takes $\mathcal{O}(\widetilde{n}_{\tau} \widetilde{\tau}^2 |\mathcal{M}| C d)$ time. The obtained schedule can be constructed in $\mathcal{O}(\widetilde{n}_{\tau} C)$ time. Hence, the overall asymptotic time complexity of \cref{alg:md:lazy_budgeting:sblo_c} is given by $\mathcal{O}(\widetilde{n}_{\tau} \widetilde{\tau}^2 |\mathcal{M}| C d)$, which is inversely proportional to $\epsilon^3$.

\subsection{Descent Methods}\label{section:online_algorithms:md:descent_methods}

When we seek fractional solutions, and the cost functions are differentiable, a promising method is to descent towards the minimizer of the current cost function. This is an approach that is commonly used in online convex optimization, yielding algorithms with sublinear regret~\cite{Andrew2015}. As mentioned in \cref{section:online_algorithms:ud:rbg} on the Randomly Biased Greedy algorithm, movement costs are not considered in the online convex optimization setting, and the agent picks a point in the decision space before the cost function is revealed. A commonly used algorithm that achieves no-regret in the setting of online convex optimization is \emph{online gradient descent} (OGD).

\subsubsection{Online Gradient Descent}

Online gradient descent works by selecting an arbitrary initial point $X_1 \in \mathcal{X}$ and then choosing, at time $\tau > 1$, $X_{\tau} = \Pi_{\mathcal{X}}(X_{\tau-1} - \eta_{\tau-1} \nabla f_{\tau-1}(X_{\tau-1}))$ where $\eta_t$ are the learning rates and $\Pi_{\mathcal{X}}(x)$ is the euclidean projection of $x$ onto $\mathcal{X}$~\cite{Andrew2015}. The algorithm is described in \cref{alg:md:ogd}. Similar to RBG, OGD operates with lookahead $1$. In the multi-dimensional setting, the \emph{euclidean projection}\index{euclidean projection} of a point $x \in \mathbb{R}^d$ onto a convex set $K$ is given as $\Pi_{K}(x) = \argmin_{y \in K} \norm{y - x}_2$.

\begin{algorithm}
    \caption{Online Gradient Descent~\cite{Andrew2015}}\label{alg:md:ogd}
    \SetKwInOut{Input}{Input}
    \Input{$\mathcal{I}_{\text{SCO}} = (\tau \in \mathbb{N}, \mathcal{X} \subset \mathbb{R}^d, \norm{\cdot}, (f_1, \dots, f_{\tau}) \in (\mathcal{X} \to \mathbb{R}_{\geq 0})^{\tau}), \eta > 0$}
    $X_{\tau} \gets \Pi_{\mathcal{X}}(X_{\tau-1} - \eta \nabla f_{\tau-1}(X_{\tau-1}))$\;
    \Return $X_{\tau}$\;
\end{algorithm}

With appropriate learning rates OGD achieves no-regret for online convex otpimization. For example, OGD obtains $\mathcal{O}(\sqrt{T})$-regret for $\eta_t \in \Theta(1 / \sqrt{t})$~\cite{Andrew2015}. \citeauthor{Andrew2015}~\cite{Andrew2015} showed that an OGD algorithm with $\mathcal{O}(\rho_2(T))$-regret in the online convex optimization setting, and $\sum_{t=1}^T \eta_t \in \mathcal{O}(\rho_1(T))$, achieves $\mathcal{O}(\rho_1(T) + \rho_2(T))$-regret in the online smoothed convex otpimization setting. In particular, for learning rates $\eta_t \in \Theta(1 / \sqrt{t})$, OGD obtains $\mathcal{O}(\sqrt{T})$-regret when movement costs are considered and the agent picks a point after the hitting costs were revealed~\cite{Andrew2015}.

Using finite difference methods, $\nabla f_{\tau-1}(X_{\tau-1})$ can be computed in $\mathcal{O}(d C)$ time. The euclidean projection can be computed $\epsilon$-optimally in $\mathcal{O}(O_{\epsilon}^d)$ time. Thus, the overall asymptotic time complexity of OGD is $\mathcal{O}(d C O_{\epsilon}^d)$.

\subsubsection{Online Mirror Descent}\label{section:online_algorithms:md:descent_methods:omd}

Thus, concerning regret, where algorithms seek to minimize the hitting cost immediately, the smoothing property of SCO does not require the development of new algorithms. In contrast, competitive algorithms need to wait before moving to the minimizer of the hitting cost until the movement costs are amortized. In other words, in each step, a competitive algorithm has to decide how far to move into the direction of the minimizer to balance hitting cost and movement cost. Crucially, where to move depends on the geometry of the cost function. As is shown in \cref{fig:level_sets_of_the_hitting_costs}, rather than moving towards the minimizer directly, it is advantageous to move to a projection onto some sub-level set of the cost function. This approach minimizes the movement costs that are required to reach a point with the same hitting cost. This approach can balance hitting costs and movement costs and is discussed in \cref{section:online_algorithms:md:descent_methods:obd}, where we introduce the online balanced descent framework.

\begin{figure}
    \centering
    \input{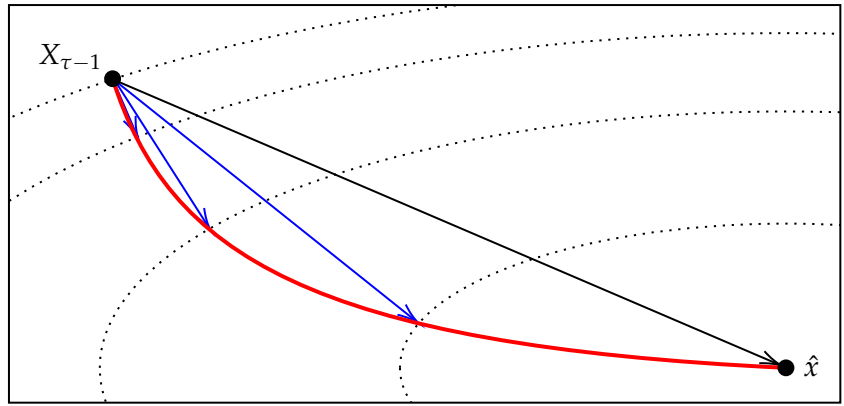}
    \caption{Level sets of $f_{\tau}$ in two dimensions. The blue arrows show projections of $X_{\tau-1}$ onto some level set. The red line visualizes the projection of $X_{\tau-1}$ onto all level sets $\{x \in \mathbb{R}^2 \mid f_{\tau}(x) = l\}$ for $l \in [\hat{x}, f_{\tau}(X_{\tau-1})]$. The step in the direction of the minimizer of $f_{\tau}$ is shown in black. Note that it is not optimal to move directly in the direction of the minimizer as there likely exists a closer point on the same level set. Online balanced descent picks a point on the red line.}
    \label{fig:level_sets_of_the_hitting_costs}
\end{figure}

The mirror descent framework is an extension of gradient descent, allowing to adapt to the underlying ``geometry'' of a problem~\cite{Gupta2020}. The original gradient descent algorithm uses Euclidean geometry. This is shown by a slightly modified form of its update rule: \begin{align*}
    X_{\tau} = \argmin_{x \in \mathcal{X}} \eta_{\tau-1} \langle\nabla f_{\tau-1}(X_{\tau-1}), x\rangle + \frac{1}{2} \norm{x - X_{\tau-1}}_2^2.
\end{align*} Observe that OGD uses the squared Euclidean distance as a regularizer (i.e., a function ensuring that we remain close to the point $X_{\tau-1}$) which can be replaced by another distance to obtain different algorithms~\cite{Gupta2020}.

\paragraph{Proximal Point View} The Bregman divergence is a commonly used class of distance functions. Given a strictly convex \emph{distance-generating function}\index{distance-generating function} $h$, the Bregman divergence measures the deviation of $h$ from its linear approximation.

\begin{definition}\index{Bregman divergence}
\cite{Chen2018} The Bregman divergence from a point $x$ to a point $y$ with respect to a strictly convex function $h$ is given as \begin{align*}
    D_h(x,y) = h(x) - h(y) - \langle\nabla h(y), x - y\rangle.
\end{align*}
\end{definition}

\begin{figure}
    \centering
    \input{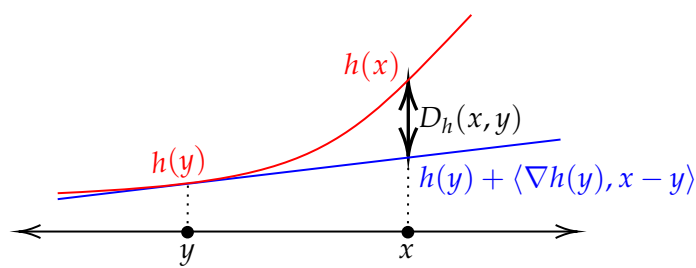}
    \caption{Bregman Divergence $D_h(x,y)$ for a function $h : \mathbb{R} \to \mathbb{R}$ \cite{Chen2018}. The Bregman divergence measures how much a function differs at $x$ from its linear approximation at $y$.}
    \label{fig:bregman_divergence}
\end{figure}

The definition of the Bregman divergence of a univariate function $h$ is visualized in \cref{fig:bregman_divergence}.

The modified variant of OGD, which uses a Bregman divergence as a regularizer, is known as \emph{online mirror descent}\index{mirror descent} (OMD) or online proximal gradient descent. Note that OMD is parametrized by $h$, which is used to describe the underlying geometry of the problem~\cite{Chen2018}.

For the function $h(x) = \frac{1}{2} \norm{x}_2^2$ from $\mathbb{R}^d$ to $\mathbb{R}$ (the \emph{squared $\ell_2$ norm}\index{squared $\ell_2$ norm}), the Bregman divergence is the Euclidean distance, i.e. $D_h(x,y) = \frac{1}{2} \norm{x-y}_2^2$~\cite{Chen2018}. Hence, OMD reduces to OGD if it is parametrized with the provided definition of $h$, i.e., a Euclidean geometry is used.

\paragraph{Mirror Map View} The proximal point view yields just one perspective of mirror descent. Another perspective that is used frequently is the perspective of mirror maps.

Recall that in OGD, starting from some point $X_{\tau-1}$, we moved into the direction of the gradient of $f$. However, note that $\nabla f_{\tau-1}(X_{\tau-1})$ belongs to the dual space\footnote{The \emph{dual space}\index{dual space} of a vector space $V$ over some field $\mathbb{F}$ is the set of all linear maps from vectors in $V$ to scalars in $\mathbb{F}$ (which are called \emph{linear functionals}\index{linear functionals})~\cite{Wadsley2015}} of $\mathbb{R}^d$. In the Euclidean space, this is not a problem as the dual space of the Euclidean space is the Euclidean space itself~\cite{Gupta2020}. However, when working with normed spaces that are not self-dual, this is problematic.

Instead of adding elements from the dual space to elements from the primal space, mirror descent maps points from the primal space to the dual space, performs the gradient step in the dual space, and then maps the resulting point back to the primal space.

\begin{figure}
    \centering
    \input{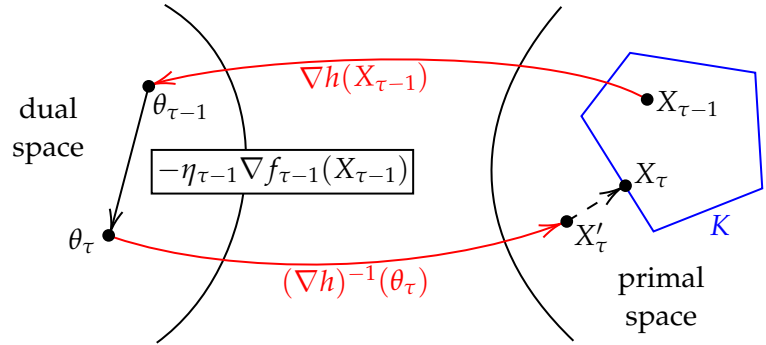}
    \caption{Visualization of a step of Mirror Descent. The previous point $X_{\tau-1}$ is first mapped to the dual space, $\theta_{\tau-1}$. Then, a step is taken into the direction of the gradient, $\theta_{\tau}$, and the resulting point is mapped back to the primal space. Finally, the resulting point $X'_{\tau}$ is projected back onto the feasible region $K$, resulting in the next point $X_{\tau}$ \cite{Gupta2020}.}
    \label{fig:mirror_descent}
\end{figure}

\begin{definition}\index{mirror map}
\cite{Gupta2020} Given a norm $\norm{\cdot}$ and a differentiable and $\alpha$-strongly convex function $h : \mathbb{R}^d \to \mathbb{R}$, the associated mirror map is $\nabla h : \mathbb{R}^d \to \mathbb{R}^d$ and the inverse mirror map is $(\nabla h)^{-1} : \mathbb{R}^d \to \mathbb{R}^d$.
\end{definition}

For $h(x) = \frac{1}{2} \norm{x}_2^2$ the mirror map and its inverse are the identity map~\cite{Gupta2020}. A complete description of the mirror descent framework is given in \cref{alg:md:omd}. Note that the choice of the mirror map is central as it describes the dual space (also called mirror image) where the gradient step is taken. \Cref{fig:mirror_descent} visualizes a step of mirror descent.

\begin{algorithm}
    \caption{Online Mirror Descent~\cite{Gupta2020}}\label{alg:md:omd}
    \SetKwInOut{Input}{Input}
    \Input{$\tau \in \mathbb{N}, K \subset \mathbb{R}^d, \norm{\cdot}, h \in \mathbb{R}^d \to \mathbb{R}$}
    map to the dual space $\theta_{\tau-1} \gets \nabla h (X_{\tau-1})$\;
    take a gradient step in the dual space $\theta_{\tau} \gets \theta_{\tau-1} - \eta_{\tau-1} \nabla f_{\tau-1}(X_{\tau-1})$\;
    map back to the primal space $X'_{\tau} \gets (\nabla h)^{-1}(\theta_{\tau})$\;
    project $X'_{\tau}$ onto a point $X_{\tau} \in K$ using the Bregman projection $X_{\tau} \gets \Pi_K^h(X'_{\tau})$\;
    \Return $X_{\tau}$\;
\end{algorithm}

\begin{definition}\index{Bregman projection}
\cite{Gupta2020} The Bregman projection of a point $x$ onto a convex set $K \subseteq \mathbb{R}^d$ given the distance-generating function $h$ is \begin{align*}
    \Pi_K^h(x) = \argmin_{y \in K} D_h(y,x).
\end{align*}
\end{definition}

Note that when $h$ is the squared $\ell_2$ norm, the Bregman projection is equivalent to the Euclidean projection. Using finite difference methods, the Bregman projection can be computed similarly to the Euclidean projection, $\epsilon$-optimally in $\mathcal{O}(O_{\epsilon}^d)$ time, assuming the runtime of $h$ is constant.

Next, we describe how the ideas from mirror descent are adapted for the smoothed online convex optimization setting, where agents operate with lookahead $1$ and movement costs need to be considered.

\subsubsection{Online Balanced Descent}\label{section:online_algorithms:md:descent_methods:obd}

The online balanced descent (OBD) algorithms that were developed by \citeauthor{Chen2018}~\cite{Chen2018} are a special case of online mirror descent (OMD) with lookahead $1$. In practice, to achieve the one-step lookahead, OBD moves to a point $X_{\tau}$ on a level set of $f_{\tau}(\cdot)$ such that the step is normal to the contour line of $f_{\tau-1}$. In contrast, OMD steps into a direction that is normal to the contour line of $f_{\tau-1}(X_{\tau-1})$. In other words, OMD takes a step with respect to its starting point on some level set, whereas OBD takes a step with respect to its destination on some level set. \Cref{fig:comparison_of_an_update_of_omd_and_obd} shows how these updates compare.

\begin{figure}
    \begin{subfigure}[b]{\textwidth}
    \centering
    \input{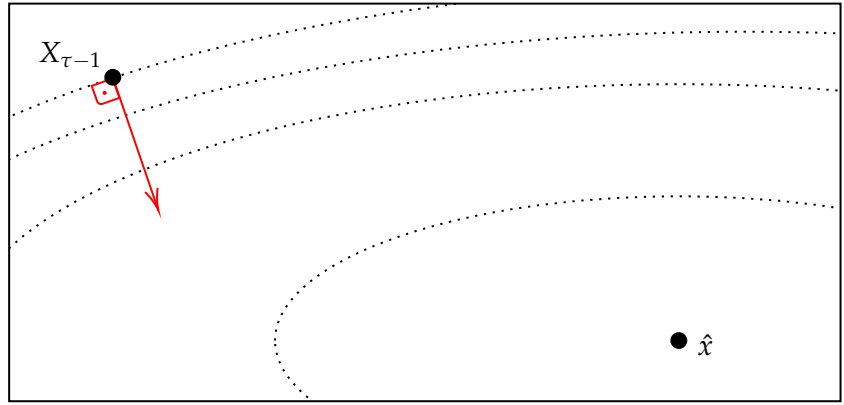}
    \caption{Update of OMD. The contour lines represent level sets of $f_{\tau-1}$.}
    \end{subfigure}
    \par\bigskip
    \begin{subfigure}[b]{\textwidth}
    \centering
    \input{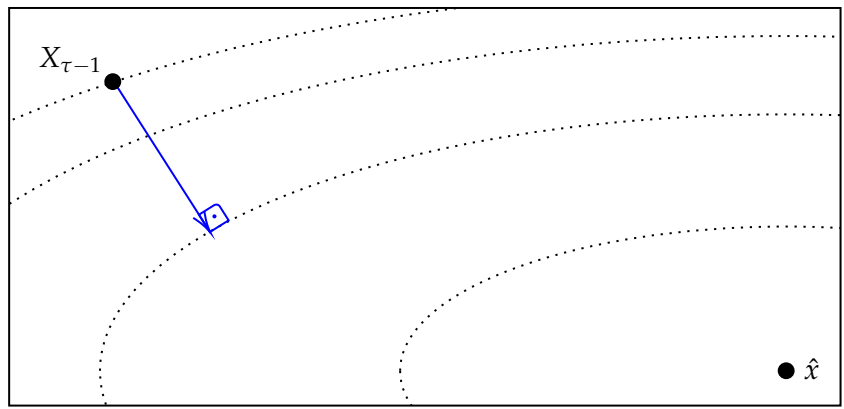}
    \caption{Update of OBD. The contour lines represent level sets of $f_{\tau}$.}
    \end{subfigure}
    \caption{Comparison of an update of OMD and OBD in two dimensions assuming the distance-generating function $h(x) = \frac{1}{2} \norm{x}_2^2$. OMD (red) takes a step in a direction normal to the contour line of $f_{\tau-1}$ at $X_{\tau-1}$. OBD (blue) takes a step in a direction normal to the contour line of $f_{\tau}$ at $X_{\tau}$ \cite{Chen2018}. The step in the direction of the minimizer of $f_{\tau}$ is shown in black. Note that it is not optimal to move in the direction of the minimizer to a point on some level set as there likely exists a closer point on the same level set.}
    \label{fig:comparison_of_an_update_of_omd_and_obd}
\end{figure}

The algorithmic framework for OBD can roughly be divided into two parts. First, the projection of the previous point onto some level set of the cost function.  Second, the strategies to choose the specific level set and geometry to balance hitting costs and movement costs. The algorithm for (1) is also called the \emph{meta} algorithm as it is parametrized with a concrete level set and geometry (i.e., mirror map).

\subsubsection{Meta Algorithm}

Similar to OMD, the meta algorithm of OBD chooses the next point $X_{\tau}$ in the dual space. However, whereas OMD takes an arbitrary step into the direction of the gradient, OBD takes the shortest step onto some sub-level set $K_l = \{x \in \mathcal{X} \mid f_{\tau}(x) \leq l\}$ of the revealed hitting cost $f_{\tau}$. In other words, we seek to find the Bregman projection of $X_{\tau-1}$ onto the sub-level set $K_l$. The first-order condition of the corresponding optimization in the dual space implies that \begin{align}\label{eq:pbd:first_order_condition}
    \nabla h(X_{\tau}) = \nabla h(X_{\tau-1}) - \eta_{\tau} \nabla f_{\tau}(X_{\tau})
\end{align} must be satisfied by $X_{\tau}$ where $\eta_{\tau}$ is the optimal slack of the inequality constraint $f_{\tau}(x) \leq l$~\cite{Chen2018}. Note that this corresponds to a variant of OMD with lookahead $1$.

OBD requires $h$ to be $\alpha$-strongly convex and $\beta$-Lipschitz smooth in the norm $\norm{\cdot}$ that is used to obtain the movement costs. The meta algorithm is described in \cref{alg:md:obd}.

\begin{algorithm}
    \caption{Online Balanced Descent (meta algorithm)~\cite{Chen2018}}\label{alg:md:obd}
    \SetKwInOut{Input}{Input}
    \Input{$\mathcal{I}_{\text{SCO}} = (\tau \in \mathbb{N}, \mathcal{X} \subset \mathbb{R}^d, \norm{\cdot}, (f_1, \dots, f_{\tau}) \in (\mathcal{X} \to \mathbb{R}_{\geq 0})^{\tau}), l \geq 0, \text{distance-generating function } h$}
    $X_{\tau} \gets \Pi_{K_l}^h(X_{\tau-1})$\;
    \Return $X_{\tau}$\;
\end{algorithm}

As an example, we consider the Euclidean space with the $\ell_2$ norm. In this setting $h(x) = \frac{1}{2} \norm{x}_2^2$ is 1-strongly convex and 1-Lipschitz smooth~\cite{Chen2018}. As the corresponding mirror map, $\nabla h$ is the identity map, the first-order condition \cref{eq:pbd:first_order_condition} reduces to \begin{align*}
    X_{\tau} = X_{\tau-1} - \eta_{\tau} \nabla f_{\tau}(X_{\tau})
\end{align*} corresponding to OGD with lookahead $1$.

We can generally choose $h$ to either perform well for the competitive ratio or regret. In their initial paper, \citeauthor{Chen2018}~\cite{Chen2018} propose two algorithms that balance hitting and movement costs in the primal and dual space and perform well concerning the competitive ratio and regret, respectively.

\subsubsection{Primal Algorithm}

\emph{Primal online balanced descent} (P-OBD) balances hitting and movement costs in the primal space. Let $\hat{x} = \argmin_{x \in \mathcal{X}} f_{\tau}(x)$ and $x(l) = \text{Meta-OBD}(\mathcal{I}, l, h) = \Pi_{K_l}^h(X_{\tau-1})$. Given some $\beta > 0$, the balance parameter $l$ is chosen such that a balance condition $g(l) = \norm{x(l) - X_{\tau-1}} \leq \beta l$ is satisfied. More formally, $l$ is chosen such that either $x(l) = \hat{x}$ and $g(l) < \beta l$ or $g(l) = \beta l$ hold~\cite{Chen2018}.

\citeauthor{Chen2018}~\cite{Chen2018} show that the balance function $g(l)$ is continuous in $l$. We observe that $l$ is lower bounded by $f_{\tau}(\hat{x})$. As we assume that $\hat{x}$ is unique, $x(l) = \hat{x}$ iff $l = f_{\tau}(\hat{x})$ and the first condition is fulfilled if and only if $g(f_{\tau}(\hat{x})) < \beta f_{\tau}(\hat{x})$. If the first condition is not satisfied, we can efficiently determine an $l$ fulfilling the second condition using a bracketed root finding method on $g(l) - \beta l$ within the interval $[f_{\tau}(\hat{x}), \gamma]$. Here, it suffices to choose $\gamma$ ``large enough'' to ensure that $g(\gamma) \leq \beta \gamma$. Observe that for $\gamma = f_{\tau}(X_{\tau-1})$, this is trivially satisfied, as $x(\gamma) = X_{\tau-1}$ and as such $g(\gamma) = 0$. The algorithm is described in \cref{alg:md:pobd}.

\begin{algorithm}
    \caption{Primal Online Balanced Descent~\cite{Chen2018}}\label{alg:md:pobd}
    \SetKwInOut{Input}{Input}
    \Input{$\mathcal{I}_{\text{SCO}} = (\tau \in \mathbb{N}, \mathcal{X} \subset \mathbb{R}^d, \norm{\cdot}, (f_1, \dots, f_{\tau}) \in (\mathcal{X} \to \mathbb{R}_{\geq 0})^{\tau}), \beta > 0, \text{distance-generating function } h$}
    $\hat{x} = \argmin_{x \in \mathcal{X}} f_{\tau}(x)$\;
    \If{$g(\hat{x}) \leq \beta f_{\tau}(\hat{x})$}{
        \Return $\hat{x}$\;
    }
    $l \gets $ root of $g(l') - \beta l'$ for $l' \in [f_{\tau}(\hat{x}), f_{\tau}(X_{\tau-1})]$\;
    \Return $\text{Meta-OBD}(\mathcal{I}, l, h)$\;
\end{algorithm}

The balancing is chosen such that the movement cost is upper bounded by the constant $\beta$ times the hitting cost~\cite{Chen2018}.

\citeauthor{Chen2018}~\cite{Chen2018} show that P-OBD attains a competitive ratio of at most $3 + \mathcal{O}(1 / \alpha)$ for some $\beta > 0$, $\alpha$-locally polyhedral cost functions, and the $\ell_2$ norm as movement cost. They also show that local polyhedrality is useful when other norms like the $l_{\infty}$ norm are used. Note that the algorithm is memoryless and, therefore, nearly optimal, as \citeauthor{Bansal2015}~\cite{Bansal2015} showed in the uni-dimensional setting that no memoryless algorithm can attain a better competitive ratio than $3$ (which also holds for locally polyhedral cost functions)~\cite{Chen2018}. When used with the $\ell_1$ norm, which is relevant in the application of right-sizing data centers, P-OBD attains a competitive ratio of $\mathcal{O}(\sqrt{d})$ if $\alpha$ is fixed~\cite{Chen2018}. The memoryless algorithm (\cref{alg:ud:memoryless}) of \citeauthor{Bansal2015}~\cite{Bansal2015} can be seen as a special case of P-OBD for $d = 1$ and $\beta = \frac{1}{2}$~\cite{Chen2018}.

The minimizer can be found $\epsilon$-optimally in $\mathcal{O}(C O_{\epsilon}^d)$ time. Assuming the runtime of $h$ is constant, the balance parameter $l$ can be determined $\epsilon$-optimally in $\mathcal{O}(O_{\epsilon}^d R_{\epsilon})$ time. Overall, P-OBD runs in $\mathcal{O}(C O_{\epsilon}^d + O_{\epsilon}^d R_{\epsilon})$ time.

\subsubsection{Dual Algorithm}

\citeauthor{Chen2018}~\cite{Chen2018} also developed an algorithm called \emph{dual online balanced descent} (D-OBD), which balances the movement cost in the dual space with the gradient of the hitting cost (which is also in the dual space). Before describing the algorithm, we must first describe how the movement cost can be represented in the dual space. We thus introduce the notion of a \emph{dual norm}.

\begin{definition}\index{dual norm}
\cite{Gupta2020} Given some norm $\norm{\cdot}$ on $\mathbb{R}^d$, its dual norm $\norm{\cdot}_*$ is defined as \begin{align*}
    \norm{y}_* = \sup_{x \in \mathbb{R}^d} \{\langle x, y \rangle \mid \norm{x} \leq 1\}.
\end{align*}
\end{definition}

The $\ell_2$ norm is self-dual~\cite{Gupta2020}. In general, the dual norm for a concrete point $y \in \mathbb{R}^d$ can be computed with the following convex optimization: \begin{align*}
    &\max_{x \in \mathbb{R}^d} &&\langle x, y \rangle \\
    &\text{subject to}         &&\norm{x} \leq 1.
\end{align*}

Returning to the description of D-OBD, for some fixed learning rate $\eta$, $l$ is now chosen such that \begin{align*}
    \norm{\nabla h(x(l)) - \nabla h(X_{\tau-1})}_* = \eta \norm{\nabla f_{\tau}(x(l))}_*
\end{align*} holds. Let $g_1(l) = \norm{\nabla h(x(l)) - \nabla h(X_{\tau-1})}_*$ and $g_2(l) = \norm{\nabla f_{\tau}(x(l))}_*$. Again, \citeauthor{Chen2018}~\cite{Chen2018} show that the balance function $\frac{g_1(l)}{g_2(l)}$ is continuous in $l$ under the assumption that $h$ and $f_{\tau}$ are continuously differentiable on $\mathcal{X}$. Similar to our analysis of P-OBD, we observe  that $l$ is lower bounded by $f_{\tau}(\hat{x})$. We can determine $l$ using a bracketed root finding method on $g_1(l) - \eta g_2(l)$ within the interval $[f_{\tau}(\hat{x}), \gamma]$. We observe that for $l = f_{\tau}(\hat{x})$, $g_1(l) \geq \eta g_2(l) = 0$. Therefore, we need to choose $\gamma$ such that $g_1(l) \leq \eta g_2(l)$ is satisfied. Similar to our argument for P-OBD, it suffices to choose $\gamma = f_{\tau}(X_{\tau-1})$, resulting in $x(\gamma) = X_{\tau-1}$, and implying $g_1(\gamma) = 0$. The resulting algorithm is described in \cref{alg:md:dobd}.

\begin{algorithm}
    \caption{Dual Online Balanced Descent~\cite{Chen2018}}\label{alg:md:dobd}
    \SetKwInOut{Input}{Input}
    \Input{$\mathcal{I}_{\text{SCO}} = (\tau \in \mathbb{N}, \mathcal{X} \subset \mathbb{R}^d, \norm{\cdot}, (f_1, \dots, f_{\tau}) \in (\mathcal{X} \to \mathbb{R}_{\geq 0})^{\tau}), \eta > 0, \text{distance-generating function } h$}
    $\hat{x} = \argmin_{x \in \mathcal{X}} f_{\tau}(x)$\;
    $l \gets $ root of $g_1(l') - \eta g_2(l')$ for $l' \in [f_{\tau}(\hat{x}), f_{\tau}(X_{\tau-1})]$\;
    \Return $\text{Meta-OBD}(\mathcal{I}, l, h)$\;
\end{algorithm}

\citeauthor{Chen2018}~\cite{Chen2018} show that the $L$-constrained dynamic regret of D-OBD is upper bounded by $\frac{G L}{\eta} + \frac{T \eta}{2 \alpha}$ where $h$ is $\alpha$-strongly convex in $\norm{\cdot}$, $\norm{\nabla h(x)}_*$ is upper bounded by $G$, and $\nabla h(0) = 0$. When $G$, $L$, and $T$ are known, $\eta$ can be chosen optimally as $\eta = \sqrt{\frac{2 G L \alpha}{T}}$, resulting in an $L$-constrained dynamic regret that is upper bounded by $\sqrt{\frac{2 G L T}{\alpha}}$~\cite{Chen2018}. Further, in this setting, D-OBD achieves static regret $\mathcal{O}(\sqrt{T})$~\cite{Chen2018}.

An evaluation of the dual norm can be computed $\epsilon$-optimally in $\mathcal{O}(O_{\epsilon}^d)$ time. Thus, assuming the runtime of $h$ is constant, $l$ can be found in $\mathcal{O}((O_{\epsilon}^d)^2 R_{\epsilon})$ time. Overall, the asymptotic time complexity of D-OBD is given as $\mathcal{O}(C O_{\epsilon}^d + (O_{\epsilon}^d)^2 R_{\epsilon})$.

\subsubsection{Greedy and Regularized Algorithms}

Later, \citeauthor{Goel2019}~\cite{Goel2019} proposed two additional algorithms using the OBD framework, \emph{greedy online balanced descent} (G-OBD) and \emph{regularized online balanced descent} (R-OBD). Both algorithms yield strong guarantees for the competitive ratio in the setting of squared $\ell_2$ norm movement costs and $\alpha$-strongly convex hitting costs where the optimal competitive ratio is $\mathcal{O}(1 / \sqrt{\alpha})$ as $\alpha$ approaches zero. G-OBD achieves this competitive ratio for quasiconvex\footnote{A function is quasiconvex iff it has a unique global minimum.} hitting costs that are $\alpha$-strongly convex around their minimizer and squared $\ell_2$ norm movement costs. R-OBD achieves this competitive ratio for $\alpha$-strongly convex hitting costs and arbitrary Bregman divergences as movement costs.

Both algorithms take an additional step of size $\mathcal{O}(\sqrt{\alpha})$ towards the minimizer of the hitting cost. G-OBD works by first taking a regular P-OBD step to some level set of the hitting cost. Then, it takes an additional step towards the minimizer of the hitting cost with a step size based on the convexity parameter $\alpha$. In contrast, R-OBD picks the next point by minimizing a weighted sum of hitting and movement costs. It uses an additional regularization term that encourages the algorithm to pick a point closer to the minimizer of the hitting cost~\cite{Goel2019}. We do not discuss G-OBD and R-OBD in more detail, as their theoretical guarantees do not cover the application of right-sizing data centers, but we provide implementations of them.

\section{Predicting}\label{section:online_algorithms:md:predictions}

In practice, we can attempt to use predicted hitting costs to improve the performance of online algorithms. Predicting future incoming loads to a high degree of accuracy in the data-center setting is often possible. Using predicted hitting costs and their uncertainty distributions, online algorithms can make more informed decisions in practice. In this section, we begin by discussing prediction windows. Then, we describe approaches for time-series predictions and end with discussing algorithms that use such predictions.

\subsection{Prediction Window}

A natural model to allow incorporating predictions is the use of a finite prediction window $w$. A prediction window bridges the gap between offline and online algorithms. Whereas an online algorithm only knows the hitting costs $f_t$ for $t \in [\tau]$ and an offline algorithm knows the hitting costs $f_t$ for all $t \in [T]$, an online algorithm with \emph{prediction window}\index{prediction window} of length $w$ knows all hitting costs $f_t$ up to $\tau + w$, i.e. $t \in [\tau + w]$. In other words, the prediction window $w$ represents the number of upcoming time slots at which the algorithm is assumed to have perfect knowledge of the future.

\subsubsection{Lazy Capacity Provisioning with Prediction Window}

\citeauthor{Lin2011}~\cite{Lin2011} extend their algorithm lazy capacity provisioning, which we discussed in \cref{section:online_algorithms:ud:lazy_capacity_provisioning} to support the prediction window by changing the update rule to \begin{align*}
    X_{\tau} = \begin{cases} 
        0 & \tau \leq 0 \\
        (X_{\tau-1})_{X_{\tau+w,\tau}^L}^{X_{\tau+w,\tau}^U} & \tau \geq 1
    \end{cases}
\end{align*}

The optimal schedules now need to obtained for $\tau + w$ rather than $\tau$ time slots. Thus, the time complexity changes to $\mathcal{O}((\tau + w) C O_{\epsilon}^{\tau + w})$ and $\mathcal{O}((\tau + w)^2 C \log_2 m)$ in the fractional and integral case, respectively.

The assumption of perfect knowledge of the future is sure to be violated when an online algorithm is used in practice. Still, \citeauthor{Lin2011}~\cite{Lin2011} show that lazy capacity provisioning with a prediction window is robust to this assumption in practice. \citeauthor{Lin2011}~\cite{Lin2011} and \citeauthor{Albers2018}~\cite{Albers2018} showed that using a finite prediction window does not improve the worst-case performance of the online algorithm for the fractional and integral case, respectively. In other words, the competitive ratio of lazy capacity provisioning is $3$ regardless of whether it uses a finite prediction window. In practice, however, \citeauthor{Lin2011}~\cite{Lin2011} show that a prediction window significantly improves the algorithm's performance.

There are two main drawbacks to using a finite prediction window. First, predictions windows are finite and typically constrained to a short period as they are assumed to be perfect. In contrast, predictions can be made for much longer time horizons, albeit with decreasing accuracy. Second, it completely disregards any knowledge or assumptions of the certainty and noise of the predictions by assuming the predictions to be perfect.

\subsection{Making Predictions}\label{section:online_algorithms:md:predictions:making_predictions}

There exist multiple paradigms for making time-series predictions. Due to much recent engagement in the field of deep learning generally and time-series predictions specifically, multiple approaches perform well in practical settings. Most algorithms separately tune parameters of individual models for short-term and long-term trends as well as seasonality~\cite{Taylor2017, Hosseini2021}.

A fundamental difference between models is Bayesianness, i.e., whether they use an underlying uncertainty distribution within the model. Facebook's Prophet algorithm is Bayesian, whereas LinkedIn's Greykite algorithm is not~\cite{Taylor2017, Hosseini2021}.

For Bayesian models, online algorithms can use the uncertainty distribution to consider outliers appropriately. For non-Bayesian models, additive white Gaussian noise can be added to the prediction to achieve a similar effect. In general, many strategies can be used to obtain a single representative prediction of the underlying distribution. In our experiments, we use the mean prediction to ensure appropriate consideration of outliers. The median or 90th percentile predictions are alternatives that are more robust to outliers.

Note that, in principle, predictions can be made arbitrarily far into the future. However, at some point, they become too uncertain to be valuable. For example, infeasible load profiles may be assigned a positive probability, which would result an infinite cost if we use the mean to obtain a representative prediction, even if all servers are active. Thus, we also use a prediction window, which needs to be set appropriately to account for the uncertainty distribution of the predicted loads.

\subsection{Receding Horizon Control}

\emph{Receding horizon control}\index{receding horizon control} (RHC) (or \emph{model predictive control}) is a methodology for making decisions based on predictions of the future that is commonly used to control data centers~\cite{Lin2012}. In RHC, an agent predicts their action up to some fixed point in time, referred to as the prediction window. Based on this prediction, the agent adjusts their action for the current time slot. In the next time slot, this process repeats~\cite{Zak2017}.

\citeauthor{Lin2012}~\cite{Lin2012} previously investigated the performance of RHC in the context of right-sizing data centers. RHC works by solving a convex optimization from time $\tau$ to time $\tau + w$ starting from the initial configuration $X_{\tau-1}$. We set $X_0 = \mathbf{0}$. Similar to our analysis of capacity provisioning in \cref{section:offline_algorithms:ud:capacity_provisioning}, we describe by $X^{\tau}(X_{\tau-1}) \in \mathcal{X}^{w+1}$ the optimal schedule for times $\tau$ through $\tau+w$. This schedule is obtained by minimizing \begin{align}\label{eq:rhc}
    \sum_{t=\tau}^{\tau+w} f_t(X_t) + \norm{X_t - X_{t-1}}
\end{align} over configurations $X_{\tau}, \dots, X_{\tau+w} \in \mathcal{X}$. This optimization has $\mathcal{O}(d w)$ dimensions and thus can be computed $\epsilon$-optimally in $\mathcal{O}(C O_{\epsilon}^{dw})$ time. Now, RHC simply picks the first predicted action. RHC is described in \cref{alg:predictions:rhc}.

\begin{algorithm}
    \caption{Receding Horizon Control~\cite{Lin2012}}\label{alg:predictions:rhc}
    \SetKwInOut{Input}{Input}
    \Input{$\mathcal{I}_{\text{SSCO}} = (\tau \in \mathbb{N}, m \in \mathbb{N}, \beta \in \mathbb{R}_{>0}, (f_1, \dots, f_{\tau}) \in (\mathbb{R}_{\geq 0} \to \mathbb{R}_{\geq 0})^{\tau})$}
    $X_{\tau} = X_{\tau}^{\tau}(X_{\tau-1})$\;
    \Return $X_{\tau}$\;
\end{algorithm}

\citeauthor{Lin2012}~\cite{Lin2012} prove the competitive ratio of RHC in the application of right-sizing data centers. They show that in the uni-dimensional setting, RHC attains a competitive ratio of $1 + \mathcal{O}(1/w)$ which is strictly better than the optimal competitive ratio (for deterministic algorithms without predictions) of $2$ and $3$ for memoryless algorithms for $w > 1$ and $w > \frac{1}{2}$, respectively. However, in a multi-dimensional setting, RHC is $(1 + \max_{k \in [d]} \beta_k / e_k(0))$-competitive where we defined $\beta_k$ as the switching cost of a server of type $k$ and $e_k(0)$ as the average energy cost of an idling server of type $k$. Importantly, this competitive ratio does not depend on the size of the prediction window $w$.

\subsection{Averaging Fixed Horizon Control}

In their paper, \citeauthor{Lin2012}~\cite{Lin2012} present another algorithm, \emph{averaging fixed horizon control} (AFHC), which attains a competitive ratio of $1 + \max_{k \in [d]} \frac{\beta_k}{(w+1) e_k(0)}$. In particular, AFHC is $(1 + \mathcal{O}(1/w))$-competitive. However, \citeauthor{Lin2012}~\cite{Lin2012} find that in many realistic settings, RHC performs better than AFHC.

At time $\tau$, AFHC works by performing $w + 1$ individual RHC steps starting from $t_0 = \tau-w$ up to $t_0 = \tau$ and averaging the results. Each individual step is also referred to as an iteration of \emph{fixed horizon control} (FHC).

We describe the sub-iterations of AFHC using $k \in [w+1]$. We set $t_0 = \tau+k-(w+1)$, ensuring that $t_0 \in [\tau-w,\tau]$. We denote by $X^{t_0}(X_{t_0-1}^{(k)})$ the optimal schedule for times $t_0$ through $t_0+w$ which is obtained analogously to \cref{eq:rhc}. We also set $X_t = \mathbf{0}$ and $X_t^{(k)} = \mathbf{0}$ for all $t \leq 0$ and $k \in [w+1]$. AFHC is described in \cref{alg:predictions:afhc}.

\begin{algorithm}
    \caption{Averaging Fixed Horizon Control~\cite{Lin2012}}\label{alg:predictions:afhc}
    \SetKwInOut{Input}{Input}
    \Input{$\mathcal{I}_{\text{SSCO}} = (\tau \in \mathbb{N}, m \in \mathbb{N}, \beta \in \mathbb{R}_{>0}, (f_1, \dots, f_{\tau}) \in (\mathbb{R}_{\geq 0} \to \mathbb{R}_{\geq 0})^{\tau})$}
    \ForEach{$k \in [w+1]$}{
        $t_0 \gets \tau+k-(w+1)$\;
        $X^{(k)} \gets X^{t_0}(X_{t_0-1}^{(k)})$\;
    }
    $X_{\tau} = \frac{1}{w+1} \sum_{k=1}^{w+1} X_{\tau}^{(k)}$\;
    \Return $X_{\tau}$\;
\end{algorithm}

Intuitively, AFHC can be interpreted as performing $w+1$ FHC-steps in parallel, where each FHC-step starts from a different $t_0 \in [\tau-w,\tau]$, and then averaging all configurations for time $\tau$. Note that RHC is equivalent to the last FHC-step with initial time $t_0 = \tau$, i.e., $k = w+1$. The asymptotic time complexity of AFHC is given as $\mathcal{O}(w C O_{\epsilon}^{dw})$.

\citeauthor{Chen2015}~\cite{Chen2015} show that AFHC achieves sublinear regret and a constant competitive ratio using a prediction window of constant length. \citeauthor{Badiei2015}~\cite{Badiei2015} introduce a class of ``forward-looking'' algorithms that can consider cost functions within some prediction window but are only allowed to use a constant limited number of past cost functions. They show that among these algorithms, AFHC achieves optimal regret.

In~\cite{Chen2016}, \citeauthor{Chen2016} generalize RHC and AFHC to a class of algorithms called \emph{committed horizon control} (CHC), which consist of $v \in [w+1]$ sub-iterations of FHC. Note that RHC corresponds to CHC with parameter $v = 1$, whereas AFHC corresponds to CHC with parameter $v = w+1$. They investigate how $v$ can be chosen optimally based on the noise distribution of predictions.

\citeauthor{Lin2019}~\cite{Lin2019} extend AFHC to a new algorithm called \emph{synchronized fixed horizon control} (SFHC), which is $(1 + \mathcal{O}(1/w))$-competitive for both convex and non-convex cost functions. \citeauthor{Li2018}~\cite{Li2018} propose two new gradient-based online algorithms, \emph{receding horizon gradient descent} (RHGD) and \emph{receding horizon accelerated gradient} (RHAG), and show that the dynamic regret of RHAG is near-optimal when compared to a class of online algorithms that includes CHC.


\chapter{Implementation}\label{chapter:implementation}

As part of our work, we implemented the algorithms described in \cref{chapter:offline_algorithms} and \cref{chapter:online_algorithms}. Our implementation is written in Rust and has Python bindings to interact with some components~\cite{Huebotter2021_2, Huebotter2021_3}. Detailed documentation of the implementation is available~\cite{Huebotter2021_4}. In this chapter, we discuss the general architecture and the points of focus of our implementation.

\section{Architecture}\label{section:implementation:architecture}

Our implementation is separated into four main components. First, a component that encompasses the implementation of the general problems described in \cref{chapter:theory}. Second, a related component containing the implementations of the offline and online algorithms from \cref{chapter:offline_algorithms} and \cref{chapter:online_algorithms}, respectively. Third, a component including abstractions to generate and update problem instances over time (referred to as \emph{models}\index{models}). This includes the models of data centers described in \cref{chapter:application}. Fourth, a component that implements the algorithms' practical use. Among other things, this component is used to execute online algorithms in real-time by sequentially updating the underlying problem instance as new information becomes available (referred to as \emph{streaming}\index{streaming} an online algorithm).

\subsection{Problems and Reductions}

This component includes data structures representing instances of the problems introduced in \cref{chapter:theory}. Note-worthy aspects are the definition of norms for SCO and hitting costs for SCO, SSCO, and SBLO. We also implemented the reductions from SLO to SBLO, SBLO to SSCO, and SSCO to SCO.

\paragraph{Norms} Our implementation includes the Manhattan norm, Euclidean norm, Mahalanobis norm, and a dimension-dependently scaled variant of the Manhattan norm, which is used in the reduction from SSCO to SCO. We also include implementations of the square of a norm as well as the dual of a norm.

\paragraph{Hitting Costs} The primary data structure used across SCO, SSCO, and SBLO is the definition of hitting costs. Here, our implementation has to allow for sequentially arriving subsets of the hitting cost. Consider the example of an online algorithm with a prediction window $w$. Then, at each time slot $\tau$, the algorithm expects to receive a hitting cost with domain $[\tau : \tau + w]$ which conflicts with previous and subsequent hitting costs. To complicate the matter further, offline algorithms only require a single hitting cost covering the entire domain $[T]$.

In our implementation, we refer to these hitting costs, which may only cover definitions of the hitting cost for a subset of all time slots as \emph{single hitting costs}\index{single hitting cost}. In the online setting, the domain of a single hitting cost may cover future time slots, in which case the hitting cost is uncertain. From our discussion of predictions in \cref{section:online_algorithms:md:predictions:making_predictions}, it naturally follows to model the distribution of hitting costs for a future time slot as a vector of samples that can then be used to estimate the density. Thus, a single cost function arriving at time slot $\tau$ is formally described as a function $\chi_{\tau} : [T] \times \mathcal{S} \to \bigcup_{n=1}^{\infty} \mathbb{R}_{\geq 0}^n$ mapping a time slot $t \in [T]$ and some point $x$ in the space $\mathcal{S}$ to a prediction with a varying sample size $n$. In the case of a certain prediction, the sample size is $1$. For SCO and SSCO we set $\mathcal{S} = \mathcal{X}$ while for SBLO we set $\mathcal{S} = \mathbb{R}$.

We implement the hitting cost $f_t(x)$ using a B-Tree-Map\footnote{A B-Tree-Map is a map based on a B-Tree balancing cache-efficiency and search performance~\cite{BTreeMap}} mapping the time slot of their arrival to single hitting costs. Crucially, there need not be a single hitting cost for every time slot, as the example of an offline algorithm illustrates. The concrete single hitting cost is determined by choosing the last single hitting cost arriving during a time slot in $[t]$. In an online setting, this ensures that if $t$ is in the future, the current single hitting cost is used, and if $t$ is in the past, the single hitting cost from time slot $t$ (or the closest previous single hitting cost) is used.

\subsection{Algorithms}

The algorithms component encompasses the implementation of all offline and online algorithms described in \cref{chapter:offline_algorithms} and \cref{chapter:online_algorithms}. For both classes of algorithms, we define a common interface.

\paragraph{Offline} An offline algorithm receives as input a problem instance and some algorithm-specific options. It returns an arbitrary data structure that can be used to obtain the determined schedule. In our implementation, we generalized the algorithms slightly so as to support the following uses:

First, we adapted the graph-based algorithms for SSCO to support inverted movement costs, i.e., paying the switching cost for shutting down a server rather than powering up a server. Second, we adapted all algorithms to allow for computing the $\alpha$-unfair optimal offline solution where movement costs are scaled by a factor $\alpha$. Third, we adapted the general algorithm for the multi-dimensional fractional case to support computing the $L$-constrained optimal offline solution where movement costs are upper-bounded by the constant $L$. The $L$-constrained optimal offline solution cannot easily be computed using the graph-based approaches for the integral case as this additional constraint prevents the application of Bellman's optimality principle to find a shortest path using dynamic programming. Fourth, we extended the implementations to allow starting from an arbitrary initial time slot. Fifth, we adapted the uni-dimensional optimal graph search (\cref{alg:ud:optimal_graph_search}) to support arbitrary initial configurations in the first time slot.

Some of these extensions are required to implement some online algorithms, and others were merely added for analysis purposes. We also provide functions to compute the static fractional and integral optima.

\paragraph{Online} An online algorithm receives as input a problem instance, the current time slot $\tau$, the schedule for all previous time slots, some algorithm-specific memory, and some algorithm-specific options. The algorithm returns the configuration for time slot $\tau$ as well as the updated memory. The practical use of online algorithms is described in greater detail in \cref{section:implementation:architecture:streaming}.

\subsection{Models}

In practice, it is not very useful to generate and update the problem instances directly. In the application of right-sizing data centers, we have discussed models of operating costs and switching costs in detail in \cref{chapter:application}. These models can be used to generate instances of the problems discussed in \cref{chapter:theory}. This approach is not limited to the application of right-sizing data centers. A model is a data structure with associated functions to produce an associated problem instance and update an existing problem instance online. As seen in the example of right-sizing data centers, these generators may require additional inputs, which are provided as the load profiles for each time slot. We refer to the inputs required for the initial generation of a problem instance as \emph{offline inputs}\index{offline inputs} and the inputs required to update a problem instance online as \emph{online inputs}\index{online inputs}. Offline inputs encapsulate information for all time slots with respect to some time horizon $[T]$ whereas online inputs for some time slot $\tau$ encapsulate information for the current time slot $\tau$ as well as all time slots in the prediction window $[\tau + 1 : \tau + w]$.

In the application of right-sizing data centers, an offline input is a vector of load profiles $\mathcal{I} = (\lambda_1, \dots, \lambda_T)$ with $\lambda_t \in \mathbb{N}_0^e$ for all $t \in [T]$. In contrast, an online input is a vector of predicted load profiles $\mathcal{I} = (\lambda_{\tau}, \mathcal{P}_{\tau + 1}, \dots, \mathcal{P}_{\tau + w})$ with $\lambda_{\tau} \in \mathbb{N}_0^e$ and where $\mathcal{P}_t \in \left(\bigcup_{n=1}^{\infty} \mathbb{N}_0^n\right)^e$ is a \emph{predicted load profile}\index{predicted load profile}, i.e., a vector of predicted loads for each job type. Similar to our implementation of single hitting costs, we use a varying number of $n$ samples to describe the distribution of predicted loads. The load profile for the current time slot $\tau$ is certain.

We observe that given a predicted load profile $\mathcal{P}_t$ with $n_i$ samples of loads with type $i \in [e]$, the sampled load profiles are all combinations of the samples of individual loads, resulting in $\prod_{i=1}^e n_i$ sampled load profiles. To reduce the number of considered samples, we randomly select $n$ sampled loads of each type to produce $n$ load profile samples. This approach is reasonable as the samples of the individual loads were only an approximate prediction, to begin with.

Another core element of a model are \emph{model outputs}\index{model outputs}. These are results that are returned from the model, which reach beyond simply some cost. For our data center model, we implemented outputs to return the energy cost instead of the revenue loss and the determined optimal assignment of jobs to server types.

The implementation of our data center model consists of separate models for energy consumption, energy cost, delay, revenue loss, and switching cost, which can be combined in various ways to obtain a concrete model of a data center or a network of data centers. For the generation of instances of SBLO, the model must be limited to a single location, source, and job type. For the generation of instances of SLO, the model further assumes full utilization of every active server during each time slot and averages the energy cost over the time horizon.

Crucially, we assume in our model that job arrivals fall precisely onto the beginning of new time slots. Without this approximation, we cannot ensure that all arriving jobs can be completed by the end of the time slot. In practice, this requires matching jobs to time slots either by delaying them to the next time slots (in which case they are available from the beginning) or simply assuming they are available from the beginning of the current time slot.

\subsection{Streaming and Practical Use}\label{section:implementation:architecture:streaming}

Using an offline algorithm is relatively straightforward. Given a model and offline inputs, we can generate a problem instance that the offline algorithm can then solve. With online algorithms, this becomes more involved. For example, some uses require the immediate streaming of an online algorithm from an initial to a final time slot. This is required by the Lazy Budgeting algorithm (\cref{alg:md:lazy_budgeting:sblo_c}), which streams an intermediate online algorithm for the sub time slots of the current time slot.

However, in general, the challenge in streaming an online algorithm is to remember the algorithm's state with infrequent incoming iterations (once per time slot). To this end, we implemented a simple client-server architecture. Here, the server runs in the background and remembers the problem instance and state of the algorithm. Whenever new information arrives, the client requests the next iteration from the server. On initialization of the server, the online algorithm can be streamed up to the current time slot given previous offline inputs. This architecture has two main benefits: First, the responsibility for memorizing the problem instance and state of the algorithm is offloaded entirely to the server. Second, the initialization of the server only requires a model and (optionally) an offline input. Requesting the next iteration only requires the relevant online input. In particular, the problem instance remains opaque, drastically simplifying the interaction with the interface and reducing the amount of data that needs to be remembered, and serialized, sent, and deserialized from the client to the server. This simplification also makes it feasible to stream an online algorithm relying entirely on Python bindings.

\section{Focuses}

We now turn to describe more general aspects of our implementation.

\subsection{Adaptability}

A central point of focus was to maximize the adaptability and general applicability of our implementation. More concretely, we focused on minimizing the number of cross-dependencies between the components described in \cref{section:implementation:architecture} to maximize the reusability across all components of the architecture. For example, an entirely new application can be supported solely by implementing a new model. Once a new algorithm is implemented, it can immediately be substituted for any other algorithm using any previously implemented model while benefiting from the infrastructure built around streaming online algorithms. This allows for faster prototyping of new algorithms while assessing their practical performance, particularly when using predictions. Also, this allows determining which algorithm performs best in a concrete application instance.

Moreover, using the client-server architecture, algorithms for smoothed online convex optimization can be utilized without much effort in any practical application. As the implementation is in Rust, it can be interfaced with any other programming language. For our case studies in \cref{chapter:case_studies}, we already implemented Python bindings~\cite{Huebotter2021_3}.

\subsection{Efficiency}

Next to the strong safety guarantees of Rust, the main benefit of an implementation in Rust is its memory efficiency and speed. Rust achieves similar performance to C or C++~\cite{Benchmarksgame, Rust, Perkel2020}. At the same time, Rust is readable and allows for high-level abstractions and polymorphism, which we use heavily to maximize adaptability.

In our implementation, we heavily parallelize tasks to achieve optimal performance~\cite{Matsakis2015}. For example, we parallelize calculating hitting costs for multiple predicted samples or determining the optimal predecessor of a vertice in iterations of the dynamic program finding a shortest path in a graph.

\subsection{Numeric Computations and Rounding}

In implementations involving numeric computations to some precision $\epsilon$ and frequent flooring or ceiling operations, it is essential to round numeric results to precision; otherwise, the results are not numerically stable. In our experimens, we use a precision of $\epsilon = 10^{-2}$, which performs well when results are rounded to integral solutions. For example, if the result of some numeric computation with precision epsilon is $10^{-3}$, which is then ceiled, we would falsely obtain $1$ as the result if we did not apply the precision to the result before proceeding with the algorithm.

For convex programs, we additionally interpret $\epsilon$ as relative to the absolute value of the optimization to maintain good performance.

\chapter{Case Studies}\label{chapter:case_studies}

This chapter examines the performance of the previously described models and algorithms using real server traces. Thereby we focus on two aspects: First, we are interested in how well the discussed algorithms compare in absolute terms and relative to each other. Second, we are interested in the general promise of dynamically right-sizing data centers, which we study by conservatively estimating cost savings and relating them to previous research.

\section{Method}

First, we describe our experimental setup. We begin with a detailed discussion of the characteristics of the server traces, which we use as a basis for our analysis. Then, we examine the underlying assumptions of our analysis. This is followed by a discussion of alternative approaches to right-sizing data centers, which we use as a foundation for estimating the cost savings resulting from dynamic right-sizing of data centers. Next, we describe the general model parameters we use in our analysis and relate them to previous research. Lastly, we introduce the precise performance metrics used in the subsequent sections.

Throughout our experiments, we seek to determine conservative approximations for the resulting performance and cost savings. Our experimental results were obtained on a machine with 16 GB memory and an Intel Core i7-8550U CPU with a base clock rate of 1.80GHz.

\subsection{Traces}\label{section:case_studies:method:traces}

We use several traces with varying characteristics for our experiments. Some traces are from clusters rather than individual data centers. However, to simplify our analysis, we assume traces apply to a single data center without restricting the considered server architectures.

\citeauthor{Amvrosiadis2018}~\cite{Amvrosiadis2018} showed that the characteristics of traces vary drastically even within a single trace when different subsets are considered individually. Their observation shows that it is crucial to examine long server traces and various server traces from different sources to gain a proper perspective of real-world performance.

\subsubsection{Characteristics}

To understand the varying effectiveness of dynamic right-sizing for the considered traces, we first analyze the properties of the given traces.

The most immediate and fundamental properties of a trace are its duration, the number of appearing jobs, the number of job types, and the underlying server infrastructure -- especially whether this infrastructure is homogeneous or heterogeneous.

Then, we also consider several more specific characteristics. The \emph{interarrival time}\index{interarrival time} (or submission rate) of jobs is the distribution of times between job arrivals. This distribution indicates the average system load as well as load uniformity. The \emph{peak-to-mean ratio (PMR)}\index{peak-to-mean ratio} is defined as the ratio of the maximum load and the mean load. It is a good indicator of the uniformity of loads. We refer to time slots as \emph{peaks}\index{peak load} when their load is greater than the 0.9-quantile of loads. We call the ratio of the 0.9-quantile of loads and the mean load \emph{true peak-to-mean-ratio (TPMR)}\index{true peak-to-mean ratio} as it is less sensitive to outliers than the PMR. We refer to periods between peaks as \emph{valleys}\index{valley}. More concretely, we refer to the time between two consecutive peaks as \emph{peak distance}\index{peak distance} and the number of consecutive time slots up to a time slot with a smaller load as \emph{valley length}\index{valley length}. Further, we say that a trace follows a \emph{diurnal pattern}\index{diurnal pattern} if during every 24 hours, excluding the final day, there is at least one valley spanning 12 hours or more. We exclude the final day as the final valley might be shortened by the end of the trace.

We also consider some additional information included in some traces, such as the measured scheduling rate (or queuing delay), an indicator for utilization.

\subsubsection{Overview}

We now give an overview of all used traces. For our initial analysis, we use a time slot length of 10 minutes.

\paragraph{MapReduce\footnote{MapReduce is a programming model for processing and generating large data sets in a functional style~\cite{Dean2004}} Workload from a Hadoop\footnote{Apache Hadoop is an open-source software for managing clusters} Cluster at Facebook~\cite{SWIM2013}} This trace encompasses three day-long traces from 2009 and 2010, extracted from a 6-month and a 1.5-month-long trace containing 1 million homogeneous jobs each. The traces are visualized in \cref{fig:facebook:histogram} and summarized in \cref{tab:facebook}. The cluster consists of 600 machines which we assume to be homogeneous. For the trace from 2010, we adjust the maximum number of servers to 1000 as otherwise the trace is infeasible under our models. \Cref{fig:facebook:schedule} visualizes the corresponding dynamic and static offline optimal schedules under our second model (which is described in \cref{section:case_studies:traces:model-parameters}). The trace was published by \citeauthor{SWIM2013}~\cite{SWIM2013} as part of the SWIM project at UC Berkeley.

\begin{figure}
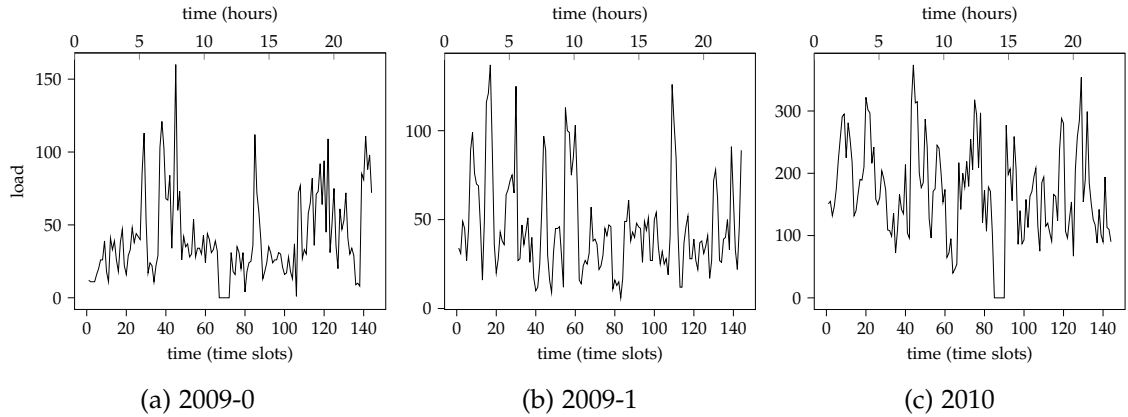

    \begin{subfigure}[b]{.3425\linewidth}
    \resizebox{\textwidth}{!}{\input{thesis/figures/facebook_2009_0_histogram.tex}}
    \caption{2009-0}
    \end{subfigure}
    \begin{subfigure}[b]{.32\linewidth}
    \resizebox{\textwidth}{!}{\input{thesis/figures/facebook_2009_1_histogram.tex}}
    \caption{2009-1}
    \end{subfigure}
    \begin{subfigure}[b]{.32\linewidth}
    \resizebox{\textwidth}{!}{\input{thesis/figures/facebook_2010_histogram.tex}}
    \caption{2010}
    \end{subfigure}
    \caption{Facebook MapReduce workloads.}
    \label{fig:facebook:histogram}
\end{figure}

\begin{figure}
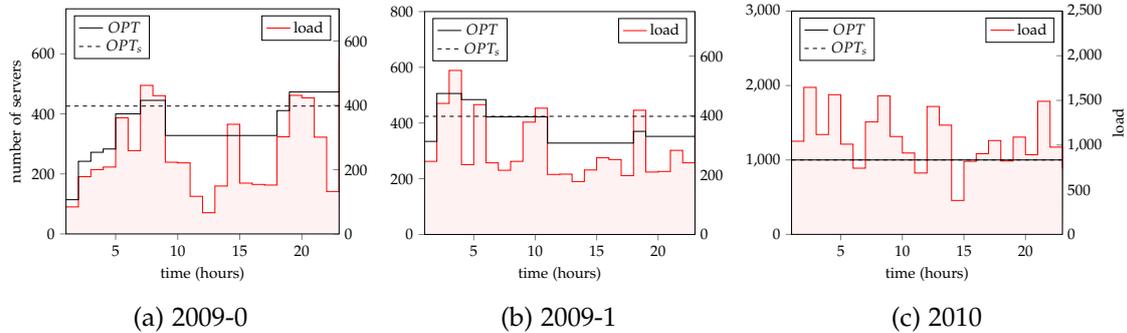

    \begin{subfigure}[b]{.33\linewidth}
    \resizebox{\textwidth}{!}{\input{thesis/figures/facebook_2009_0_schedule.tex}}
    \caption{2009-0}
    \end{subfigure}
    \begin{subfigure}[b]{.3075\linewidth}
    \resizebox{\textwidth}{!}{\input{thesis/figures/facebook_2009_1_schedule}}
    \caption{2009-1}
    \end{subfigure}
    \begin{subfigure}[b]{.3475\linewidth}
    \resizebox{\textwidth}{!}{\input{thesis/figures/facebook_2010_schedule}}
    \caption{2010}
    \end{subfigure}
    \caption{Optimal dynamic and static offline schedules for the last day of the Facebook workloads. The left y axis shows the number of servers of the static and dynamic offline optima at a given time (black). The right y axis shows the number of jobs (i.e., the load) at a given time (red).}
    \label{fig:facebook:schedule}
\end{figure}

\begin{table}
    \centering
    \begin{tabularx}{\textwidth}{>{\bfseries}l|X|X|X}
        characteristic & 2009-0 & 2009-1 & 2010 \\\hline
        duration & 1 day & 1 day & 1 day \\
        number of jobs & 6 thousand & 7 thousand & 24 thousand \\
        median interarrival time & 7 seconds & 7 seconds & 2 seconds \\
        PMR & 3.91 & 2.97 & 2.2 \\
        TPMR & 2.04 & 1.93 & 1.69 \\
        mean peak distance & 95 minutes & 106 minutes & 87 minutes \\
        mean valley length & 44 minutes & 36 minutes & 35 minutes \\
    \caption{Characteristics of Facebook's MapReduce workloads.}
    \end{tabularx}
    \label{tab:facebook}
\end{table}

\paragraph{Los Alamos National Lab HPC Traces~\cite{Amvrosiadis2018_3, Amvrosiadis2018, Amvrosiadis2018_2}} This trace comprises two separate traces from high-performance computing clusters from Los Alamos National Lab (LANL). The traces were published by \citeauthor{Amvrosiadis2018}~\cite{Amvrosiadis2018} as part of the Atlas project at Carnegie Mellon University.

The first trace is from the Mustang cluster, a general-purpose cluster consisting of 1600 homogeneous servers. Jobs were assigned to entire servers. The dataset covers 61 months from October 2011 to November 2016 and is shown in \cref{fig:los_alamos:histogram}. Note that the PMR is large at $622$ due to some outliers in the data. The median job duration is roughly 7 minutes, although the trace includes some extremely long-running outliers, resulting in a mean job duration of over 2.5 hours. In the trace, jobs were assigned to one or multiple servers. To normalize the trace, we consider each job once for each server it was processed on. \Cref{fig:los_alamos:schedule} shows the dynamic and static offline optimal schedules under our second model.

The second trace is from the Trinity supercomputer. This trace is very similar to the Mustang trace but includes an even more significant number of long-running jobs. We, therefore, do not consider this trace in our analysis.

\paragraph{Microsoft Fiddle Trace~\cite{Jeon2019}} This trace consists of deep neural network training workloads on internal servers from Microsoft. The trace was published as part of the Fiddle project from Microsoft Research. The jobs are run on a heterogeneous set of servers which we group based on the number of GPUs of each server. There are 321 servers with two GPUs and 231 servers with eight GPUs. The median job duration is just below 15 minutes. The load profiles are visualized in \cref{fig:microsoft:histogram}.

The CPU utilization of the trace is extremely low, with more than 80\% of servers running with utilization 30\% or less~\cite{Santhanam2019}. However, memory utilization is high, with an average of more than 80\% indicating that overall server utilization is already very high~\cite{Santhanam2019}. Again, the PMR is rather large at 89.43 due to outliers.

In our model, we adjust the runtime of jobs relative to the number of available GPUs in the respective server, i.e., the average runtime of jobs on a 2-GPU-server is four times as long as the average runtime of jobs on an 8-GPU-server. We adjust for the increased energy consumption of a server with eight GPUs by increasing the energy consumption of servers with two GPUs by a factor of 4.2. We also associate a fifteen times higher switching cost with servers with eight GPUs.

The dynamic and static offline optimal schedules under our second model are shown in \cref{fig:microsoft:schedule}. Note that under the given load servers with two GPUs are preferred to servers with eight GPUs when they are only needed for a short period due to their lower switching costs. This might seem counterintuitive at first, as 2-GPU-servers seem to be strictly better than 8-GPU-servers as the operating and switching cost of 8-GPU-servers is worse by a factor greater than four than the respective cost of 2-GPU-servers. However, we assume an average job runtime of 7.5 minutes on 8-GPU-servers as opposed to an average job runtime of 30 minutes on 2-GPU-servers (a factor of four), implying that 8-GPU-servers can process more than four jobs in an hour without a significant increase in delay, whereas 2-GPU-servers are limited to one job per time slot.

\paragraph{Alibaba Trace~\cite{Alibaba2018}} This trace consists of a mixture of long-running applications and batch jobs. We are using their trace from 2018, covering eight days. The trace is visualized in \cref{fig:alibaba:histogram}, the dynamic and static offline optimal schedules under our second model are shown in \cref{fig:alibaba:schedule}. The jobs are processed on 4000 homogeneous servers. In our models, we assume a total of 10,000 servers to ensure that the number of servers is not a bottleneck. Jobs themselves are grouped into 11 types which we further simplify to 4 types based on their average runtime. We consider \emph{short}, \emph{medium}, \emph{long}, and \emph{very long} jobs. Their average runtime in the trace is shown in \cref{tab:alibaba:job_types}. The mean job duration is just below 15 minutes. The median job duration is 8 seconds, and the mean job duration is just over 1.5 minutes.

\begin{table}
    \centering
    \begin{tabularx}{\textwidth}{>{\bfseries}l|c}
        job type & mean runtime \\\hline
        short & 68 seconds \\
        medium & 196 seconds \\
        long & 534 seconds \\
        very long & 1180 seconds \\
    \caption{Characterization of the job types of the Alibaba trace.}
    \end{tabularx}
    \label{tab:alibaba:job_types}
\end{table}

Data from a previous trace indicates that mean CPU utilization varies between 10\% and 40\% while mean memory utilization varies between 40\% and 65\%~\cite{Lu2017}. This indicates that the overall server utilization is not optimal.

In our model, we scale job runtimes by a factor of 2.5 from short to very long jobs, roughly matching the runtimes of jobs from the trace.

\begin{figure}
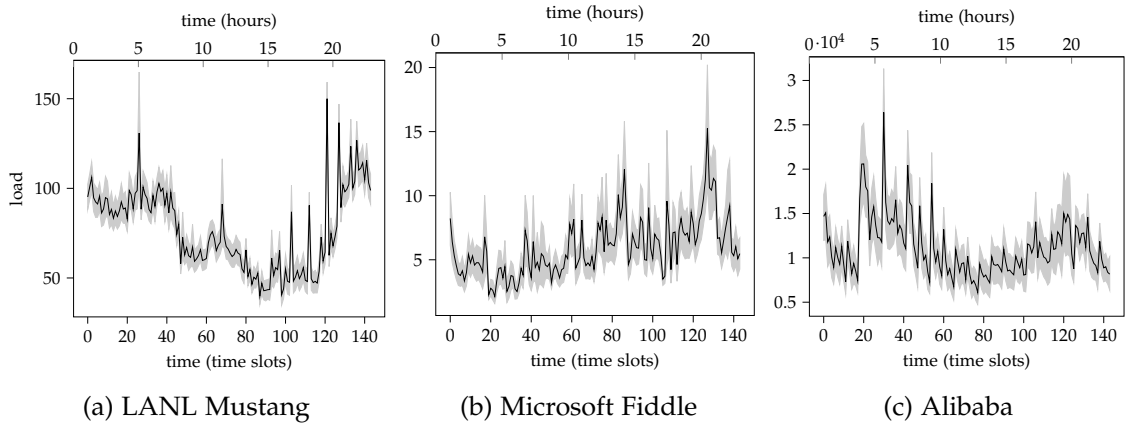

    \begin{subfigure}[b]{.3425\linewidth}
    \resizebox{\textwidth}{!}{\input{thesis/figures/los_alamos_mustang_histogram.tex}}
    \caption{LANL Mustang}\label{fig:los_alamos:histogram}
    \end{subfigure}
    \begin{subfigure}[b]{.32\linewidth}
    \resizebox{\textwidth}{!}{\input{thesis/figures/microsoft_histogram}}
    \caption{Microsoft Fiddle}\label{fig:microsoft:histogram}
    \end{subfigure}
    \begin{subfigure}[b]{.32\linewidth}
    \resizebox{\textwidth}{!}{\input{thesis/figures/alibaba_histogram}}
    \caption{Alibaba}\label{fig:alibaba:histogram}
    \end{subfigure}
    \caption{LANL Mustang, Microsoft Fiddle, and Alibaba traces. The figures display the average number of job arrivals throughout a day. The interquartile range is shown as the shaded region.}
\end{figure}

\begin{figure}
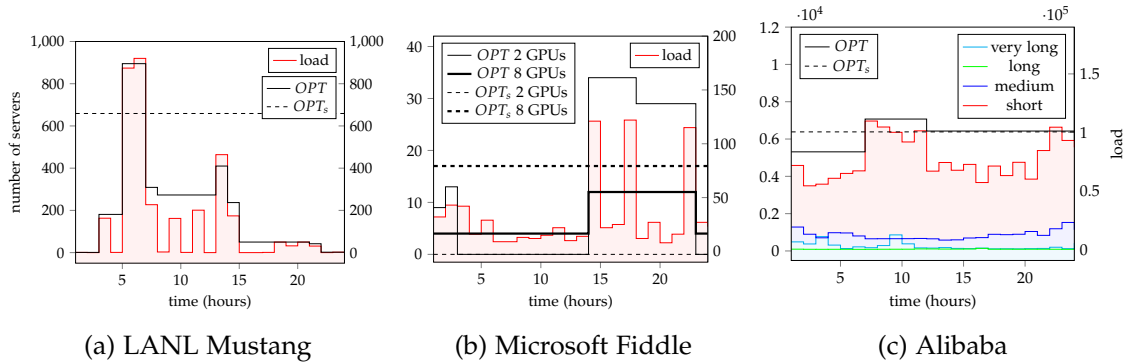

    \begin{subfigure}[b]{.345\linewidth}
    \resizebox{\textwidth}{!}{\input{thesis/figures/los_alamos_mustang_schedule.tex}}
    \caption{LANL Mustang}\label{fig:los_alamos:schedule}
    \end{subfigure}
    \begin{subfigure}[b]{.305\linewidth}
    \resizebox{\textwidth}{!}{\input{thesis/figures/microsoft_schedule}}
    \caption{Microsoft Fiddle}\label{fig:microsoft:schedule}
    \end{subfigure}
    \begin{subfigure}[b]{.335\linewidth}
    \resizebox{\textwidth}{!}{\input{thesis/figures/alibaba_schedule}}
    \caption{Alibaba}\label{fig:alibaba:schedule}
    \end{subfigure}
    \caption{Optimal dynamic and static offline schedules for the last day of the LANL Mustang, Microsoft Fiddle, and the second to last day of the Alibaba trace. The left y axis shows the number of servers of the static and dynamic offline optima at a given time (black). The right y axis shows the number of jobs (i.e., the load) at a given time (red).}
\end{figure}

\paragraph{} We have seen traces from very different real-world use cases. The Microsoft Fiddle trace is based on a heterogeneous server architecture, and the Alibaba trace receives heterogeneous loads. The PMR and valley lengths of the days used in our analysis are shown in \cref{tab:pmr_vl}. Interestingly, as shown in \cref{tab:traces}, their TPMR, peak distances, and valley lengths are mostly similar.

\subsection{Assumptions}

We impose a couple of assumptions to simplify our analysis. First, and already mentioned, our analysis is inherently limited by the traces we used as a basis for our experiments. While we examine a wide variety of traces, the high variability in traces indicates they are a fundamental limitation to any estimation of real-world performance.

Another common limitation of models is that the interarrival times of jobs are on the order of seconds or smaller~\cite{Amvrosiadis2018}. However, this is not a limitation of our analysis as we are using a general Poisson process with an appropriate mean arrival rate in our delay model.

In the context of high-performance computing, jobs typically have a \emph{gang scheduling}\index{gang scheduling} requirement, i.e., a requirement that related jobs are processed simultaneously even though they are run on different hardware~\cite{Amvrosiadis2018}. For simplification, we assume this requirement always to be satisfied. However, this is not a substantial limitation as the scheduling of jobs within a time slot is not determined by the discussed algorithms and instead left to the server operator. Nevertheless, in principle, the gang scheduling requirement may render some schedules infeasible if the processing time on servers exceeds the length of a time slot when gang scheduling constraints are considered.

There are also some limitations resulting from the design of our model. As was mentioned previously, we assume that the jobs arrive at the beginning of a new time slot rather than at random times throughout the time slot. Moreover, we assumed that for every job, a server type exists that can process this job within one time slot. In other words, there exists no job running longer than $\delta$. We have seen in \cref{section:case_studies:method:traces} that this assumption is violated in most practical scenarios. In \cref{section:application:dynamic_duration}, we described how this assumption can be removed. The same approach can also be used to remove the assumption that jobs must arrive at the beginning of a time slot.

\subsection{Alternatives to Right-Sizing Data Centers}\label{section:case_studies:method:alternatives}

To determine the benefit of dynamically right-sizing data centers, we must first describe the alternative strategies to managing a data center. We will then use these approaches as a point of reference in our analysis.

Most data centers are statically provisioned; that is, the configuration of active servers is only changed rarely (often manually) and remains constant during most periods~\cite{Whitney2014}. To support the highest loads, the data centers are peak-provisioned, i.e., the number of servers is chosen such that they suffice to process all jobs even during times where most jobs arrive. Moreover, as a safety measure, data centers are typically provisioned to handle much higher loads than the loads encountered in practice~\cite{Whitney2014}.

\begin{table}
    \centering
    \begin{tabularx}{\textwidth}{>{\bfseries}l|X|X|X}
        characteristic & LANL Mustang & Microsoft Fiddle & Alibaba \\\hline
        duration & 5 years & 30 days & 8 days \\
        number of jobs & 20 million & 120 thousand & 14 million \\
        median interarrival time & 0 seconds & 8 seconds & 0 seconds \\
        PMR & 621.94 & 89.43 & 3.93 \\
        TPMR & 2.5 & 1.68 & 1.77 \\
        mean peak distance & 100 minutes & 105 minutes & 89 minutes \\
        mean valley length & 120 minutes & 115 minutes & 74 minutes \\
        diurnal pattern & yes & - & yes \\
    \caption{Characteristics of the LANL Mustang, Microsoft Fiddle, and Alibaba traces.}
    \end{tabularx}
    \label{tab:traces}
\end{table}

Naturally, traces with a high PMR or long valleys are more likely to benefit from alternatives to static provisioning. Therefore another widely used alternative is \emph{valley filling}\index{valley filling}, which aims to schedule lower priority jobs (i.e., some batch jobs) during valleys. In an ideal scenario, this approach can achieve $\text{PMR} \approx 1$, which would allow for efficient static provisioning. Crucially, this approach requires a large number of low-priority jobs which may be processed with a significant delay (requiring a considerable minimum perceptible delay $\delta_i$ for a large number of jobs of type $i$), and thus in most cases, valleys cannot be eliminated entirely. \citeauthor{Lin2011}~\cite{Lin2011} showed that dynamic-right sizing can be combined with valley filling to achieve a significant cost reduction. The optimal balancing of dynamic right-sizing and valley filling is mainly determined by the change to the PMR. \citeauthor{Lin2011}~\cite{Lin2011} showed that cost savings of 20\% are possible with a PMR of 2 and a PMR of approximately 1.3 can still achieve cost savings of more than 5\%. Generally, the cost reduction vanishes once the PMR approaches $1$, which may happen between 30\% to 70\% mean background load~\cite{Lin2011}. The results when dynamic right-sizing is used together with valley filling can be estimated from previous results.

\subsection{Performance Metrics}

Let $OPT$ denote the dynamic offline optimum and $OPT_s$ denote the static offline optimum. In our analysis, the \emph{normalized cost}\index{normalized cost} of an online algorithm is the ratio of the obtained cost and the dynamic optimal offline cost, i.e. $NC(ALG) = c(ALG) / c(OPT)$. Further, we base our estimated \emph{cost reduction}\index{cost reduction} on an optimal offline static provisioning: \begin{align*}
    CR(ALG) = \frac{c(OPT_s) - c(ALG)}{c(OPT_s)}.
\end{align*} Note that this definition is similar to the definition of regret, but expressed relative to the overall cost. We refer to $SDR = c(OPT_s) / c(OPT)$ as the \emph{static/dynamic ratio}\index{static/dynamic ratio}, which is closely related to the \emph{potential cost reduction}\index{potential cost reduction} $PCR = CR(OPT)$.

\subsection{Previous Results}

\citeauthor{Lin2011}~\cite{Lin2011} showed that the cost reduction is directly proportional to the PMR and inversely proportional to the normalized switching cost. Additionally, \citeauthor{Lin2011}~\cite{Lin2011} showed that, as one would expect, the possible cost reduction decreases as the delay cost assumes a more significant fraction of the overall hitting costs. In practice, this can be understood as the effect of making the model more conservative.

\begin{table}
    \centering
    \begin{tabularx}{\textwidth}{>{\bfseries}l|c|c}
        trace & PMR & mean valley length (hours) \\\hline
        Facebook 2009-0 & 2.115 & 2.565 \\
        Facebook 2009-1 & 1.913 & 1.522 \\
        Facebook 2010 & 1.549 & 1.435 \\
        LANL Mustang & 6.575 & 1.167 \\
        Microsoft Fiddle & 3.822 & 2.125 \\
        Alibaba & 1.339 & 2.792 \\
    \caption{PMR and mean valley length of the traces used in our analysis. Note that the valley lengths are typically shorter than the normalized switching cost of our model.}
    \end{tabularx}
    \label{tab:pmr_vl}
\end{table}

\subsection{Model Parameters}\label{section:case_studies:traces:model-parameters}

We now describe how we parametrized our model in our case studies. In our models, we strive to choose conservative estimates to under-estimate the cost savings from dynamically right-sizing data centers. This approach is similar to the study by \citeauthor{Lin2011}~\cite{Lin2011}. \Cref{tab:model} gives an overview of the used parameters producing the results of subsequent sections.

\paragraph{Energy} We use the linear energy consumption model from \autoref{eq:energy_model:1} in our experiments. In their analysis, \citeauthor{Lin2011}~\cite{Lin2011} choose energy cost and energy consumption such that the fixed energy cost (i.e., the energy cost of a server when idling) is $1$ and the dynamic energy cost is $0$ as, on most servers, the fixed costs dominate the dynamic costs~\cite{Clark2005}. We investigate this model and an alternative model. In the alternative model, we estimate the power consumption of a server with 1 kW during peak loads and with 500 W when idling to yield a conservative estimate (as cooling costs are included). According to the U.S. Energy Information Administration (EIA), the average cost of energy in the industrial sector in the United States during April 2021 was 6.77 cents per kilowatt-hour~\cite{EIA2021}. We use this as a conservative estimate as data centers typically use a more expensive portfolio of energy sources. If the actual carbon cost of the used energy were to be considered, which is the case in some data centers as discussed in \cref{section:application:operating_cost:energy}, energy costs are likely to be substantially higher.

\paragraph{Revenue Loss} According to measurements, a 500 ms increase in delay results in a revenue loss of 20\% or 0.04\%/ms~\cite{Lin2012, Hamilton2009}. Thus, scaling the delay measured in ms by 0.1 can be used as a slight over-approximation of revenue loss. \citeauthor{Lin2011}~\cite{Lin2011} choose the minimal perceptible delay as 1.5 times the time to run a job, which is a very conservative estimate if valley filling is assumed a viable alternative. In our model, we choose the minimal perceptible delay as 2.5 times the time to run a job which is equivalent as we also added the processing time of a job to the delay. In the case of valley filling, jobs are typically processed with a much more significant delay. Similar to \citeauthor{Lin2012}~\cite{Lin2012}, we also estimate a constant network delay of 10 ms.

\paragraph{Switching Cost} We mentioned in \cref{section:application:switching_cost} that in practice, the switching cost should be on the order of operating a server between an hour to several hours. To obtain a conservative estimate, we choose $\beta$ such that the normalized switching cost times the length of a time slot equals 4 hours.

\paragraph{Time Slot Length} We choose a time slot length of 1 hour. We further assume that the average processing time of jobs is $\delta / 2$ unless noted otherwise.

\begin{table}
    \centering
    \begin{tabularx}{\textwidth}{>{\bfseries}l|X|X}
        parameter & model 1 & model 2 \\\hline
        time slot length & 1 hour & 1 hour \\
        energy cost & $c=1$ & $c=0.0677$ \\
        energy consumption & $\Phi_{\text{min}}=1, \Phi_{\text{max}}=1$ & $\Phi_{\text{min}}=0.5, \Phi_{\text{max}}=1$ \\
        revenue loss & $\gamma = 0.1, \delta_i = 2.5 \eta_i$ & $\gamma = 0.1, \delta_i = 2.5 \eta_i$ \\
        normalized switching cost & 4 hours & 4 hours \\
    \caption{Models used in our case studies. $\eta_i$ is the processing time of jobs of type $i$.}
    \end{tabularx}
    \label{tab:model}
\end{table}

\section{Uni-Dimensional Algorithms}

The results of this section are based on the final day of the LANL Mustang, Facebook, and the second to last day of the Alibaba trace. We begin by discussing the general features of the traces. Then, we compare the uni-dimensional online algorithms with respect to their achieved normalized cost, cost reduction, and runtime.

\paragraph{Fractional vs. Integral Cost} For all traces, the ratio of the fractional and the integral costs is 1 for a precision of at least $10^{-3}$. This is not surprising due to the large number of servers used in each model.

\begin{figure}
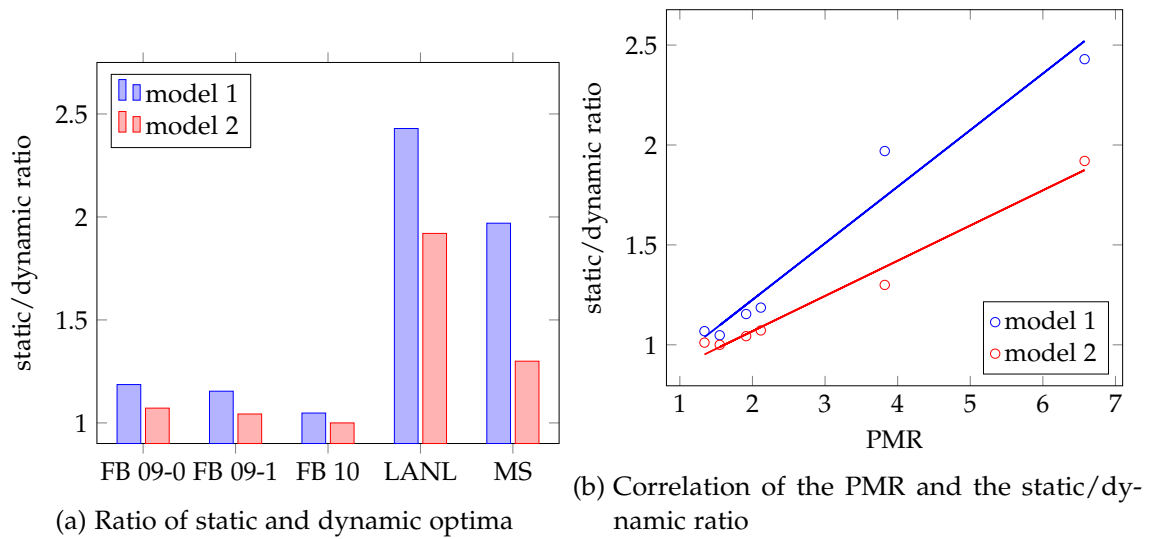

    \begin{subfigure}[b]{.5\linewidth}
    \resizebox{\textwidth}{!}{\input{thesis/figures/opt_vs_opts}}
    \caption{Ratio of static and dynamic optima}
    \end{subfigure}
    \begin{subfigure}[b]{.5\linewidth}
    \resizebox{\textwidth}{!}{\input{thesis/figures/opts_opt_vs_pmr}}
    \caption{Correlation of the PMR and the static/dynamic ratio}\label{fig:case_studies:ud:opt_vs_opts:pmr}
    \end{subfigure}
    \caption{Ratio of static and dynamic offline optima for each trace. The LANL Mustang and Microsoft Fiddle traces have a significantly higher PMR than the remaining traces. Generally, we observe a strong correlation of PMR and the static/dynamic ratio.}\label{fig:case_studies:ud:opt_vs_opts}
\end{figure}

\begin{figure}
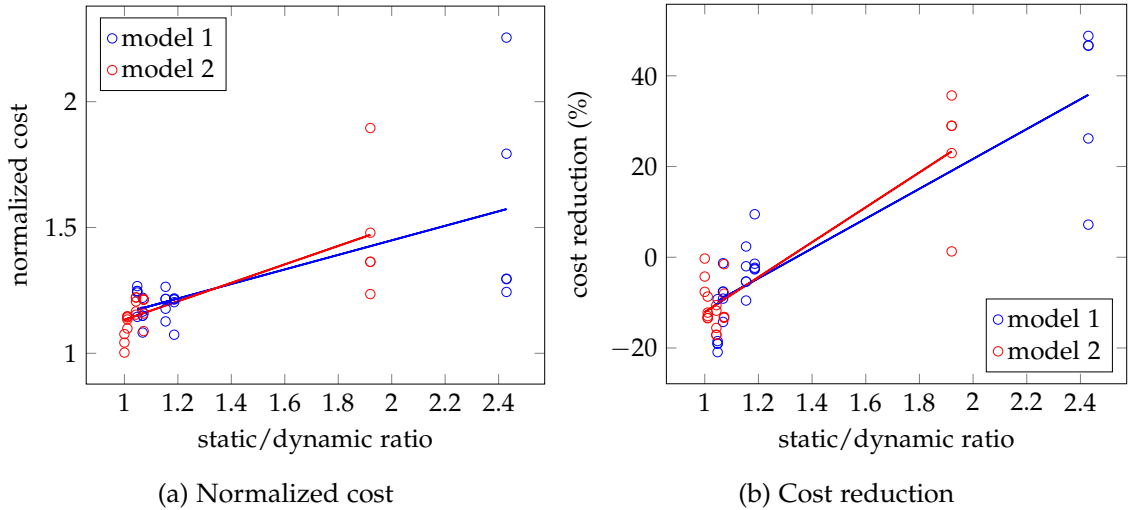

    \begin{subfigure}[b]{.49\linewidth}
    \resizebox{\textwidth}{!}{\input{thesis/figures/opt_vs_opts_against_normalized_cost}}
    \caption{Normalized cost}\label{fig:case_studies:ud:opt_vs_opts_against_normalized_cost}
    \end{subfigure}
    \begin{subfigure}[b]{.51\linewidth}
    \resizebox{\textwidth}{!}{\input{thesis/figures/opt_vs_opts_against_mean_cost_reduction}}
    \caption{Cost reduction}\label{fig:case_studies:ud:opt_vs_opts_against_mean_cost_reduction}
    \end{subfigure}
    \caption{Effect of the ratio of static and dynamic optima on the cost reduction and normalized cost achieved by the memoryless algorithm.}
\end{figure}

\paragraph{Dynamic vs. Static Cost} The dynamic and static costs differ significantly depending on the trace. The ratio of dynamic and static optimal costs for each trace is shown in \cref{fig:case_studies:ud:opt_vs_opts}.

\Cref{fig:case_studies:ud:opt_vs_opts_against_mean_cost_reduction} shows a strong positive correlation between the average cost reduction achieved by the memoryless algorithm and the ratio of the static and dynamic optima. As $OPT_s / OPT$ is directly linked to the PMR, this also indicates a strong correlation between cost reduction and the PMR. Even under our very conservative estimates of parameters, we achieve a significant cost reduction when the ratio of the static and dynamic offline optimum exceeds 1.5. Similar to \citeauthor{Lin2011}~\cite{Lin2011}, we observe that cost savings increase rapidly as the PMR increases.

We also observe in \cref{fig:case_studies:ud:opt_vs_opts_against_normalized_cost} that as the static/dynamic ratio increases, the normalized costs achieved by the memoryless algorithm increases too but not as much as the potential energy savings, resulting in the observed significant cost reduction.

\begin{figure}
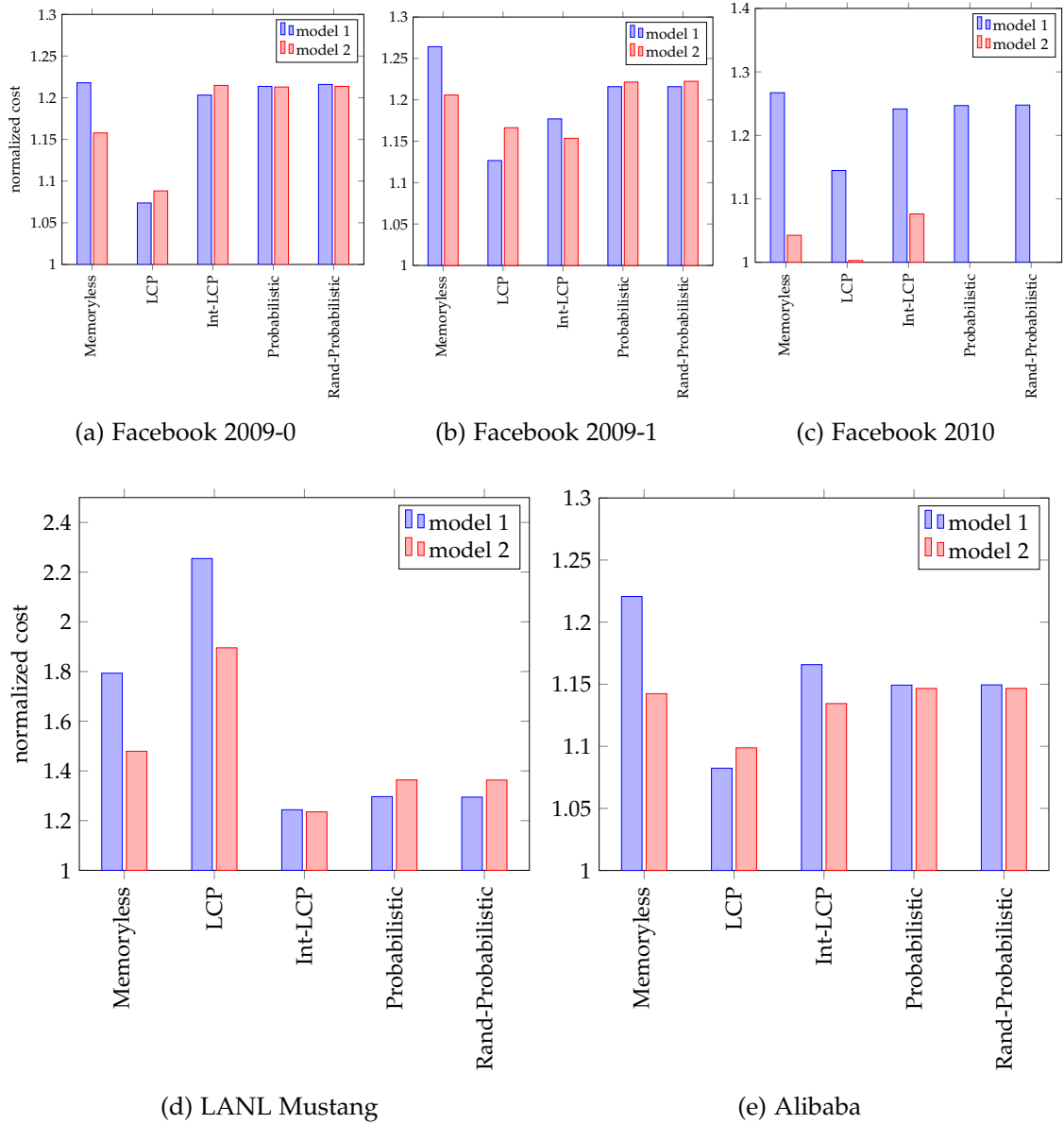

    \begin{subfigure}[b]{.3425\linewidth}
    \resizebox{\textwidth}{!}{\input{thesis/figures/fb90_normalized_cost}}
    \caption{Facebook 2009-0}
    \end{subfigure}
    \begin{subfigure}[b]{.32\linewidth}
    \resizebox{\textwidth}{!}{\input{thesis/figures/fb91_normalized_cost}}
    \caption{Facebook 2009-1}
    \end{subfigure}
    \begin{subfigure}[b]{.32\linewidth}
    \resizebox{\textwidth}{!}{\input{thesis/figures/fb10_normalized_cost}}
    \caption{Facebook 2010}
    \end{subfigure}
    \par\bigskip
    \begin{subfigure}[b]{.50\linewidth}
    \resizebox{\textwidth}{!}{\input{thesis/figures/lanl_normalized_cost}}
    \caption{LANL Mustang}
    \end{subfigure}
    \begin{subfigure}[b]{.48\linewidth}
    \resizebox{\textwidth}{!}{\input{thesis/figures/alibaba_normalized_cost}}
    \caption{Alibaba}
    \end{subfigure}
    \caption{Normalized costs of uni-dimensional online algorithms. For the Facebook 2010 trace, the second model results in an optimal schedule constantly using all servers, explaining the disparate performance compared to the first model. Further, Probabilistic and Rand-Probabilistic perform very poorly in this setting and are therefore not shown for model 2. Generally, Probabilistic and Rand-Probabilistic achieve similar results. Interestingly, we observe that LCP outperforms Int-LCP when the potential cost reduction is small. In contrast, Int-LCP and the probabilistic algorithms outperform Memoryless and LCP significantly when the potential cost reduction is large. \Cref{fig:case_studies:ud:lcp_vs_int_lcp} compares the schedules obtained by LCP and Int-LCP in greater detail.}\label{fig:case_studies:ud:normalized_cost}
\end{figure}

\begin{figure}
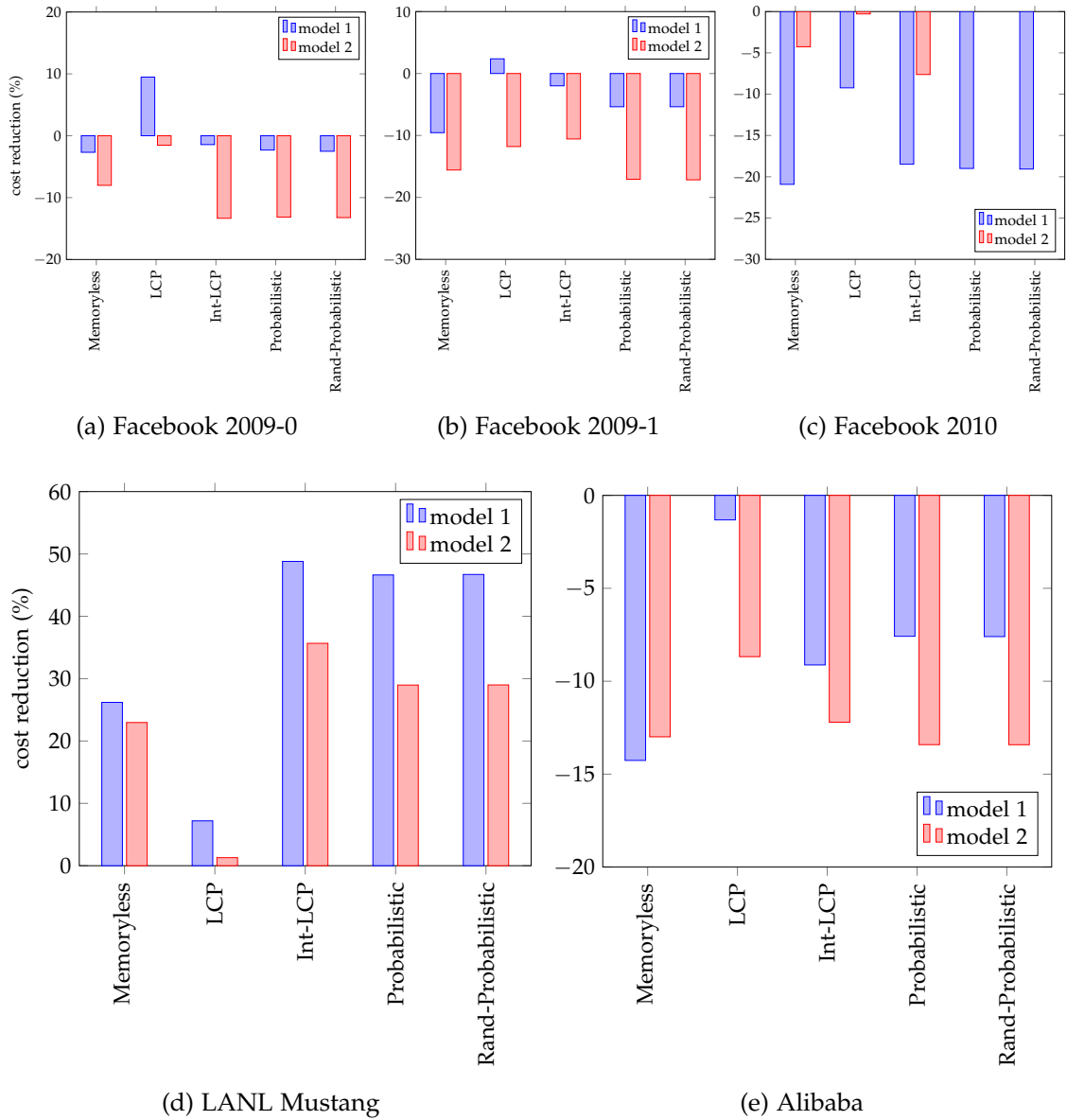

    \begin{subfigure}[b]{.3425\linewidth}
    \resizebox{\textwidth}{!}{\input{thesis/figures/fb90_cost_reduction}}
    \caption{Facebook 2009-0}
    \end{subfigure}
    \begin{subfigure}[b]{.32\linewidth}
    \resizebox{\textwidth}{!}{\input{thesis/figures/fb91_cost_reduction}}
    \caption{Facebook 2009-1}
    \end{subfigure}
    \begin{subfigure}[b]{.32\linewidth}
    \resizebox{\textwidth}{!}{\input{thesis/figures/fb10_cost_reduction}}
    \caption{Facebook 2010}
    \end{subfigure}
    \par\bigskip
    \begin{subfigure}[b]{.50\linewidth}
    \resizebox{\textwidth}{!}{\input{thesis/figures/lanl_cost_reduction}}
    \caption{LANL Mustang}
    \end{subfigure}
    \begin{subfigure}[b]{.48\linewidth}
    \resizebox{\textwidth}{!}{\input{thesis/figures/alibaba_cost_reduction}}
    \caption{Alibaba}
    \end{subfigure}
    \caption{Cost reduction of uni-dimensional online algorithms. For the Facebook 2010 trace, the second model results in an optimal schedule constantly using all servers, explaining the disparate performance compared to the first model. Further, Probabilistic and Rand-Probabilistic perform very poorly in this setting and are therefore not shown for model 2. Results are mainly determined by the normalized cost and the potential cost reduction (or static/dynamic ratio). We observe that the achieved cost reduction is dominated by the potential cost reduction (see \cref{fig:case_studies:ud:cost_reduction_vs_normalized_cost}).}\label{fig:case_studies:ud:cost_reduction}
\end{figure}

\begin{figure}
    \centering
    \input{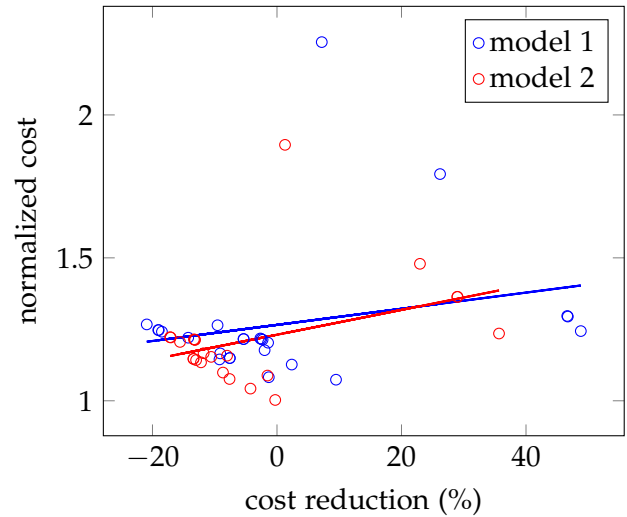}
    \caption{Weak positive correlation of achieved normalized cost and cost reduction. Intuitively, one would expect a strong negative correlation, i.e., the achieved cost reduction increases as the normalized cost approaches 1. Here, we find a positive correlation as the achieved cost reduction is dominated by the potential cost reduction (see \cref{fig:case_studies:ud:opt_vs_opts_against_mean_cost_reduction}).}\label{fig:case_studies:ud:cost_reduction_vs_normalized_cost}
\end{figure}

\paragraph{Normalized Cost} In the application of right-sizing data centers, we are interested in the cost associated with integral schedules. \Cref{fig:case_studies:ud:normalized_cost} shows the normalized cost of each algorithm. For fractional algorithms, we consider the cost of the associated integral schedule obtained by ceiling each configuration. Notably, LCP and Int-LCP perform differently depending on the trace and used model. We explore this behavior in \cref{fig:case_studies:ud:lcp_vs_int_lcp}.

\paragraph{Cost Reduction} \Cref{fig:case_studies:ud:cost_reduction} shows the achieved cost reduction. In general, we observe in \cref{fig:case_studies:ud:opt_vs_opts_against_normalized_cost}, \cref{fig:case_studies:ud:opt_vs_opts_against_mean_cost_reduction}, and \cref{fig:case_studies:ud:cost_reduction_vs_normalized_cost} that the achieved cost reduction is dominated by the potential cost reduction (which is mainly influenced by the PMR, see \cref{fig:case_studies:ud:opt_vs_opts:pmr}). When the potential cost reduction is small, algorithms with a smaller normalized cost in a particular setting, achieve a significantly higher cost reduction.

\begin{figure}
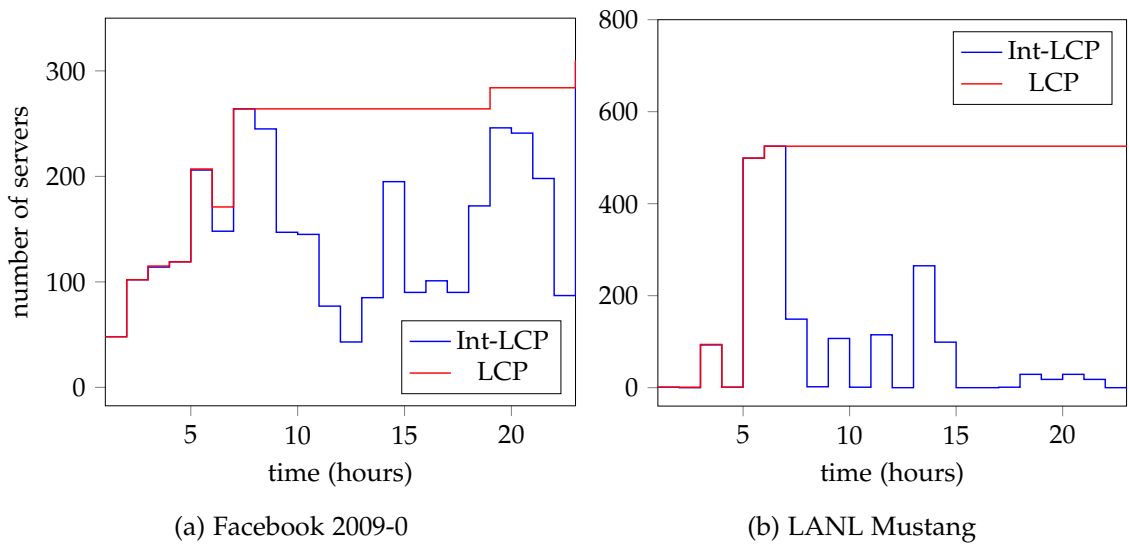

    \begin{subfigure}[b]{.5175\linewidth}
    \resizebox{\textwidth}{!}{\input{thesis/figures/lcp_vs_ilcp_1}}
    \caption{Facebook 2009-0}
    \end{subfigure}
    \begin{subfigure}[b]{.4825\linewidth}
    \resizebox{\textwidth}{!}{\input{thesis/figures/lcp_vs_ilcp_2}}
    \caption{LANL Mustang}
    \end{subfigure}
    \caption{Comparison of the schedules obtained by LCP and Int-LCP under our first model. We observe that LCP is much ``stickier'' than Int-LCP, which is beneficial when the potential cost reduction is small (i.e., for the Facebook 2009-0 trace) but detrimental when the potential cost reduction is large (i.e., for the LANL Mustang trace). In our experiments, we observe that the memoryless algorithm tends to behave similarly to LCP (i.e., is more ``sticky''), whereas the probabilistic algorithms tend to behave similarly to Int-LCP (i.e., are less ``sticky'').}\label{fig:case_studies:ud:lcp_vs_int_lcp}
\end{figure}

\begin{figure}
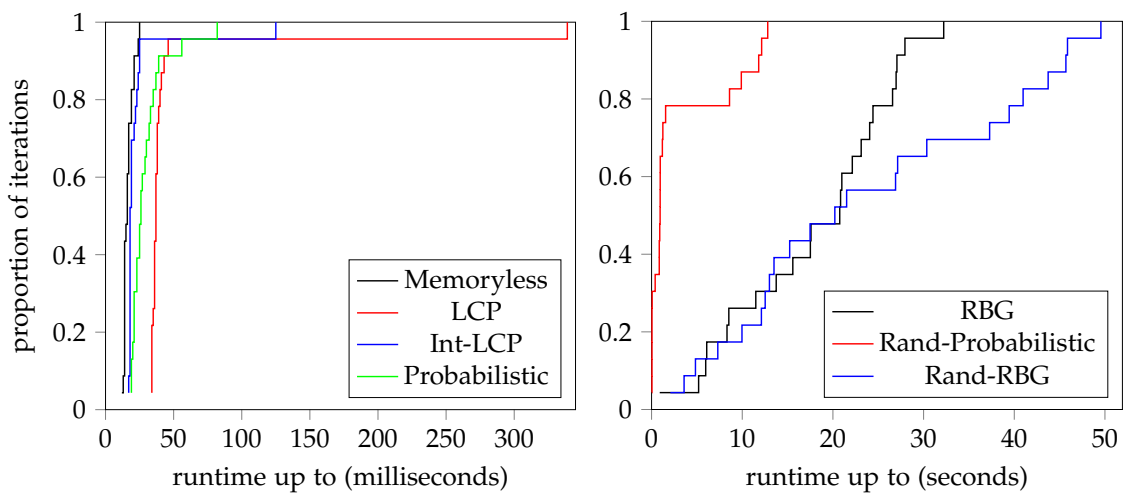

    \begin{subfigure}[b]{.5175\linewidth}
    \resizebox{\textwidth}{!}{\input{thesis/figures/ud_runtimes}}
    \end{subfigure}
    \begin{subfigure}[b]{.4825\linewidth}
    \resizebox{\textwidth}{!}{\input{thesis/figures/ud_runtimes_slow}}
    \end{subfigure}
    \caption{Runtimes of uni-dimensional online algorithms.}\label{fig:case_studies:ud:runtimes}
\end{figure}

\paragraph{Runtime} \cref{fig:case_studies:ud:runtimes} shows the distribution of runtimes (per iteration) of the online algorithms using the Facebook 2009-1 trace. The memoryless algorithm, LCP, and Int-LCP are very fast, even as the number of time slots increases. The runtime of Probabilistic and Rand-Probabilistic is slightly dependent on the used trace and model but generally good. However, when resulting schedules are thight around the upper bound of the decision space, as is the case for the Facebook 2010 trace under our second model, the probabilistic algorithms perform take multiple minutes per iteration. Rand-Probabilistic is significantly slower than Probabilistic as due to the relaxation the integrals need to be computed in a piecewise fashion. The runtime of RBG grows linearly with time and is shown in \cref{fig:case_studies:ud:runtimes} for the first four time slots.

\begin{figure}
    \centering
    \input{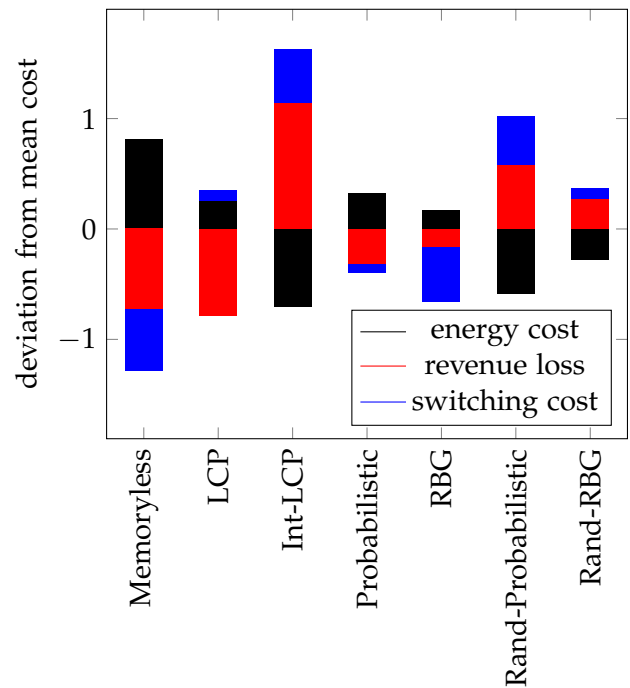}
    \caption{Cost profiles of uni-dimensional online algorithms. Note that integral algorithms seem to prefer revenue loss and switching costs over energy costs, whereas fractional algorithms prefer energy costs. However, this is likely because integral algorithms balance energy costs and revenue loss more accurately than the ceiled schedules of fractional algorithms.}\label{fig:case_studies:ud:costs}
\end{figure}

\paragraph{Cost Makeup} An interesting aspect of the (integral) schedules obtained by the online algorithms is the makeup of their associated costs to understand whether an algorithm systematically prefers some cost over another. We measure this preference of an algorithm as the normalized deviation, i.e., the cost of the algorithm minus the mean cost among all algorithms divided by the standard deviation of costs among all algorithms. We then average the results between all traces. \Cref{fig:case_studies:ud:costs} shows the cost profiles for each algorithm. We observe that fractional algorithms prefer energy cost over revenue loss and switching cost, while integral algorithms prefer revenue loss and switching cost over energy cost. This is likely because fractional algorithms cannot balance energy cost and revenue loss optimally. When fractional schedules are ceiled, this results in an additional energy cost while reducing revenue loss. In absolute terms, the deviations make up less than 1\% of the overall costs of each type.

\paragraph{} We have seen that even in our conservative model, significant cost savings with respect to the optimal static provisioning in hindsight can be achieved in practical settings when the PMR is large enough or the normalized switching cost is less than the typical valley length. Due to the conservative estimates of our model, it is likely that in practice, much more drastic cost savings are possible. For example, when energy costs are higher, or the switching costs are on the order of operating a server in an idle state for one hour rather than four hours.

\section{Multi-Dimensional Algorithms}

Now, we turn to the discussed multi-dimensional algorithms. We begin by analyzing the simplified settings, SLO and SBLO, from \cref{section:online_algorithms:md:lazy_budgeting}. Then, we analyze the gradient-based methods from \cref{section:online_algorithms:md:descent_methods}.

\subsection{Smoothed Load Optimization}

Recall that for SLO, during a single time slot a server can process at most one job. Hence, we cannot use dynamic job durations to model the different runtimes of jobs on servers with two GPUs and servers with eight GPUs. Instead, we use a simplified model based on our second model, which is described in \cref{tab:simp_model}. Note that we disregard revenue loss and that we assume, servers operate at full utilization (if they are active). Overall, we obtain the operating costs $c = (243.720, 219.348)$ and switching costs $\beta = (487.440, 663.672)$.

\begin{table}
    \centering
    \begin{tabularx}{\textwidth}{>{\bfseries}l|X}
        cost & simplified model \\\hline
        operating cost & servers with eight GPUs have $0.9$ times the energy consumption (per processed job) as servers with two GPUs due to improved cooling efficiency \\
        switching cost & servers with eight GPUs have $1.3$ times the switching cost as servers with two GPUs due to an increased associated risk \\
    \caption{Simplified model used in our case studies of SLO and SBLO. The model parameters are based on our second model described in \cref{tab:model}.}
    \end{tabularx}
    \label{tab:simp_model}
\end{table}

The achieved normalized cost and cost reduction of lazy budgeting are shown in \cref{fig:case_studies:md:slo:normalized_cost} and \cref{fig:case_studies:md:slo:cost_reduction}, respectively. The dynamic offline optimal schedule primarily uses 8-GPU-servers and only uses 2-GPU-servers for short periods. The static offline optimal schedule uses 122 8-GPU-servers and no 2-GPU-servers as they have a larger operating cost, which would have to be paid throughout the entire day. The lazy budgeting algorithms primarily use 2-GPU-servers due to their lower switching cost and stick with 8-GPU-servers once they were powered up. \Cref{fig:case_studies:md:slo:det:schedule} and \cref{fig:case_studies:md:slo:rand:schedule} show the schedules obtained by the online algorithms in comparison to the offline optimal schedule. The runtime of the deterministic and randomized variants is shown in \cref{fig:case_studies:md:slo:runtimes}.

\begin{figure}
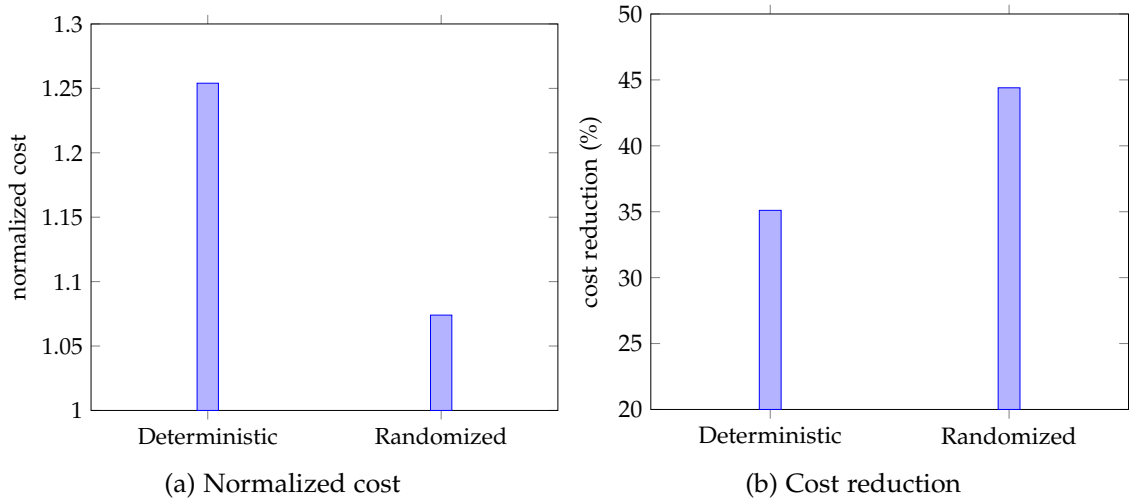

    \begin{subfigure}[b]{.5\linewidth}
    \resizebox{\textwidth}{!}{\input{thesis/figures/slo_normalized_cost}}
    \caption{Normalized cost}\label{fig:case_studies:md:slo:normalized_cost}
    \end{subfigure}
    \begin{subfigure}[b]{.5\linewidth}
    \resizebox{\textwidth}{!}{\input{thesis/figures/slo_cost_reduction}}
    \caption{Cost reduction}\label{fig:case_studies:md:slo:cost_reduction}
    \end{subfigure}
    \caption{Performance of lazy budgeting (SLO) for the Microsoft Fiddle trace when compared against the offline optimum. The results of the randomized algorithm are based on five individual runs.}
\end{figure}

\begin{figure}
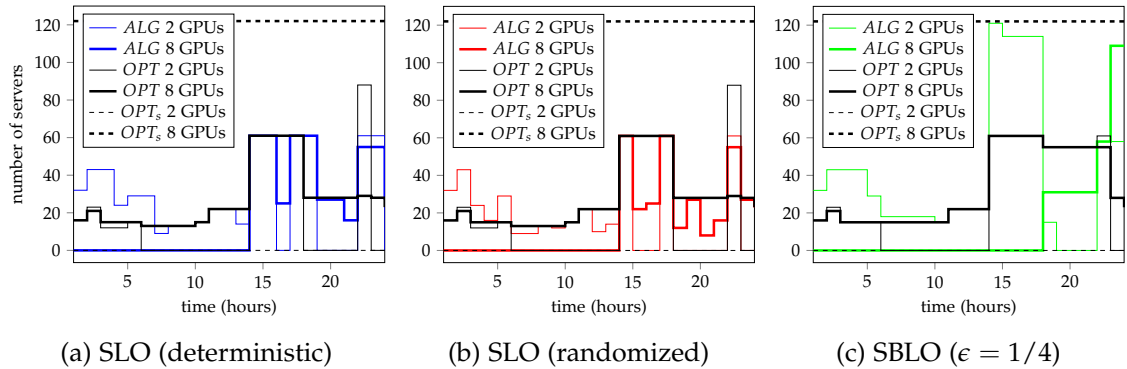

    \begin{subfigure}[b]{.3425\linewidth}
    \resizebox{\textwidth}{!}{\input{thesis/figures/slo_det_schedule}}
    \caption{SLO (deterministic)}\label{fig:case_studies:md:slo:det:schedule}
    \end{subfigure}
    \begin{subfigure}[b]{.32\linewidth}
    \resizebox{\textwidth}{!}{\input{thesis/figures/slo_rand_schedule}}
    \caption{SLO (randomized)}\label{fig:case_studies:md:slo:rand:schedule}
    \end{subfigure}
    \begin{subfigure}[b]{.32\linewidth}
    \resizebox{\textwidth}{!}{\input{thesis/figures/sblo_schedule}}
    \caption{SBLO ($\epsilon = 1/4$)}\label{fig:case_studies:md:sblo:schedule}
    \end{subfigure}
    \caption{Comparison of the schedules obtained by lazy budgeting for the Microsoft Fiddle trace and the offline optimum. For SLO, the deterministic algorithm is shown in blue and one result of the randomized algorithm is shown in red. The lazy budgeting algorithm for SBLO is shown in green. Note that the lazy budgeting algorithms prefer the 2-GPU-servers initially due to their low switching costs. For SLO, the randomized algorithm appears to be less ``sticky'' than the deterministic algorithm, resulting in a better normalized cost.}
\end{figure}

\begin{figure}
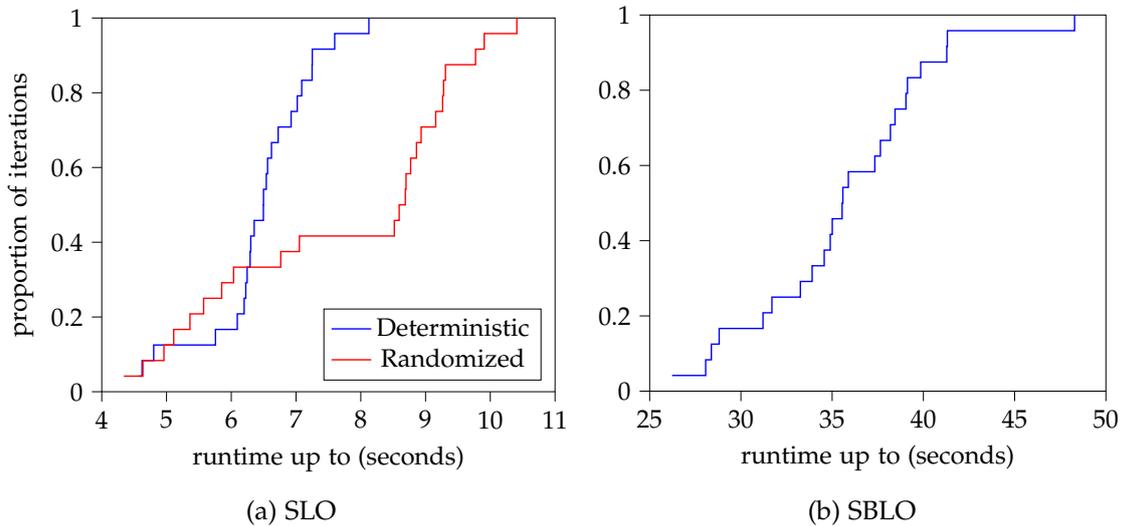

    \begin{subfigure}[b]{.5175\linewidth}
    \resizebox{\textwidth}{!}{\input{thesis/figures/slo_runtimes}}
    \caption{SLO}\label{fig:case_studies:md:slo:runtimes}
    \end{subfigure}
    \begin{subfigure}[b]{.4825\linewidth}
    \resizebox{\textwidth}{!}{\input{thesis/figures/sblo_runtimes}}
    \caption{SBLO}\label{fig:case_studies:md:sblo:runtimes}
    \end{subfigure}
    \caption{Runtime of lazy budgeting algorithms.}
\end{figure}

\subsection{Smoothed Balanced-Load Optimization}

For our analysis of SBLO, we use the same simplified model that we used in our analysis of SLO (see \cref{tab:simp_model}). In particular, we still assume that a server can at most process a single job during a time slot. The dynamic and static offline optimum are similar to those in our analysis of SLO. In particular, the static offline optimum still only uses 8-GPU-servers. \Cref{fig:case_studies:md:sblo:schedule} shows the schedule obtained by lazy budgeting ($\epsilon = 1/4$) in comparison with the offline optimal. Lazy budgeting achieves normalized costs 1.284 and a cost reduction of 11\%. The runtime of the algorithm is shown in \cref{fig:case_studies:md:sblo:runtimes}.

\subsection{Descent Methods}

We also evaluated the performance of P-OBD and D-OBD on the Microsoft Fiddle trace under our original models (see \cref{tab:model}). In our analysis, we use the squared $\ell_2$ norm as the distance-generating function, i.e. $h(x) = \frac{1}{2} \norm{x}_2^2$, which is strongly convex and Lipschitz smooth in the $\ell_2$ norm. In our data center model, we use the $\ell_1$ norm to calculate switching costs, however, we observe that this approximation still achieves a good performance when compared against the dynamic offline optimum. The negative entropy $h(x) = \sum_{k=1}^d x_k \log_2 x_k$, which is commonly used as a distance-generating function for the $\ell_1$ norm cannot be used in the right-sizing data center setting as $\mathbf{0} \in \mathcal{X}$. \Cref{fig:case_studies:md:obd:normalized_cost} and \cref{fig:case_studies:md:obd:cost_reduction} show the achieved normalized cost and cost reduction. The resulting schedules under the first model are compared with the offline optimal in \cref{fig:case_studies:md:obd:schedule}. Remarkably, P-OBD and D-OBD obtain the exact same schedule under our second model. \Cref{fig:case_studies:md:obd:runtimes} visualizes the runtime of P-OBD and D-OBD under our first model.

\begin{figure}
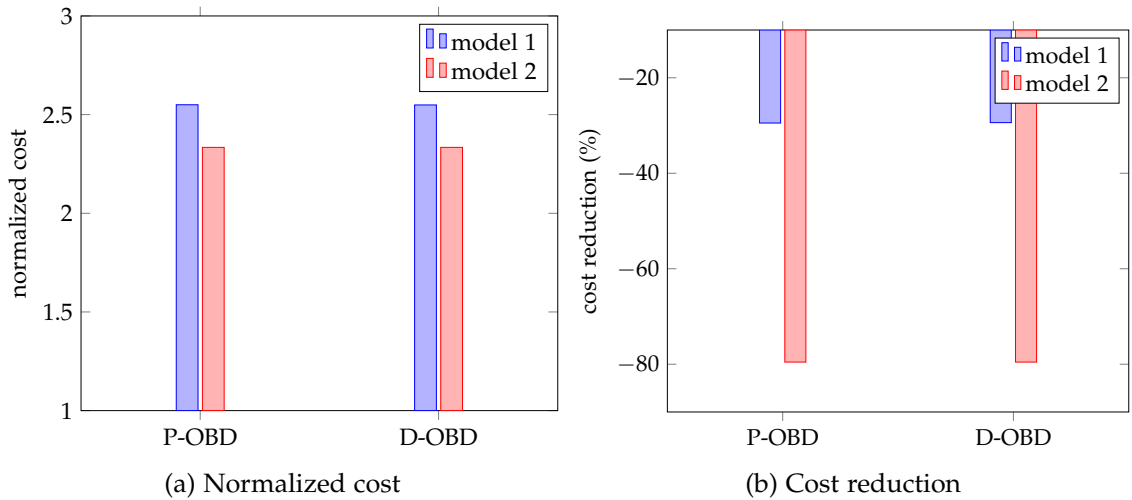

    \begin{subfigure}[b]{.5\linewidth}
    \resizebox{\textwidth}{!}{\input{thesis/figures/obd_normalized_cost}}
    \caption{Normalized cost}\label{fig:case_studies:md:obd:normalized_cost}
    \end{subfigure}
    \begin{subfigure}[b]{.5\linewidth}
    \resizebox{\textwidth}{!}{\input{thesis/figures/obd_cost_reduction}}
    \caption{Cost reduction}\label{fig:case_studies:md:obd:cost_reduction}
    \end{subfigure}
    \caption{Performance of P-OBD  ($\beta = 1/2$) and D-OBD  ($\eta = 1$) for the Microsoft Fiddle trace when compared against the offline optimum. $h(x) = \frac{1}{2} \norm{x}_2^2$ is used as the distance-generating function.}
\end{figure}

\begin{figure}
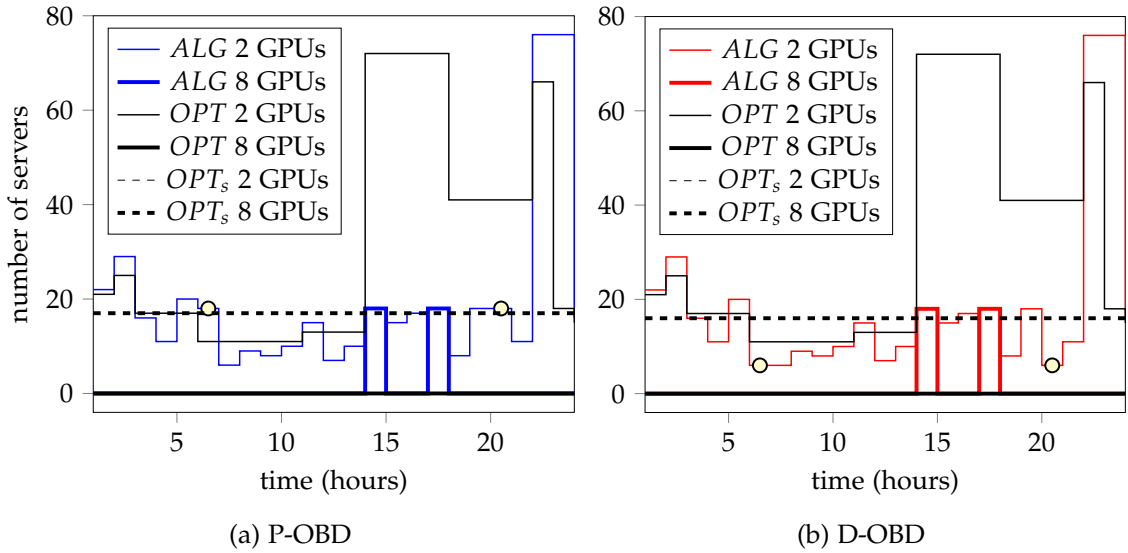

    \begin{subfigure}[b]{.5175\linewidth}
    \resizebox{\textwidth}{!}{\input{thesis/figures/pobd_schedule}}
    \caption{P-OBD}
    \end{subfigure}
    \begin{subfigure}[b]{.4825\linewidth}
    \resizebox{\textwidth}{!}{\input{thesis/figures/dobd_schedule}}
    \caption{D-OBD}
    \end{subfigure}
    \caption{Comparison of the schedules obtained by OBD for the Microsoft Fiddle trace under our first model and the offline optimum. P-OBD ($\beta = 1/2$) is shown in blue and D-OBD ($\eta = 1$) is shown in red. The two time slots during which P-OBD and D-OBD differ are marked in yellow ($t \in \{6, 20\}$). In both time slots, D-OBD is slightly less ``sticky'', resulting in a slightly better normalized cost. Also observe that under our first model, the dynamic offline optimum strictly prefers 2-GPU-servers over 8-GPU-servers In contrast, under our second model, 8-GPU-servers are slightly preferred by the dynamic offline optimum (see \cref{fig:microsoft:schedule}). $h(x) = \frac{1}{2} \norm{x}_2^2$ is used as the distance-generating function.}\label{fig:case_studies:md:obd:schedule}
\end{figure}

Although, OBD incurs an increased cost compared to the static offline optimal (the optimal static choice in hindsight) albeit by a factor less than 1, our very conservative model indicates that in practice, significant cost savings are possible.

\begin{figure}
    \centering
    \input{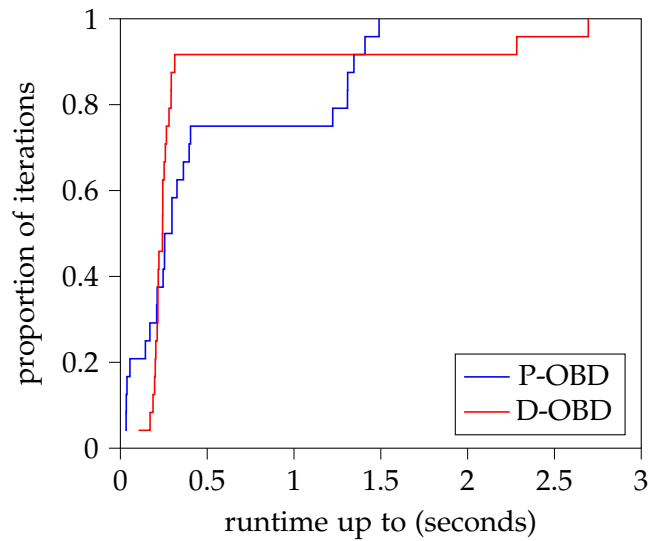}
    \caption{Runtime of OBD algorithms.}\label{fig:case_studies:md:obd:runtimes}
\end{figure}

\section{Predictions}

In our analysis, we use the Alibaba trace under our second model to evaluate the effects of using predictions. We consider two different types of predictions: perfect predictions, and actual predictions that are obtained using Prophet as described in \cref{section:online_algorithms:md:predictions:making_predictions}. The obtained predictions are based on the four preceding days of the Alibaba trace up until the second to last day. \Cref{fig:case_studies:predictions:prediction} visualizes the used prediction for the most common job type (i.e., short jobs). Note that in our analysis, we use the mean predicted load.

\begin{figure}
    \centering
    \input{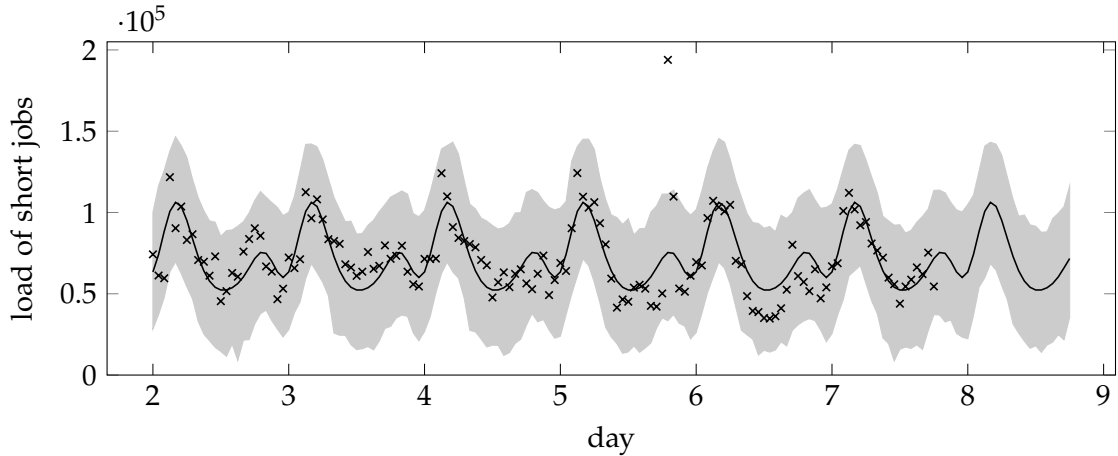}
    \caption{Prediction of the load of short jobs for the second to last day of the Alibaba trace. The mean prediction is shown as the black line. The interquartile range of the predicted distribution is shown as the shaded region. The marks represent actual loads.}\label{fig:case_studies:predictions:prediction}
\end{figure}

\Cref{fig:case_studies:predictions:lcp} shows the effect of the prediction window $w$ when used with LCP for perfect and actual predictions. We observe that in practice, the prediction window can significantly improve the algorithm performance. Additionally, we find that this effect is also achieved with imperfect (i.e., actual) predictions. Previously, \citeauthor{Lin2011}~\cite{Lin2011} only showed this effect for perfect predictions with additive white Gaussian noise.

Interestingly, RHC and AFHC achieve equivalent results for perfect and imperfect predictions. \Cref{fig:case_studies:predictions:mpc} shows how the achieved normalized cost changes with the prediction window. Crucially, note that the MPC-style algorithms do not necessarily perform better for a growing prediction window. \citeauthor{Lin2012}~\cite{Lin2012} showed this effect previously for an adversarially chosen example, however, we observe this behavior with AFHC in a practical setting. In fact, for the Alibaba trace, RHC and AFHC achieve their best result when used without a prediction window, i.e. $w = 0$.

\begin{figure}
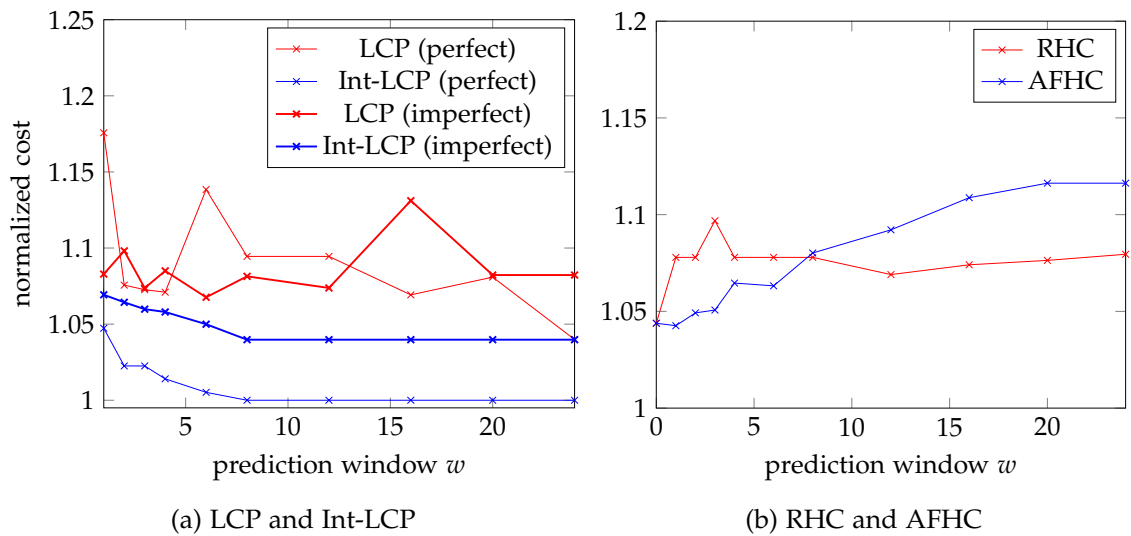

    \begin{subfigure}[b]{.5175\linewidth}
    \resizebox{\textwidth}{!}{\input{thesis/figures/pred_lcp_cost}}
    \caption{LCP and Int-LCP}\label{fig:case_studies:predictions:lcp}
    \end{subfigure}
    \begin{subfigure}[b]{.4825\linewidth}
    \resizebox{\textwidth}{!}{\input{thesis/figures/pred_mpc_cost}}
    \caption{RHC and AFHC}\label{fig:case_studies:predictions:mpc}
    \end{subfigure}
    \caption{Performance of online algorithms with a prediction window for the Alibaba trace. The left figure shows the performance of LCP and Int-LCP. The right figure shows the performance of RHC and AFHC. Note that LCP does not continuously approach the normalized cost $1$ as $w \to 24$ because of numerical inaccuracies solving the convex optimizations and as the obtained is compared to the integral offline optimum rather than the fractional offline optimum. For RHC and AFHC, the achieved normalized cost is independent of whether perfect predictions or the predictions from \cref{fig:case_studies:predictions:prediction} are used.}
\end{figure}

\chapter{Future Work}\label{chapter:future_work}

Our implementation separates the model layer from the problem and algorithm layer. Therefore, testing the empirical performance of the discussed online algorithms in other application areas is a natural and exciting direction for future research. In \cref{section:introduction:related_work}, we have given an overview of some promising applications.

\paragraph{Algorithms for Convex Body Chasing} We have mentioned in \autoref{chapter:introduction} that smoothed convex optimization and convex body chasing are equivalent. Therefore, an interesting research project would be to extend our library of implemented algorithms by the known algorithms for convex body chasing to compare their empirical performance. In particular, the $\mathcal{O}(d)$-competitive algorithm obtained by \citeauthor{Argue2019}~\cite{Argue2019} is highly relevant for the application of right-sizing data centers as it does not impose a restriction on cost functions beyond their convexity.

\paragraph{Dynamic Bounds and Dimensions} In practice, the number of available servers (and even server types) in a data center is likely to change over time. An unexplored area of research is how the discussed approaches for online algorithms can be extended to a setting where the bounds on the decision space $\mathcal{X}$ and the dimension of $\mathcal{X}$ are allowed to change over time. \citeauthor{Albers2021_2}~\cite{Albers2021_2} discuss how their offline algorithm solving the multi-dimensional integral case can be extended to a setting with time-dependent bounds.

\paragraph{Optimal Valley Filling} In \cref{section:case_studies:method:alternatives}, we have discussed the advantages and disadvantages of valley filling, i.e., scheduling low-priority tasks during periods of low loads to reduce the peak-to-mean ratio. Therefore, an interesting problem is finding optimal server configurations and job scheduling times such that the operating costs and switching costs of the data center are balanced with the revenue loss of delaying specific jobs. This problem extends smoothed convex optimization in the data center setting by allowing for incoming loads to be postponed to a later time slot.

\paragraph{Optimal Assignments of Jobs to Servers} Smoothed convex optimization determines the optimal assignment of jobs to a collection of servers of the same type. In \cref{section:application:dispatching}, we have seen that an optimal dispatching rule is to distribute jobs across all servers evenly. However, this approach does not have to be optimal in practice as job arrival times may vary and jobs are discrete, i.e., it may not be possible to distribute them evenly. Therefore, an interesting problem is determining an optimal scheduling of jobs that accounts for the approximations of smoothed convex optimization.

\paragraph{Long-Running Jobs} An vital research problem is to extend the data center model we discussed in \cref{chapter:application} to support jobs with a longer runtime than the length of a time slot. In our case studies from \cref{chapter:case_studies}, we have seen that for a time slot length on the order of tens of minutes to an hour, a significant fraction of jobs (in practical scenarios more than 50\%) fulfill this criterion. This can be achieved by ``memorizing'' which jobs have not been completed yet and ensuring that enough servers of each type are active in the following time slot so that jobs do not have to be rerouted to a server of a different type.

\paragraph{Gang Scheduling Requirement} A similar problem is to extend the model to allow for mutual job requirements. For example, requirements ensuring jobs are processed simultaneously or on servers of the same type.

\paragraph{Without Lookahead} The classical smoothed online convex optimization problem assumes that the convex cost function is known before a move has to be made. This is reasonable to separate the problem of anticipating movement costs with current hitting costs from estimating the current hitting costs. However, in practice, many applications require action before the hitting costs are known. For example, in the setting of right-sizing data centers, enough servers need to be available before jobs arrive as the powering up of servers requires some time. We have seen in \cref{section:online_algorithms:md:predictions} that predictions can be used to get around this restriction. Nonetheless, it would be interesting to see which algorithms perform well in a smoothed setting without lookahead.

\paragraph{Using Predictions} More research is needed to find online algorithms that use predictions robustly and consistently. Here, robustness refers to a bounded competitive ratio when predictions are adversarial, and consistency refers to an improved competitive ratio when predictions are accurate \cite{Li2021}. Note that clearly model predictive control-style algorithms can perform arbitrarily poorly when predictions are adversarial.

\chapter{Conclusion}\label{chapter:conclusion}

This work surveyed numerous offline and online algorithms for fractional and integral smoothed convex optimization in single and high dimensions. We evaluated the performance of online algorithms for the application of dynamically right-sizing data centers using traces from various different sources. We found that the discussed online algorithms perform nearly optimally with respect to the observed normalized cost when compared to the dynamic offline optimum. Moreover, we observed that the normalized cost increases as potential cost savings increase. However, the achieved cost reduction depends primarily on the difference between the dynamic offline optimum and the strategy compared to (in our experiments the static offline optimum). Previous results show that changes in the data center model, such as increasing the energy cost or decreasing the normalized switching cost lead to significant increases in cost savings. We also observed that the difference between fractional and integral solutions is negligible in the application of right-sizing data centers, or generally when total costs are large. Further, we have seen that online algorithms using predictions of future loads can achieve a better normalized cost, even when predictions are non-perfect but obtained using standard methods. However, we have also seen that the achieved normalized cost does not necessarily improve for larger prediction windows.

Finally, we made our implementation of the discussed algorithms available. This implementation can be used to test the performance of algorithms for smoothed online convex optimization in other applications and compare their performance. Our data center model balances energy cost and revenue loss. Using past data, this model can be tuned to be safe (i.e., provide enough computing resources for incoming loads), fulfill service level agreements, and provide significant cost and energy savings in real-world settings.

\appendix{}

\microtypesetup{protrusion=false}

\listoftheorems[ignoreall,show=problem,title={List of Problems}]{}

\listofalgorithms{}
\addcontentsline{toc}{chapter}{List of Algorithms}

\listoftables{}

\printindex

\addchap[Notation]{Notation}\label{chapter:notation}

\section*{Problems}

\nopagebreak\begin{tabularx}{\textwidth}{p{100pt}X}
    $T \in \mathbb{N}$ & number of time slots \\
    $\delta \in \mathbb{R}_{>0}$ & length of a time slot \\
    $d \in \mathbb{N}$ & number of dimensions \\
    $\mathcal{X} \subset \mathbb{R}^d$ & decision space \\
    $m_k \in \mathbb{N}$ & upper bound of dimension $k$ \\
    $M_k \subset \mathbb{N}_0$ & allowed values of dimension $k$ (in the discrete case) \\
    $\mathcal{M} = \mathcal{X}$ & set of all configurations (in the discrete case) \\
    $f_t(x) \in \mathbb{R}_{\geq 0}$ & hitting cost of action $x \in \mathcal{X}$ during time slot $t$ \\
    $\beta_k \in \mathbb{R}_{>0}$ & switching cost of dimension $k$ \\
    $\lambda_t \in \mathbb{N}_0^e$ & load profile during time slot $t$ \\
    $c_k \in \mathbb{R}_{\geq 0}$ & time-independent hitting cost of dimension $k$ \\
\end{tabularx}

\section*{Data Center Model}

\subsection*{Dispatching}

\nopagebreak\begin{tabularx}{\textwidth}{p{100pt}X}
    $m_k \in \mathbb{N}$ & maximum number of servers of type $k$ \\
    $l_k^{max} \in \mathbb{N}$ & maximum number of jobs a server of type $k$ can process in a single time slot \\
    $e \in \mathbb{N}$ & number of load types \\
    $\lambda_{t,i} \in \mathbb{N}_0$ & number of jobs of type $i$ during time slot $t$ \\
    $\lambda_t \in \mathbb{N}_0$ & total load during time slot $t$ \\
    $\lambda^{max} \in \mathbb{N}$ & maximum total load of feasible load profiles \\
    $\mathcal{Z}$ & set of job assignments of jobs of individual load types to server types \\
    $z_{t,k,i} \in [0,1]$ & fraction of jobs of type $i$ assigned to servers of type $k$ during time slot $t$ \\
    $l_{t,k,i} \in [0,\lambda^{max}]$ & (fractional) number of jobs of type $i$ assigned to servers of type $k$ during time slot $t$ \\
    $l_{t,k} \in [0,\lambda^{max}]$ & (fractional) number of jobs assigned to servers of type $k$ during time slot $t$ \\
    $s_k(l) \in [0,1]$ & utilization (or speed) of a server of type $k$ with load $l$ \\
    $\theta_k \in [0,1]$ & maximum allowed utilization of a server of type $k$ \\
    $g_{t,k}(l) \in \mathbb{R}_{\geq 0}$ & operating cost of a server of type $k$ during time slot $t$ when $l$ jobs are processed on the server \\
    $q_{t,k,i}(l) \in \mathbb{R}_{\geq 0}$ & cost of processing a job of type $i$ on a server of type $k$ during time slot $t$ when $l$ jobs are processed on the server \\
\end{tabularx}

\subsection*{Energy Cost}

\nopagebreak\begin{tabularx}{\textwidth}{p{100pt}X}
    $e_{t,k}(s) \in \mathbb{R}_{\geq 0}$ & energy cost of a server of type $k$ with utilization $s$ during time slot $t$ \\
    $\nu_{t,k}(p) \in \mathbb{R}_{\geq 0}$ & energy cost of a server of type $k$ with energy consumption $p$ during time slot $t$ \\
    $\phi_k(s) \in \mathbb{R}_{\geq 0}$ & energy consumption of a server of type $k$ with utilization $s$ \\
    $\Phi_k(s) \in \mathbb{R}_{\geq 0}$ & power consumption of a server of type $k$ with utilization $s$ \\
    $\Phi_k^{max} \in \mathbb{R}_{\geq 0}$ & power consumption of a server of type $k$ on full load \\
    $\Phi_k^{min} \in \mathbb{R}_{\geq 0}$ & power consumption of an idling server of type $k$ \\
    $c_{t,i} \in \mathbb{R}_{\geq 0}$ & energy cost of energy source $i$ per unit of energy during time slot $t$ \\
\end{tabularx}

\subsection*{Revenue Loss}

\nopagebreak\begin{tabularx}{\textwidth}{p{100pt}X}
    $\gamma \in \mathbb{R}_{\geq 0}$ & revenue loss factor, i.e., lost revenue per unit of delay \\
    $r_{t,i}(d) \in \mathbb{R}_{\geq 0}$ & revenue loss of jobs of type $i$ with an average delay $d$ during time slot $t$ \\
    $d_{t,k,i}(l) \in \mathbb{R}_{\geq 0}$ & average delay of a job of type $i$ processed on a server of type $k$ during time slot $t$ where the total load on the server is $l$ \\
    $\delta_{t,k,i} \in \mathbb{R}_{\geq 0}$ & constant delay when processing a job of type $i$ on a server of type $k$ during time slot $t$ \\
    $\mu_k \in \mathbb{R}_{\geq 0}$ & service rate of a server of type $k$ \\
    $\delta_i \in \mathbb{R}_{\geq 0}$ & minimal detectable delay of jobs of type $i$ \\
    $\eta_{k,i} \in \mathbb{R}_{\geq 0}$ & processing time of a job of type $i$ on a server of type $k$ \\
\end{tabularx}

\subsection*{Switching Cost}

\nopagebreak\begin{tabularx}{\textwidth}{p{100pt}X}
    $\epsilon_k \in \mathbb{R}_{\geq 0}$ & additional energy consumed by toggling a server of type $k$ on and off \\
    $\delta_k \in \mathbb{R}_{\geq 0}$ & delay in migrating connections or data of a server of type $k$ before it can be powered down \\
    $\tau_k \in \mathbb{R}_{\geq 0}$ & wear-and-tear costs of toggling a server \\
    $\rho_k \in \mathbb{R}_{\geq 0}$ & perceived risk associated with toggling a server of type $k$ \\
    $\xi_k \in \mathbb{R}_{>0}$ & normalized switching cost measuring the minimum duration a server of type $k$ must be asleep to outweigh the switching cost \\
\end{tabularx}

\subsection*{Networks}

\nopagebreak\begin{tabularx}{\textwidth}{p{100pt}X}
    $\iota \in \mathbb{N}$ & number of data centers \\
    $\zeta \in \mathbb{N}$ & number of geographically centered request sources \\
    $\delta_{t,j,s} \in \mathbb{R}_{\geq 0}$ & network delay incurred by routing a request from source $s$ to data center $j$ during time slot $t$ \\
    $\xi$ & number of energy sources \\
    $p_{t,i,j} \in \mathbb{R}_{\geq 0} \cup \{\infty\}$ & energy from source $i$ available at data center $j$ during time slot $t$ \\
    $u_{t,i} \in \mathbb{R}_{\geq 0}$ & average profit per unit of energy from source $i$ during time slot $t$ \\
    $q_{t,i} \in [0,1]$ & minimum fraction for energy from source $i$ during time slot $t$ \\
    $\delta_{t,i,j} \in \mathbb{R}_{\geq 0}$ & remaining energy requirement of data center $j$ during time slot $t$ after all energy sources up to source $i$ were used \\
\end{tabularx}

\section*{Complexity}

\nopagebreak\begin{tabularx}{\textwidth}{p{100pt}X}
    $\mathcal{O}(C)$ & complexity of computing the hitting costs $f_t$ \\
    $\mathcal{O}(O_{\epsilon}^d)$ & convergence rate of a convex optimization finding an $\epsilon$-optimal solution in $d$ dimensions \\
    $\mathcal{O}(R_{\epsilon})$ & convergence rate of a root finding method finding an $\epsilon$-optimal root \\
    $\mathcal{O}(I_{\epsilon})$ & convergence rate of a quadrature method finding an $\epsilon$-optimal integral \\
\end{tabularx}

\section*{Algorithms}

\nopagebreak\begin{tabularx}{\textwidth}{p{100pt}X}
    $\tau$ & current time slot \\
    $\hat{x}$ & optimal value of $x$ with respect to some optimization \\
    $X \in \mathcal{X}^t$ & schedule, i.e., sequence of configurations over the time horizon $t$ \\
    $x, y \in \mathcal{X}$ & configuration \\
    $i, j \in [m_k]_0$ & value of dimension $k$ \\
    $x_{k \gets j}$ & configuration $x$ after updating dimension $k$ to $j$ \\
\end{tabularx}

\section*{Implementation}

\nopagebreak\begin{tabularx}{\textwidth}{p{100pt}X}
    $\chi_{\tau}(t, x) \in \bigcup_{n=1}^{\infty} \mathbb{R}_{\geq 0}^n$ & hitting cost of action $x \in \mathcal{S}$ at time slot $t \in [T]$ given the information from time slot $\tau$ \\
    $\mathcal{P}_t \in \left(\bigcup_{n=1}^{\infty} \mathbb{N}_0^n\right)^e$ & predicted load profile for time slot $t$, i.e., vector of predicted loads for each job type\\
\end{tabularx}

\section*{Miscellaneous}

\nopagebreak\begin{tabularx}{\textwidth}{p{100pt}X}
    $[n] := \{1, \dots, n\}$ & range of natural numbers from $1$ to $n$ \\
    $[n]_0 := \{0\} \cup [n]$ & range of integers from $0$ to $n$ \\
    $[a : b] :=\newline \{a, a+1, \dots, b\}$ & range of integers from $a$ to $b$ \\
    $(x)_a^b :=\newline \max\{a, \min\{b, x\}\}$ & uni-dimensional projection of $x \in \mathbb{R}$ onto $[a,b]$ \\
    $(x)^+ := \max\{0, x\}$ & uni-dimensional projection of $x \in \mathbb{R}$ onto $[0, \infty)$ \\
    $x_{\text{min}}$ & smallest entry of the vector $x$ \\
    $x_{\text{max}}$ & largest entry of the vector $x$ \\
\end{tabularx}

\begin{landscape}
\chapter{Taxonomy of Online Algorithms}\label{appendix:taxonomy}

Here, we give an overview of the algorithms that were discussed in this chapter. The first table includes online algorithms for the one dimensional setting. The latter table includes online algorithms for the general setting.

\begin{table}[!ht]
    \centering
    \begin{tabularx}{\textheight}{l|l|X|l}
        name & problem & performance & time complexity \\\hline
        \hyperref[alg:ud:lcp]{LCP} & \hyperref[problem:simplified_smoothed_convex_optimization]{SSCO} & 3-competitive & $\mathcal{O}((\tau + w) C O_{\epsilon}^{\tau + w})$ \\
        \hyperref[alg:ud:lcp]{Int-LCP} & \hyperref[problem:simplified_smoothed_convex_optimization]{Int-SSCO} & 3-competitive & $\mathcal{O}((\tau + w)^2 C \log_2 m)$ \\
        \hyperref[alg:ud:memoryless]{Memoryless} & \hyperref[problem:simplified_smoothed_convex_optimization]{SSCO} & 3-competitive & $\mathcal{O}(C O_{\epsilon}^1)$ \\
        \hyperref[alg:ud:probabilistic]{Probabilistic} & \hyperref[problem:simplified_smoothed_convex_optimization]{SSCO} & 2-competitive & $\mathcal{O}(\tau^2 C I_{\epsilon} |B_{f_0}| R_{\epsilon} O_{\epsilon}^1)$ \\
        \hyperref[alg:ud:rbg]{RBG} & \hyperref[problem:smoothed_convex_optimization]{SCO} & $\alpha$-unfair competitive ratio $(1+\theta) / \min \{\theta, \alpha\}$ and regret $\mathcal{O}(\max \{T / \theta, \theta\})$; 2-competitive for $\theta = 1$ & $\mathcal{O}(C (O_{\epsilon}^1)^{\tau+1})$ \\
        \hyperref[alg:ud:randomized]{Rand-Probabilistic} & \hyperref[problem:simplified_smoothed_convex_optimization]{Int-SSCO} & 2-competitive & $\mathcal{O}(\tau^2 m C I_{\epsilon} R_{\epsilon} O_{\epsilon}^1)$ \\
        \hyperref[alg:ud:randomized]{Rand-RBG} & \hyperref[problem:smoothed_convex_optimization]{Int-SCO} & 2-competitive & $\mathcal{O}(C (O_{\epsilon}^1)^{\tau+1})$ \\
    \caption{Taxonomy of uni-dimensional online algorithms.}
    \end{tabularx}
\end{table}

\begin{table}[!ht]
    \centering
    \begin{tabularx}{\textheight}{l|l|X|l}
        name & problem & performance & time complexity \\\hline
        \hyperref[alg:md:lazy_budgeting:det_slo]{LB} & \hyperref[problem:slo]{SLO} & $2d$-competitive & $\mathcal{O}(m d^2 + C d \prod_{k=1}^d m_k)$ \\
        \hyperref[alg:md:lazy_budgeting:det_slo]{Randomized LB} & \hyperref[problem:slo]{SLO} & $\frac{e}{e-1}d$-competitive & $\mathcal{O}(m d^2 + C d \prod_{k=1}^d m_k)$ \\
        \hyperref[alg:md:lazy_budgeting:sblo_c]{LB} & \hyperref[problem:sblo]{SBLO} & $(2d + 1 + \epsilon)$-competitive & $\mathcal{O}(\widetilde{n}_{\tau} \widetilde{\tau}^2 |\mathcal{M}| C d)$ \\
        \hyperref[alg:md:ogd]{OGD} & \hyperref[problem:smoothed_convex_optimization]{SCO} & $\mathcal{O}(\sqrt{T})$-regret & $\mathcal{O}(d C O_{\epsilon}^d)$ \\
        \hyperref[alg:md:obd]{OBD} & \hyperref[problem:smoothed_convex_optimization]{SCO} & - & $\mathcal{O}(O_{\epsilon}^d)$ \\
        \hyperref[alg:md:pobd]{P-OBD} & \hyperref[problem:smoothed_convex_optimization]{SCO} & $3 + \mathcal{O}(1 / \alpha)$-competitive & $\mathcal{O}(C O_{\epsilon}^d + O_{\epsilon}^d R_{\epsilon})$ \\
        \hyperref[alg:md:dobd]{D-OBD} & \hyperref[problem:smoothed_convex_optimization]{SCO} & $\mathcal{O}(\sqrt{T})$-regret & $\mathcal{O}(C O_{\epsilon}^d + (O_{\epsilon}^d)^2 R_{\epsilon})$ \\
        \hyperref[alg:predictions:rhc]{RHC} & \hyperref[problem:smoothed_convex_optimization]{SCO} & $(1 + \mathcal{O}(1/w))$-competitive in one dimension; otherwise $(1 + \max_{k \in [d]} \beta_k / e_k(0))$-competitive & $\mathcal{O}(C O_{\epsilon}^{dw})$ \\
        \hyperref[alg:predictions:afhc]{AFHC} & \hyperref[problem:smoothed_convex_optimization]{SCO} & $(1 + \max_{k \in [d]} \frac{\beta_k}{(w+1) e_k(0)})$-competitive, i.e., $(1 + \mathcal{O}(1/w))$-competitive & $\mathcal{O}(w C O_{\epsilon}^{dw})$ \\
    \caption{Taxonomy of multi-dimensional online algorithms.}
    \end{tabularx}
\end{table}
\end{landscape}

\microtypesetup{protrusion=true}

\printbibliography{}

\end{document}